\documentclass{article}

\textwidth 15.4 cm
\textheight 22.6 cm 
\topmargin = -46 pt 
\oddsidemargin = 3 mm
\parskip = 2 mm

\usepackage[all]{xy}
\usepackage{amssymb}
\usepackage{stmaryrd,bm}
\usepackage{amsmath,amsthm} 
\usepackage{graphicx}
\usepackage{url,soul}

\makeatletter
\let \@sverbatim \@verbatim
\def \@verbatim {\@sverbatim \verbatimplus}
{\catcode`'=13 \gdef \verbatimplus{\catcode`'=13 \chardef '=13 }} 
\makeatother

\usepackage{environ}
\makeatletter
\NewEnviron{Lalign}{\tagsleft@true\begin{align}\BODY\end{align}}
\makeatother

\usepackage{enumerate,tikz}
\usepackage{xcolor}

\usepackage[hidelinks]{hyperref} 

\newtheorem{theorem}{Theorem}[section]
\newtheorem{lemma}[theorem]{Lemma}  
  
\newtheorem{definition}[theorem]{Definition}  
\theoremstyle{definition}
\newtheorem{example}[theorem]{Example}
\newtheorem{comment}[theorem]{Comment}

\newcommand{\ssub}[4]{\ensuremath{[#1 \mapsto #2, #3 \mapsto #4]}}
\newcommand{\FE}{\fe}   
\newcommand{\sub}[2]{\ensuremath{[#1 \mapsto #2]}}
\newcommand{\FT}{\SP}
\newcommand{\FNF}{\ensuremath{\textup{FNF}}}
\newcommand{\nf}{\ensuremath{f}}
\newcommand{\EqFFEL}{\FFELe}
\newcommand{\T}{\NT}
\newcommand{\Tone}{\ensuremath{{\mathcal{T}_{A,\triangle}}}}
\newcommand{\Ttwo}{\ensuremath{\mathcal{T}_{A,1, 2}}}
\newcommand{\cd}{\ensuremath{\textup{cd}}}
\newcommand{\dd}{\ensuremath{\textup{dd}}}
\newcommand{\tsd}{\ensuremath{\textup{tsd}}}
\newcommand{\inv}{\ensuremath{g}}
\newcommand{\tlef}{\unlhd}  
\newcommand{\trig}{\unrhd}  


\colorlet{darkblue}{blue!80!black}
\newcommand{\blauw}[1]{\textcolor{darkblue}{#1}}

\newcommand{\true}{\ensuremath{\textit{true}}}
\newcommand{\false}{\ensuremath{\textit{false}}}

\newcommand{\auxf}{h}

\newcommand{\Nat}{\ensuremath{\mathbb{N}}}

\newcommand{\se}{\ensuremath{\mathit{se}}}

\newcommand{\fe}{\ensuremath{\mathit{fe}}}
\newcommand{\tra}{\ensuremath{\mathit{t}}}
\newcommand{\feu}{\ensuremath{\mathit{fe}^\und}}
\newcommand{\mfe}{\ensuremath{\mathit{mfe}}}
\newcommand{\mse}{\ensuremath{\mathit{mse}}}
\newcommand{\mfeu}{\ensuremath{\mathit{mfe}^\und}}
\newcommand{\clfe}{\ensuremath{\mathit{clfe}}}

\newcommand{\clfeu}{\ensuremath{\mathit{clfe}^\und}}

\newcommand{\sfe}{\ensuremath{\mathit{sfe}}}
\newcommand{\sse}{\ensuremath{\mathit{sse}}}
\newcommand{\memt}{\ensuremath{\mathit{m}}}
\newcommand{\memtu}{\ensuremath{\memt^\und}}

\newcommand{\treeneg}{\text{$\widetilde\neg$}}
\newcommand{\negtext}{\bm{[\textit{\textbf{\small neg}}]}}
\newcommand{\fullandtext}{\bm{[\textit{\textbf{\small fulland} $Y$}]}}
\newcommand{\Lfullandtext}{\bm{[\textit{\textbf{\small fulland} $\Le_a(Y)$}]}}
\newcommand{\Rfullandtext}{\bm{[\textit{\textbf{\small fulland} $\Ri_a(Y)$}]}}

\newcommand{\treefulland}{~
     \mathbin{\setlength{\unitlength}{.9ex}
     \begin{picture}(1.6,1.8)(-.4,0)
     \put(-.8,0){$\widetilde\wedge$}
     \put(-.66,-0.1){\circle*{0.7}}
     \end{picture}
     }}
    
\newcommand{\treefullor}{~
     \mathbin{\setlength{\unitlength}{.9ex}
     \begin{picture}(1.6,1.8)(-.4,0)
     \put(-.8,0){{$\widetilde\vee$}}
     \put(-.46,1.3){\circle*{0.7}}
     \end{picture}
     }}

\newcommand{\axname}[1]{\textup{\ensuremath{\textrm{#1}}}}

\newcommand{\CP}{\axname{CP}}

\newcommand{\SCL}{\axname{SCL}}
\newcommand{\FSCL}{\axname{FSCL}}
\newcommand{\FSCLu}{\ensuremath{\axname{FSCL}^\und}}
\newcommand{\MSCL}{\axname{MSCL}}
\newcommand{\MSCLu}{\ensuremath{\axname{MSCL}^\und}}
\newcommand{\SSCL}{\axname{SSCL}}
\newcommand{\MSCLe}{\axname{EqMSCL}}
\newcommand{\SSCLe}{\ensuremath{\axname{EqSSCL}}}

\newcommand{\FEL}{\axname{FEL}}
\newcommand{\FELu}{\ensuremath{\axname{FEL}^\und}}
\newcommand{\FFEL}{\axname{FFEL}}
\newcommand{\FFELu}{\ensuremath{\axname{FFEL}^\und}}
\newcommand{\MFEL}{\axname{MFEL}}
\newcommand{\MFELu}{\ensuremath{\axname{MFEL}^\und}}
\newcommand{\SFEL}{\axname{SFEL}}

\newcommand{\CLSCL}{\axname{C$\ell$SCL}}
\newcommand{\CLSCLtwo}{\axname{C$\ell$SCL$_2$}}

\newcommand{\CLSCLu}{\ensuremath{\CLSCL}}
\newcommand{\CLFELtwo}{\ensuremath{\axname{C$\ell$FEL$_2$}}}
\newcommand{\CLFELu}{\ensuremath{\axname{C$\ell$FEL}}}
\newcommand{\CLFEL}{\CLFELu}

\newcommand{\alp}[1]{\ensuremath{\{#1\}}}

\newcommand{\SPsc}{\ensuremath{{\mathcal{S}_A}}}
\newcommand{\SPbeta}{\ensuremath{{\mathcal{SP}_{\!A,\beta}}}}

\newcommand{\SP}{\ensuremath{{\mathcal{SP}_{\!A}}}}
\newcommand{\SPu}{\ensuremath{\SP^{\!\und}}}

\newcommand{\SigFEL}{\ensuremath{\Sigma_\FEL(A)}}
\newcommand{\SigFELu}{\ensuremath{\Sigma_\FEL^\und(A)}}

\newcommand{\FFELe}{\ensuremath{\axname{EqFFEL}}}
\newcommand{\FFELeu}{\ensuremath{\axname{EqFFEL}^\und}}
\newcommand{\MFELe}{\axname{EqMFEL}}
\newcommand{\MFELeu}{\ensuremath{\axname{EqMFEL}^\und}}
\newcommand{\CLFELe}{\ensuremath{\axname{EqC$\ell$FEL$_2$}}}
\newcommand{\CLFELeu}{\ensuremath{\axname{EqC$\ell$FEL}^\und}}
\newcommand{\SFELe}{\ensuremath{\axname{EqSFEL}}}

\newcommand{\NT}{\ensuremath{\mathcal{T}_A}}

\newcommand{\NTu}{\ensuremath{{\mathcal{T}_A^\und}}}
\newcommand{\Le}{L}
\newcommand{\Ri}{R}
\newcommand{\Leu}{L^\und}
\newcommand{\Riu}{R^\und}

\newcommand{\lef}{\ensuremath{\scalebox{0.78}{\raisebox{.1pt}[0pt][0pt]{$\;\lhd\;$}}}}
\newcommand{\rig}{\ensuremath{\scalebox{0.78}{\raisebox{.1pt}[0pt][0pt]{$\;\rhd\;$}}}}

\newcommand{\tr}{\ensuremath{{\sf T}}}
\newcommand{\fa}{\ensuremath{{\sf F}}}
\newcommand{\und}{\ensuremath{{\sf U}}}
\newcommand{\undefi}{\ensuremath{\textit{undefined}}}
\newcommand{\wfa}{\widetilde\fa}

\newcommand{\leftand}{~
     \mathbin{\setlength{\unitlength}{1ex}
     \begin{picture}(1.4,1.8)(-.3,0)
     \put(-.6,0){$\wedge$}
     \put(-.54,-0.2){\textcolor{white}{\circle*{0.6}}}
     \put(-.54,-0.2){\circle{0.6}}
     \end{picture}
     }}
     
\newcommand{\fulland}{~
     \mathbin{\setlength{\unitlength}{1ex}
     \begin{picture}(1.4,1.8)(-.3,0)
     \put(-.6,0.2){$\wedge$}
     \put(-.46,-0.){\circle*{0.6}}
     \end{picture}
     }}

\newcommand{\leftor}{~
     \mathbin{\setlength{\unitlength}{1ex}
     \begin{picture}(1.4,1.8)(-.3,0)
     \put(-.6,0){$\vee$}
     \put(-.54,1.54){\textcolor{white}{\circle*{0.6}}}
     \put(-.54,1.54){\circle{0.6}}
     \end{picture}
     }}

\newcommand{\fullor}{~
     \mathbin{\setlength{\unitlength}{1ex}
     \begin{picture}(1.4,1.8)(-.3,0)
     \put(-.6,0){$\vee$}
     \put(-.42,1.4){\circle*{0.6}}
     \end{picture}
     }}
     
\newcommand{\fullimp}{~
     \mathbin{\setlength{\unitlength}{1ex}
     \begin{picture}(1.8,1.8)
     \put(-.0,0){$\rightarrow$}
     \put(-.1,0.57){\circle*{0.6}}
     \end{picture}
     ~}}

\newcommand{\lxor}{~
     \mathbin{\setlength{\unitlength}{1ex}
     \begin{picture}(1.5,1.8)(-.5,0)
     \put(-.8,0){$\oplus$}
     \put(-1.62,0.02){{\small$\leftarrow$}}
     \end{picture}
     }}
\newcommand{\lxoro}{~
     \mathbin{\setlength{\unitlength}{1ex}
     \begin{picture}(1.5,1.8)(-.5,0)
     \put(-.8,0){$\oplus$}
     \put(-1.02,0.56){\circle{0.6}}
     \end{picture}
     }}
     
\newcommand{\lxorC}{~
     \mathbin{\setlength{\unitlength}{1ex}
     \begin{picture}(1.5,1.8)(-.5,0)
     \put(-.8,0){$\oplus$}
     \put(-1.02,0.56){\circle*{0.6}}
     \end{picture}
     }}

\newcommand{\liffo}{~
     \mathbin{\setlength{\unitlength}{1ex}
     \begin{picture}(1.9,1.8)(-.5,0)\linethickness{0.16mm}
     \put(-.8,0){$\leftrightarrow$}
     \put(-.98,0.58){\circle{0.6}}
          \end{picture}
     }}
     
\newcommand{\liff}{~
     \mathbin{\setlength{\unitlength}{1ex}
     \begin{picture}(1.9,1.8)(-.5,0)\linethickness{0.16mm}
     \put(-.8,0){$\leftrightarrow$}
     \put(-1.38,0.02){{\small$\leftarrow$}}
          \end{picture}
     }}

\newcommand{\liffC}{~
     \mathbin{\setlength{\unitlength}{1ex}
     \begin{picture}(1.9,1.8)(-.5,0)\linethickness{0.16mm}
     \put(-.8,0){$\leftrightarrow$}
     \put(-.98,0.58){\circle*{0.6}}
          \end{picture}
     }}

\begin{document}

\title{Fully Evaluated Left-Sequential Logics}
\date{}

\author{
Alban Ponse \& Daan J.C. Staudt\\[2mm]
\small Informatics Institute, University of Amsterdam,\\
\small Science Park 900, 1098 XH, Amsterdam, The Netherlands\\
\small 
 a.ponse@uva.nl \& daan@daanstaudt.nl
}

\maketitle

\begin{abstract}
We consider a family of two-valued ``fully evaluated left-sequential logics" (FELs), of which Free FEL 
(defined by Staudt in 2012) is most distinguishing (weakest) and immune to atomic side effects. 
Next is Memorising FEL, in which evaluations of subexpressions are memorised. The following stronger
logic is Conditional FEL (inspired by Guzmán and Squier's Conditional logic, 1990).
The strongest FEL is static FEL, a sequential version of propositional logic. 
We use evaluation trees as a simple, intuitive semantics and provide complete axiomatisations
for closed terms (left-sequential propositional expressions).

For each FEL except Static FEL, we also define its three-valued version, with a constant U for 
``undefinedness" and again provide complete, independent axiomatisations, each one 
containing two additional axioms for U on top of the axiomatisations
of the two-valued case.
In this setting, the strongest FEL is equivalent to Bochvar's strict logic.
\\[2mm]
\textbf{Keywords \& phrases:}
full left-sequential evaluation, strict evaluation, 
left-sequential connectives, Bochvar's logic
\end{abstract}

{\fontsize{9}{5}\selectfont \tableofcontents}

\section{Introduction}
\label{sec:1}
This paper has its origin in the work of Staudt~\cite{Stau},
in which so called ``Free Fully Evaluated Left-Sequential Logic" (\FFEL) was introduced,
together with ``evaluation trees" as a simple semantics and an equational axiomatisation
of their equality.

We define a family of ``Fully Evaluated Left-Sequential Logics" (\FEL s) that is about full 
left-sequential evaluation (also called strict evaluation) and of which \FFEL\ is the most distinguishing
(weakest).
As in~\cite{Stau}, we use for the Boolean connectives a dedicated notation that prescribes
a full, left-sequential evaluation strategy: we consider terms
(propositional expressions) that are built from
atoms $a, b, c, ...$ (propositional variables) and constants \tr\ and \fa\
for truth and falsehood by composition with negation ($\neg$) and
sequential connectives: left-sequential conjunction, notation $\fulland$, and left-sequential 
disjunction, notation $\fullor$, where the little black dot 
denotes that the left argument
must be evaluated first, and then the right argument.
For two-valued \FEL s we discern a hierarchy starting with
\FFEL, which is immune to (atomic) side effects, 
and ending with a sequential version of propositional logic.

\emph{Evaluation trees} are binary trees with internal nodes labeled by atoms and 
leaves labeled with either \tr\ or \fa, and provide a simple semantics for propositional expressions: 
a path from the root to a leaf models an evaluation. 
The left branch from an internal node indicates that its atom is evaluated \true, and the right 
branch that the atom evaluates to \false.
The leaves of an evaluation tree represent evaluation results. 
Atom $a$ has as its semantics the evaluation tree
\begin{center}
\begin{tikzpicture}[%
      level distance=7.5mm,
      level 1/.style={sibling distance=15mm},
      level 2/.style={sibling distance=7.5mm},
      baseline=(current bounding box.center)]
      \node (a) {$a$}
        child {node (b1) {$\tr$}
        }
        child {node (b2) {$\fa$}
        };
      \end{tikzpicture}
\end{center}
and for example, $a\fulland b$
has as its semantics the evaluation tree
\begin{center}
\begin{tikzpicture}[%
      level distance=7.5mm,
      level 1/.style={sibling distance=15mm},
      level 2/.style={sibling distance=7.5mm},
      baseline=(current bounding box.center)]
      \node (a) {$a$}
        child {node (b1) {$b$}
          child {node (d1) {$\tr$}} 
          child {node (d2) {$\fa$}}
        }
        child {node (b2) {$b$}
          child {node (d1) {$\fa$}} 
          child {node (d2) {$\fa$}}
        };
      \end{tikzpicture}
\end{center}
which is composed from the evaluation trees of atoms $a$ and $b$.
If in $a\fulland b$, atom $a$ evaluates to \true, then atom $b$ is evaluated and determines the 
evaluation result, and if $a$ evaluates to \false, then the evaluation result
is \false, but $b$ is (still) evaluated. 
Note that in the example tree of $a\fulland b$, replacing $b$ by $a$ has an evaluation tree that 
is \emph{not} equal to that of $a$; the evaluation of the first $a$ can have a side effect that 
changes the second evaluation result.

In~\cite{Stau}, evaluation trees were introduced 
as the basis of a relatively simple semantic framework, both for \FFEL\ and 
for so-called \emph{Free Short-Circuit Logic} (FSCL)~\cite{BPS13},\footnote{%
    In ``short-circuit logics" (SCLs), short-circuit evaluation is prescribed by the
    short-circuit connectives ${\leftand}$ and ${\leftor}$; in the journal paper~\cite{PS18}, 
    the FSCL-part of~\cite{Stau} is also discussed.} 
and completeness results for both \FFEL\ and FSCL were provided, i.e.\ equational axiomatisations
of the equality of their evaluation trees. 
For example, \FFEL\ refutes the idempotence axiom $x\fulland x=x$,
but satisfies the associativity of the binary connectives.
In the subsequent paper~\cite{BP15}, transformations on evaluation trees were defined in order to 
provide a semantic basis for other short-circuit logics, see~\cite{PS18,BPS21,BP23}.
In this paper, we introduce three new FELs with associated evaluation trees, each of which has a 
counterpart in short-circuit logic, and give complete, equational axiomatisations:

\begin{description}
 \setlength\itemsep{0mm}
\item[Memorising FEL (\MFEL).]
The logic \MFEL\ characterises the evaluation strategy in which the
evaluation of each atom in an expression is memorised: 
no complete path in a ``memorising evaluation tree" contains multiple occurrences of the same atom.
For example, $a$ and $a\fulland a$ have the same memorising evaluation tree (that of $a$).
In \MFEL, atoms cannot have side effects and the evaluation order of a propositional expression
is prescribed. As an example, the memorising evaluation trees of $a\fulland b$ and $b\fulland a$ are 
different.
\MFEL\ is axiomatised by adding one equational axiom to those of \FFEL.

\item[Conditional FEL (C$\bm{\ell}$FEL$_2$).]
The name \CLFELtwo\ refers to Conditional logic, defined by Guzmán and Squier~\cite{GS90},
and the subscript 2 refers to its two-valued version. In \CLFELtwo,
${\fulland}$ and ${\fullor}$ are taken to be commutative. 
In~\cite{BP23}, we show that this
commutativity is a consequence of Conditional logic.
\CLFELtwo-evaluation trees are memorising evaluation trees in which the order of 
commuting atoms can be changed while preserving equivalence.
\CLFELtwo\ is axiomatised by the axioms of \MFEL\ and $x\fulland y=y\fulland x$.

\item[Static FEL (\SFEL).] This logic is just a sequential version of propositional logic. 
\SFEL\ is axiomatised by adding the axiom $x\fulland\fa=\fa$ to those of \CLFELtwo.
\end{description}

Evaluation trees for full left-sequential evaluation can be easily extended to a three-valued 
case by including
\und\ as a leaf, which  represents the truth value \undefi.
Evaluation trees with leaves in $\{\tr,\fa,\und\}$ 
were introduced in~\cite{BPS21}, where they serve as a semantics for 
short-circuit logic with undefinedness.
In both \FEL\ and \SCL, atom $a$  has as its semantics the left tree below, which shows that
the evaluation of each atom can be \undefi. Another example in  \FEL\ is the 
evaluation tree of
$a\fulland (b\fulland\und)$, shown on the right:
\[
\begin{array}{ll}
\begin{array}{l}
\begin{tikzpicture}[%
level distance=7.5mm,
level 1/.style={sibling distance=15mm},
level 2/.style={sibling distance=7.5mm},
level 3/.style={sibling distance=3.75mm}
]
\node (root) {$a$}
  child {node (lef1) {$\tr$}
  }
  child {node (mid1) {$\und$}
  }
  child {node (rig1) {$\fa$}
  };
\end{tikzpicture}
\\[8mm]
\end{array}
\qquad
&
\begin{array}{l}
\begin{tikzpicture}[%
level distance=7.5mm,
level 1/.style={sibling distance=15mm},
level 2/.style={sibling distance=7.5mm},
level 3/.style={sibling distance=3.75mm}
]
\node (root) {$a$}
  child {node (lef1) {$b$}
    child {node (lef2) {$\und$}
    }
    child {node (mid2) {$\und$}
    }
    child {node (rig2) {$\und$}
    }
  }
  child {node (mid1) {$\und$}
  }
  child {node (rig1) {$b$}
    child {node (riglef2) {$\und$}
    }
    child {node (rigmid2) {$\und$}
    }
    child {node (rigrig2) {$\und$}
    }
  };
\end{tikzpicture}
\end{array}
\end{array}
\]
The general idea here is that in propositional expressions, \und\ aborts further evaluation.
This property can be equationally axiomatised by $\neg\und=\und$ and $\und\fulland x=\und$ 
(and by duality it follows that 
$\und\fullor x=\und$).
A programming-oriented example is a condition $\texttt{[y != 0]}\:\fulland\:\texttt{[x/y > 17]}$,
where the atom \texttt{[x/y~>~17]} can be undefined.

For each \FEL\ except \SFEL, we define the three-valued version and give  
equational axiomatisations consisting of those for  
the two-valued case and the two mentioned axioms for \und. 
In this family, \CLFEL\ with undefinedness 
is equivalent to Bochvar’s strict logic~\cite{Boc38} and \und\ is fully absorptive. 
Note that \SFEL\ cannot be extended with \und: $\fa=\und\fulland\fa=\und$.

We then present alternative, independent axiomatisations for these equational axiomatisations
that are simpler and more concise.
We end the paper with some remarks on our specific notation for left-sequential connectives,
expressiveness, and related work.

\paragraph{\textit{Prover9} and \emph{Mace4}.}
Apart from inductive proofs, all presented derivabilities from
equational axiomatisations were found by or 
checked with the theorem 
prover \emph{Prover9}, and finite (counter) models were generated with the tool \emph{Mace4}. 
For both tools, see~\cite{Prover9}. We used these tools on a Macbook Pro with 
a 2.4 GHz dual-core Intel Core i5 processor and 4GB of RAM. 
Average running times are mostly
given in mere seconds and rounded up (for example, 2s).

\paragraph{\emph{Structure of the paper.}}
In Section~\ref{sec:2}, we review Free FEL (\FFEL) and its equational axiomatisation.
In Section~\ref{sec:3}, we define \FFELu, the extension of \FFEL\ with undefinedness.
In Section~\ref{sec:4} we define Memorising FEL (\MFEL) and provide an equational axiomatisation,
and in Section~\ref{sec:5} we extend \MFEL\ to \MFELu\ with undefinedness.
In Section~\ref{sec:6}, we define two-valued Conditional FEL (\CLFELtwo) and its extension 
\CLFELu\ with undefinedness.  
In Section~\ref{sec:7}, we briefly consider Static FEL (\SFEL), and then 
provide for each \FEL\ a set of independent axioms.
Section~\ref{sec:8} contains some discussion and conclusions.

\section{Free FEL (\FFEL) and an equational axiomatisation} 
\label{sec:2}
In this section, we review evaluation trees and the logic \FFEL.
We repeat the main results about \FFEL, that is, its evaluation trees and axiomatisation, and 
the completeness proof as presented in~\cite{Stau}.

From this point on, we assume that $A$ is a countable set of atoms with typical elements $a$, $b$, $c$. 
We start with a formal definition of evaluation trees. 

\begin{definition}
\label{def:treesF}
The set \NT\ of \textbf{evaluation trees} over $A$ with leaves in 
$\{\tr, \fa\}$ is defined inductively by
\[\tr\in\NT,\quad\fa\in\NT, \quad (X\unlhd a\unrhd Y)\in\NT 
~\text{ for any }X,Y \in \NT \text{  and } a\in A.\]
The operator $\_\unlhd a\unrhd\_$ is called 
\textbf{tree composition over $a$}.
In the evaluation tree $X \unlhd a \unrhd Y$, 
the root is represented by $a$,
the left branch by $X$ and the right branch by $Y$. 
\\[2mm]
The \textbf{depth} $d:\NT\to\Nat$ of an evaluation tree is defined by 
$d(\tr) = d(\fa) = 0$ and
$d(Y \unlhd a \unrhd Z) = 1 + \max(d(Y ), d(Z))$.
\end{definition}

The leaves of an evaluation tree represent evaluation results (we use 
the constants \tr\ and \fa\ for \true\ and \false). 
Next to the formal notation for evaluation
trees we also use a more pictorial representation. For example,
the tree
\[
(\fa\unlhd a\unrhd\fa)\unlhd b\unrhd(\tr\unlhd a\unrhd\fa)
\]
can be represented as follows, where $\unlhd$ yields a left branch, and $\unrhd$ a right branch:
\begin{equation}
\label{plaatje}
\tag{Picture 1}
\hspace{-12mm}
\begin{tikzpicture}[%
      level distance=7.5mm,
      level 1/.style={sibling distance=15mm},
      level 2/.style={sibling distance=7.5mm},
      baseline=(current bounding box.center)]
      \node (a) {$b$}
        child {node (b1) {$a$}
          child {node (d1) {$\fa$}} 
          child {node (d2) {$\fa$}}
        }
        child {node (b2) {$a$}
          child {node (d1) {$\tr$}} 
          child {node (d2) {$\fa$}}
        };
      \end{tikzpicture}
\end{equation}

In order to define a semantics for full evaluation of negation and the left-sequential 
connectives, we first define the \emph{leaf replacement} operator, 
`replacement' for short, on trees in \NT\ as follows. 
For $X\in\NT$, the replacement of \tr\ with $Y$ and $\fa$ with $Z$ in $X$, denoted
\[X[\tr\mapsto Y, \fa \mapsto Z]\]
is defined recursively by 
\begin{align*}
\tr[\tr\mapsto Y,\fa\mapsto Z]&= Y,\\
\fa[\tr\mapsto Y,\fa\mapsto Z]&= Z,\\
(X_1\unlhd a\unrhd X_2)[\tr\mapsto Y,\fa\mapsto Z]
&=X_1[\tr\mapsto Y,\fa\mapsto Z]\unlhd a\unrhd X_2[\tr\mapsto Y,\fa\mapsto Z].
\end{align*}
We note that the order in which the replacements of leaves of 
$X$ is listed is irrelevant and adopt the convention of not listing  
identities inside the brackets, e.g., 
$X[\fa\mapsto Z]=X[\tr\mapsto \tr,\fa\mapsto Z]$.
By structural induction it follows that repeated replacements satisfy 
\[
X[\tr\mapsto Y_1,\fa\mapsto Z_1][\tr\mapsto Y_2,\fa\mapsto Z_2]=
X[\tr\mapsto Y_1[\tr\mapsto Y_2,\fa\mapsto Z_2],~
\fa\mapsto Z_1[\tr\mapsto Y_2,\fa\mapsto Z_2]].
\]

The set \SP\ of (left-sequential) propositional expressions over $A$
that prescribe full left-sequential evaluation is defined
by the following grammar ($a\in A$):
\[P ::= \tr\mid\fa\mid a\mid\neg P\mid P\fulland P\mid P\fullor P,\]
and we refer to its signature by
\[\SigFEL=\{\fulland,\fullor,\neg,\tr,\fa,a\mid a\in A\}.\]
We interpret expressions in \SP\ as evaluation trees
by a function \fe\ (abbreviating  full evaluation).

\begin{definition}
\label{def:fe}
The unary \textbf{full evaluation} function $\fe : \SP \to\NT$ 
is defined as
follows, where $a\in A$:
\begin{align*}
\fe(\tr) &= \tr,
&\fe(\neg P)&=\fe(P)[\tr\mapsto \fa,\fa\mapsto \tr],\\
\fe(\fa) &= \fa,
&\fe(P \fulland Q)&= \fe(P)[\tr\mapsto \fe(Q),\fa\mapsto\fe(Q)[\tr\mapsto\fa]],\\
\fe(a)&=\tr\unlhd a\unrhd \fa,
&\fe(P \fullor Q)&= \fe(P)[\tr\mapsto\fe(Q)[\fa\mapsto\tr], \fa\mapsto \fe(Q)].
\end{align*}
\end{definition}

The overloading of the symbol \tr\ in $\fe(\tr)=\tr$ will not
cause confusion (and similarly for \fa).
As a simple example we derive the evaluation tree of $\neg b\fulland a$:
\begin{align*}
\fe(\neg b\fulland a)&=\fe(\neg b)[\tr\mapsto \fe(a),\fa\mapsto\fe(a)[\tr\mapsto\fa]]\\
&=(\fa\unlhd b\unrhd\tr)[\tr\mapsto \fe(a),\fa\mapsto\fe(a)[\tr\mapsto\fa]]\\
&=(\fa\unlhd a\unrhd\fa)\unlhd b\unrhd(\tr\unlhd a\unrhd\fa),
\end{align*}
which can be visualized as in~\ref{plaatje}.
Also, $\fe(\neg(b\fullor\neg a))=(\fa\unlhd a\unrhd\fa)\unlhd b\unrhd(\tr\unlhd a\unrhd\fa)$.
An evaluation tree $\fe(P)$ represents full evaluation in a way that can be
compared to the notion of a truth table for propositional logic in that it 
represents each possible evaluation of $P$. However, there are some important differences with
truth tables: in $\fe(P)$, the sequentiality
of $P$'s evaluation is represented, 
the same atom may occur multiple times in $\fe(P)$ with different evaluation values, and all atoms
are evaluated, thus yielding 
evaluation trees that are perfect. Evaluation trees that are
not perfect occur in the setting with 
\emph{short-circuit} connectives; in that setting, the connectives ${\fulland}$ and ${\fullor}$
are definable (we return to this in Sections~\ref{sec:7} and~\ref{sec:8}).

\begin{table}[t]
{
\centering
\hrule
~\\[-5mm]
\begin{align}
\label{FFEL1}
\tag{FFEL1}
\fa&=\neg\tr
\\
\label{FFEL2}
\tag{FFEL2}
x\fullor y
&=\neg(\neg x\fulland\neg y)
\\ 
\label{FFEL3}
\tag{FFEL3}
\neg\neg x&=x
\\
\label{FFEL4}
\tag{FFEL4}
(x\fulland y)\fulland z&=x\fulland (y\fulland z)
\\
\label{FFEL5}
\tag{FFEL5}
\tr\fulland x&=x
\\
\label{FFEL6}
\tag{FFEL6}
x\fulland \tr&= x
\\
\label{FFEL7}
\tag{FFEL7}
x\fulland \fa&=\fa\fulland x
\\
\label{FFEL8}
\tag{FFEL8}
\neg x\fulland \fa&=x\fulland \fa
\\
\label{FFEL9}
\tag{FFEL9}
(x\fulland \fa)\fullor y
&=(x\fullor\tr)\fulland y
\\
\label{FFEL10}
\tag{FFEL10}
x\fullor (y\fulland \fa)&=x\fulland (y\fullor\tr)
\end{align}
\hrule
}
\caption{\FFELe, axioms for \FFEL}
\label{tab:FFELe}
\end{table}

\begin{definition}
\label{def:congruence}
The binary relation $=_{\fe}$ 
on \SP\ defined by $P=_{\fe} Q\iff\fe(P)=\fe(Q)$
is called \textbf{free full valuation congruence}.
\end{definition}

\begin{lemma}
\label{la:Fcongruence}
The relation $=_{\fe}$ is  a congruence.
\end{lemma}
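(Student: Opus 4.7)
The plan is to verify the two ingredients of being a congruence: that $=_{\fe}$ is an equivalence relation on $\SP$, and that it is compatible with each operation in the signature $\SigFEL$.

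The first part is immediate. Since $=_{\fe}$ is defined as the pullback of equality on \NT\ along the function $\fe:\SP\to\NT$, i.e.\ $P=_{\fe} Q \iff \fe(P)=\fe(Q)$, reflexivity, symmetry, and transitivity of $=_{\fe}$ follow directly from the corresponding properties of equality on evaluation trees. The constants $\tr,\fa,a$ contribute nothing to check because they are $0$-ary.

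For compatibility with the operations, I would argue case by case, in each case invoking the relevant clause of Definition~\ref{def:fe}. Suppose $P_1=_{\fe} P_2$ and $Q_1=_{\fe} Q_2$, so $\fe(P_1)=\fe(P_2)$ and $\fe(Q_1)=\fe(Q_2)$. Then
\begin{align*}
\fe(\neg P_1) &= \fe(P_1)[\tr\mapsto\fa,\fa\mapsto\tr] = \fe(P_2)[\tr\mapsto\fa,\fa\mapsto\tr] = \fe(\neg P_2),\\
\fe(P_1\fulland Q_1) &= \fe(P_1)[\tr\mapsto \fe(Q_1),\fa\mapsto\fe(Q_1)[\tr\mapsto\fa]] \\
 &= \fe(P_2)[\tr\mapsto \fe(Q_2),\fa\mapsto\fe(Q_2)[\tr\mapsto\fa]] = \fe(P_2\fulland Q_2),
\end{align*}
and the case for ${\fullor}$ is analogous. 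The only substantive fact used here is that the leaf-replacement operator $X[\tr\mapsto Y,\fa\mapsto Z]$ is a well-defined function of the three trees $X,Y,Z\in\NT$; this is an immediate consequence of the recursive definition of replacement given just before Definition~\ref{def:fe}. In particular, equal inputs produce equal outputs, so each clause of Definition~\ref{def:fe} respects $=_{\fe}$.

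There is no real obstacle here; the lemma is a routine consequence of the fact that $\fe$ is defined by structural recursion in terms of operations on \NT\ that are themselves functions. I would write the proof compactly, perhaps just stating that it is an immediate verification from Definition~\ref{def:fe} and the fact that leaf replacement is a well-defined operation on \NT, possibly listing the three nontrivial cases as above for the reader's convenience.
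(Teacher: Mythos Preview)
Your proposal is correct and follows essentially the same approach as the paper: both observe that the equivalence-relation part is immediate and then verify compatibility with the connectives directly from the clauses of Definition~\ref{def:fe}, using that leaf replacement is a well-defined operation on trees. The paper spells out only the case $R\fulland P =_\fe R\fulland Q$ and declares the rest similar, whereas you list $\neg$ and both arguments of ${\fulland}$ together, but the substance is identical.
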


\begin{proof}
It is immediately clear that identity, symmetry and transitivity are preserved.
For congruence we show only that for all $P, Q, R \in \SP$, $P =_\fe
Q$ implies $R \fulland P =_\fe R \fulland Q$. The other cases proceed
in a similar fashion. If $\fe(P) = \fe(Q)$ then
$\fe(P)[\tr\mapsto\fa] = \fe(Q)[\tr\mapsto\fa]$, so
\begin{equation*}
\fe(R)[\tr\mapsto\fe(P),\fa\mapsto\fe(P)[\tr\mapsto\fa]] =
\fe(R)[\tr\mapsto\fe(Q),\fa\mapsto\fe(Q)[\tr\mapsto\fa]].
\end{equation*}
Therefore, by definition of $\fe$, $R \fulland P =_\fe R \fulland Q$.
\end{proof}

\begin{definition}
\label{def:FFEL}
A \textbf{Fully Evaluated Left-Sequential Logic \emph{(\FEL)}} is a logic that satisfies the 
consequences of \fe-equality. 
\textbf{Free Fully Evaluated Left-Sequential Logic \emph{(\FFEL)}} is the fully evaluated 
left-sequential logic that satisfies no more consequences than those of \fe-equality, i.e., 
for all $P, Q \in\SP$,
\[\FEL\models P=Q \Longleftarrow P=_\fe Q \quad\text{and}\quad \FFEL\models P=Q \iff P=_\fe Q.
\]
\end{definition}

In~\cite{Stau}, it is proven that the set \FFELe\ of equational axioms in Table~\ref{tab:FFELe} 
(completely) axiomatises equality of evaluation trees; this proof is based on normal forms for \SP.
Below, we repeat the completeness proof of~\cite{Stau} almost verbatim.
In the remainder of this paper we define several equational logics that model sequential 
full evaluation according to various evaluation strategies and the
corresponding adaptations of evaluation trees. A general reference to equational
logics is~\cite{Burris}.

The following lemma shows some useful equations illustrating the special
properties of terms of the form $x \fulland \fa$ and $x \fullor \tr$. The
first is an `extension' of axiom~\eqref{FFEL8} and the others show two different
ways how terms of the form $x \fullor \tr$, and by duality terms of the form
$x \fulland \fa$, can change the main connective of a term.

\begin{lemma}
\label{la:feqs}
The following equations are consequences of $\FFELe$.
\textup{
\begin{enumerate}
\setlength\itemsep{5pt}
\item $x \fulland (y \fulland \fa) = \neg x \fulland (y \fulland \fa)$,
  \label{eq:a1}
\item $(x \fullor \tr) \fulland y = \neg(x \fullor \tr) \fullor y$,
  \label{eq:a5}
\item $x \fullor (y \fulland (z \fullor \tr)) = (x \fullor y) \fulland
  (z \fullor \tr)$.
  \label{eq:a2}
\end{enumerate}
}
\end{lemma}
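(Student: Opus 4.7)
My plan is to prove each of the three equations by short equational chases, exploiting two recurring tricks: axioms \eqref{FFEL7} and \eqref{FFEL8} let us move $\fa$ past a term and flip a negation in front of it, while axioms \eqref{FFEL9} and \eqref{FFEL10} convert between $\fulland$-patterns ending in $\fa$ and $\fullor$-patterns ending in $\tr$. Before starting, I would note a small preparatory fact: $\fullor$ is associative. This follows from \eqref{FFEL2}, \eqref{FFEL3} and the associativity \eqref{FFEL4} of $\fulland$ by a routine rewriting, and I expect to need it explicitly for part~\ref{eq:a2}.

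For part~\ref{eq:a1}, the idea is to push the $\fa$ to the front. Using \eqref{FFEL7} to swap $y \fulland \fa = \fa \fulland y$, then \eqref{FFEL4} to reassociate, the term $x \fulland (y \fulland \fa)$ becomes $(x \fulland \fa) \fulland y$. Now \eqref{FFEL8} converts the leading factor to $\neg x \fulland \fa$, and reversing the first two steps yields $\neg x \fulland (y \fulland \fa)$. For part~\ref{eq:a5}, the quickest route goes through the identity $\neg(x \fullor \tr) = x \fulland \fa$. Indeed, by \eqref{FFEL1} and \eqref{FFEL2}, $x \fullor \tr = \neg(\neg x \fulland \fa)$, so \eqref{FFEL3} gives $\neg(x \fullor \tr) = \neg x \fulland \fa$, and one more application of \eqref{FFEL8} turns this into $x \fulland \fa$. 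Then \eqref{FFEL9} delivers $\neg(x \fullor \tr) \fullor y = (x \fulland \fa) \fullor y = (x \fullor \tr) \fulland y$.

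For part~\ref{eq:a2}, I would work from the inside out on the left-hand side. A single application of \eqref{FFEL10} rewrites $y \fulland (z \fullor \tr)$ as $y \fullor (z \fulland \fa)$, so the left-hand side becomes $x \fullor (y \fullor (z \fulland \fa))$. Now the preparatory associativity of $\fullor$ lets me regroup this as $(x \fullor y) \fullor (z \fulland \fa)$, and a second application of \eqref{FFEL10} (read left-to-right on the outer expression) produces $(x \fullor y) \fulland (z \fullor \tr)$, as required.

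The only step that is not a direct one-line manipulation is the associativity of $\fullor$ used in part~\ref{eq:a2}; I expect this to be the main nuisance of the proof, but since $\fullor$ is explicitly defined in terms of $\fulland$ and $\neg$ by \eqref{FFEL2}, it reduces to associativity of $\fulland$ after two applications of \eqref{FFEL3}. Apart from that, every step is a direct invocation of one of the axioms \eqref{FFEL1}--\eqref{FFEL10}.
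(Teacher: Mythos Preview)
Your proposal is correct and follows essentially the same route as the paper. Part~1 matches the paper exactly; your part~2 runs the paper's chain in reverse (proving $\neg(x\fullor\tr)=x\fulland\fa$ first, then applying \eqref{FFEL9}) but uses the same axioms; part~3 is identical, with the paper invoking the dual of \eqref{FFEL4} where you spell out that associativity of $\fullor$ follows from \eqref{FFEL2}--\eqref{FFEL4}.
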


\begin{proof}
We derive the equations in order.
\begin{align*}
x \fulland (y \fulland \fa)
&= (x \fulland \fa) \fulland y
&&\textrm{by \eqref{FFEL7} and \eqref{FFEL4}} \\
&= (\neg x \fulland \fa) \fulland y
&&\textrm{by \eqref{FFEL8}} \\
&= \neg x \fulland (y \fulland \fa),
&&\textrm{by \eqref{FFEL7} and \eqref{FFEL4}} 
\\[5pt]
(x \fullor \tr) \fulland y
&= (x \fulland \fa) \fullor y
&&\textrm{by \eqref{FFEL9}} \\
&= (\neg x \fulland \fa) \fullor y
&&\textrm{by \eqref{FFEL8}} \\
&= (\neg x \fulland \neg \tr) \fullor y
&&\textrm{by \eqref{FFEL1}} \\
&= \neg(x \fullor \tr) \fullor y,
&&\textrm{by \eqref{FFEL3} and \eqref{FFEL2}} 
\\[5pt]
x \fullor (y \fulland (z \fullor \tr))
&= x \fullor (y \fullor (z \fulland \fa))
&&\textrm{by \eqref{FFEL10}} \\
&= (x \fullor y) \fullor (z \fulland \fa)
&&\textrm{by the dual of \eqref{FFEL4}} \\
&= (x \fullor y) \fulland (z \fullor \tr).
&&\textrm{by \eqref{FFEL10}} 
\end{align*}
\end{proof}

We now turn to the equational logic defined by $\FFELe$, which we will show is an
axiomatization of $\FFEL$. This set of equations was first presented by Blok in
\cite{Blo11}.

If two $\FEL$-terms $s$ and $t$, possibly containing variables, are
derivable in $\FFELe$, we write $\FFELe \vdash
s = t$ and say that $s$ and $t$ are derivably equal. By virtue of
\eqref{FFEL1} through \eqref{FFEL3}, ${\fulland}$ is the dual of ${\fullor}$
and \tr\ is the dual of \fa,
and hence the duals of the equations in $\FFELe$ are also derivable. We will
use this fact implicitly throughout our proofs.

\begin{lemma}[Soundness]
\label{la:Fsound}
For all $P, Q \in \SP$, if $\FFELe \vdash P = Q$ then $\FFEL \models P = Q$.
\end{lemma}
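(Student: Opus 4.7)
The plan is to prove soundness by a routine induction on the length of a derivation $\FFELe \vdash P = Q$, using Lemma~\ref{la:Fcongruence} (that $=_\fe$ is a congruence) to discharge the reflexivity, symmetry, transitivity, substitution and congruence steps. Since by Definition~\ref{def:FFEL} we have $\FFEL \models P = Q \iff \fe(P) = \fe(Q)$, the whole task reduces to verifying that each axiom \eqref{FFEL1}--\eqref{FFEL10} holds under $=_\fe$, that is, that both sides yield identical evaluation trees for every instantiation of variables by elements of $\SP$.

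The small axioms are immediate: for \eqref{FFEL1}, $\fe(\neg\tr) = \tr[\tr\mapsto\fa,\fa\mapsto\tr] = \fa$; for \eqref{FFEL3}, applying the replacement $[\tr\mapsto\fa,\fa\mapsto\tr]$ twice to any $X \in \NT$ restores $X$, by structural induction on $X$; for \eqref{FFEL5} and \eqref{FFEL6}, $\fe(\tr)[\tr\mapsto\fe(P),\fa\mapsto\fe(P)[\tr\mapsto\fa]] = \fe(P)$ and $\fe(P)[\tr\mapsto\tr,\fa\mapsto\fa[\tr\mapsto\fa]] = \fe(P)$; and \eqref{FFEL2} follows from the definition of $\fullor$ combined with the double-negation computation already used for \eqref{FFEL3}. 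The associativity axiom \eqref{FFEL4} and the mixed-connective axioms \eqref{FFEL9} and \eqref{FFEL10} are the real work: for each, I would expand the left- and right-hand sides using Definition~\ref{def:fe} and then apply the composition identity for repeated replacement stated just before Definition~\ref{def:fe}, namely
\[
X[\tr\mapsto Y_1,\fa\mapsto Z_1][\tr\mapsto Y_2,\fa\mapsto Z_2]
= X[\tr\mapsto Y_1[\tr\mapsto Y_2,\fa\mapsto Z_2],\,\fa\mapsto Z_1[\tr\mapsto Y_2,\fa\mapsto Z_2]],
\]
together with structural induction on the outermost tree to reduce both sides to a common form.

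For the side-effect axioms \eqref{FFEL7} and \eqref{FFEL8}, the key observation is that both $\fe(P\fulland\fa)$ and $\fe(\fa\fulland P)$ compute to $\fe(P)[\tr\mapsto\fa,\fa\mapsto\fa]$, i.e.\ the tree obtained from $\fe(P)$ by turning every leaf into $\fa$, which moreover equals $\fe(\neg P)[\tr\mapsto\fa,\fa\mapsto\fa]$ since negation only swaps $\tr$- and $\fa$-leaves; this yields both axioms simultaneously. Finally, the congruence inductive step is handled exactly as in the proof of Lemma~\ref{la:Fcongruence}: if $\fe(P)=\fe(Q)$, then any common context built from $\neg$, $\fulland$, $\fullor$ produces equal evaluation trees because the defining clauses of $\fe$ are functions of the evaluation trees of their arguments.

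The main obstacle is bookkeeping rather than conceptual: axiom \eqref{FFEL10}, $x\fullor(y\fulland\fa)=x\fulland(y\fullor\tr)$, requires unfolding two nested replacements on each side and then recognising that both collapse to $\fe(X)[\tr\mapsto\fe(Y)[\tr\mapsto\fa,\fa\mapsto\fa],\,\fa\mapsto\fe(Y)[\tr\mapsto\fa,\fa\mapsto\fa]]$ after applying the composition identity. I expect a single careful calculation to settle \eqref{FFEL9} and \eqref{FFEL10}, with all remaining axioms reducing to one- or two-line tree computations.
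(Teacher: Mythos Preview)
Your approach is exactly the paper's: invoke Lemma~\ref{la:Fcongruence} so that soundness reduces to checking each axiom of \FFELe\ under $=_\fe$, then verify these by unfolding Definition~\ref{def:fe} and the replacement-composition identity. The paper in fact only spells out \eqref{FFEL8} and declares the rest ``easily verified''; your more detailed sketch is in the same spirit.

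There is one concrete slip, however. For \eqref{FFEL10} you claim both sides collapse to
\[
\fe(P)\bigl[\tr\mapsto\fe(Q)[\tr\mapsto\fa,\fa\mapsto\fa],\ \fa\mapsto\fe(Q)[\tr\mapsto\fa,\fa\mapsto\fa]\bigr],
\]
i.e.\ the tree with \emph{every} leaf $\fa$. That is not what happens: on the left, $\fe(Q\fulland\fa)$ is the all-$\fa$ tree, but then $\fe(Q\fulland\fa)[\fa\mapsto\tr]$ (the substitution for the $\tr$-leaves of $\fe(P)$ in the $\fullor$-clause) is the all-$\tr$ tree; on the right, $\fe(Q\fullor\tr)$ is already the all-$\tr$ tree and $\fe(Q\fullor\tr)[\tr\mapsto\fa]$ is the all-$\fa$ tree. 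So both sides actually equal
\[
\fe(P)\bigl[\tr\mapsto\fe(Q)[\tr\mapsto\tr,\fa\mapsto\tr],\ \fa\mapsto\fe(Q)[\tr\mapsto\fa,\fa\mapsto\fa]\bigr].
\]
With this correction your argument is fine and matches the paper.
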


\begin{proof}
By Lemma~\ref{la:Fcongruence}, $=_\fe$ is a congruence, so it suffices to prove the validity of 
the equations in $\FFELe$, which is easily verified. As an example we show this for \eqref{FFEL8}. 
\begin{align*}
\FE(P \fulland \fa) &=
\FE(P)\ssub{\tr}{\fa}{\fa}{\fa\sub{\tr}{\fa}}
&&\textrm{by definition} \\
&= \FE(P)\sub{\tr}{\fa}
&&\textrm{because $\fa\sub{\tr}{\fa} = \fa$} \\
&= \FE(P)\ssub{\tr}{\fa}{\fa}{\tr}\sub{\tr}{\fa}
&&\textrm{by induction} \\
&= \FE(\neg P \fulland \fa),
\end{align*}
where the induction that proves the third equality is on the structure of
evaluation trees.
\end{proof}

\paragraph{FEL Normal Form.}
To aid in our completeness proof we define a normal form for $\FEL$-terms. Due
to the possible presence of side effects, $\FFEL$ does not identify terms which
contain different atoms or the same atoms in a different order. Because of
this, common normal forms for propositional logic are not normal forms for $\FEL$-terms. For
example, rewriting a term to Conjunctive Normal Form or Disjunctive Normal Form
may require duplicating some of the atoms in the term, thus yielding a term
that is not derivably equal to the original. We first present the grammar for
our normal form, before motivating it. The normal form we present here is an
adaptation of a normal form proposed by Blok in \cite{Blo11}.

\begin{definition}
\label{def:fnf}
A term $P \in \FT$ is said to be in \textbf{$\FEL$ Normal Form $(\FNF)$} if it
is generated by the following grammar.
\begin{align*}
P \in \FNF &::= P^\tr ~\mid~ P^\fa ~\mid~ P^\tr \fulland P^* \\
P^* &::= P^c ~\mid~ P^d \\
P^c &::= P^\ell ~\mid~ P^* \fulland P^d \\
P^d &::= P^\ell ~\mid~ P^* \fullor P^c \\
P^\ell &::= a \fulland P^\tr ~\mid~ \neg a \fulland P^\tr \\
P^\tr &::= \tr ~\mid~ a \fullor P^\tr \\
P^\fa &::= \fa ~\mid~ a \fulland P^\fa,
\end{align*}
where $a \in A$. We refer to $P^*$-forms as $*$-terms, to $P^\ell$-forms as
$\ell$-terms, to $P^\tr$-forms as $\tr$-terms and to $P^\fa$-forms as
$\fa$-terms. A term of the form $P^\tr \fulland P^*$ is referred to as a
$\tr$-$*$-term.
\end{definition}

We immediately note that if it were not for the presence of $\tr$ and
$\fa$ we could define a much simpler normal form. In that case it would
suffice to `push in' or `push down' the negations, thus obtaining a Negation
Normal Form (NNF), as exists for propositional logic. Naturally if our set $A$ of atoms is empty, the
truth value constants would be a normal form.

When considering the image of $\FE$ we note that some trees only have
$\tr$-leaves, some only have $\fa$-leaves and some have both
$\tr$-leaves and $\fa$-leaves. For any $\FEL$-term $P$, $\FE(P \fullor
\tr)$ is a tree with only $\tr$-leaves, as can easily be seen from the
definition of $\FE$. All terms $P$ such that $\FE(P)$ only has $\tr$-leaves
are rewritten to $\tr$-terms.  Similarly $\FE(P \fulland \fa)$ is a tree
with only $\fa$-leaves. All terms $P$ such that $\FE(P)$ only has
$\fa$-leaves are rewritten to $\fa$-terms. The simplest trees in the
image of $\FE$ that have both $\tr$-leaves and $\fa$-leaves are $\FE(a)$
for $a \in A$. Any (occurrence of an) atom that determines (in whole or in part)
the evaluation result of a term, such as $a$ in this example, is referred to as a
determinative (occurrence of an) atom. This as opposed to a non-determinative
(occurrence of an) atom, such as the $a$ in $a \fullor \tr$, which does not
determine (either in whole or in part) the evaluation result of the term.  Note that a term
$P$ such that $\FE(P)$ contains both $\tr$ and $\fa$ must contain at least
one determinative atom.

Terms that contain at least one determinative atom will be rewritten to
$\tr$-$*$-terms. In $\tr$-$*$-terms we encode each determinative atom
together with the non-determinative atoms that occur between it and the next
determinative atom in the term (reading from left to right) as an $\ell$-term.
Observe that the first atom in an $\ell$-term is the (only) determinative atom
in that $\ell$-term and that determinative atoms only occur in $\ell$-terms.
Also observe that the evaluation result of an $\ell$-term is that of its determinative
atom. This is intuitively convincing, because the remainder of the atoms in any
$\ell$-term are non-determinative and hence do not contribute to its evaluation result. 
The non-determinative atoms that may occur before the first determinative atom are
encoded as a $\tr$-term. A $\tr$-$*$-term is the conjunction of a
$\tr$-term encoding such atoms and a $*$-term, which contains only
conjunctions and disjunctions of $\ell$-terms. We could also have encoded such
atoms as an $\fa$-term and then taken the disjunction with a $*$-term to
obtain a term with the same semantics. We consider $\ell$-terms to be `basic'
in $*$-terms in the sense that they are the smallest grammatical unit that
influences the evaluation result of the $*$-term.

In the following, $P^\tr, P^\ell$, etc.~are used both to denote grammatical
categories and as variables for terms in those categories. The remainder of
this section is concerned with defining and proving correct the normalization
function $\nf: \FT \to \FNF$.  We will define $\nf$ recursively using the
functions
\begin{equation*}
\nf^n: \FNF \to \FNF \quad\text{and}\quad
\nf^c: \FNF \times \FNF \to \FNF.
\end{equation*}
The first of these will be used to rewrite negated $\FNF$-terms to $\FNF$-terms
and the second to rewrite the conjunction of two $\FNF$-terms to an
$\FNF$-term. By \eqref{FFEL2} we have no need for a dedicated function that
rewrites the disjunction of two $\FNF$-terms to an $\FNF$-term.

We start by defining $\nf^n$. Analysing the semantics of $\tr$-terms and
$\fa$-terms together with the definition of $\FE$ on negations, it becomes
clear that $\nf^n$ must turn $\tr$-terms into $\fa$-terms and vice versa.
We also remark that $\nf^n$ must preserve the left-associativity of the
$*$-terms in $\tr$-$*$-terms, modulo the associativity within $\ell$-terms.
We define $\nf^n: \FNF \to \FNF$ as follows, using the auxiliary function
$\nf^n_1: P^* \to P^*$ to `push down' or `push in' the negation symbols when
negating a $\tr$-$*$-term. We note that there is no ambiguity between the
different grammatical categories present in an $\FNF$-term, i.e., any
$\FNF$-term is in exactly one of the grammatical categories identified in
Definition \ref{def:fnf}.
\begin{align}
\nf^n(\tr) &= \fa
  \label{eq:nfn1} \\
\nf^n(a \fullor P^\tr) &= a \fulland \nf^n(P^\tr)
  \label{eq:nfn2} \\
\nf^n(\fa) &= \tr
  \label{eq:nfn3} \\
\nf^n(a \fulland P^\fa) &= a \fullor \nf^n(P^\fa).
  \label{eq:nfn4} \\
\nf^n(P^\tr \fulland Q^*) &= P^\tr \fulland \nf^n_1(Q^*)
  \label{eq:nfn5} \\ \displaybreak[0]
\nf^n_1(a \fulland P^\tr) &= \neg a \fulland P^\tr
  \label{eq:nfn6} \\
\nf^n_1(\neg a \fulland P^\tr) &= a \fulland P^\tr
  \label{eq:nfn7} \\
\nf^n_1(P^* \fulland Q^d) &= \nf^n_1(P^*) \fullor \nf^n_1(Q^d)
  \label{eq:nfn8} \\
\nf^n_1(P^* \fullor Q^c) &= \nf^n_1(P^*) \fulland \nf^n_1(Q^c)
  \label{eq:nfn9}
\end{align}

Now we turn to defining $\nf^c$. These definitions have a great deal of
inter-dependence so we first present the definition for $\nf^c$ when the first
argument is a $\tr$-term. We see that the conjunction of a $\tr$-term with
another term always yields a term of the same grammatical category as the
second conjunct.
\begin{align}
\nf^c(\tr, P) &= P
  \label{eq:nfc1} \\
\nf^c(a \fullor P^\tr, Q^\tr) &= a \fullor \nf^c(P^\tr, Q^\tr)
  \label{eq:nfc2} \\
\nf^c(a \fullor P^\tr, Q^\fa) &= a \fulland \nf^c(P^\tr, Q^\fa)
  \label{eq:nfc3} \\
\nf^c(a \fullor P^\tr, Q^\tr \fulland R^*) &= \nf^c(a \fullor P^\tr,
  Q^\tr) \fulland R^*
  \label{eq:nfc4}
\end{align}

For defining $\nf^c$ where the first argument is an $\fa$-term we make use
of \eqref{FFEL7} when dealing with conjunctions of $\fa$-terms with
$\tr$-$*$-terms. The definition of $\nf^c$ for the arguments used in the
right hand side of \eqref{eq:nfc7} starts at \eqref{eq:nfc14}. We note that
despite the high level of inter-dependence in these definitions, this does not
create a circular definition. We also note that the conjunction of an
$\fa$-term with another term is always itself an $\fa$-term. 
\begin{align}
\nf^c(\fa, P^\tr) &= \nf^n(P^\tr)
  \label{eq:nfc5} \\
\nf^c(\fa, P^\fa) &= P^\fa
  \label{eq:nfc6} \\
\nf^c(\fa, P^\tr \fulland Q^*) &= \nf^c(P^\tr \fulland Q^*, \fa)
  \label{eq:nfc7} \\
\nf^c(a \fulland P^\fa, Q) &= a \fulland \nf^c(P^\fa, Q)
  \label{eq:nfc8}
\end{align}

The case where the first conjunct is a $\tr$-$*$-term is the most
complicated. Therefore we first consider the case where the second conjunct is
a $\tr$-term. In this case we must make the $\tr$-term part of the last
(rightmost) $\ell$-term in the $\tr$-$*$-term, so that the result will
again be a $\tr$-$*$-term. For this `pushing in' of the second conjunct we
define an auxiliary function $\nf^c_1: P^* \times P^\tr \to P^*$.
\begin{align}
\nf^c(P^\tr \fulland Q^*, R^\tr) &= P^\tr \fulland \nf^c_1(Q^*, R^\tr)
  \label{eq:nfc9} \\
\nf^c_1(a \fulland P^\tr, Q^\tr) &= a \fulland \nf^c(P^\tr, Q^\tr)
  \label{eq:nfc10} \\
\nf^c_1(\neg a \fulland P^\tr, Q^\tr) &= \neg a \fulland \nf^c(P^\tr,
  Q^\tr)
  \label{eq:nfc11} \\
\nf^c_1(P^* \fulland Q^d, R^\tr) &= P^* \fulland \nf^c_1(Q^d, R^\tr)
  \label{eq:nfc12} \\
\nf^c_1(P^* \fullor Q^c, R^\tr) &= P^* \fullor \nf^c_1(Q^c, R^\tr)
  \label{eq:nfc13}
\end{align}

When the second conjunct is an $\fa$-term, the result will naturally be an
$\fa$-term itself. So we need to convert the $\tr$-$*$-term to an
$\fa$-term. Using \eqref{FFEL4} we reduce this problem to converting a
$*$-term to an $\fa$-term, for which we use the auxiliary function
$\nf^c_2: P^* \times P^\fa \to P^\fa$.
\begin{align}
\nf^c(P^\tr \fulland Q^*, R^\fa) &= \nf^c(P^\tr, \nf^c_2(Q^*, R^\fa)) 
  \label{eq:nfc14} \\
\nf^c_2(a \fulland P^\tr, R^\fa) &= a \fulland \nf^c(P^\tr, R^\fa)
  \label{eq:nfc15} \\
\nf^c_2(\neg a \fulland P^\tr, R^\fa) &= a \fulland \nf^c(P^\tr,
  R^\fa)
  \label{eq:nfc16} \\
\nf^c_2(P^* \fulland Q^d, R^\fa) &= \nf^c_2(P^*, \nf^c_2(Q^d, R^\fa))
  \label{eq:nfc17} \\
\nf^c_2(P^* \fullor Q^c, R^\fa) &= \nf^c_2(P^*, \nf^c_2(Q^c, R^\fa))
  \label{eq:nfc18}
\end{align}

Finally we are left with conjunctions and disjunctions of two
$\tr$-$*$-terms, thus completing the definition of $\nf^c$.  We use the
auxiliary function $\nf^c_3: P^* \times P^\tr \fulland P^* \to P^*$ to ensure
that the result is a $\tr$-$*$-term.
\begin{align}
\nf^c(P^\tr \fulland Q^*, R^\tr \fulland S^*) &= P^\tr \fulland 
  \nf^c_3(Q^*, R^\tr \fulland S^*)
  \label{eq:nfc19} \\
\nf^c_3(P^*, Q^\tr \fulland R^\ell) &= \nf^c_1(P^*, Q^\tr) \fulland R^\ell
  \label{eq:nfc20} \\
\nf^c_3(P^*, Q^\tr \fulland (R^* \fulland S^d)) &= \nf^c_3(P^*, Q^\tr
  \fulland R^*) \fulland S^d
  \label{eq:nfc21} \\
\nf^c_3(P^*, Q^\tr \fulland (R^* \fullor S^c)) &= \nf^c_1(P^*, Q^\tr)
  \fulland (R^* \fullor S^c)
  \label{eq:nfc22}
\end{align}

As promised, we now define the normalization function $\nf: \FT \to \FNF$
recursively, using $\nf^n$ and $\nf^c$, as follows.
\begin{align}
\nf(a) &= \tr \fulland (a \fulland \tr)
  \label{eq:nf1} \\
\nf(\tr) &= \tr
  \label{eq:nf2} \\
\nf(\fa) &= \fa
  \label{eq:nf3} \\
\nf(\neg P) &= \nf^n(\nf(P))
  \label{eq:nf4} \\
\nf(P \fulland Q) &= \nf^c(\nf(P), \nf(Q))
  \label{eq:nf5} \\
\nf(P \fullor Q) &= \nf^n(\nf^c(\nf^n(\nf(P)), \nf^n(\nf(Q))))
  \label{eq:nf6}
\end{align}

\begin{theorem}
\label{thm:nf}
For any $P \in \FT$, $\nf(P)$ terminates, $\nf(P) \in \FNF$ and $\EqFFEL \vdash
\nf(P) = P$.
\end{theorem}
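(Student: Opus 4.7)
The plan is to reduce Theorem~\ref{thm:nf} to three supporting claims about the auxiliary functions, and then handle the outer induction on $P$ routinely using equations~\eqref{eq:nf1}--\eqref{eq:nf6}. Specifically, I will establish:
\begin{itemize}
\item[(N)] For every $P\in\FNF$, $\nf^n(P)$ terminates, lies in $\FNF$, and $\EqFFEL\vdash\nf^n(P)=\neg P$; and analogously for $\nf^n_1$, where $\EqFFEL\vdash P^\tr\fulland\nf^n_1(Q^*)=\neg(P^\tr\fulland Q^*)$.
\item[(C)] For every $P,Q\in\FNF$, $\nf^c(P,Q)$ terminates, lies in $\FNF$, and $\EqFFEL\vdash\nf^c(P,Q)=P\fulland Q$.
\item[(Aux)] The corresponding statements for $\nf^c_1,\nf^c_2,\nf^c_3$, each identifying $\nf^c_1(P^*,Q^\tr)$ with $P^*\fulland Q^\tr$, $\nf^c_2(P^*,R^\fa)$ with $P^*\fulland R^\fa$, and $\nf^c_3(P^*,Q^\tr\fulland S^*)$ with $P^*\fulland(Q^\tr\fulland S^*)$, modulo $\EqFFEL$.
\end{itemize}

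Given (N), (C) and (Aux), the theorem follows by induction on $P\in\FT$: the base cases~\eqref{eq:nf1}--\eqref{eq:nf3} are immediate (one checks $\tr\fulland(a\fulland\tr)\in\FNF$ and equals $a$ via~\eqref{FFEL5}--\eqref{FFEL6}); \eqref{eq:nf4} uses (N) on $\nf(P)$; \eqref{eq:nf5} uses (C) on $\nf(P),\nf(Q)$; and \eqref{eq:nf6} combines the two together with~\eqref{FFEL3} and~\eqref{FFEL2} to identify $\neg(\neg\nf(P)\fulland\neg\nf(Q))$ with $\nf(P)\fullor\nf(Q)$, which by congruence equals $P\fullor Q$.

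To prove (N), (C) and (Aux), I would argue termination first. Assign to each $\FNF$-term its symbol count and to each argument tuple a lexicographic measure that strictly decreases in every defining equation: for $\nf^n$ and $\nf^n_1$, the first (and only) argument shrinks; for $\nf^c$, the pair $(|P|,|Q|)$ decreases lexicographically in~\eqref{eq:nfc2}--\eqref{eq:nfc4}, \eqref{eq:nfc8}, \eqref{eq:nfc9}, and~\eqref{eq:nfc19}, while~\eqref{eq:nfc5} and~\eqref{eq:nfc7} invoke already-terminating functions ($\nf^n$, or $\nf^c$ with smaller first argument); $\nf^c_2$ and $\nf^c_3$ visibly shrink in their structurally-recursive argument, and the nested calls to $\nf^c$ and $\nf^c_2$ inside them are on strictly smaller subterms of the original first argument. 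Grammatical correctness is then a direct check case by case, noting in particular that the grammar forces every $P^*$ to decompose as either a $\ell$-term or a $\fulland$-/$\fullor$-composition, so the clauses of $\nf^n_1$, $\nf^c_1$, $\nf^c_2$, $\nf^c_3$ are exhaustive.

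The core work, and the main obstacle, is the derivability of the defining equations in $\EqFFEL$. Most cases are straightforward applications of~\eqref{FFEL4}--\eqref{FFEL6} and the induction hypothesis (for instance,~\eqref{eq:nfc8} reduces to an instance of~\eqref{FFEL4}, and~\eqref{eq:nfc14} uses~\eqref{FFEL4} to regroup $P^\tr\fulland Q^*\fulland R^\fa$). The delicate clauses are~\eqref{eq:nfn6}--\eqref{eq:nfn9} together with~\eqref{eq:nfn5}: one must show $\neg(a\fulland P^\tr)=\neg a\fulland P^\tr$ and $\neg(P^*\fulland Q^d)=\nf^n_1(P^*)\fullor\nf^n_1(Q^d)$ (and duals), which hinges on the fact that every $\tr$-term $P^\tr$ has $\FE(P^\tr)$ with only $\tr$-leaves, so $P^\tr=P^\tr\fullor\tr$ is derivable. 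This sub-lemma I would prove by a separate induction on $P^\tr$ (using the dual of~\eqref{FFEL8} and Lemma~\ref{la:feqs}), after which~\eqref{FFEL8}--\eqref{FFEL10} and Lemma~\ref{la:feqs} permit pushing negations through $\fulland$ and $\fullor$ to convert them to $\fullor$ and $\fulland$ respectively in the $*$-skeleton. Once this key ``negation push-through'' is derivable, the remaining equations for $\nf^c$ and its auxiliaries follow by the inductive hypothesis together with associativity~\eqref{FFEL4}.
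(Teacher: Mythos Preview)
Your proposal is correct and follows essentially the same route as the paper: the appendix proves exactly your claims (N), (C), (Aux) as a sequence of lemmas (first the key sub-lemma $P^\tr=P^\tr\fullor\tr$ and its dual, then $\nf^n$, then $\nf^c$ split by grammatical category of the first argument), and the outer induction on $P$ is the one-paragraph argument you give.

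One small slip to tighten: your termination claim for~\eqref{eq:nfc7} is wrong as stated. You write that it ``invokes $\nf^c$ with smaller first argument,'' but $\nf^c(\fa,P^\tr\fulland Q^*)=\nf^c(P^\tr\fulland Q^*,\fa)$ has a \emph{larger} first argument. Termination here works only because this call lands in clause~\eqref{eq:nfc14}, which in turn calls $\nf^c_2$ (terminating by structural recursion on $Q^*$) and then $\nf^c(P^\tr,\cdot)$, which is handled entirely by~\eqref{eq:nfc1}--\eqref{eq:nfc4}. A clean termination argument thus orders the clauses by the grammatical category of the first argument ($\tr$-terms first, then $\tr$-$*$-terms with $\fa$-term second argument, then $\fa$-terms, then the general case) rather than by a single lexicographic size measure. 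The paper does not spell this out either, but your phrasing as written is incorrect.
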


In Appendix~\ref{app:A2} we first prove a number of lemmas showing that the
definitions $\nf^n$ and $\nf^c$ are correct and use those to prove the theorem.
The main reason to use a normalization function rather than a
term rewriting system to prove the correctness of $\FNF$ is that this relieves us 
of the need to prove the confluence of the
induced rewriting system, thus simplifying the proof.

\paragraph{Tree Structure.}
Below we prove that $\EqFFEL$ axiomatises $\FFEL$ by
showing that for $P \in \FNF$ we can invert $\FE(P)$. To do this we need to
prove several structural properties of the trees in the image of $\FE$. In the
definition of $\FE$ we can see how $\FE(P \fulland Q)$ is assembled from
$\FE(P)$ and $\FE(Q)$ and similarly for $\FE(P \fullor Q)$. To decompose
these trees we introduce some notation. The trees in the image of $\FE$
are all finite binary trees over $A$ with leaves in $\{\tr, \fa\}$, i.e.,
$\FE[\FT] \subseteq \T$. We will now also consider the set $\Tone$ of binary
trees over $A$ with leaves in $\{\tr, \fa, \triangle\}$. 
Similarly we consider $\Ttwo$, the set of binary
trees over $A$ with leaves in $\{\tr, \fa, \triangle_1, \triangle_2\}$. The $\triangle$,
$\triangle_1$ and $\triangle_2$ will be used as placeholders when composing or
decomposing trees.  Replacement of the leaves of trees in $\Tone$ and $\Ttwo$
by trees (either in $\T$, $\Tone$ or $\Ttwo$) is defined analogous to
replacement for trees in $\T$, adopting the same notational conventions.

For example we have by definition of $\FE$ that $\FE(P \fulland Q)$ can be
decomposed as
\begin{equation*}
\FE(P)\ssub{\tr}{\triangle_1}{\fa}{\triangle_2}
\ssub{\triangle_1}{\FE(Q)}{\triangle_2}{\FE(Q)\sub{\tr}{\fa}},
\end{equation*}
where $\FE(P)\ssub{\tr}{\triangle_1}{\fa}{\triangle_2} \in \Ttwo$ and $\FE(Q)$ and
$\FE(Q)\sub{\tr}{\fa}$ are in $\T$.  We note that this only works because
the trees in the image of $\FE$, or more general, in $\T$, do not contain any
triangles. Similarly, as we discussed previously, $\FE(P \fulland \fa) =
\FE(P)\sub{\tr}{\fa}$, which we can write as
$\FE(P)\sub{\tr}{\triangle}\sub{\triangle}{\fa}$. We start by analysing the
$\FE$-image of $\ell$-terms.

\begin{lemma}[Structure of $\ell$-terms]
\label{lem:litstf}
There is no $\ell$-term $P$ such that $\FE(P)$ can be decomposed as
$X\sub{\triangle}{Y}$ with $X \in \Tone$ and $Y \in \T$, where $X \neq \triangle$, but
does contain $\triangle$, and $Y$ contains occurrences of both $\tr$ and
$\fa$.
\end{lemma}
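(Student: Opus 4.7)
The plan is to prove the statement by contradiction, exploiting the very restricted shape of $\FE(P)$ when $P$ is an $\ell$-term. First I would record an auxiliary invariant by structural induction on $\tr$-terms: for every $P^\tr$, the tree $\FE(P^\tr)$ has only $\tr$-leaves. The base $\FE(\tr) = \tr$ is immediate, and for $P^\tr = a \fullor Q^\tr$ unfolding the $\fullor$-clause of Definition~\ref{def:fe} together with the induction hypothesis (which gives $\FE(Q^\tr)[\fa \mapsto \tr] = \FE(Q^\tr)$) yields $\FE(a \fullor Q^\tr) = \FE(Q^\tr) \unlhd a \unrhd \FE(Q^\tr)$, still containing only $\tr$-leaves.

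Applying this to the two productions for $\ell$-terms and unfolding $\FE$ gives
\[
\FE(a \fulland P^\tr) = \FE(P^\tr) \unlhd a \unrhd \FE(P^\tr)[\tr \mapsto \fa],
\quad
\FE(\neg a \fulland P^\tr) = \FE(P^\tr)[\tr \mapsto \fa] \unlhd a \unrhd \FE(P^\tr).
\]
Thus in both cases $\FE(P)$ has the form $L \unlhd a \unrhd R$, with one of $L, R$ containing only $\tr$-leaves and the other only $\fa$-leaves (since replacing $\tr$ by $\fa$ in a $\tr$-only tree yields an $\fa$-only tree).

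Now suppose, for contradiction, that $\FE(P) = X[\triangle \mapsto Y]$ with $X \in \Tone$, $X \neq \triangle$, $X$ containing $\triangle$, and $Y$ having both a $\tr$- and an $\fa$-leaf. Since $X$ contains $\triangle$ yet is not the single leaf $\triangle$, and since its only other possible leaf labels are $\tr$ or $\fa$ (which do not account for any $\triangle$), $X$ must have internal structure $X = X_1 \unlhd b \unrhd X_2$. Comparing roots with $L \unlhd a \unrhd R$ forces $b = a$ and reduces to $X_1[\triangle \mapsto Y] = L$, $X_2[\triangle \mapsto Y] = R$. At least one of $X_1, X_2$ contains a $\triangle$-leaf, so in the corresponding branch of $\FE(P)$ the tree $Y$ occurs as a subtree; but that branch has only $\tr$-leaves or only $\fa$-leaves, so $Y$ would too, contradicting the hypothesis on $Y$.

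The only step that requires any real care is the auxiliary induction on $\tr$-terms; once the two branches of $\FE(P)$ are known to be monochromatic, the structural pigeonhole at the root closes the argument in a couple of lines.
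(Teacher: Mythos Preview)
Your proof is correct and follows essentially the same approach as the paper: both establish that the two branches from the root of $\FE(P)$ are monochromatic (one all $\tr$-leaves, the other all $\fa$-leaves), and then conclude that any $Y$ containing both truth values cannot be planted properly below the root. You spell out the induction on $\tr$-terms and the root-comparison argument that the paper leaves implicit, but the underlying idea is identical.
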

\begin{proof}
Let $P$ be some $\ell$-term. When we analyse the grammar of $P$ we find that
one branch from the root of $\FE(P)$ will only contain $\tr$ and not $\fa$
and the other branch vice versa.  Hence if $\FE(P) = X\sub{\triangle}{Y}$ and $Y$
contains occurrences of both $\tr$ and $\fa$, then $Y$ must contain the
root and hence $X = \triangle$.
\end{proof}

By definition a $*$-term contains at least one $\ell$-term and hence for any
$*$-term $P$, $\FE(P)$ contains both $\tr$ and $\fa$. The following lemma
provides the $\FE$-image of the rightmost $\ell$-term in a $*$-term to witness
this fact.

\begin{lemma}[Determinativeness]
\label{lem:pert}
For all $*$-terms $P$, $\FE(P)$ can be decomposed as $X\sub{\triangle}{Y}$ with $X
\in \Tone$ and $Y \in \T$ such that $X$ contains $\triangle$ and $Y = \FE(Q)$ for
some $\ell$-term $Q$. Note that $X$ may be $\triangle$. We will refer to $Y$ as the
witness for this lemma for $P$.
\end{lemma}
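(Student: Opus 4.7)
The plan is to proceed by structural induction on the $*$-term $P$, following the mutually recursive grammar for $P^*$, $P^c$, $P^d$, and $P^\ell$. The key semantic input, already noted in the paragraph preceding the lemma, is that for any $*$-term $R$, the tree $\FE(R)$ contains both a $\tr$-leaf and a $\fa$-leaf; this is what leaves room for the single $\triangle$ placeholder to be carried up through the composition steps. The base case $P = P^\ell$ is immediate: take $X := \triangle$ and $Y := \FE(P) = \FE(Q)$ with $Q := P$, so that $X \in \Tone$ trivially contains $\triangle$ and $X\sub{\triangle}{Y} = \FE(P)$.

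For the inductive step with conjunction, write $P = P_1 \fulland P_2$ where $P_1$ is a $*$-term and $P_2$ is a $d$-term (hence a $*$-term). Applying the induction hypothesis to $P_2$ yields $\FE(P_2) = X_2\sub{\triangle}{\FE(Q)}$ with $X_2 \in \Tone$ containing $\triangle$ and $Q$ an $\ell$-term. By Definition~\ref{def:fe},
\[
\FE(P) = \FE(P_1)\ssub{\tr}{\FE(P_2)}{\fa}{\FE(P_2)\sub{\tr}{\fa}}.
\]
I would then fix one $\tr$-leaf position $\pi$ in $\FE(P_1)$ (available by the key fact above), and build $X$ as follows: start from $\FE(P_1)$; at position $\pi$ substitute $X_2$ itself, carrying its single $\triangle$ upward; at every other $\tr$-leaf substitute the fully formed tree $\FE(P_2) = X_2\sub{\triangle}{\FE(Q)}$; and at every $\fa$-leaf substitute $\FE(P_2)\sub{\tr}{\fa}$. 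The resulting $X$ lies in $\Tone$ with exactly one $\triangle$, and the equation $X\sub{\triangle}{\FE(Q)} = \FE(P)$ holds by direct comparison with the displayed definition of $\FE(P)$.

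The disjunction case $P = P_1 \fullor P_2$, with $P_2$ a $c$-term (hence a $*$-term), is symmetric: apply the IH to $P_2$, use the definition
$\FE(P) = \FE(P_1)\ssub{\tr}{\FE(P_2)\sub{\fa}{\tr}}{\fa}{\FE(P_2)}$, and plant the $\triangle$-carrying copy of $X_2$ at any $\fa$-leaf position of $\FE(P_1)$, which exists because $P_1$ is a $*$-term. The main subtlety of the whole argument is exactly this commitment to a single leaf position of $\FE(P_1)$ to carry the placeholder while collapsing every other copy of the corresponding subtree to its fully substituted form in $\T$; without this careful bookkeeping, $X$ would either pick up multiple $\triangle$s or inadvertently land in $\Ttwo$ rather than $\Tone$.
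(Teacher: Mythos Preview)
Your proof is correct, but it is more elaborate than the paper's. Both proceed by the same structural induction on $*$-terms with the identical base case $X=\triangle$, $Y=\FE(P)$. In the inductive step for $P = P_1 \fulland P_2$, however, the paper simply takes
\[
X := \FE(P_1)\ssub{\tr}{X_2}{\fa}{\FE(P_2)\sub{\tr}{\fa}},
\]
substituting the $\triangle$-carrying $X_2$ at \emph{every} $\tr$-leaf of $\FE(P_1)$; then $X\sub{\triangle}{\FE(Q)} = \FE(P)$ holds in one line, since the only $\triangle$s in $X$ are those coming from the copies of $X_2$. Your concern that this would yield ``multiple $\triangle$s'' is unfounded: trees in $\Tone$ may carry any number of $\triangle$-leaves, and the operator $\sub{\triangle}{Y}$ replaces all of them uniformly, so nothing goes wrong. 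The upshot is that the paper's argument stays entirely within the leaf-replacement formalism already defined, whereas your construction requires a position-indexed substitution (different trees planted at different $\tr$-leaves of $\FE(P_1)$) that is not directly expressible via $\sub{\cdot}{\cdot}$. Your route does buy the mildly stronger conclusion that $X$ can be chosen with exactly one $\triangle$, but this refinement is never used downstream.
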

\begin{proof}
By induction on the complexity of $*$-terms $P$ modulo the complexity of
$\ell$-terms. In the base case $P$ is an $\ell$-term and $\FE(P) =
\triangle\sub{\triangle}{\FE(P)}$ is the desired decomposition by Lemma \ref{lem:litstf}.
For the induction we have to consider both $\FE(P \fulland Q)$ and $\FE(P
\fullor Q)$.

We treat only the case for $\FE(P \fulland Q)$, the case for $\FE(P \fullor Q)$
is analogous. Let $X\sub{\triangle}{Y}$ be the decomposition for $\FE(Q)$ which we
have by induction hypothesis. Since by definition of $\FE$ on ${\fulland}$ we
have
\begin{equation*}
\FE(P \fulland Q) = \FE(P)\ssub{\tr}{\FE(Q)}{\fa}{\FE(Q)
\sub{\tr}{\fa}},
\end{equation*}
we also have
\begin{align*}
\FE(P \fulland Q) &= \FE(P)\ssub{\tr}{X\sub{\triangle}{Y}}{\fa}{\FE(Q)
  \sub{\tr}{\fa}} \\
&= \FE(P)\ssub{\tr}{X}{\fa}{\FE(Q) \sub{\tr}{\fa}}\sub{\triangle}{Y},
\end{align*}
where the second equality is due to the fact that the only triangles in
\begin{equation*}
\FE(P)\ssub{\tr}{X}{\fa}{\FE(Q) \sub{\tr}{\fa}}
\end{equation*}
are those occurring in $X$. This gives our desired decomposition.
\end{proof}

The following lemma illustrates another structural property of trees in the
image of $*$-terms under $\FE$, namely that the left branch of any
determinative atom in such a tree is different from its right branch.

\begin{lemma}[Non-decomposition]
\label{lem:nondectf}
There is no $*$-term $P$ such that $\FE(P)$ can be decomposed as
$X\sub{\triangle}{Y}$ with $X \in \Tone$ and $Y \in \T$, where $X \neq \triangle$ and $X$
contains $\triangle$, but not $\tr$ or $\fa$.
\end{lemma}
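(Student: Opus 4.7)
My plan is to argue by structural induction on the $*$-term $P$, following the grammar of Definition~\ref{def:fnf} (base: $\ell$-terms; step: $P^* \fulland P^d$ and $P^* \fullor P^c$). The guiding intuition is that in $\FE(Q)$ for an $\ell$-term $Q$ the two branches at the root carry only $\tr$-leaves on one side and only $\fa$-leaves on the other, and this asymmetry is preserved under the $\fulland$- and $\fullor$-compositions used to build $*$-terms. A pure atom-skeleton $X$ with all leaves $\triangle$ placed above a single tree $Y$ cannot reproduce such asymmetry, since each leaf of $X$ gets filled with the same $Y$.

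For the base case, let $P$ be an $\ell$-term $a \fulland P^\tr$ (the case $\neg a \fulland P^\tr$ is symmetric). By Definition~\ref{def:fe}, $\FE(P) = \FE(P^\tr) \unlhd a \unrhd \FE(P^\tr)\sub{\tr}{\fa}$, whose left branch carries only $\tr$-leaves and whose right branch only $\fa$-leaves. If $\FE(P) = X\sub{\triangle}{Y}$ with $X \neq \triangle$ containing $\triangle$ and no $\tr$ or $\fa$, then matching roots forces $X = X_L \unlhd a \unrhd X_R$; since $X_L, X_R \in \Tone$ have leaves only in $\{\triangle\}$, each contains at least one $\triangle$. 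Hence the leaves of $Y$ must all be $\tr$ (from the left substitution) and simultaneously all be $\fa$ (from the right), a contradiction.

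For the inductive step, say $P = P_1 \fulland P_2$ with $P_1, P_2$ both $*$-terms; the $\fullor$ case is dual. Set $T_i = \FE(P_i)$, so $\FE(P) = T_1\ssub{\tr}{T_2}{\fa}{T_2\sub{\tr}{\fa}}$, and suppose for contradiction that $\FE(P) = X\sub{\triangle}{Y}$ is a forbidden decomposition. Using Lemma~\ref{lem:pert} applied to $P_2$ I verify that $Y$ must contain both $\tr$- and $\fa$-leaves (otherwise one of $\tr$/$\fa$ would not appear anywhere in $\FE(P)$). I then classify each $\triangle$-position $p$ of $X$ by its location relative to $T_1$'s skeleton: (i) $p$ lies strictly inside $T_1$'s internal nodes; (ii) $p$ sits at a $\tr$-leaf of $T_1$, so $Y = T_2$; (iii) $p$ is strictly below such a $\tr$-leaf, so $Y$ is a proper subtree of $T_2$; or (iv) $p$ is at or below an $\fa$-leaf of $T_1$, in which case the subtree at $p$ has only $\fa$-leaves, contradicting that $Y$ has a $\tr$-leaf. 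If every $\triangle$-position falls in (i)--(ii), trimming $X$ at $T_1$'s former $\tr$-leaves yields a pure atom-skeleton $X'$ with $Y' = T_2$ that forms a bad decomposition of $T_1$, contradicting the IH for $P_1$; if some $\triangle$-position falls in (iii), the portion of $X$ hanging below the corresponding $\tr$-leaf of $T_1$ yields a bad decomposition of $T_2$, contradicting the IH for $P_2$.

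The main obstacle is the bookkeeping in this last step: different $\triangle$-positions of $X$ can sit at different depths in $\FE(P)$, and I must cleanly extract a genuinely smaller forbidden decomposition of $T_1$ or $T_2$. The cleanest formulation is probably to zoom in on a shallowest $\triangle$ of $X$ and argue that the sub-$X$ rooted at it is itself a pure atom-skeleton fitting the forbidden pattern at a smaller $*$-subterm; a short side remark that substituting $T_2$ and $T_2\sub{\tr}{\fa}$ into $T_1$'s leaves preserves non-decomposability handles the residual case~(i)--(ii) reduction.
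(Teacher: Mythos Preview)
Your base case is correct and is essentially a direct unfolding of Lemma~\ref{lem:litstf}. The inductive step, however, has a genuine gap. The sentence ``trimming $X$ at $T_1$'s former $\tr$-leaves yields a pure atom-skeleton $X'$ with $Y'=T_2$ that forms a bad decomposition of $T_1$'' does not type-check: a decomposition of $T_1=\FE(P_1)$ must satisfy $X'[\triangle\mapsto Y']=T_1$, and $Y'=T_2=\FE(P_2)$ cannot play that role. What you actually need is a single $Z\in\T$ with $X[\triangle\mapsto Z]=T_1$, and for that you must argue that the $T_1$-subtrees sitting at the various $\triangle$-positions of $X$ all coincide. The missing observation that makes this work cleanly is that in $\FEL$ every tree $\FE(R)$ is \emph{perfect}; hence all $\triangle$-leaves of $X$ lie at one common depth $d_X$. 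Then the case split collapses: if $d_X\geq d_1$ some $\triangle$ necessarily sits at or below an $\fa$-leaf of $T_1$ (since $X$ has only $\triangle$-leaves and $T_1$ has both $\tr$- and $\fa$-leaves by Lemma~\ref{lem:pert}), giving your contradiction~(iv) immediately; if $d_X<d_1$, the injectivity of $Z\mapsto Z[\tr\mapsto T_2,\fa\mapsto T_2[\tr\mapsto\fa]]$ (using that $T_2$ has a $\tr$-leaf while $T_2[\tr\mapsto\fa]$ does not) forces all the $T_1$-subtrees at the $\triangle$-positions to be equal, yielding the forbidden decomposition of $\FE(P_1)$ and the IH-contradiction. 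Note that with perfectness you never need the IH for $P_2$; your case~(iii) simply cannot occur in isolation.

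The paper takes a different route that avoids the position bookkeeping altogether: it invokes Lemma~\ref{lem:pert} to extract the witness $\ell$-term $R$ for $P$, observes that $\FE(P\fulland Q)$ contains the subtree $\FE(R)[\tr\mapsto\FE(Q),\fa\mapsto\FE(Q)[\tr\mapsto\fa]]$, and uses Lemma~\ref{lem:litstf} on $R$ to conclude that $Y$ must contain both $\FE(Q)$ and $\FE(Q)[\tr\mapsto\fa]$. This forces $d(Y)\geq d(\FE(Q))$, so $X$ already lives inside the $T_1$-layer, and the same $X$ yields a forbidden decomposition $\FE(P)=X[\triangle\mapsto Z]$. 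The paper's argument thus reduces everything to the IH for $P$ in one stroke, whereas your approach (once repaired with perfectness) reaches the same conclusion by a direct depth comparison.
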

\begin{proof}
By induction on $P$ modulo the complexity of $\ell$-terms. The base case covers
$\ell$-terms and follows immediately from Lemma \ref{lem:pert} ($\FE(P)$
contains occurrences of both $\tr$ and $\fa$) and Lemma~\ref{lem:litstf}
(no non-trivial decomposition exists that contains both). For the induction
we assume that the lemma holds for all $*$-terms with lesser complexity than $P
\fulland Q$ and $P \fullor Q$.

We start with the case for $\FE(P \fulland Q)$. Suppose for contradiction that
$\FE(P \fulland Q) = X\sub{\triangle}{Y}$ with $X \neq \triangle$ and $X$ not containing
any occurrences of $\tr$ or $\fa$. Let $R$ be a witness of Lemma
\ref{lem:pert} for $P$. Now note that $\FE(P \fulland Q)$ has a subtree
\begin{equation*}
R\ssub{\tr}{\FE(Q)}{\fa}{\FE(Q)\sub{\tr}{\fa}}.
\end{equation*}
Because $Y$ must contain both the occurrences of $\fa$ in the one branch
from the root of this subtree as well as the occurrences of $\FE(Q)$ in the
other (because they contain $\tr$ and $\fa$), Lemma \ref{lem:litstf}
implies that $Y$ must (strictly) contain $\FE(Q)$ and
$\FE(Q)\sub{\tr}{\fa}$. Hence there is a $Z \in \T$ such that $\FE(P) =
X\sub{\triangle}{Z}$, which violates the induction hypothesis. The case for $\FE(P
\fullor Q)$ proceeds analogously.
\end{proof}

We now arrive at two crucial definitions for our completeness proof. When
considering $*$-terms we already know that $\FE(P \fulland Q)$ can be
decomposed as
\begin{equation*}
\FE(P)\ssub{\tr}{\triangle_1}{\fa}{\triangle_2}
\ssub{\triangle_1}{\FE(Q)}{\triangle_2}{\FE(Q)\sub{\tr}{\fa}}.
\end{equation*}
Our goal now is to give a definition for a type of decomposition so that this
is the only such decomposition for $\FE(P \fulland Q)$. We also ensure that
$\FE(P \fullor Q)$ does not have a decomposition of that type, so that we can
distinguish $\FE(P \fulland Q)$ from $\FE(P \fullor Q)$.  Similarly, we define
another type of decomposition so that $\FE(P \fullor Q)$ can only be decomposed
as
\begin{equation*}
\FE(P)\ssub{\tr}{\triangle_1}{\fa}{\triangle_2}
\ssub{\triangle_1}{\FE(Q)\sub{\fa}{\tr}}{\triangle_2}{\FE(Q)}
\end{equation*}
and that $\FE(P \fulland Q)$ does not have a decomposition of that type.

\begin{definition}
The pair $(Y, Z) \in \Ttwo \times \T$ is a \textbf{candidate conjunction
decomposition (ccd)} of $X \in \T$, if 
\begin{itemize}
 \setlength\itemsep{0mm}
\item $X = Y\ssub{\triangle_1}{Z}{\triangle_2}{Z\sub{\tr}{\fa}}$,
\item $Y$ contains both $\triangle_1$ and $\triangle_2$,
\item $Y$ contains neither $\tr$ nor $\fa$, and
\item $Z$ contains both $\tr$ and $\fa$.
\end{itemize}
Similarly, $(Y, Z)$ is a \textbf{candidate disjunction decomposition (cdd)} of
$X$, if
\begin{itemize}
 \setlength\itemsep{0mm}
\item $X = Y\ssub{\triangle_1}{Z\sub{\fa}{\tr}}{\triangle_2}{Z}$,
\item $Y$ contains both $\triangle_1$ and $\triangle_2$,
\item $Y$ contains neither $\tr$ nor $\fa$, and
\item $Z$ contains both $\tr$ and $\fa$.
\end{itemize}
\end{definition}

The ccd and cdd are not necessarily the decompositions we are looking for,
because, for example, $\FE((P \fulland Q) \fulland R)$ has a ccd
$(\FE(P)\ssub{\tr}{\triangle_1}{\fa}{\triangle_2}, \FE(Q \fulland R))$, whereas the
decomposition we need is $(\FE(P \fulland
Q)\ssub{\tr}{\triangle_1}{\fa}{\triangle_2}, \FE(R))$. Therefore we refine these
definitions to obtain the decompositions we seek.

\begin{definition}
The pair $(Y, Z) \in \Ttwo \times \T$ is a \textbf{conjunction decomposition
(cd)} of $X \in \T$, if it is a ccd of $X$ and there is no other ccd $(Y', Z')$
of $X$ where the depth of $Z'$ is smaller than that of $Z$.  Similarly, $(Y,
Z)$ is a \textbf{disjunction decomposition (dd)} of $X$, if it is a cdd of $X$
and there is no other cdd $(Y', Z')$ of $X$ where the depth of $Z'$ is smaller
than that of $Z$.
\end{definition}

\begin{theorem}
\label{thm:cddd}
For any $*$-term $P \fulland Q$, i.e., with $P \in P^*$ and $Q \in P^d$, $\FE(P
\fulland Q)$ has the (unique) cd
\begin{equation*}
(\FE(P)\ssub{\tr}{\triangle_1}{\fa}{\triangle_2}, \FE(Q))
\end{equation*}
and no dd. For any $*$-term $P \fullor Q$, i.e., with $P \in P^*$ and $Q \in
P^c$, $\FE(P \fullor Q)$ has no cd and its (unique) dd is
\begin{equation*}
(\FE(P)\ssub{\tr}{\triangle_1}{\fa}{\triangle_2}, \FE(Q)).
\end{equation*}
\end{theorem}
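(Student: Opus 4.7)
The plan is to handle the conjunction case in detail and get the disjunction case by duality. Throughout, let $X = \FE(P\fulland Q)$, $Y_0 = \FE(P)\ssub{\tr}{\triangle_1}{\fa}{\triangle_2}$, and $Z_0 = \FE(Q)$.

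First I would check that $(Y_0, Z_0)$ is a ccd of $X$. The identity $X = Y_0\ssub{\triangle_1}{Z_0}{\triangle_2}{Z_0\sub{\tr}{\fa}}$ is just the definition of \FE\ on ${\fulland}$ rewritten with placeholders. Since $P$ and $Q$ are both $*$-terms (the grammar gives $P^d \subseteq P^*$), Lemma~\ref{lem:pert} tells us $\FE(P)$ and $\FE(Q)$ each contain both $\tr$ and $\fa$; this gives that $Y_0$ contains both $\triangle_1$ and $\triangle_2$ (and by construction no $\tr$ or $\fa$) and that $Z_0$ contains both $\tr$ and $\fa$.

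Next I would attack uniqueness and minimality. Suppose $(Y, Z)$ is any ccd of $X$ with $\text{depth}(Z) \le \text{depth}(Z_0)$. The strategy is to pin down the positions of the copies of $Z$ and $Z\sub{\tr}{\fa}$ inside $X$. Because $Z$ contains both $\tr$ and $\fa$ while $Z\sub{\tr}{\fa}$ has only $\fa$-leaves, the $\triangle_1$-positions of $Y$ are precisely the positions in $X$ whose subtrees contain $\tr$, and the $\triangle_2$-positions are the remaining leaves of $Y$. In $X$, the $\tr$-bearing regions are exactly the $\FE(Q) = Z_0$ copies sitting at the $\tr$-leaves of $\FE(P)$. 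I would use Lemma~\ref{lem:nondectf} applied to $P$ to forbid $Y$ from extending strictly below the $\tr$/$\fa$-leaves of $\FE(P)$ (such an extension would give a non-trivial $(X', \triangle)$-style decomposition of $\FE(Q)$ or carve up the $*$-term $P$ in a disallowed way), and the depth constraint $\text{depth}(Z) \le \text{depth}(Z_0)$ rules out $Y$ sitting strictly above $Y_0$. Together this forces $Y = Y_0$ and hence $Z = Z_0$, giving both minimality of $\text{depth}(Z_0)$ among ccds and uniqueness of the cd.

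Then I would rule out a dd of $\FE(P\fulland Q)$. If $(Y', Z')$ were a cdd, then at every $\triangle_2$-position of $Y'$ the corresponding subtree of $X$ must equal $Z'$, which contains both $\tr$ and $\fa$. But every subtree of $X$ that arises at (or strictly below) a $\fa$-leaf of $\FE(P)$ is contained in a copy of $\FE(Q)\sub{\tr}{\fa}$, which has no $\tr$-leaves at all; so the $\triangle_2$-positions would have to sit strictly above those $\fa$-leaves of $\FE(P)$. A combined application of Lemma~\ref{lem:litstf} and Lemma~\ref{lem:nondectf} to $P$ then shows that no such consistent placement of $\triangle_1, \triangle_2$ exists. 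The disjunction half of the theorem follows by the obvious dual argument, swapping the roles of $\tr$ and $\fa$.

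The main obstacle will be the bookkeeping in the uniqueness step: proving rigorously that the $\triangle_1$/$\triangle_2$ leaves of an arbitrary $Y$ must align exactly with the $\tr$/$\fa$-leaves of $\FE(P)$. I expect that organising this as an induction on the structure of the $*$-term $P$ (peeling off one top-level connective at a time and invoking the induction hypothesis together with Lemma~\ref{lem:nondectf}) will be the cleanest way to make the alignment argument precise.
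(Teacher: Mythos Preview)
Your verification that $(Y_0,Z_0)$ is a ccd is fine and matches the paper. After that, however, your organisation diverges from the paper's and creates unnecessary difficulties.

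The paper does not track the positions of $Y$'s leaves relative to $\FE(P)$. Instead it argues directly about $Z$: for any ccd $(Y,Z)$, either $Z$ contains $\FE(Q)$ or $Z$ is contained in $\FE(Q)$ (this dichotomy is obtained by looking at a single $\tr$-leaf of $\FE(P)$ and the $Y$-leaf covering it). The case ``$Z$ strictly contained in $\FE(Q)$'' is then killed by Lemma~\ref{lem:nondectf} applied to $Q$, not to $P$: a strict containment gives $\FE(Q)=V[\triangle\mapsto Z]$ with $V\neq\triangle$, and the lemma forces $V$ (hence $Y$) to contain $\tr$ or $\fa$. Your sentence ``Lemma~\ref{lem:nondectf} applied to $P$'' is a misdirection; your own parenthetical (``a non-trivial \ldots\ decomposition of $\FE(Q)$'') already names the correct target. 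Once you argue via $Z$, the induction on $P$ you propose at the end is unnecessary, and the worry about ``mixed'' leaf positions of $Y$ disappears.

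For the dd half, your sketch is vaguer than it needs to be. The paper's argument is short: suppose $(Y,Z)$ were a cdd of $\FE(P\fulland Q)$. As above, $Z$ cannot be contained in $\FE(Q)$ (same application of Lemma~\ref{lem:nondectf} to $Q$). So $Z$ strictly contains $\FE(Q)$. Then in $Y[\triangle_1\mapsto Z[\fa\mapsto\tr],\,\triangle_2\mapsto Z]$ every occurrence of the $\FE(Q)$-shaped block either sits inside $Z[\fa\mapsto\tr]$ (no $\fa$-only version) or inside $Z$; in particular $\FE(P\fulland Q)$ would contain no occurrence of $\FE(Q)[\tr\mapsto\fa]$. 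But the cd you already established shows it does. That is the whole contradiction, and it avoids the delicate placement analysis you outline with Lemmas~\ref{lem:litstf} and~\ref{lem:nondectf} on $P$.
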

\begin{proof}
We first treat the case for $P \fulland Q$ and start with cd. Note that $\FE(P
\fulland Q)$ has a ccd $(\FE(P)\ssub{\tr}{\triangle_1}{\fa}{\triangle_2}, \FE(Q))$
by definition of $\FE$ (for the first condition) and by Lemma~\ref{lem:pert}
(for the fourth condition). It is immediate that it satisfies the second and
third conditions. It also follows that for any ccd $(Y, Z)$ either $Z$ contains
or is contained in $\FE(Q)$, for suppose otherwise, then $Y$ will contain an
occurrence of $\tr$ or of $\fa$, namely those we know by
Lemma~\ref{lem:pert} that $\FE(Q)$ has. Therefore it suffices to show that
there is no ccd $(Y, Z)$ where $Z$ is strictly contained in $\FE(Q)$. Suppose
for contradiction that $(Y, Z)$ is such a ccd. If $Z$ is strictly contained in
$\FE(Q)$ we can decompose $\FE(Q)$ as $\FE(Q) = V\sub{\triangle}{Z}$ for some $V \in
\Tone$ that contains but is not equal to $\triangle$. By Lemma \ref{lem:nondectf}
this implies that $V$ contains $\tr$ or $\fa$. But then so does $Y$,
because
\begin{equation*}
Y = \FE(P)\ssub{\tr}{V\sub{\triangle}{\triangle_1}}{\fa}{V\sub{\triangle}{\triangle_2}},
\end{equation*}
and so $(Y, Z)$ is not a ccd for $\FE(P \fulland Q)$. Therefore
$(\FE(P)\ssub{\tr}{\triangle_1}{\fa}{\triangle_2}, \FE(Q))$ is the \emph{unique} cd
for $\FE(P \fulland Q)$.

Now for the dd. It suffices to show that there is no cdd for $\FE(P \fulland
Q)$. Suppose for contradiction that $(Y, Z)$ is a cdd for $\FE(P \fulland Q)$.
We note that $Z$ cannot be contained in $\FE(Q)$, for then by Lemma
\ref{lem:nondectf}, $Y$ would contain $\tr$ or $\fa$. So $Z$ (strictly)
contains $\FE(Q)$. But then because
\begin{equation*}
Y\ssub{\triangle_1}{Z\sub{\fa}{\tr}}{\triangle_2}{Z} = \FE(P \fulland Q),
\end{equation*}
we would have by Lemma \ref{lem:pert} that $\FE(P \fulland Q)$ does not contain
an occurrence of $\FE(Q)\sub{\tr}{\fa}$. But the cd of $\FE(P \fulland Q)$
tells us that it does, contradiction! Therefore there is no cdd, and hence no
dd, for $\FE(P \fulland Q)$. The case for $\FE(P \fullor Q)$ proceeds
analogously.
\end{proof}

At this point we have the tools necessary to invert $\FE$ on $*$-terms, at
least down to the level of $\ell$-terms. We note that we can easily detect if a
tree in the image of $\FE$ is in the image of $P^\ell$, because all leaves to
the left of the root are one truth value, while all the leaves to the right are
the other. To invert $\FE$ on $\tr$-$*$-terms we still need to be able to
reconstruct $\FE(P^\tr)$ and $\FE(Q^*)$ from $\FE(P^\tr \fulland Q^*)$. To
this end we define a $\tr$-$*$-decomposition.

\begin{definition}
The pair $(Y, Z) \in \Tone \times \T$ is a \textbf{$\tr$-$*$-decomposition
(tsd)} of $X \in \T$, if $X = Y\sub{\triangle}{Z}$, $Y$ does not contain $\tr$ or
$\fa$ and there is no decomposition $(V, W) \in \Tone \times \T$ of $Z$ such
that
\begin{itemize}
 \setlength\itemsep{0mm}
\item $Z = V\sub{\triangle}{W}$,
\item $V$ contains $\triangle$,
\item $V \neq \triangle$, and
\item $V$ contains neither $\tr$ nor $\fa$.
\end{itemize}
\end{definition}

\begin{theorem}
\label{thm:tsd}
For any $\tr$-term $P$ and $*$-term $Q$ the (unique) tsd of $\FE(P \fulland
Q)$ is 
\begin{equation*}
(\FE(P)\sub{\tr}{\triangle}, \FE(Q)).
\end{equation*}
\end{theorem}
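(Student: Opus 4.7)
The plan is to prove the theorem in two parts: existence—verifying that $(\fe(P)\sub{\tr}{\triangle}, \fe(Q))$ meets the three conditions of the tsd definition—and uniqueness—showing that any other candidate tsd collapses to this one.

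For existence, I would first establish by induction on the $\tr$-term grammar that $\fe(P)$ has only $\tr$-leaves when $P$ is a $\tr$-term. The base case $\fe(\tr) = \tr$ is immediate; for the inductive step I unfold $\fe(a \fullor P^\tr)$ by the definition of $\fe$ on $\fullor$ and note that the $\fa$-replacement has no effect by the hypothesis, yielding $\fe(P^\tr) \unlhd a \unrhd \fe(P^\tr)$, still with only $\tr$-leaves. With this fact, the definition of $\fe$ on $\fulland$ simplifies to $\fe(P \fulland Q) = \fe(P)\sub{\tr}{\fe(Q)} = \fe(P)\sub{\tr}{\triangle}\sub{\triangle}{\fe(Q)}$, which gives the first tsd condition; the second holds because $\fe(P)\sub{\tr}{\triangle}$ has only $\triangle$-leaves; and the third (maximality) condition on $\fe(Q)$ is precisely Lemma~\ref{lem:nondectf} applied to the $*$-term $Q$.

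For uniqueness, let $(Y_0, Z_0) := (\fe(P)\sub{\tr}{\triangle}, \fe(Q))$ and let $(Y', Z')$ be an arbitrary tsd of $X := \fe(P \fulland Q)$. The crux is a chain property: any two valid decompositions of $X$ into a pure $\Tone$-tree and a $\T$-tree must be comparable under refinement. I would prove this by contradiction: if $Y_0$ and $Y'$ were incomparable, one could locate positions $p, q$ in $X$ where they disagree in opposite directions (an internal node versus a $\triangle$-leaf, and vice versa), and reading off the subtrees at $p$ and $q$ yields $Z_0 = V_1\sub{\triangle}{Z'}$ and $Z' = V_2\sub{\triangle}{Z_0}$ with $V_1, V_2$ pure and non-trivial; substituting produces the impossible self-referential identity $Z_0 = W\sub{\triangle}{Z_0}$ with $W \neq \triangle$. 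Once comparability holds, either $Y_0 = Y'\sub{\triangle}{V}$ or $Y' = Y_0\sub{\triangle}{V}$ for some pure $V$; the former forces $Z' = V\sub{\triangle}{\fe(Q)}$, contradicting the tsd maximality of $(Y', Z')$ unless $V = \triangle$, and the latter forces $\fe(Q) = V\sub{\triangle}{Z'}$, contradicting Lemma~\ref{lem:nondectf} unless $V = \triangle$—in either case $V = \triangle$ and $(Y', Z') = (Y_0, \fe(Q))$. I expect the main obstacle to be the chain-property argument itself: it requires finiteness of trees to rule out self-reference and a careful formulation of incomparability so that the two witnessing positions can actually be extracted.
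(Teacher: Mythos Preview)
Your proof is correct, and the existence half is essentially identical to the paper's. For uniqueness the overall shape is the same—show that the candidate tsd's are nested and then rule out strict nesting in each direction—but the mechanics differ in two places worth noting.

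First, where you establish comparability via an abstract chain property (incomparability would force $Z_0 = W\sub{\triangle}{Z_0}$ with $W\neq\triangle$, impossible by a depth count), the paper argues containment directly: fix one occurrence of $\fe(Q)$ in $X$; since $\fe(Q)$ has both $\tr$- and $\fa$-leaves (Lemma~\ref{lem:pert}) and $Y'$ has none, some $\triangle$-leaf of $Y'$ lies on the path to one of those leaves, so its root is comparable with the root of that $\fe(Q)$-copy. Your argument is more general (it does not use that $\fe(Q)$ has both truth values) but correspondingly heavier.

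Second, in the case where $Z'$ strictly contains $\fe(Q)$, the paper exploits the specific shape of $\tr$-terms—every internal node of $\fe(P)$ has equal left and right branches, so the root of $Z'$ sits at such a node and $Z'$ is visibly decomposable. Your route is cleaner here: from $Y'\preceq Y_0$ you read off $Z' = V\sub{\triangle}{\fe(Q)}$ directly, contradicting the tsd maximality without appealing to the $\tr$-term structure.

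One small imprecision: writing ``$Y_0 = Y'\sub{\triangle}{V}$ for some pure $V$'' presumes a \emph{uniform} $V$ across all $\triangle$-leaves of $Y'$, which is not automatic. The fix is harmless—argue per leaf: at each $\triangle$-leaf $q$ of $Y'$ the subtree $V_q$ of $Y_0$ satisfies $Z' = V_q\sub{\triangle}{\fe(Q)}$, and a single $V_q\neq\triangle$ already contradicts maximality (and similarly in the other direction with Lemma~\ref{lem:nondectf}).
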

\begin{proof}
First we observe that $(\FE(P)\sub{\tr}{\triangle}, \FE(Q))$ is a tsd because by
definition of $\FE$ on $\fulland$ we have $\FE(P)\sub{\tr}{\FE(Q)} = \FE(P
\fulland Q)$ and $\FE(Q)$ is non-decomposable by Lemma \ref{lem:nondectf}.

Suppose for contradiction that there is another tsd $(Y, Z)$ of $\FE(P
\fulland Q)$. Now $Z$ must contain or be contained in $\FE(Q)$ for otherwise
$Y$ would contain $\tr$ or $\fa$, i.e., the ones we know $\FE(Q)$ has by
Lemma \ref{lem:pert}.

If $Z$ is strictly contained in $\FE(Q)$, then $\FE(Q) = V\sub{\triangle}{Z}$ for
some $V \in \Tone$ with $V \neq \triangle$ and $V$ not containing $\tr$ or
$\fa$ (because then $Y$ would too).  But this violates Lemma
\ref{lem:nondectf}, which states that no such decomposition exists. If $Z$
strictly contains $\FE(Q)$, then $Z$ contains at least one atom from $P$.  But
the left branch of any atom in $\FE(P)$ is equal to its right branch and hence
$Z$ is decomposable. Therefore $(\FE(P)\sub{\tr}{\triangle}, \FE(Q))$ is the
\emph{unique} tsd of $\FE(P \fulland Q)$.
\end{proof}

\paragraph{Completeness.}
With the last two theorems we can prove completeness for
$\FFEL$. We define three auxiliary functions to aid in our definition of the
inverse of $\FE$ on $\FNF$. Let $\cd : \T \to \Ttwo \times \T$ be the function
that returns the conjunction decomposition of its argument, $\dd$ of the same
type its disjunction decomposition and $\tsd: \T \to \Tone \times \T$ its
$\tr$-$*$-decomposition. Naturally, these functions are undefined when their
argument does not have a decomposition of the specified type. Each of these
functions returns a pair and we will use $\cd_1$ ($\dd_1$, $\tsd_1$) to denote
the first element of this pair and $\cd_2$ ($\dd_2$, $\tsd_2$) to denote the
second element.

We define $\inv: \T \to \FT$ using the functions $\inv^\tr: \T \to \FT$ for
inverting trees in the image of $\tr$-terms and $\inv^\fa$, $\inv^\ell$
and $\inv^*$ of the same type for inverting trees in the image of
$\fa$-terms, $\ell$-terms and $*$-terms, respectively. These functions are
defined as follows.
\begin{align}
\inv^\tr(X) &=
  \begin{cases}
    \tr
      &\textrm{if $X = \tr$} \\
    a \fullor \inv^\tr(Y)
      &\textrm{if $X = Y \tlef a \trig Z$}
  \end{cases} \\
\intertext{We note that we might as well have used the right branch from the
root in the recursive case. 
We chose the left branch here to more closely 
mirror the definition of the corresponding function for Free short-circuit logic as
defined in~\cite{Stau,PS18}.
}
\inv^\fa(X) &=
  \begin{cases}
    \fa
      &\textrm{if $X = \fa$} \\
    a \fulland \inv^\fa(Z)
      &\textrm{if $X = Y \tlef a \trig Z$}
  \end{cases} \\
\intertext{Similarly, we could have taken the left branch in this case.}
\inv^\ell(X) &=
  \begin{cases}
    a \fulland \inv^\tr(Y) 
      &\textrm{if $X = Y \tlef a \trig Z$ for some $a \in A$} \\
      &\textrm{and $Y$ only has $\tr$-leaves} \\
    \neg a \fulland \inv^\tr(Z)
      &\textrm{if $X = Y \tlef a \trig Z$ for some $a \in A$} \\
      &\textrm{and $Z$ only has $\tr$-leaves}
  \end{cases} \\
\inv^*(X) &=
  \begin{cases}
    \inv^*(\cd_1(X)\ssub{\triangle_1}{\tr}{\triangle_2}{\fa}) \fulland
      \inv^*(\cd_2(X))
      &\textrm{if $X$ has a cd} \\
    \inv^*(\dd_1(X)\ssub{\triangle_1}{\tr}{\triangle_2}{\fa}) \fullor
      \inv^*(\dd_2(X))
      &\textrm{if $X$ has a dd} \\
    \inv^\ell(X)
      &\textrm{otherwise}
  \end{cases} \\
\intertext{We can immediately see how Theorem \ref{thm:cddd} will be used in
the correctness proof of $\inv^*$.}
\inv(X) &=
  \begin{cases}
    \inv^\tr(X)
      &\textrm{if $X$ has only $\tr$-leaves} \\
    \inv^\fa(X)
      &\textrm{if $X$ has only $\fa$-leaves} \\
    \inv^\tr(\tsd_1(X)\sub{\triangle}{\tr}) \fulland \inv^*(\tsd_2(X))
      &\textrm{otherwise}
  \end{cases}
\end{align}
Similarly, we can see how Theorem \ref{thm:tsd} is used in the correctness
proof of $\inv$. It should come as no surprise that $\inv$ is indeed correct
and inverts $\FE$ on $\FNF$.

\begin{theorem}
\label{thm:felinv}
For all $P \in \FNF$, $\inv(\FE(P))=P$, i.e. $\inv(\fe(P))$ is syntactically equal to $P$ for $P\in\FNF$.
\end{theorem}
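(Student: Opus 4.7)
The plan is to prove $\inv(\FE(P)) = P$ by structural induction along the stratification of the $\FNF$ grammar. I would first verify four auxiliary identities: (i) $\inv^\tr(\FE(P^\tr)) = P^\tr$, (ii) $\inv^\fa(\FE(P^\fa)) = P^\fa$, (iii) $\inv^\ell(\FE(P^\ell)) = P^\ell$, and (iv) $\inv^*(\FE(P^*)) = P^*$, and then assemble the main statement. Claims (i) and (ii) are straightforward grammar inductions: since $\FE(P^\tr)$ has only $\tr$-leaves we compute $\FE(a \fullor P^\tr) = \FE(P^\tr) \unlhd a \unrhd \FE(P^\tr)$, so the second clause of $\inv^\tr$ fires with left branch $\FE(P^\tr)$ and the induction hypothesis applies; claim (ii) is dual. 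For claim (iii), compute $\FE(a \fulland P^\tr) = \FE(P^\tr) \unlhd a \unrhd \FE(P^\tr)\sub{\tr}{\fa}$, observe that the left branch is monochromatic $\tr$, and invoke (i) through the first clause of $\inv^\ell$; the $\neg a \fulland P^\tr$ case is symmetric, triggering the second clause via its monochromatic right branch.

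The heart of the proof is claim (iv), which I would prove by induction on the complexity of $*$-terms modulo $\ell$-terms. In the base case $P^* = P^\ell$, the tree $\FE(P^\ell)$ has monochromatic branches (Lemma~\ref{lem:litstf}), which rules out both a ccd and a cdd: any $Z$ containing both $\tr$ and $\fa$ would have to coincide with a subtree carrying both truth values, forcing $Z$ to equal the entire tree and $Y$ to be a single triangle, contradicting the requirement that $Y$ contains both $\triangle_1$ and $\triangle_2$. Hence $\inv^*$ falls through to $\inv^\ell$ and (iii) closes the case. For the inductive step $P^* = P \fulland Q$ with $P \in P^*$ and $Q \in P^d$, Theorem~\ref{thm:cddd} delivers the unique cd $(\FE(P)\ssub{\tr}{\triangle_1}{\fa}{\triangle_2}, \FE(Q))$ and certifies the absence of any dd, so $\inv^*$ enters its first clause. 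Then $\cd_1(\FE(P^*))\ssub{\triangle_1}{\tr}{\triangle_2}{\fa} = \FE(P)$ and $\cd_2(\FE(P^*)) = \FE(Q)$, and the induction hypothesis yields $\inv^*(\FE(P)) = P$ and $\inv^*(\FE(Q)) = Q$, reassembling $P \fulland Q$; the case $P^* = P \fullor Q$ is symmetric through the dd clause.

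Finally, for the main statement: if $P = P^\tr$ or $P = P^\fa$ then $\FE(P)$ has only $\tr$- or only $\fa$-leaves and $\inv$ delegates to (i) or (ii). If $P = P^\tr \fulland P^*$, then $\FE(P^*)$ contains both $\tr$ and $\fa$ by Lemma~\ref{lem:pert}, so $\FE(P)$ does too, and Theorem~\ref{thm:tsd} yields the unique tsd $(\FE(P^\tr)\sub{\tr}{\triangle}, \FE(P^*))$; hence $\inv$ returns $\inv^\tr(\FE(P^\tr)\sub{\tr}{\triangle}\sub{\triangle}{\tr}) \fulland \inv^*(\FE(P^*))$, which reduces to $P^\tr \fulland P^*$ by (i) and (iv). The main obstacle I expect is verifying well-foundedness of the recursion in $\inv^*$: each recursive call must operate on $\FE(P')$ for a $*$-term $P'$ of strictly smaller complexity. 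This is exactly what Theorem~\ref{thm:cddd} buys, since the recovered tree is $\FE(P)$ with $P$ a proper $P^*$-subterm, and the $\FNF$ stratification into $P^c$/$P^d$ subclasses mirrors the cd/dd clauses so the induction measure decreases properly. A secondary subtlety is handling the replacement algebra $\sub{\tr}{\triangle}\sub{\triangle}{\tr}$ cleanly on $\Tone$-trees, which follows at once from the associativity of repeated replacements noted early in the section.
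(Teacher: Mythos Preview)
Your proposal is correct and follows essentially the same approach as the paper: both proceed by first establishing the auxiliary identities for $\inv^\tr$, $\inv^\fa$, $\inv^\ell$, and $\inv^*$ via grammar induction, then assembling the main claim by case distinction, invoking Theorems~\ref{thm:cddd} and~\ref{thm:tsd} at the appropriate points. Your treatment of the $\ell$-term base case for $\inv^*$ is in fact slightly more explicit than the paper's (which simply asserts that an $\ell$-term has neither a cd nor a dd), and your remarks on well-foundedness and the replacement identity $\sub{\tr}{\triangle}\sub{\triangle}{\tr}$ are sound supplementary observations.
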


The proof for this theorem can be found in Appendix~\ref{app:A2}. For the
sake of completeness, we separately state the completeness result below.

\begin{theorem}[Completeness]
\label{thm:felcpl}
For all $P, Q \in \FT$, if $\FFEL \vDash P = Q$ then $\EqFFEL \vdash
P = Q$.
\end{theorem}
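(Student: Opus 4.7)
The plan is to assemble the completeness proof from the two heavy-duty results already stated: the normal form theorem (Theorem~\ref{thm:nf}), which provides a derivable rewrite $P \mapsto \nf(P)$ into $\FNF$, and the inversion theorem (Theorem~\ref{thm:felinv}), which says that $\inv \circ \FE$ is the identity on $\FNF$. Together with soundness (Lemma~\ref{la:Fsound}), these pin down normal forms uniquely up to $\fe$-equality and hence up to syntactic equality.

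First I would unfold the hypothesis $\FFEL \vDash P = Q$. By Definition~\ref{def:FFEL}, this is literally $P =_\fe Q$, i.e.\ $\fe(P) = \fe(Q)$. Next, by Theorem~\ref{thm:nf}, $\nf(P)$ and $\nf(Q)$ exist and lie in $\FNF$, with $\EqFFEL \vdash P = \nf(P)$ and $\EqFFEL \vdash Q = \nf(Q)$. Applying soundness (Lemma~\ref{la:Fsound}) to these two derivations gives $\fe(P) = \fe(\nf(P))$ and $\fe(Q) = \fe(\nf(Q))$, so the hypothesis upgrades to
\[
\fe(\nf(P)) = \fe(\nf(Q)).
\]

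Now I would invoke Theorem~\ref{thm:felinv} on both $\nf(P)$ and $\nf(Q)$, which are in $\FNF$: this yields $\inv(\fe(\nf(P))) = \nf(P)$ and $\inv(\fe(\nf(Q))) = \nf(Q)$ as syntactic equalities of $\FEL$-terms. Since the two left-hand sides agree (the trees are equal, and $\inv$ is a function), the right-hand sides are syntactically identical: $\nf(P) = \nf(Q)$. Combining the three derivations $\EqFFEL \vdash P = \nf(P)$, the trivial $\nf(P) = \nf(Q)$, and $\EqFFEL \vdash \nf(Q) = Q$ (reversing the second output of Theorem~\ref{thm:nf}) yields $\EqFFEL \vdash P = Q$, as required.

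The only conceptual obstacle in this argument is the final step of turning semantic equality of evaluation trees into syntactic equality of normal forms, and that obstacle is entirely discharged by Theorem~\ref{thm:felinv}; the real work has already been absorbed into the construction of $\nf$ and $\inv$ and into the tree-structure lemmas (Lemmas~\ref{lem:litstf}--\ref{lem:nondectf} and Theorems~\ref{thm:cddd}--\ref{thm:tsd}) used to justify them. The completeness theorem itself is then essentially a one-line diagram chase through soundness, normalization, and inversion.
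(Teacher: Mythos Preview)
Your proof is correct and follows essentially the same route as the paper: normalize via Theorem~\ref{thm:nf}, transport the semantic equality $\fe(P)=\fe(Q)$ to the normal forms using soundness (Lemma~\ref{la:Fsound}), then use the inversion Theorem~\ref{thm:felinv} to conclude that the normal forms are syntactically equal, and finish by transitivity. The paper's version is slightly more terse but structurally identical.
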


\begin{proof}
It suffices to show that for $P,Q \in \FNF$, $\FE(P) = \FE(Q)$ implies $P
= Q$, i.e., $P$ and $Q$ are syntactically equal.
To see this suppose that $P'$ and $Q'$ are two $\FEL$-terms and
$\FE(P') = \FE(Q')$. By Theorem~\ref{thm:nf}, $P'$ is derivably equal to an $\FNF$-term
$P$, i.e., $\EqFFEL \vdash P' = P$, and $Q'$ is derivably equal to an
$\FNF$-term $Q$. Lemma \ref{la:Fsound} then
gives us $\FE(P') = \FE(P)$ and $\FE(Q') = \FE(Q)$, and thus $\fe(P)=\fe(Q)$. Hence 
by Theorem \ref{thm:felinv}, 
$P= Q$, so in particular $\EqFFEL \vdash P = Q$. Transitivity then gives
us $\EqFFEL \vdash P' = Q'$ as desired.
\end{proof}

\section{Free FEL with undefinedness: \FFELu}
\label{sec:3}
In this section we define \FFELu, the extension of \FFEL\ with the truth value \undefi,
for which we use the constant \und. 
Evaluation trees with undefinedness were introduced in~\cite{BPS21}, where they serve as 
a semantics for short-circuit logic with undefinedness: the evaluation/interpretation of each 
atom may be undefined, i.e.\ not yield a classical truth value (\true\ or \false).
Well-known three-valued extensions of propositional logic 
are Kleene's `strong' three-valued logic~\cite{Kle38},
in which evaluation is executed in parallel so that $\fa\wedge x=x\wedge\fa=\fa$, 
Bochvar's `strict' logic~\cite{Boc38} with a constant $\mathsf N$ for `nonsense' or `meaningless', in which an expression has the value $\mathsf N$ as soon as it has a component 
with that value, and McCarthy's `sequential' logic~\cite{McC63}, in which evaluation proceeds sequentially 
from left to right so that $\fa\wedge\und=\fa$ and $\und\wedge\fa=\und$. 
We refer to Bergstra, Bethke, and Rodenburg~\cite[Sect.2]{BBR95} for a brief discussion of these logics.
Here we provide equational axioms for the equality of evaluation trees for full left-sequential evaluation
with undefinedness and prove a completeness result.

\begin{definition}
\label{def:treesU}
The set \NTu\ of \textbf{$\bm\und$-evaluation trees} over $A$ with leaves in 
$\{\tr, \fa,\und\}$ is defined inductively by
\[\tr\in\NTu,\quad\fa\in\NTu, \quad \und\in\NTu, \quad(X\unlhd \underline a\unrhd Y)\in\NTu 
~\text{ for any }X,Y \in \NTu \text{  and } a\in A.\]
The operator $\_\unlhd \underline a\unrhd\_$ is called 
\textbf{$\bm\und$-tree composition over $a$}.
In the evaluation tree $X \unlhd \underline a \unrhd Y$, 
the root is represented by $\underline a$,
the left branch by $X$, the right branch by $Y$, and the underlining of the root
represents a middle branch to the leaf \und. 
\end{definition}

Next to the formal notation for evaluation
trees we again introduce a more pictorial representation. For example,
the tree
\[(\fa\unlhd \underline a\unrhd\fa)\unlhd \underline a\unrhd(\tr\unlhd \underline a\unrhd\fa)\]
can be represented as follows, where $\unlhd$ yields a left branch, and $\unrhd$ a right branch:
\begin{equation}
\label{plaatje2}
\tag{Picture 2}
\hspace{-12mm}
\begin{tikzpicture}[%
      level distance=7.5mm,
      level 1/.style={sibling distance=15mm},
      level 2/.style={sibling distance=7.5mm},
      baseline=(current bounding box.center)]
      \node (root) {$a$}
        child {node (lef1) {$a$}
          child {node (lef2) {$\fa$}} 
          child {node (mid2) {$\und$}} 
          child {node (rig2) {$\fa$}}
        }
        child {node (mid1) {$\und$}
        }
        child {node (rig1) {$a$}
          child {node (lef2) {$\tr$}} 
          child {node (mid2) {$\und$}} 
          child {node (rig2) {$\fa$}}
        };
      \end{tikzpicture}
\end{equation}

We extend the set \SP\ to \SPu\ of (left-sequential) propositional expressions over $A$
with \und\ by the following grammar ($a\in A$):
\[
P ::= \tr\mid\fa\mid \und\mid a\mid\neg P\mid P\fulland P\mid P\fullor P
\]
and refer to its signature by
\(\SigFELu=\{\fulland,\fullor,\neg,\tr,\fa,\und,a\mid a\in A\}.\)

We interpret propositional expressions in \SPu\ as evaluation trees
by extending the function \fe\ (Definition~\ref{def:fe}).

\begin{definition}
\label{def:feu}
The unary \textbf{full evaluation function} $\feu : \SPu \to\NTu$ 
is defined as follows, where $a\in A$:
\begin{align*}
\feu(\tr) &= \tr,~~\feu(\fa) = \fa,
&\feu(\neg P)&=\feu(P)[\tr\mapsto \fa,\fa\mapsto \tr],\\
\feu(\und) &= \und,
&\feu(P \fulland Q)&= \feu(P)[\tr\mapsto \feu(Q),\fa\mapsto\feu(Q)[\tr\mapsto\fa]],\\
\feu(a)&=\tr\unlhd \underline a\unrhd \fa,
&\feu(P \fullor Q)&= \feu(P)[\tr\mapsto\feu(Q)[\fa\mapsto\tr],\fa\mapsto \feu(Q)].
\end{align*}
\end{definition}

Two examples: the evaluation trees $\feu(a\fulland \und)=\und\unlhd\underline a\unrhd \und$ and 
$\feu((a\fullor\tr)\fulland b)=(\tr\unlhd \underline b\unrhd \fa)
\\
\unlhd \underline a\unrhd (\tr\unlhd \underline b\unrhd \fa)$ 
 can be depicted as follows:
\[
\begin{array}{ll}
\begin{array}{l}
\begin{tikzpicture}[%
level distance=7.5mm,
level 1/.style={sibling distance=15mm},
level 2/.style={sibling distance=7.5mm},
level 3/.style={sibling distance=3.75mm}
]
\node (root) {$a$}
  child {node (lef1) {$\und$}
  }
  child {node (mid1) {$\und$}
  }
  child {node (rig1) {$\und$}
  };
\end{tikzpicture}
\\[8mm]
\end{array}
\qquad
&
\begin{array}{l}
\begin{tikzpicture}[%
level distance=7.5mm,
level 1/.style={sibling distance=15mm},
level 2/.style={sibling distance=7.5mm},
level 3/.style={sibling distance=3.75mm}
]
\node (root) {$a$}
  child {node (lef1) {$b$}
    child {node (lef2) {$\tr$}
    }
    child {node (mid2) {$\und$}
    }
    child {node (rig2) {$\fa$}
    }
  }
  child {node (mid1) {$\und$}
  }
  child {node (rig1) {$b$}
    child {node (riglef2) {$\tr$}
    }
    child {node (rigmid2) {$\und$}
    }
    child {node (rigrig2) {$\fa$}
    }
  };
\end{tikzpicture}
\end{array}
\end{array}
\]

\begin{definition}
\label{def:congruenceu}
The binary relation $=_{\feu}$ 
on \SPu\ defined by $P=_{\feu} Q\iff\feu(P)=\feu(Q)$
is called \textbf{free full $\bm\und$-valuation congruence}.
\end{definition}

\begin{lemma}
\label{la:Fcongruenceu}
The relation $=_{\feu}$ is  a congruence.
\end{lemma}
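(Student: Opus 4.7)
The plan is to mimic the proof of Lemma~\ref{la:Fcongruence} almost verbatim. Reflexivity, symmetry, and transitivity of $=_{\feu}$ are immediate, since the relation is defined as ordinary equality of elements of $\NTu$ pulled back along $\feu$. What requires work is the compatibility of $=_{\feu}$ with the four non-nullary function symbols $\neg$, $\fulland$, and $\fullor$ of $\SigFELu$ (the constants $\tr$, $\fa$, $\und$ and the atoms $a \in A$ are nullary, so nothing is needed for them).

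The key structural observation is that Definition~\ref{def:feu} defines $\feu(\neg P)$, $\feu(P \fulland Q)$, and $\feu(P \fullor Q)$ by applying the leaf-replacement operator $[\tr \mapsto \cdot,\; \fa \mapsto \cdot]$ to certain trees built from $\feu(P)$ and $\feu(Q)$. The presence of the new leaf $\und$ does not disrupt this: leaf replacement on an element of $\NTu$ is defined exactly as on $\NT$ (the clauses extend trivially with $\und[\tr \mapsto Y, \fa \mapsto Z] = \und$), and since we only ever substitute for $\tr$ and $\fa$, any $\und$-leaves simply propagate unchanged. In particular, leaf replacement remains a well-defined function on $\NTu$, so equal trees yield equal trees after any fixed substitution.

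From this it is routine to mirror the proof of Lemma~\ref{la:Fcongruence} for each connective. For example, suppose $P =_{\feu} Q$ and consider $R \fulland P =_{\feu} R \fulland Q$. Since $\feu(P) = \feu(Q)$, applying the function $X \mapsto X[\tr \mapsto \fa]$ to both sides gives $\feu(P)[\tr \mapsto \fa] = \feu(Q)[\tr \mapsto \fa]$, and then substituting these equal trees into $\feu(R)$ via $\feu(R)[\tr \mapsto \cdot,\; \fa \mapsto \cdot]$ yields
\[
\feu(R)[\tr \mapsto \feu(P),\; \fa \mapsto \feu(P)[\tr \mapsto \fa]]
= \feu(R)[\tr \mapsto \feu(Q),\; \fa \mapsto \feu(Q)[\tr \mapsto \fa]],
\]
which by Definition~\ref{def:feu} is exactly $\feu(R \fulland P) = \feu(R \fulland Q)$. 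The symmetric case $P \fulland R =_{\feu} Q \fulland R$ follows because substituting syntactically equal $\feu(P) = \feu(Q)$ into the same context $\feu(\cdot)[\tr \mapsto \feu(R),\; \fa \mapsto \feu(R)[\tr \mapsto \fa]]$ gives the same tree. The cases for $\fullor$ and for $\neg$ are entirely analogous, using the clauses in Definition~\ref{def:feu} for those connectives.

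There is no real obstacle here: the argument is a direct transcription of the two-valued case, and the only point worth flagging is that the additional constant $\und$ does not participate in any substitution and the $\und$-leaves are carried along unchanged by leaf replacement, so the reasoning about leaf replacement being a function on trees transfers from $\NT$ to $\NTu$ without modification.
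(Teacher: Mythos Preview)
Your proposal is correct and takes essentially the same approach as the paper, which simply writes ``Cf.~Lemma~\ref{la:Fcongruence}'' for its proof. You have in fact spelled out the details more carefully than the paper does, including the observation that $\und$-leaves are untouched by the $[\tr\mapsto\cdot,\fa\mapsto\cdot]$ replacements.
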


\begin{proof}
Cf.~Lemma~\ref{la:Fcongruence}.
\end{proof}

\begin{definition}
\label{def:FFELu}
A \textbf{Fully Evaluated Left-Sequential Logic with undefinedness \emph{(\FELu)}} is a logic that 
satisfies the consequences of \feu-equality. 
\textbf{Free Fully Evaluated Left-Sequential Logic with undefinedness \emph{(\FFELu)}} is the fully evaluated 
left-sequential logic with undefinedness that satisfies no more consequences than those of 
\feu-equality, i.e., for all $P, Q \in\SPu$,
\[
\FELu\models P=Q \Longleftarrow P=_{\feu} Q \quad\text{and}\quad \FFELu\models P=Q \iff P=_{\feu} Q.
\]
\end{definition}

In order to axiomatise $\FFELu$, we extend \FFELe\ (Table~\ref{tab:FFELe}) as follows: 
\[
\FFELeu=\FFELe \cup\{\neg\und=\und,~\und\fulland x= \und\}.
\]

\begin{lemma}[Soundness]
\label{la:Fsoundu}
For all $P, Q\in\SPu$, $\FFELeu\vdash P =Q ~\Longrightarrow~ \FFELu\models P = Q$.
\end{lemma}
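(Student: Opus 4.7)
The plan is to follow the same overall strategy as in Lemma~\ref{la:Fsound}: since Lemma~\ref{la:Fcongruenceu} already supplies that $=_{\feu}$ is a congruence on $\SPu$, it suffices to verify that every axiom of $\FFELeu$ is valid under the $\feu$ semantics, i.e.\ that both sides are mapped to the same $\und$-evaluation tree for every substitution of the (meta)variables by terms in $\SPu$. From this, a routine induction on the length of $\FFELeu$-derivations gives the desired implication.

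First I would revisit the ten original axioms \eqref{FFEL1}--\eqref{FFEL10}. Their soundness in the two-valued case was established by unfolding the definition of $\fe$ together with properties of leaf replacement on trees in $\NT$. The key observation is that the replacement operator $[\tr\mapsto Y,\fa\mapsto Z]$ extends verbatim to $\NTu$ by simply leaving every $\und$-leaf untouched, and the clauses of $\feu$ on $\neg$, $\fulland$, and $\fullor$ are syntactically identical to those of $\fe$. Hence every structural identity used in the two-valued soundness argument (for example, the computation shown for \eqref{FFEL8} in the proof of Lemma~\ref{la:Fsound}) transfers word-for-word, because the added possibility of $\und$-leaves in the trees $\feu(P)$ is inert under all the replacements appearing there. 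So no new work is required for \eqref{FFEL1}--\eqref{FFEL10}.

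It then remains to check the two new axioms. For $\neg\und=\und$, by definition $\feu(\neg\und)=\feu(\und)[\tr\mapsto\fa,\fa\mapsto\tr]=\und[\tr\mapsto\fa,\fa\mapsto\tr]=\und=\feu(\und)$, using that the replacement of $\tr$ and $\fa$ in the single-leaf tree $\und$ leaves it unchanged. For $\und\fulland x=\und$, let $P\in\SPu$ and compute
\begin{align*}
\feu(\und\fulland P)
&= \feu(\und)[\tr\mapsto\feu(P),\fa\mapsto\feu(P)[\tr\mapsto\fa]]\\
&= \und[\tr\mapsto\feu(P),\fa\mapsto\feu(P)[\tr\mapsto\fa]]\\
&= \und = \feu(\und),
\end{align*}
again because replacement on $\und$ is the identity.

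There is no real obstacle here; the only care needed is in confirming that enlarging the codomain of the evaluation function from $\NT$ to $\NTu$ does not break any of the existing algebraic identities between trees that were used in Lemma~\ref{la:Fsound}. Since all those identities reduce to equalities of (possibly iterated) leaf replacements, and the replacement operator acts as the identity on $\und$-leaves, each such identity lifts without change to $\NTu$. Combining this with the two displayed calculations above and the congruence from Lemma~\ref{la:Fcongruenceu} yields the soundness claim.
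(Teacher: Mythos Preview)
Your proposal is correct and follows essentially the same approach as the paper: reduce to checking the axioms via the congruence from Lemma~\ref{la:Fcongruenceu}, observe that the old axioms \eqref{FFEL1}--\eqref{FFEL10} go through unchanged because leaf replacement is inert on $\und$-leaves, and verify the two new $\und$-axioms directly. The paper's proof is much terser (it just says these verifications ``follow easily''), but your expansion of the details is accurate and matches the intended argument.
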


\begin{proof}
By Lemma~\ref{la:Fcongruenceu}, the relation $=_{\feu}$ is a congruence on \SPu, 
so it suffices to show that all closed instances of the \FFELeu-axioms satisfy $=_{\feu}$,
which follows easily (cf.\ the proof of Lemma~\ref{la:Fsound}).
The validity of the two new axioms of $\FFELeu$ is also easily verified.
\end{proof}

Defining $\und^{dl}=\und$, it also immediately follows that \FFELeu\ satisfies the duality principle.
Furthermore, defining $x^{dl} = x$
for each variable $x$, the duality principle extends to equations, 
that is, for all terms $s, t$ over $\SigFELu$,
\[
\FFELeu\vdash s = t \iff \FFELeu\vdash s^{dl} = t^{dl}.
\]

We start with some useful consequences of \FFELeu.

\begin{lemma}
\label{la:aux0}
The following equations are consequences of \FFELeu:
\textup{
\begin{enumerate}
\setlength\itemsep{5pt}
\item
$x\fullor (y\fulland\und)=(x\fullor y)\fulland\und$, 
\item 
$x\fullor (y\fulland\und)=x\fulland (y\fulland\und)$,
\item 
$\neg x\fulland (y\fulland\und)=x\fulland (y\fulland\und)$.
\end{enumerate}
}
\end{lemma}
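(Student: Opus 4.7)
The two new axioms $\und \fulland x = \und$ and $\neg \und = \und$ (with their dual $\und \fullor x = \und$) let us treat $\und$ as a ``doubly absorptive'' constant that additionally swallows negation. My plan is to first record a handful of preliminary identities, then prove (1) and (3) as essentially direct applications of Lemma~\ref{la:feqs}, and reserve the bulk of the work for (2). The preliminaries are: (a) $\und = \und \fulland \fa$, an instance of $\und \fulland x = \und$; (b) $\und = \und \fullor \tr$, the dual instance; (c) $\fa \fulland \und = \und$, which follows from (a) together with \eqref{FFEL7}; and (d) the $\und$-analogue of \eqref{FFEL8}, namely $\neg y \fulland \und = y \fulland \und$, which I would derive by rewriting the right $\und$ as $\fa \fulland \und$ via (c), pulling the $\fa$ out by \eqref{FFEL4}, applying \eqref{FFEL8}, and then repacking through \eqref{FFEL4} and (c) again.

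For (1), I would rewrite $\und$ inside $y \fulland \und$ as $\und \fullor \tr$ using (b), turning the left-hand side into $x \fullor (y \fulland (\und \fullor \tr))$. Lemma~\ref{la:feqs}(3) (with its $z$ instantiated to $\und$) then fires directly, giving $(x \fullor y) \fulland (\und \fullor \tr)$, and a final use of (b) collapses the trailing $\und \fullor \tr$ back to $\und$. Equation (3) is proved symmetrically using (a) and Lemma~\ref{la:feqs}(1): rewrite $y \fulland \und$ as $(y \fulland \und) \fulland \fa$ by combining (a) with \eqref{FFEL4}, apply Lemma~\ref{la:feqs}(1) with its $y$ read as the compound $y \fulland \und$ to swing $\neg$ across the outermost $x$, and absorb the trailing $\fa$ in the reverse direction.

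Equation (2) is the real obstacle, because no axiom or earlier lemma immediately converts a $\fullor$ into a $\fulland$. The plan is to chain (1) with the auxiliary identity (d). Starting from $x \fullor (y \fulland \und)$, apply (1) to obtain $(x \fullor y) \fulland \und$, then unfold the $\fullor$ via \eqref{FFEL2} to get $\neg(\neg x \fulland \neg y) \fulland \und$. Applying (d) with $z := \neg x \fulland \neg y$ strips the outer negation, giving $(\neg x \fulland \neg y) \fulland \und$, which reassociates by \eqref{FFEL4} to $\neg x \fulland (\neg y \fulland \und)$; one more application of (d) converts this to $\neg x \fulland (y \fulland \und)$, and a final use of (3) removes the remaining negation to land at $x \fulland (y \fulland \und)$. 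The delicate ingredient throughout is (d): it is the $\und$-analogue of \eqref{FFEL8}, but because $\und$ absorbs on the \emph{left} of $\fulland$ rather than the right, one must first expose a literal $\fa$ via $\fa \fulland \und = \und$ before \eqref{FFEL8} becomes applicable, which is exactly the manoeuvre that makes the whole chain of arguments go through.
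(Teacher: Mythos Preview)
Your proof is correct. You and the paper share the same auxiliary (d), which the paper calls (Aux2), but you diverge in the remaining structure. The paper's pivot is a separate identity (Aux1), namely $x\fullor\und = x\fulland\und$, obtained in one step from \eqref{FFEL10}; with that in hand, (1) is just associativity of $\fullor$ sandwiched between two applications of (Aux1), (2) is a direct application of \eqref{FFEL10} to $x\fullor((y\fulland\und)\fulland\fa)$ followed by two uses of (Aux1), and (3) is \emph{deduced from} (1) and (2) together with (Aux2). You instead bypass (Aux1) entirely and lean on Lemma~\ref{la:feqs}: part~(3) of that lemma gives you (1), part~(1) gives you (3) directly, and you then chain (1), (d), and (3) to obtain (2). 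The trade-off is that the paper's route isolates a single memorable intermediate fact and proves the three consequences in the order $1\to 2\to 3$, whereas your route reuses existing machinery and proceeds $1\to 3\to 2$; your derivation of (3) is arguably cleaner (no dependence on (2)), while the paper's derivation of (2) is slightly shorter.
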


\begin{proof}
Note that by duality, $\und\fullor x=\und$. We first derive two auxiliary results:
\begin{align}
\tag{Aux1}
\label{A.1}
x\fullor\und
&=
x\fullor(\und\fulland\fa)\stackrel{\eqref{FFEL10}}=x\fulland(\und\fullor\tr)
=x\fulland\und,
\\
\tag{Aux2}
\label{A.2}
\neg x \fulland\und
&=
\neg x\fulland(\und\fulland\fa)
=\neg x\fulland(\fa\fulland\und)
=(\neg x\fulland\fa)\fulland\und
=(x\fulland\fa)\fulland\und
=x\fulland\und.
\end{align}
Consequence 1: $ x\fullor (y\fulland\und)
\stackrel{\eqref{A.1}}=x\fullor (y\fullor\und)
=(x\fullor y)\fullor\und
\stackrel{\eqref{A.1}}=(x\fullor y)\fulland\und
$.
\\[2mm]
Consequence 2:
$x\fullor (y\fulland\und)=x\fullor ((y\fulland\und)\fulland\fa)
\stackrel{\eqref{FFEL10}}=x\fulland ((y\fulland\und)\fullor\tr)
\stackrel{\eqref{A.1}}=x\fulland ((y\fullor\und)\fullor\tr)
=x\fulland (y\fullor(\und\fullor\tr))
=x\fulland (y\fullor\und)
\stackrel{\eqref{A.1}}=x\fulland (y\fulland\und)
$.
\\[2mm]
Consequence 3: by consequences $1$ and $2$, $(x\fullor y)\fulland\und
=x\fulland (y\fulland\und)
$, 
hence 
$\neg x\fulland(y\fulland\und)
=\neg x\fulland(\neg y\fulland\und)
=(\neg x\fulland \neg y)\fulland\und
\stackrel{\eqref{A.2}}=(x\fullor y)\fulland\und
=x\fulland (y\fulland\und)
$.
\end{proof}

We  introduce ``\und-normal forms" and prove a few lemmas from which our next 
completeness result follows easily.

\begin{definition}
\label{def:Unf}
For $\sigma\in A^*$ and $a\in A$, 
$\bm{\und_\sigma}$ is defined by 
$\und_\epsilon=\und$ and
$\und_{a\rho}=a\fulland\und_\rho.$
\end{definition}

\begin{lemma}
\label{la:auxA} For all $\sigma\in A^*$,
$\und_\sigma\fulland x=\und_\sigma$ and 
$\und_\sigma\fullor x=\und_\sigma$ are consequences of \FFELeu.
\end{lemma}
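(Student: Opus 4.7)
The plan is to induct on the length of $\sigma$, but first to establish a useful structural lemma about the form of $\und_\sigma$ modulo \FFELeu-derivability. Specifically, I would show that for every $\sigma \in A^*$ there is a term $P_\sigma \in \SPu$ with $\FFELeu \vdash \und_\sigma = P_\sigma \fulland \und$. The base case $\sigma = \epsilon$ uses~\eqref{FFEL5} to write $\und = \tr \fulland \und$, so $P_\epsilon = \tr$. For the step $\sigma = a\rho$, the induction hypothesis gives $\und_\rho = P_\rho \fulland \und$, and then~\eqref{FFEL4} yields $\und_{a\rho} = a \fulland (P_\rho \fulland \und) = (a \fulland P_\rho) \fulland \und$, so we can take $P_{a\rho} = a \fulland P_\rho$.

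Given this representation, the first equation $\und_\sigma \fulland x = \und_\sigma$ follows immediately by re-associating via~\eqref{FFEL4} and applying the axiom $\und \fulland x = \und$:
\begin{equation*}
(P_\sigma \fulland \und) \fulland x = P_\sigma \fulland (\und \fulland x) = P_\sigma \fulland \und.
\end{equation*}

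The more delicate part is the disjunctive identity $\und_\sigma \fullor x = \und_\sigma$, since there is no direct axiom that pushes an outer $\fullor x$ past the leading $\fulland$ in $P_\sigma \fulland \und$. The bridge is identity~\eqref{A.1} from the proof of Lemma~\ref{la:aux0}, which gives $y \fullor \und = y \fulland \und$ for any $y$. I would rewrite $(P_\sigma \fulland \und) \fullor x$ first as $(P_\sigma \fullor \und) \fullor x$ by~\eqref{A.1}, then use associativity of $\fullor$ (the dual of~\eqref{FFEL4}) to get $P_\sigma \fullor (\und \fullor x)$, and finally absorb the $x$ using $\und \fullor x = \und$, which is the dual of $\und \fulland x = \und$. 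One last application of~\eqref{A.1} returns $P_\sigma \fullor \und = P_\sigma \fulland \und = \und_\sigma$.

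The only real obstacle is the second identity, and the trick above resolves it: expose the $\fulland \und$ suffix once and for all via the auxiliary lemma, then interconvert $\fulland \und$ and $\fullor \und$ using~\eqref{A.1} to let associativity and the $\und$-axiom do the absorption. The duality principle noted just after Lemma~\ref{la:Fsoundu} (with $\und^{dl} = \und$) is what legitimises using both $\und \fulland x = \und$ and its dual.
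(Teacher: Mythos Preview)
Your proof is correct. The structural lemma $\und_\sigma = P_\sigma \fulland \und$ is sound (the induction is immediate from \eqref{FFEL4} and \eqref{FFEL5}), and both absorption identities then follow cleanly: the $\fulland$ case by associativity and the axiom $\und \fulland x = \und$, and the $\fullor$ case by toggling $P_\sigma \fulland \und \leftrightarrow P_\sigma \fullor \und$ via~\eqref{A.1}, re-associating, and absorbing with the dual axiom $\und \fullor x = \und$.

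The paper takes a slightly different route for the disjunctive case. It proves $\und_\sigma \fulland x = \und_\sigma$ by the same induction-plus-associativity you use, but for $\und_\sigma \fullor x = \und_\sigma$ it argues via self-duality: one shows (also by induction on $\sigma$) that $(\und_\sigma)^{dl} = \und_\sigma$, and then the duality principle applied to the already-proved $\und_\sigma \fulland x = \und_\sigma$ immediately yields the $\fullor$-version. Your approach trades the self-duality lemma for an explicit use of~\eqref{A.1}; the paper's trades the~\eqref{A.1} manipulation for an appeal to duality. Both are short; the paper's is marginally slicker once self-duality is in hand, while yours is more self-contained and avoids a second induction.
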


\begin{proof}
The first consequence follows easily by induction on the length of $\sigma$ and 
associativity \eqref{FFEL4},
as well as $(\und_\sigma)^{dl}=\und_\sigma$. By the latter,
$\und_\sigma\fullor x=\und_\sigma$.
\end{proof}

\begin{lemma}
\label{la:hulpje}
For all $\sigma\in A^*$, 
the following equations are consequences of $\FFELeu:$
\textup{
\begin{enumerate}
\item
$x\fullor (y\fulland\und_\sigma)=(x\fullor y)\fulland\und_\sigma$, 
\item
$x\fullor (y\fulland\und_\sigma)=x\fulland (y\fulland\und_\sigma)$,
\item
$\neg x\fulland (y\fulland\und_\sigma)=x\fulland (y\fulland\und_\sigma)$.
\end{enumerate}
}
\end{lemma}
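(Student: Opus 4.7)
The plan is to proceed by simultaneous induction on the length $|\sigma|$ of the word $\sigma\in A^*$, proving the three equations (1)--(3) together. The base case $\sigma=\epsilon$ gives $\und_\sigma=\und$, and the three equations are then exactly the three consequences of Lemma~\ref{la:aux0}, so nothing remains to check there.

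For the inductive step, write $\sigma=a\rho$, so that $\und_{a\rho}=a\fulland\und_\rho$ by Definition~\ref{def:Unf}. The crucial preliminary move, used throughout, is to apply the associativity axiom \eqref{FFEL4} to rewrite
\[
y\fulland\und_{a\rho}\;=\;y\fulland(a\fulland\und_\rho)\;=\;(y\fulland a)\fulland\und_\rho,
\]
so that every term in sight becomes an expression of the form ``something $\fulland \und_\rho$'', to which the induction hypothesis for $\rho$ directly applies.

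With this move, consequences (2) and (3) drop out very easily. For (2), the left-hand side rewrites via \eqref{FFEL4} to $x\fullor((y\fulland a)\fulland\und_\rho)$, which by the induction hypothesis $\text{IH}(2)$ equals $x\fulland((y\fulland a)\fulland\und_\rho)$, and re-associating gives $x\fulland(y\fulland\und_{a\rho})$. The derivation of (3) is identical except that $\text{IH}(3)$ is invoked in place of $\text{IH}(2)$.

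Consequence (1) is the only genuine obstacle and requires combining all three induction hypotheses. The plan is to manipulate the right-hand side $(x\fullor y)\fulland\und_{a\rho}$ until it meets the left-hand side. First rewrite it as $(x\fullor y)\fulland(a\fulland\und_\rho)$ by \eqref{FFEL4}; then apply $\text{IH}(2)$ (in the direction $u\fulland(v\fulland\und_\rho)=u\fullor(v\fulland\und_\rho)$) with $u=x\fullor y$ and $v=a$ to obtain $(x\fullor y)\fullor(a\fulland\und_\rho)$; use the dual of \eqref{FFEL4} to re-associate this as $x\fullor(y\fullor(a\fulland\und_\rho))$; and finally apply $\text{IH}(2)$ once more, now with $u=y$, $v=a$, to convert the inner $\fullor$ into $\fulland$, yielding $x\fullor(y\fulland(a\fulland\und_\rho))=x\fullor(y\fulland\und_{a\rho})$, which is the left-hand side. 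So the hard part is seeing that (1) for $a\rho$ is not derivable from $\text{IH}(1)$ alone but needs $\text{IH}(2)$ and both associativities; once this is spotted, the chain of rewrites is short.
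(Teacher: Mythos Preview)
Your proposal is correct and essentially identical to the paper's proof: simultaneous induction on $|\sigma|$, base case via Lemma~\ref{la:aux0}, and for $\sigma=a\rho$ the same re-association $y\fulland\und_{a\rho}=(y\fulland a)\fulland\und_\rho$ followed by the appropriate induction hypothesis for (2) and (3). For (1) the paper runs your chain in the opposite direction (from the left-hand side), but the steps are the same: two applications of IH(2) together with associativity of $\fullor$. One small inaccuracy: you claim (1) ``requires combining all three induction hypotheses'', but your own derivation (and the paper's) uses only IH(2); IH(1) and IH(3) are not needed there.
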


\begin{proof}
By simultaneous induction on the length of $\sigma$.
The base case ($\sigma=\epsilon$) is Lemma~\ref{la:aux0}.1-3. 
For $\sigma=a\rho~(a\in A)$, derive
\\[1mm]
$x\fullor (y\fulland\und_{a\rho})
=x\fullor (y\fulland (a\fulland\und_{\rho}))
\stackrel{\text{IH2}}=x\fullor (y\fullor (a\fulland\und_{\rho}))
=(x\fullor y)\fullor (a\fulland\und_{\rho})
=(x\fullor y)\fulland\und_{a\rho}
$,
\\[1mm]
$x\fullor (y\fulland\und_{a\rho})
=x\fullor ((y\fulland a)\fulland\und_{\rho})
\stackrel{\text{IH2}}=x\fulland ((y\fulland a)\fulland\und_{\rho})
=x\fulland (y\fulland \und_{a\rho})$,
\\[1mm]
$\neg x \fulland(y\fulland\und_{a\rho})
=\neg x \fulland((y\fulland a)\fulland \und_{\rho})
\stackrel{\text{IH3}}=x \fulland((y\fulland a)\fulland \und_{\rho})
=x \fulland(y\fulland \und_{a\rho})
$.
\end{proof}

To prove the forthcoming completeness result, it suffices to restrict 
to Negation Normal Forms (NNFs), which are defined as follows ($a\in A$):
\begin{align*}
&P ::= \tr\mid\fa\mid\und\mid a\mid \neg a\mid P\fulland P\mid P\fullor P.
\end{align*}
By structural induction it immediately follows that for each  $P\in\SPu$, there is a unique NNF
$Q$ such that $\FFELeu\vdash P=Q$. 

\begin{lemma}
\label{la:aux1}
For each $P\in\SPu$ and $\sigma\in A^*$, there exists $\rho\in A^*$ s.t. 
$\FFELeu\vdash P\fulland\und_\sigma=\und_\rho$.
\end{lemma}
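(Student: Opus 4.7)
The plan is to first reduce to Negation Normal Form and then do structural induction on $P$, proving the claim \emph{for all $\sigma \in A^*$ simultaneously}. As remarked just before the lemma, any $P \in \SPu$ is $\FFELeu$-provably equal to a unique NNF, so it suffices to treat the grammar with only $\tr,\fa,\und,a,\neg a,\fulland,\fullor$.

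For the base cases I would verify each atomic NNF with a direct calculation: $\tr \fulland \und_\sigma = \und_\sigma$ is immediate from \eqref{FFEL5}; $\fa \fulland \und_\sigma = \und_\sigma \fulland \fa = \und_\sigma$ using \eqref{FFEL7} and Lemma~\ref{la:auxA}; $\und \fulland \und_\sigma = \und = \und_\epsilon$ from the \und-axiom of $\FFELeu$; and $a \fulland \und_\sigma = \und_{a\sigma}$ by Definition~\ref{def:Unf}. The only slightly subtle base case is $\neg a \fulland \und_\sigma$: here I would rewrite $\und_\sigma = \fa \fulland \und_\sigma$ (via \eqref{FFEL7} and Lemma~\ref{la:auxA}), apply \eqref{FFEL4} to obtain $(\neg a \fulland \fa) \fulland \und_\sigma$, use \eqref{FFEL8} to turn this into $(a \fulland \fa) \fulland \und_\sigma$, and fold back to $a \fulland \und_\sigma = \und_{a\sigma}$.

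For the inductive step on a conjunction $Q \fulland R$, associativity \eqref{FFEL4} gives $(Q \fulland R) \fulland \und_\sigma = Q \fulland (R \fulland \und_\sigma)$; two applications of the induction hypothesis (first on $R$ for $\sigma$, then on $Q$ for the resulting index) yield some $\und_\rho$. The disjunctive step $Q \fullor R$ is where the work sits, and I expect it to be the main obstacle: one first invokes Lemma~\ref{la:hulpje}.1 to rewrite $(Q \fullor R) \fulland \und_\sigma$ as $Q \fullor (R \fulland \und_\sigma)$, then applies the induction hypothesis to $R$ to obtain $Q \fullor \und_\tau$ for some $\tau \in A^*$. Here one must convert a $\fullor$-headed expression into a $\fulland$-headed one, so that the induction hypothesis on $Q$ applies. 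I would split on $\tau$: if $\tau = a\rho'$ then $\und_\tau = a \fulland \und_{\rho'}$, and Lemma~\ref{la:hulpje}.2 gives $Q \fullor (a \fulland \und_{\rho'}) = Q \fulland \und_\tau$; if $\tau = \epsilon$ then $Q \fullor \und$ equals $Q \fulland \und$ by the auxiliary identity \eqref{A.1} from the proof of Lemma~\ref{la:aux0}. Either way one lands in the form $Q \fulland \und_\tau$, and a final use of the induction hypothesis on $Q$ delivers the desired $\und_\rho$.

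The only real subtlety is thus making sure the induction statement is quantified over all $\sigma$, so that in the $\fullor$-case the hypothesis may be reapplied to a different auxiliary index $\tau$ produced along the way; all other steps are straightforward applications of associativity, Lemma~\ref{la:hulpje}, and the base cases.
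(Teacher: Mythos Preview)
Your proposal is correct and follows essentially the same approach as the paper: structural induction on NNFs with the disjunctive case handled via Lemma~\ref{la:hulpje}.1 followed by Lemma~\ref{la:hulpje}.2. The paper is slightly more streamlined in that it invokes Lemma~\ref{la:hulpje}.3 directly for the $\fa$ and $\neg a$ base cases, and in the disjunction step it applies Lemma~\ref{la:hulpje}.2 uniformly (since $\und_\tau=\tr\fulland\und_\tau$ works for all $\tau$, your case split on $\tau=\epsilon$ versus $\tau=a\rho'$ is unnecessary), but these are cosmetic differences.
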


\begin{proof}
By structural induction, restricting to NNFs. The base cases $\fa\fulland\und_\sigma=\und_\sigma$
and $\neg a\fulland\und_\sigma=a\fulland \und_\sigma=\und_{a\sigma}$
follow by Lemma~\ref{la:hulpje}.3, and the other base cases are trivial.
For the induction there are two cases:

Case $P=Q\fulland R$. Derive $ (Q\fulland R)\fulland\und_\sigma=Q\fulland(R\fulland\und_\sigma)
\stackrel{\text{\text{IH}}}=Q\fulland\und_{\rho'}\stackrel{\text{IH}}=\und_\rho$ 
for some $\rho',\rho\in A^*$.

Case $P=Q\fullor R$. By Lemma~\ref{la:hulpje}.1 and induction, 
$ (Q\fullor R)\fulland\und_\sigma=Q\fullor (R\fulland\und_\sigma)=Q\fullor\und_{\rho'}$ 
for some $\rho'\in A^*$, and  
by Lemma~\ref{la:hulpje}.2 and induction, $ Q\fullor\und_{\rho'}=Q\fulland\und_{\rho'}=\und_\rho$
for some $\rho\in A^*$.
\end{proof}

\begin{lemma}
\label{la:aux2}
For each $P\in\SPu$ that contains \und, there exists 
$\sigma\in A^*$ s.t. $\FFELeu\vdash P=\und_\sigma$.
\end{lemma}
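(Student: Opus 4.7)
The plan is to argue by structural induction on $P$, restricted to Negation Normal Forms (NNFs). This is legitimate because every $P\in\SPu$ is derivably equal (in $\FFELeu$) to a unique NNF, and NNFs are built from $\tr,\fa,\und,a,\neg a$ by $\fulland$ and $\fullor$. Among these constructor forms, the only base form containing $\und$ is $\und$ itself, which equals $\und_\epsilon$, giving the base case for free.

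For an induction step on $P=Q\fulland R$ that contains $\und$, I would split on where $\und$ occurs. If $\und$ occurs in $Q$, the inductive hypothesis gives some $\sigma\in A^*$ with $\FFELeu\vdash Q=\und_\sigma$, and then Lemma~\ref{la:auxA} yields $P=\und_\sigma\fulland R=\und_\sigma$. If $\und$ occurs only in $R$, the inductive hypothesis gives $\tau\in A^*$ with $R=\und_\tau$, and Lemma~\ref{la:aux1} applied to $Q\fulland\und_\tau$ directly produces the required $\und_\rho$.

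For $P=Q\fullor R$ containing $\und$, the case when $\und$ occurs in $Q$ is symmetric, using the disjunctive half of Lemma~\ref{la:auxA}. When $\und$ occurs only in $R$, the inductive hypothesis gives $R=\und_\tau$. If $\tau=a\tau'$, I would rewrite $\und_\tau$ as $a\fulland\und_{\tau'}$; then Lemma~\ref{la:hulpje}.1 turns $Q\fullor(a\fulland\und_{\tau'})$ into $(Q\fullor a)\fulland\und_{\tau'}$, and Lemma~\ref{la:aux1} finishes the job. If $\tau=\epsilon$, I would first use $\und=\tr\fulland\und$ (by~\eqref{FFEL5}) and then proceed identically via Lemma~\ref{la:hulpje}.1 followed by Lemma~\ref{la:aux1}.

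The only real obstacle is the $\fullor$-case when $\und$ sits on the right, since Lemma~\ref{la:aux1} is stated only for conjunctions of the shape $P\fulland\und_\sigma$. Lemma~\ref{la:hulpje}.1 is tailor-made to bridge this gap, converting any $x\fullor(y\fulland\und_\sigma)$ into a conjunction with $\und_\sigma$ on the right, so the induction closes cleanly without any further bookkeeping.
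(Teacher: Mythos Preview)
Your proposal is correct and matches the paper's proof almost verbatim: structural induction over NNFs, the same case split on whether $\und$ sits in the left or right argument, and the same appeals to Lemma~\ref{la:auxA} and Lemma~\ref{la:aux1}. The one place you diverge is the $Q\fullor R$ subcase with $R=\und_\tau$: the paper invokes Lemma~\ref{la:hulpje}.2 (with $y=\tr$, using $\und_\tau=\tr\fulland\und_\tau$) to get $Q\fullor\und_\tau = Q\fulland\und_\tau$ in one step and then applies Lemma~\ref{la:aux1}, whereas you go through Lemma~\ref{la:hulpje}.1 with a case split on $\tau$ --- your route works, but the case split is unnecessary since $\und_\tau=\tr\fulland\und_\tau$ holds for every $\tau$.
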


\begin{proof}
By structural induction, restricting to NNFs. The only base case is $P=\und=\und_\epsilon$.
For the induction there are two cases:

Case $P=Q\fulland R$. Apply a case distinction: 
if \und\ occurs in $Q$, then by induction, $ Q=\und_\sigma$
and by Lemma~\ref{la:auxA}, $ P=\und_\sigma\fulland R=\und_\sigma$;
if \und\ does not occur in $Q$, then \und\ occurs in $R$ and by induction, $ R=\und_\sigma$,
so $ P=Q\fulland\und_\sigma$. By Lemma~\ref{la:aux1}, $ Q\fulland\und_\sigma=\und_\rho$
for some $\rho\in A^*$.

Case $P=Q\fullor R$. Apply a case distinction: 
if \und\ occurs in $Q$, then by induction, $ Q=\und_\sigma$, and 
by Lemma~\ref{la:auxA}, $ P=\und_\sigma\fullor R=\und_\sigma$; 
if \und\ does not occur in $Q$, then \und\ occurs in $R$ and by induction, $ R=\und_\sigma$, 
so,  $ P=Q\fullor\und_\sigma$.
By Lemma~\ref{la:hulpje}.2 and Lemma~\ref{la:aux1}, 
$ Q\fullor\und_\sigma=Q\fulland\und_\sigma=\und_\rho$
for some $\rho\in A^*$.
\end{proof}

\begin{theorem}[Completeness]
The logic \FFELu\ is axiomatised by \FFELeu.
\end{theorem}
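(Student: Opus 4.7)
The plan is to reduce completeness to the already-proven two-valued completeness (Theorem~\ref{thm:felcpl}), using the normalisation machinery established in Lemmas~\ref{la:auxA}--\ref{la:aux2} to handle terms containing \und. Soundness is already Lemma~\ref{la:Fsoundu}, so what remains is to show that $\feu(P)=\feu(Q)$ implies $\FFELeu\vdash P=Q$.

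The first step would be a structural dichotomy for any $P\in\SPu$. If \und\ does not occur in $P$, then $P\in\SP$ and a routine induction on $P$ using Definition~\ref{def:feu} shows that all non-middle (left/right) leaves of $\feu(P)$ lie in $\{\tr,\fa\}$; moreover $\feu$ on $\SP$ coincides with $\fe$ up to the canonical injective map $\phi:\NT\to\NTu$ defined by $\phi(\tr)=\tr$, $\phi(\fa)=\fa$, $\phi(X\unlhd a\unrhd Y)=\phi(X)\unlhd\underline a\unrhd\phi(Y)$, which merely underlines every internal atom (so that a middle $\und$-branch becomes available). If \und\ does occur in $P$, Lemma~\ref{la:aux2} supplies $\sigma\in A^*$ with $\FFELeu\vdash P=\und_\sigma$, and by soundness together with the identity $\feu(a\fulland X)=\feu(X)\unlhd\underline a\unrhd\feu(X)[\tr\mapsto\fa]$ a short induction on $|\sigma|$ (using that $[\tr\mapsto\fa]$ is a no-op on all-\und\ trees) shows that $\feu(P)=\feu(\und_\sigma)$ is a complete binary tree all of whose non-middle leaves are \und.

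Assuming $\feu(P)=\feu(Q)$, split into three cases. The mixed case in which exactly one of $P,Q$ contains \und\ is impossible, since the two trees would then use disjoint non-middle leaf alphabets. If neither $P$ nor $Q$ contains \und, the dichotomy gives $\feu(P)=\phi(\fe(P))$ and $\feu(Q)=\phi(\fe(Q))$, and injectivity of $\phi$ yields $\fe(P)=\fe(Q)$; Theorem~\ref{thm:felcpl} then delivers $\FFELe\vdash P=Q$, hence $\FFELeu\vdash P=Q$. If both $P$ and $Q$ contain \und, Lemma~\ref{la:aux2} supplies $\sigma,\rho\in A^*$ with $\FFELeu\vdash P=\und_\sigma$ and $\FFELeu\vdash Q=\und_\rho$; soundness gives $\feu(\und_\sigma)=\feu(\und_\rho)$, and the explicit shape of $\feu(\und_\sigma)$ computed above allows $\sigma$ to be recovered uniquely, for instance as the labels along the leftmost root-to-leaf path. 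Hence $\sigma=\rho$, so $\und_\sigma$ and $\und_\rho$ are syntactically equal, and transitivity in \FFELeu\ concludes $P=Q$.

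The main obstacle is formulating the dichotomy and the embedding $\phi$ precisely enough to justify the appeal to the two-valued completeness theorem in the \und-free case; all remaining steps are routine applications of the preceding lemmas and require no fresh axiomatic derivations, since Lemma~\ref{la:aux2} has already collapsed every \und-containing term to a unique normal form $\und_\sigma$.
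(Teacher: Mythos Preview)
Your proposal is correct and follows essentially the same route as the paper: split on whether \und\ occurs, invoke two-valued completeness (Theorem~\ref{thm:felcpl}) in the \und-free case, and use Lemma~\ref{la:aux2} together with soundness to collapse both sides to the same $\und_\sigma$ otherwise. The paper's proof is terser---it leaves the embedding $\phi$ and the recovery of $\sigma$ from $\feu(\und_\sigma)$ implicit---whereas you spell these out explicitly, but the argument is the same.
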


\begin{proof} 
By Lemma~\ref{la:Fsoundu}, \FFELeu\ is sound.
For completeness, assume $P_1=_{\feu} P_2$. Then, either $P_1$ and $P_2$ do not contain \und\
and by Theorem~\ref{thm:felcpl} 
we are done, or both $P_1$ and $P_2$ contain \und.
By Lemma~\ref{la:aux2}, there are $\und_{\sigma_i}\in A^*$ such that $\FFELeu\vdash P_i=\und_{\sigma_i}$. 
By assumption and soundness, $\sigma_1=\sigma_2$.
Hence $\FFELeu\vdash P_1=\und_{\sigma_1}=P_2$.
\end{proof}

\section{Memorising FEL (\MFEL)} 
\label{sec:4}
In this section we define and axiomatise “Memorising FEL” (\MFEL). As noted in the Introduction, 
the logic \MFEL\ is based on the evaluation strategy in which the evaluation of each atom 
is memorised: 
no complete path in a memorising evaluation tree contains multiple occurrences of the same atom.
Memorising evaluation trees appear to be fundamental to stronger logics  
such as \CLFELtwo\ (Section~\ref{sec:6}) and several short-circuit logics defined in~\cite{BPS21,BP23}.
We present equational axioms for the equality of memorising evaluation trees and prove
a completeness result.

\begin{definition}
\label{def:treesM}
The evaluation trees $\tr,\fa\in\NT$ are \textbf{memorising evaluation trees}.
\\
The evaluation tree $(X\unlhd a\unrhd Y)\in\NT$ is a \textbf{memorising evaluation tree} over $A$ 
if both $X$ and $Y$ are memorising evaluation trees that do not contain the label $a$.
\end{definition}

We interpret propositional expressions in \SP\ as memorising evaluation trees
by a function \mfe.

\begin{definition}
\label{def:mfe}
The unary \textbf{memorising full evaluation} function $\mfe : \SP \to\NT$ 
is defined by 
\[
\mfe(P)=\memt(\fe(P)),
\]
where the auxiliary function $\memt:\NT\to\NT$ is defined as
follows, for $a\in A$:
\[
\memt(\tr) = \tr,~\memt(\fa)=\fa,~\memt(X\unlhd a\unrhd Y)=\memt(\Le_a(X))\unlhd a\unrhd \memt(\Ri_a(Y)),
\] 
and the auxiliary functions $\Le_a,\Ri_a:\NT\to\NT$ are defined as follows, for $b\in A$:
\begin{align*}
&\Le_a(\tr) = \Ri_a(\tr)=\tr, ~~\Le_a(\fa)=\Ri_a(\fa)=\fa,
\\[1mm]
&\Le_a(X\unlhd b\unrhd Y)=
\begin{cases}
\Le_a(X) 
& \text{if $b=a$},\\
\Le_a(X)\unlhd b\unrhd \Le_a(Y) 
& \text{otherwise},
\end{cases}
\\[1mm]
&\Ri_a(X\unlhd b\unrhd Y)=
\begin{cases}
\Ri_a(Y) 
& \text{if $b=a$},\\
\Ri_a(X)\unlhd b\unrhd \Ri_a(Y) 
& \text{otherwise}.
\end{cases}
\end{align*}
\end{definition}

Let $A^s$ be the set of strings over $A$ with the property 
that each $\sigma\in A^s$ contains no multiple occurrences of the same atom.
Hence, for all $P\in\SP$, each path in $\mfe(P)$ from the root 
to a leaf has a sequence of labels of the form $\sigma\in A^s$.
It is clear that not all memorising evaluation trees can be expressed in \MFEL, a simple example
is $(\tr\unlhd b\unrhd\fa)\unlhd a\unrhd(\tr\unlhd c\unrhd\fa)$.
Three examples of memorising evaluation trees in $\mfe(\SP)$:
\begin{align*}
&\mfe(a\fulland a)=\memt(\fe(a\fulland a))=\memt(\fe(a)[\tr\mapsto\fe(a),\fa\mapsto\fe(a)])
\\
&\hspace{17.2mm}
=\memt((\tr\unlhd a\unrhd\fa)\unlhd a\unrhd(\tr\unlhd a\unrhd\fa))
=\memt(\Le_a(\tr))\unlhd a\unrhd\memt(\Ri_a(\fa))
=\fe(a), 
\\[2mm]
&\mfe(a\fullor \neg a)=\tr\unlhd a\unrhd\tr=\fe(a\fullor\tr),
\\[2mm]
&\mfe((a\fulland b)\fullor(\neg a\fulland\neg b))=(\tr\unlhd b\unrhd\fa)\unlhd a\unrhd
(\fa\unlhd b\unrhd\tr).
\end{align*}
The last example suggests that all memorising evaluation trees 
with complete traces $abB_i$ with $i=1,..,4$ and $B_i\in\{\tr,\fa\}$ can be expressed in \MFEL\ and
below we show that each such evaluation tree can be expressed
as $\mfe((a\fulland t_1)\fullor(\neg a\fulland t_2))$ with 
$t_j\in\{b,\neg b, b\fulland\fa,b\fullor \tr\}$. We note that this last evaluation tree cannot be expressed
in \FFEL\ and return to expressiveness issues in Section~\ref{sec:8}. 

\begin{definition}
\label{def:Mcongruence}
The binary relation $=_{\mfe}$ 
on \SP\
is called \textbf{memorising full valuation congruence} 
and is defined by $P=_{\mfe} Q\iff\mfe(P)=\mfe(Q)$.
\end{definition}

The following lemma is an adaptation of~\cite[La.3.5]{BPS21}, a 
detailed proof is given in Appendix~\ref{app:A4}.

\begin{lemma}
\label{la:Mcongruence} 
The relation $=_{\mfe}$ is a congruence.
\end{lemma}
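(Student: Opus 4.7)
The reflexivity, symmetry, and transitivity of $=_{\mfe}$ are immediate from its definition as the kernel of the function $\mfe\colon\SP\to\NT$, so the content of the lemma is closure under $\neg$, $\fulland$ and $\fullor$. I would begin with the negation case as a warm-up: by a routine induction on $X\in\NT$ one shows that both $\Le_a$ and $\Ri_a$ commute with the leaf swap $[\tr\mapsto\fa,\fa\mapsto\tr]$, and hence so does $\memt$. Combined with $\fe(\neg P)=\fe(P)[\tr\mapsto\fa,\fa\mapsto\tr]$, this immediately gives $P=_{\mfe}Q\implies\neg P=_{\mfe}\neg Q$ and, as a by-product, that $\memt(X[\tr\mapsto\fa])=\memt(X)[\tr\mapsto\fa]$.

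For the binary connectives the central technical step is a substitutivity statement for $\memt$ under the very leaf replacements used in Definition~\ref{def:fe}: for all $X,Y,Y',Z,Z'\in\NT$,
\[
\memt(Y)=\memt(Y')\text{ and }\memt(Z)=\memt(Z')\implies\memt(X[\tr\mapsto Y,\fa\mapsto Z])=\memt(X[\tr\mapsto Y',\fa\mapsto Z']).
\]
To prove it, I would first verify that $\memt\circ\Le_a$ and $\memt\circ\Ri_a$ factor through $\memt$, i.e.\ $\memt(\Le_a(Y))=\memt(\Le_a(Y'))$ whenever $\memt(Y)=\memt(Y')$, and likewise for $\Ri_a$; this is a short induction on $Y$ using that $\Le_a$ eliminates every occurrence of $a$ along any path. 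The substitutivity statement then follows by induction on $X$, using the defining clause $\memt(X_1\unlhd a\unrhd X_2)=\memt(\Le_a(X_1))\unlhd a\unrhd\memt(\Ri_a(X_2))$ to propagate the induction hypothesis through the top-down $\Le_a,\Ri_a$-trimmings. With this lemma in hand, taking $X=\fe(R)$ and instantiating $Y,Z$ with $\fe(P),\fe(P)[\tr\mapsto\fa]$ and $Y',Z'$ with $\fe(Q),\fe(Q)[\tr\mapsto\fa]$ (using the negation step to transfer $=_{\mfe}$ across the $[\tr\mapsto\fa]$) yields $R\fulland P=_{\mfe}R\fulland Q$. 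Congruence in the first argument, $P\fulland R=_{\mfe}Q\fulland R$, is obtained by a short induction on $R$, and the corresponding $\fullor$-statements then follow by duality through $\FFEL2$, i.e.\ $P\fullor Q=_{\mfe}\neg(\neg P\fulland\neg Q)$.

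The principal obstacle is the substitutivity lemma itself. Its subtlety is that paths in the composite tree $X[\tr\mapsto Y,\fa\mapsto Z]$ may have atoms that occur both as internal labels of $X$ and as labels inside $Y$ or $Z$; the pruning done by $\memt$ on the composite must be reconciled with the pruning already embodied in $\memt(Y)$ and $\memt(Z)$. I expect that this is cleanly handled by the observation that any $\Le_a/\Ri_a$-trimmings arising from the path through $X$ only access information present in $\memt(Y)$ and $\memt(Z)$, so that the induction on $X$ propagates without the need to track accumulated atom-sets explicitly. Once the lemma is in place, all remaining steps are routine, and the proof proceeds closely in parallel with~\cite[La.~3.5]{BPS21}.
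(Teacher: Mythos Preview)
Your approach is essentially the paper's: the key technical ingredients you isolate---commutation of $\Le_a,\Ri_a$ with the leaf swap, the distributivity $f_a(X[\tr\mapsto Y,\fa\mapsto Z])=f_a(X)[\tr\mapsto f_a(Y),\fa\mapsto f_a(Z)]$, and the factoring of $\memt\circ\Le_a$ (resp.\ $\Ri_a$) through $\memt$---are exactly those used in Appendix~\ref{app:A4} (Lemma~\ref{la:seven} and the observation labelled (Aux3)). Your substitutivity lemma is the paper's part~(B) of the congruence proof.

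There is, however, a slip in your treatment of first-argument congruence. The claim $P\fulland R=_{\mfe}Q\fulland R$ cannot be obtained ``by a short induction on $R$'': unfolding $\fe(P\fulland R)=\fe(P)[\tr\mapsto\fe(R),\fa\mapsto\fe(R)[\tr\mapsto\fa]]$ shows that what varies is the \emph{outer} tree $\fe(P)$, not the substituted trees, so you need the companion of your substitutivity lemma in which $X$ varies while $Y,Z$ are fixed. Induction on the term structure of $R$ does not reduce this; already the base case $R=a$ lands you back in the same tree-level problem, and the case $R=R_1\fullor R_2$ would require $\fullor$-congruence, which you have deferred to duality (and duality in turn needs both arguments of $\fulland$). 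The paper handles this as its part~(A): one proves by induction on the depth of $X$ that $\memt(X)=\memt(X')$ implies $\memt(X\treefulland Y)=\memt(X'\treefulland Y)$, using precisely the distributivity of $\Le_a,\Ri_a$ over leaf replacement together with the observation that $\memt(X)=\memt(X')$ forces $\memt(\Le_a(X_1))=\memt(\Le_a(X_1'))$ at the root. With that correction your outline is complete and matches the paper.
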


\begin{definition}
\label{def:MFEL}
\textbf{Memorising Fully Evaluated Left-Sequential Logic \emph{(\MFEL)}} is the fully evaluated 
left-sequential logic that satisfies no more consequences than those of \mfe-equality, i.e., 
for all $P, Q \in\SP$,
\[
\MFEL\models P=Q \iff P=_\mfe Q.
\]
\end{definition}

In Table~\ref{tab:MFELe} we provide a set \MFELe\ of axioms for \MFEL\ that is an extension
of \FFELe\ with the axiom~\eqref{M1}. In the remainder of this section
we show that \MFELe\ axiomatises \MFEL.

\begin{lemma}[Soundness]
\label{la:Msound}
For all $P, Q\in\SP$, $\MFELe\vdash P =Q ~\Longrightarrow~ \MFEL\models P = Q$.
\end{lemma}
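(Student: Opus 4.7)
The plan is to mirror the soundness proof for \FFEL\ given in Lemma~\ref{la:Fsound}. By Lemma~\ref{la:Mcongruence}, $=_\mfe$ is a congruence on \SP, so it suffices to show that every closed instance of each axiom in \MFELe\ holds under $=_\mfe$. Since $\mfe = \memt \circ \fe$ by Definition~\ref{def:mfe}, verification reduces to computing both sides of each axiom as $\fe$-trees and then checking that applying $\memt$ yields identical memorising evaluation trees.

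For the axioms \eqref{FFEL1}--\eqref{FFEL10} inherited from \FFELe, no new work is needed. Lemma~\ref{la:Fsound} already establishes that every closed instance $P = Q$ of these axioms satisfies $\fe(P) = \fe(Q)$. Since $\memt$ is a deterministic function on \NT, this immediately gives $\mfe(P) = \memt(\fe(P)) = \memt(\fe(Q)) = \mfe(Q)$, so the \FFELe-axioms remain valid under $=_\mfe$.

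The bulk of the work is therefore verifying the single additional axiom~\eqref{M1}. I would compute $\fe$ of both sides using Definition~\ref{def:fe}, unfold $\memt$ from the outside in according to Definition~\ref{def:mfe}, and show that the recursive calls to $\Le_a$ and $\Ri_a$ on the (possibly larger) left-hand side collapse the redundant atom occurrences that distinguish the two sides under plain $\fe$. The argument should proceed by structural induction on the $\fe$-tree being transformed, using the key property that $\Le_a$ (resp.~$\Ri_a$) uniformly prunes every subtree rooted at $a$ to its left (resp.~right) branch, thereby erasing precisely the duplication that \eqref{M1} equates away.

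The main obstacle is bookkeeping: the substitution operators $[\tr\mapsto\cdot,\fa\mapsto\cdot]$ used to build $\fe(P\fulland Q)$ and $\fe(P\fullor Q)$ interact nontrivially with $\Le_a$ and $\Ri_a$, so an auxiliary lemma commuting $\memt$ with leaf replacement (in the style used later for the congruence proof in Appendix~\ref{app:A4}) is likely needed to make the computation tractable. Once such a commutation lemma is in place, verification of \eqref{M1} reduces to a routine case analysis on whether the distinguished atom of \eqref{M1} appears at the root of the subtree under consideration.
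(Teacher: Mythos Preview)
Your high-level structure matches the paper exactly: invoke the congruence lemma, observe that the \FFELe-axioms are already handled by Lemma~\ref{la:Fsound} since $\mfe=\memt\circ\fe$, and reduce everything to verifying the single new axiom~\eqref{M1}. Where you diverge is in how you propose to verify~\eqref{M1}.

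You outline a direct tree-level computation: expand $\fe$ of both sides of~\eqref{M1}, then push $\memt$ (i.e.\ the operators $\Le_a,\Ri_a$) through the leaf replacements using a commutation lemma, and argue by induction on the structure of $\fe(P)$ that the duplicated occurrences of $P$ and $R$ on the right-hand side collapse. This is a sound strategy and the commutation lemma you anticipate is exactly Lemma~\ref{la:seven}.2 in Appendix~\ref{app:A4}. The calculation is tedious but would go through; your remark about ``the distinguished atom of~\eqref{M1}'' is slightly off (there is no single distinguished atom---the case split in the induction is on whether the root label of the current subtree equals the atom being memorised), but the intended argument is clear.

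The paper takes an entirely different and much shorter route. Rather than computing anything at the tree level, it translates~\eqref{M1} into an equation over the short-circuit connectives via the map $\tra$ (using $x\fulland y=(x\leftor(y\leftand\fa))\leftand y$), checks with \emph{Prover9} that the translated equation is derivable from the axioms \MSCLe\ of memorising short-circuit logic (Table~\ref{tab:SSCL}, axioms~\eqref{Mem1}--\eqref{Mem5}), and then invokes the already-published soundness theorem for \MSCL\ from~\cite[Thm.4.1]{BPS21} together with the bridging identity $\fe(P)=\se(\tra(P))$ (Lemma~\ref{la:aux}). Your approach is self-contained and elementary but labour-intensive; the paper's approach outsources the combinatorics to prior work and an automated prover, at the cost of introducing the \SCL\ machinery into what is otherwise a purely \FEL\ argument.
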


\begin{proof}
By Lemma~\ref{la:Mcongruence}, the relation $=_\mfe$ is a congruence on \SP, so it suffices to show
that all closed instances of the \MFELe-axioms satisfy $=_\mfe$.
By Lemma~\ref{la:Fsound} (soundness of \FFEL), we only have to prove this for axiom~\eqref{M1}, 
thus for all $P,Q,R \in\SP$, 
\[\memt(\fe((P \fullor Q)\fulland R)) = \memt(\fe((\neg P\fulland(Q\fulland R))\fullor 
(P\fulland R))). 
\]
We prove this in detail in Appendix~\ref{app:A4}.  
\end{proof}

\begin{table}[t]
{
\centering
\rule{1\textwidth}{.4pt}
\begin{align}
\nonumber
&\text{Import: }\quad\FFELe \text{ (Table~\ref{tab:FFELe})}
\\[2mm]
\label{M1}
\tag{M1}
&(x\fullor y) \fulland z
=(\neg x\fulland(y\fulland z))\fullor (x\fulland z)
\end{align}
\hrule
}
\caption{\MFELe, axioms for \MFEL}
\label{tab:MFELe}
\end{table}

\begin{table}
{
\centering
\rule{1\textwidth}{.4pt}
\begin{align}
\label{C1}
\tag{C1}
x\fulland (y\fulland x)&=x\fulland y
\\
\label{C2}
\tag{C2}
(x\fulland y)\fullor x&=x\fulland (y\fullor x)
\\
\label{C3}
\tag{C3}
(x\fulland y) \fullor (\neg x\fulland z)
&=(\neg x \fullor y)\fulland (x\fullor z)
\\
\label{C4}
\tag{C4}
x\fulland(y\fullor z)
&=(x \fulland y)\fullor (x\fulland z)
\end{align}
\hrule
}
\caption{Consequences of \MFELe\
}
\label{tab:MFELeCons}
\end{table}

\begin{lemma}
\label{la:xx}
The equations in Table~\ref{tab:MFELeCons} are consequences of \MFELe.
\end{lemma}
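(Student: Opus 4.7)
The plan is to derive each of \eqref{C1}--\eqref{C4} by equational reasoning in \MFELe, which adds the single axiom \eqref{M1} to \FFELe. I would first record two useful variants of \eqref{M1}: substituting $x\mapsto\neg x$ and applying \eqref{FFEL3} gives
\[
(\neg x\fullor y)\fulland z \;=\; (x\fulland(y\fulland z))\fullor(\neg x\fulland z),
\]
and by duality one obtains
\[
(x\fulland y)\fullor z \;=\; (\neg x\fullor(y\fullor z))\fulland(x\fullor z),
\]
together with its negated version. These four forms give enough flexibility to push an ${\fulland}$ across a ${\fullor}$ in either direction, and each derivation below combines them with \FFELe\ and Lemma~\ref{la:feqs}.

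For \eqref{C3}, the shape matches \eqref{M1} directly: I would apply the dual of \eqref{M1} to the right-hand side $(\neg x\fullor y)\fulland(x\fullor z)$, treating $\neg x\fullor y$ as the outer disjunction; the resulting nested subterms $y\fulland(x\fullor z)$ and $\neg x\fulland(x\fullor z)$ are then reduced by further instances of \eqref{M1} and Lemma~\ref{la:feqs}, with all the spurious $(\_\fulland\fa)$ and $(\_\fullor\tr)$ factors collapsing back to produce $(x\fulland y)\fullor(\neg x\fulland z)$. For \eqref{C4}, I would start from the right-hand side $(x\fulland y)\fullor(x\fulland z)$, apply the dual of \eqref{M1} with $z\mapsto x\fulland z$, then use Lemma~\ref{la:feqs}.1 to interchange $\neg x\fulland(\_\fulland\fa)$ with $x\fulland(\_\fulland\fa)$, and close up using associativity \eqref{FFEL4} together with \eqref{FFEL7}--\eqref{FFEL10}, obtaining $x\fulland(y\fullor z)$.

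For \eqref{C1}, I would rewrite $x\fulland(y\fulland x)$ via \eqref{FFEL5} and the dual of \eqref{FFEL5} into a form $(\tr\fullor\fa)\fulland\cdots$ or $(x\fullor\fa)\fulland(y\fulland x)$ on which \eqref{M1} can act; after one application, the asymmetry $\neg x\fulland(w\fulland\fa)=x\fulland(w\fulland\fa)$ from Lemma~\ref{la:feqs}.1 collapses one of the two resulting summands, leaving exactly $x\fulland y$. Equation \eqref{C2} then follows cheaply from \eqref{C1} and \eqref{C4}: \eqref{C4} expands $x\fulland(y\fullor x)$ to $(x\fulland y)\fullor(x\fulland x)$, and \eqref{C1} specialised with $y\mapsto\tr$ (combined with \eqref{FFEL5}--\eqref{FFEL6}) gives $x\fulland x=x$, yielding $(x\fulland y)\fullor x$ as required.

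The main obstacle is the noncommutativity of both ${\fulland}$ and ${\fullor}$: each use of \eqref{M1} distributes in one direction only, and once an intermediate term acquires a ``$\_\fulland\fa$'' or ``$\_\fullor\tr$'' factor, only a specific sequence of Lemma~\ref{la:feqs} and \eqref{FFEL7}--\eqref{FFEL10} rewrites can eliminate it. Finding the correct sequence, especially for \eqref{C3} and \eqref{C4}, is the delicate part of the proof; as noted in Section~\ref{sec:1}, this is where \emph{Prover9} provides natural assistance.
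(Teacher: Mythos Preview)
Your overall strategy---use \eqref{M1} and its dual/negated variants together with \FFELe\ and Lemma~\ref{la:feqs}, and fall back on \emph{Prover9} for \eqref{C3} and \eqref{C4}---matches the paper's. The paper likewise says it could not find short handwritten proofs of \eqref{C3} and \eqref{C4} and obtains them via \emph{Prover9}; your \eqref{C2} derivation from \eqref{C1} and \eqref{C4} is close in spirit to the paper's, which also derives \eqref{C2} by hand once \eqref{C3} and \eqref{C4} are available.

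The concrete gap is your \eqref{C1} sketch. Writing $x\fulland(y\fulland x)=(x\fullor\fa)\fulland(y\fulland x)$ and applying \eqref{M1} gives
\[
(\neg x\fulland(\fa\fulland(y\fulland x)))\fullor(x\fulland(y\fulland x)),
\]
and while Lemma~\ref{la:feqs}.\ref{eq:a1} lets you flip $\neg x$ to $x$ in the left summand, the right summand is still $x\fulland(y\fulland x)$: no progress has been made, and nothing here ``leaves exactly $x\fulland y$''. The paper avoids this by reversing the order: first establish \eqref{C3} and \eqref{C4} (via \emph{Prover9}) and the auxiliary identity $x\fulland\fa=\neg x\fulland x$ (from the dual of \eqref{M1}), and only then derive \eqref{C1} by the chain
\[
x\fulland y \stackrel{\eqref{C4}}{=} (x\fulland y)\fullor(x\fulland\fa)
\stackrel{\eqref{C3}}{=} (\neg x\fullor y)\fulland x
\stackrel{\eqref{M1}}{=} (x\fulland(y\fulland x))\fullor(x\fulland\fa)
\stackrel{\eqref{C4}}{=} x\fulland(y\fulland x).
\]
So your plan for \eqref{C1} should go through \eqref{C3} and \eqref{C4} rather than attempt a direct reduction.
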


\begin{proof}
First,
$x\fulland\fa=(x\fulland\fa)\fullor\fa\stackrel{\eqref{M1}'}=
(\neg x\fullor(\fa\fullor\fa))\fulland(x\fullor\fa)
=\neg x\fulland x$, where $\eqref{M1}'$ is the dual of
\eqref{M1}.
\\
\eqref{C1}: 
$x\fulland y= x\fulland(y\fullor\fa)
\stackrel{\eqref{C4}}=(x\fulland y)\fullor(x\fulland\fa)=(x\fulland y)\fullor(\neg x\fulland\fa)
\stackrel{\eqref{C3}}=(\neg x\fullor y)\fulland(x\fullor\fa)=(\neg x\fullor y)\fulland x\stackrel{\eqref{M1}}
=(x\fulland (y\fulland x))\fullor (\neg x\fulland x)
=(x\fulland (y\fulland x))\fullor (x\fulland \fa)
=x\fulland ((y\fulland x)\fullor  \fa)=x\fulland (y\fulland x)$.
\\
\eqref{C2}: $(x\fulland y)\fullor x\stackrel{\eqref{M1}'}= 
(\neg x\fullor (y\fullor x))\fulland(x\fullor x)\stackrel{\eqref{C3}}
=(x\fulland (y\fullor x))\fullor(x\fulland\fa)$
$\stackrel{\eqref{C4}}=x\fulland((y\fullor x)\fullor\fa)=x\fulland(y\fullor x)$. 

Each of \eqref{C3} and \eqref{C4} follows with \emph{Prover9} 
with (the default) options \texttt{lpo} and \texttt{unfold} in 1s,
but we could not find any short, handwritten proofs.
\end{proof}

In order to prove completenss of \MFELe, we use normal forms.
\begin{definition}
\label{def:Mnf} 
Let $\sigma\in A^s$ and $a\in A$. Then $P$ is a \textbf{$\bm\sigma$-normal form} if

$\sigma=\epsilon$ and $P\in\{\tr,\fa\}$, or if

$\sigma=a\rho$ and $P=(a\fulland P_1)\fullor (\neg a \fulland P_2)$ with both 
$P_1$ and $P_2$ $\rho$-normal forms.
\end{definition}

For $\sigma=a\rho\in A^s$, each $\sigma$-normal form 
$(a\fulland P_1)\fullor (\neg a \fulland P_2)$ yields the perfect $a\rho$-tree
with root $a$, left child $P_1$ and right child $P_2$.
Note that there are $2^{(2^{|\sigma|})}$ $\sigma$-normal forms.

We will often denote $\sigma$-normal forms by $P_\sigma, Q_\sigma, R_\sigma$, etc. 
In many coming proofs we apply induction on $P_\sigma$, that is, we induct on the length of 
$\sigma$, often using the template $\sigma=a\rho$ with $a\in A$ for the induction step(s).
We call this approach ``by induction on $\sigma$" for short.

\begin{definition} 
\label{def:MnfSpecial}
Let $\sigma\in A^s$.
The $\sigma$-normal forms $\bm{\tr_\sigma}$ and $\bm{\fa_\sigma}$ are defined by
\[\text{$\tr_\epsilon=\tr$, ~$\tr_{a\rho}=(a\fulland \tr_\rho)\fullor(\neg a\fulland\tr_\rho)$, 
~and~ 
$\fa_\epsilon=\fa$, ~$\fa_{a\rho}=(a\fulland\fa_\rho)\fullor(\neg a\fulland\fa_\rho)$.}
\]
We will also use terms $\bm{\widetilde \fa_\sigma}$ defined by $\widetilde \fa_\epsilon=\fa$ and 
$\widetilde\fa_{a\rho}=a\fulland\widetilde \fa_\rho$.
\end{definition}

\begin{lemma}
\label{la:nul}
For all $\sigma\in A^s$, $\MFELe\vdash \neg\fa_\sigma=\tr_\sigma,
~\widetilde\fa_\sigma=\fa_\sigma$.
If $\sigma=a\rho$, then  
\[\MFELe\vdash \widetilde\fa_\sigma=\neg a\fulland\widetilde\fa_\rho.
\]
\end{lemma}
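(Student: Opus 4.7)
The plan is to establish the three equations in the order (1), (3), (2), since the identity in (3) will feed into the induction step for (2). Equations (1) and (2) are proved by induction on the length of $\sigma$; equation (3) is essentially one rewrite step.

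For $\neg\fa_\sigma=\tr_\sigma$, the base case $\sigma=\epsilon$ is immediate from \eqref{FFEL1} and \eqref{FFEL3}. For the inductive step $\sigma=a\rho$, I push the outer negation inward through the top disjunction and then through each conjunction using De Morgan's laws (derivable from \eqref{FFEL2} and \eqref{FFEL3}) to obtain
\[\neg\fa_{a\rho}=(\neg a\fullor\neg\fa_\rho)\fulland(a\fullor\neg\fa_\rho).\]
Applying the induction hypothesis gives $(\neg a\fullor\tr_\rho)\fulland(a\fullor\tr_\rho)$, and reading \eqref{C3} right-to-left with $y=z=\tr_\rho$ rewrites this as $(a\fulland\tr_\rho)\fullor(\neg a\fulland\tr_\rho)=\tr_{a\rho}$.

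For $\widetilde\fa_{a\rho}=\neg a\fulland\widetilde\fa_\rho$, observe that $\widetilde\fa_\rho$ is derivably of the form $y\fulland\fa$ in every case: if $\rho$ is non-empty this follows by associativity \eqref{FFEL4}, and if $\rho=\epsilon$ by rewriting $\fa=\tr\fulland\fa$ via \eqref{FFEL5}. Lemma~\ref{la:feqs}.1, namely $x\fulland(y\fulland\fa)=\neg x\fulland(y\fulland\fa)$, then yields $a\fulland\widetilde\fa_\rho=\neg a\fulland\widetilde\fa_\rho$, which is the claim.

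For $\widetilde\fa_\sigma=\fa_\sigma$, the base case is immediate. In the inductive step $\sigma=a\rho$, combining the IH with the definitions gives $\widetilde\fa_{a\rho}=a\fulland\widetilde\fa_\rho=a\fulland\fa_\rho$ while $\fa_{a\rho}=(a\fulland\fa_\rho)\fullor(\neg a\fulland\fa_\rho)$. Using (3) together with the IH shows the two conjuncts on the right coincide, so $\fa_{a\rho}=(a\fulland\fa_\rho)\fullor(a\fulland\fa_\rho)$. The main obstacle is then a ``sub-idempotence'' step of the form $(Q\fulland\fa)\fullor(Q\fulland\fa)=Q\fulland\fa$ (here for $Q=a$, possibly up to an associative rewrite). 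I plan to handle it by rewriting one copy via \eqref{FFEL8} to $(Q\fulland\fa)\fullor(\neg Q\fulland\fa)$, applying \eqref{C3} to obtain $(\neg Q\fullor\fa)\fulland(Q\fullor\fa)$, simplifying by the dual of \eqref{FFEL6} to $\neg Q\fulland Q$, and closing with the identity $\neg x\fulland x=x\fulland\fa$ established in the proof of Lemma~\ref{la:xx}.
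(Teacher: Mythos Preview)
Your proof is correct and follows essentially the same route as the paper: part (1) by induction using \eqref{C3}, part (3) via the observation that $\widetilde\fa_\rho$ is (derivably) of the form $y\fulland\fa$, and part (2) by induction using part (3). The only notable difference is your ``main obstacle'': the paper simply invokes idempotence $x\fullor x=x$, which is already available in \MFELe\ (from \eqref{C1} with $y=\tr$ one gets $x\fulland x=x$, and dually $x\fullor x=x$), so your detour through \eqref{FFEL8}, \eqref{C3}, and $\neg x\fulland x=x\fulland\fa$ is unnecessary though valid. The paper's version of the inductive step for (2) runs in the opposite direction: it expands $\widetilde\fa_{a\rho}=a\fulland\widetilde\fa_\rho$ via idempotence to $(a\fulland\widetilde\fa_\rho)\fullor(a\fulland\widetilde\fa_\rho)$, converts one copy by (3), and then applies the IH to reach $\fa_{a\rho}$.
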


\begin{proof}
The first consequence follows by induction on $\sigma$. If $\sigma=\epsilon$, this is trivial, 
and if $\sigma=a\rho$, then 
\[\neg\fa_{a\rho}=(\neg a\fullor\neg\fa_\rho)\fulland(a\fullor\neg\fa_\rho)
\stackrel{\eqref{C3},\text{IH}}=(a\fulland\tr_\rho)\fullor(\neg a\fulland\tr_\rho)=\tr_{a\rho}.
\]
The last consequence follows from 
associativity and axiom~\eqref{FFEL7}:
$\widetilde\fa_\rho= \widetilde\fa_\rho\fulland\fa=\fa\fulland\widetilde\fa_\rho$,
hence 
\[\widetilde\fa_{a\rho}
=(a\fulland(\fa\fulland\widetilde\fa_{\rho}))
=((a\fulland\fa)\fulland\widetilde\fa_{\rho}))
\stackrel{\eqref{FFEL8}}=((\neg a\fulland\fa)\fulland\widetilde\fa_{\rho}))
=(\neg a\fulland\widetilde\fa_{\rho}).
\]
Finally, $\widetilde\fa_\sigma=\fa_\sigma$ 
follows by induction on $\sigma$. 
If $\sigma=\epsilon$ this is immediate, and if $\sigma=a\rho$, 
\begin{align*}
\widetilde\fa_{a\rho}
&=(a\fulland \widetilde\fa_{\rho})\fullor(\neg a\fulland \widetilde\fa_{\rho})
&&\text{by idempotence and the last consequence}
\\
&=(a\fulland\fa_{\rho})\fullor(\neg a\fulland \fa_{\rho})
&&\text{by IH}
\\
&=\fa_{a\rho}.
\end{align*}
\end{proof}

\begin{lemma}
\label{la:hulp1}
Let $\sigma\in A^s$, then
$P_\sigma\fullor \fa_\sigma=P_\sigma$ and
$P_\sigma\fulland\tr_\sigma=P_\sigma$
are consequences of \MFELe.
\end{lemma}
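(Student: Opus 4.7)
The plan is to prove both identities simultaneously by induction on the length of $\sigma$. In the base case $\sigma=\epsilon$, the $\sigma$-normal forms are just $\tr$ and $\fa$, so the claim reduces to $\tr\fullor\fa=\tr$, $\fa\fullor\fa=\fa$, $\tr\fulland\tr=\tr$, and $\fa\fulland\tr=\fa$, each of which follows immediately from \eqref{FFEL5}, \eqref{FFEL6}, and their duals. For the inductive step $\sigma=a\rho$, I would write $P_\sigma=(a\fulland P_1)\fullor(\neg a\fulland P_2)$ with $P_1,P_2$ being $\rho$-normal forms; the induction hypothesis then supplies $P_i\fullor\fa_\rho=P_i$ and $P_i\fulland\tr_\rho=P_i$ for $i=1,2$.

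For $P_\sigma\fullor\fa_\sigma=P_\sigma$, first use Lemma~\ref{la:nul} to rewrite $\fa_\sigma$ in the single-conjunction form $\neg a\fulland\fa_\rho$. Associativity (the dual of \eqref{FFEL4}) turns the left-hand side into $(a\fulland P_1)\fullor((\neg a\fulland P_2)\fullor(\neg a\fulland\fa_\rho))$, and reading \eqref{C4} right-to-left with $x=\neg a$ collapses the inner parenthesis to $\neg a\fulland(P_2\fullor\fa_\rho)$. The induction hypothesis then reduces $P_2\fullor\fa_\rho$ to $P_2$, returning $P_\sigma$.

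For $P_\sigma\fulland\tr_\sigma=P_\sigma$, I would apply the dual strategy: use \eqref{C3} to present $P_\sigma$ in the conjunctive form $(\neg a\fullor P_1)\fulland(a\fullor P_2)$, and derive $\tr_\sigma=a\fullor\tr_\rho$ from $\tr_\sigma=\neg\fa_\sigma$, Lemma~\ref{la:nul}, and the De Morgan identity implicit in \eqref{FFEL1}--\eqref{FFEL3}. Associativity rewrites the left-hand side as $(\neg a\fullor P_1)\fulland((a\fullor P_2)\fulland(a\fullor\tr_\rho))$; the dual of \eqref{C4}, read right-to-left with $x=a$, factors the shared $a$ out of the inner parenthesis to give $a\fullor(P_2\fulland\tr_\rho)$. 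The induction hypothesis collapses this to $a\fullor P_2$, returning $(\neg a\fullor P_1)\fulland(a\fullor P_2)=P_\sigma$.

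The main obstacle is locating the right alternative forms of $\fa_\sigma$ and $\tr_\sigma$: their defining expressions involve both an $a$-summand and a $\neg a$-summand, which cannot be directly matched against $P_\sigma$ because \MFELe\ lacks commutativity of $\fullor$ and $\fulland$. Lemma~\ref{la:nul}'s single-conjunction forms $\fa_\sigma=\neg a\fulland\fa_\rho$ and (dually) $\tr_\sigma=a\fullor\tr_\rho$ are precisely what align with the $\neg a$- and $a$-headed parts of $P_\sigma$, so that one application of (the reverse of) \eqref{C4} or its dual exposes the induction hypothesis.
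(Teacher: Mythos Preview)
Your proof is correct. The argument for $P_\sigma\fullor\fa_\sigma=P_\sigma$ matches the paper's proof essentially step for step: Lemma~\ref{la:nul} to obtain $\fa_{a\rho}=\neg a\fulland\fa_\rho$, associativity, \eqref{C4} read right to left, and the induction hypothesis.

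For $P_\sigma\fulland\tr_\sigma=P_\sigma$ you take a genuinely different route. You dualise the computation explicitly: rewrite $P_\sigma$ in conjunctive form via~\eqref{C3}, derive $\tr_{a\rho}=a\fullor\tr_\rho$, and then apply associativity and the dual of~\eqref{C4} to expose the induction hypothesis $P_2\fulland\tr_\rho=P_2$. The paper instead observes (using~\eqref{C3}) that $\neg P_\sigma$ is again a $\sigma$-normal form, and then simply computes
\[
P_\sigma\fulland\tr_\sigma=\neg(\neg P_\sigma\fullor\neg\tr_\sigma)=\neg(\neg P_\sigma\fullor\fa_\sigma)=\neg(\neg P_\sigma)=P_\sigma,
\]
invoking the first identity with $\neg P_\sigma$ in place of $P_\sigma$. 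Your approach is more symmetric and self-contained, showing the two halves in parallel; the paper's shortcut is more economical, since it avoids repeating the inductive manipulation and reduces the second claim entirely to the first.
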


\begin{proof}
By induction on $\sigma$
The two cases for $\sigma=\epsilon$ are immediate. If $\sigma=a\rho$, 
\begin{align*}
P_{a\rho}\fullor\fa_{a\rho}
&=((a\fulland Q_\rho)\fullor(\neg a\fulland R_\rho))\fullor(\neg a\fulland \widetilde\fa_\rho)
&&\text{by Lemma~\ref{la:nul}}\\
&=(a\fulland Q_\rho)\fullor((\neg a\fulland R_\rho)\fullor(\neg a\fulland \fa_\rho))
&&\text{by associativity and Lemma~\ref{la:nul}}\\
&=(a\fulland Q_\rho)\fullor(\neg a\fulland (R_\rho\fullor\fa_\rho))
&&\text{by \eqref{C4}}\\
&=(a\fulland Q_\rho)\fullor(\neg a\fulland R_\rho)
&&\text{by IH}\\
&=P_{a\rho}.
\end{align*}

Next, it follows from~\eqref{C3}
(in Table~\ref{tab:MFELeCons}) by induction on $\nu\in A^s$ that $\neg P_\nu$ is provably equal to a 
$\nu$-normal form. By Lemma~\ref{la:nul},
\(P_{a\rho}\fulland\tr_{a\rho}=\neg(\neg P_{a\rho}\fullor\fa_{a\rho})=P_{a\rho}.
\)
\end{proof}

We define an auxiliary opertor $\auxf(x,y,z)$ that preserves the 
property that in the evaluation of closed terms the left-right order is always respected
(modulo memorising occurrences of atoms).

\begin{definition}
\label{def:h}
The ternary operator $\bm{\auxf(x,y,z)}$ on terms over $\SigFEL$ is defined by
\[\auxf(x,y,z)=(x\fulland y)\fullor (\neg x\fulland z).
\]
\end{definition}

Hence, $\auxf(a,P_\sigma,Q_\sigma)$ is a $a\sigma$-normal form (adopting the notational convention that
$P_\sigma$ and $Q_\sigma$ are $\sigma$-normal forms). 
In particular, $\tr_{a\rho}=\auxf(a,\tr_\rho,\tr_\rho)$ and $\fa_{a\rho}=\auxf(a,\fa_\rho,\fa_\rho)$.

\begin{lemma}
\label{la:crux}
The following equations are consequences of \MFELe:
\begin{itemize}
\item[$1.$]
$\auxf(x,y,z) \fulland w =\auxf(x,(y\fullor (z\fulland\fa))\fulland w,~z\fulland w)$,
\item[$2.$]
$\auxf(x,y,z) \fullor w =\auxf(x,(y\fulland (z\fullor \tr))\fullor w,~z\fullor w)$.
\end{itemize}
\end{lemma}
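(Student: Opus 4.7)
The plan is to derive both equations in \MFELe\ by unfolding $\auxf$ and using axiom \eqref{M1} (for part~1) and its dual (for part~2) to distribute the outer $\fulland w$ (respectively $\fullor w$) through the structure of $\auxf(x,y,z) = (x \fulland y) \fullor (\neg x \fulland z)$.

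Before attacking the two identities, I would collect a few consequences of \MFELe\ that will be needed: (i) idempotence $x \fulland x = x$, which follows immediately from \eqref{C1} by setting $y = \tr$ and using \eqref{FFEL6}; (ii) the dual of Lemma~\ref{la:feqs}.3, namely $x \fulland (y \fullor (z \fulland \fa)) = (x \fulland y) \fullor (z \fulland \fa)$, immediate by the duality principle; (iii) the absorption identity $x \fulland \neg x = x \fulland \fa$, obtained by combining the step $\neg x \fulland x = x \fulland \fa$ (which appears at the start of the proof of Lemma~\ref{la:xx}) with axiom \eqref{FFEL8}; and (iv) De Morgan, $\neg(x \fulland y) = \neg x \fullor \neg y$, from \eqref{FFEL2}--\eqref{FFEL3}.

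For part~1, apply \eqref{M1} to the LHS under the substitution $x \mapsto x \fulland y$, $y \mapsto \neg x \fulland z$, $z \mapsto w$, producing
\[
(\neg(x \fulland y) \fulland (\neg x \fulland (z \fulland w))) \fullor ((x \fulland y) \fulland w).
\]
Rewrite $\neg(x \fulland y)$ via (iv) and apply \eqref{M1} a second time to the first disjunct; one of the resulting sub-terms collapses via (i) (the repeated $\neg x$), while the other can be massaged into a term containing $y \fullor (z \fulland \fa)$ using (ii), (iii), \eqref{FFEL7}, and associativity, the role of (iii) being to convert a ``dead'' $\neg x$ appearing after an $x$ into $\fa$. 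Reassembling and recognising a common leading $x$ (resp.\ $\neg x$) in each disjunct rewrites the expression as $\auxf(x,(y \fullor (z \fulland \fa)) \fulland w,\, z \fulland w)$, which is the RHS. Part~2 is proved analogously, starting from $((x \fulland y) \fullor (\neg x \fulland z)) \fullor w$, applying the dual of \eqref{M1}, namely $(x \fulland y) \fullor z = (\neg x \fullor (y \fullor z)) \fulland (x \fullor z)$, and using the duals of (ii) and (iii); the ``dead load'' $z \fullor \tr$ that appears on the RHS arises by duality from $z \fulland \fa$.

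The main obstacle is administrative rather than conceptual: after each application of \eqref{M1} one is left with nested conjunctions and disjunctions whose simplification depends on a carefully ordered sequence of associativity rewrites, idempotence collapses, and the side-effect identities from Lemma~\ref{la:feqs}. This is exactly the kind of tedious but mechanically verifiable rewriting that motivates the authors' use of \emph{Prover9}; in a hand-written presentation one would likely split each of (1) and (2) into two intermediate lemmas, first normalising the LHS to a canonical disjunctive shape $(x \fulland \alpha) \fullor (\neg x \fulland \beta)$ and then identifying $\alpha$ and $\beta$ with the arguments of the $\auxf$-term on the RHS.
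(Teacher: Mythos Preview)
Your proposal outlines a plausible hand-derivation, but the paper's own proof takes a rather different route. The paper does \emph{not} attack the two items independently: it first observes that $\neg\auxf(x,y,z)=\auxf(x,\neg y,\neg z)$, which is a direct consequence of \eqref{C3}, and uses this (together with duality) to reduce item~1 to item~2. Item~2 is then dispatched entirely by \emph{Prover9} (options \texttt{lpo} and \texttt{unfold}, about 2s), with no hand proof given at all.

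Compared to this, your approach has two structural differences. First, you treat the two items as parallel dual arguments rather than deriving one from the other; the paper's observation $\neg\auxf(x,y,z)=\auxf(x,\neg y,\neg z)$ halves the work and is worth noting explicitly. Second, you attempt a manual unfolding via two nested applications of \eqref{M1}, whereas the paper simply delegates to the prover. Your first two steps (applying \eqref{M1} and then De~Morgan plus a second \eqref{M1}) are correct, but after that you need more than the adjacent identity $x\fulland\neg x=x\fulland\fa$: you need the memorising variant $x\fulland(u\fulland\neg x)=x\fulland(u\fulland\fa)$ (with an intervening factor), and you also need to reorder the resulting disjuncts, which in \MFEL\ is not free since ${\fullor}$ is not commutative. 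Both facts are true in \MFEL\ and can be derived (e.g.\ via \eqref{C1}, \eqref{C3}, \eqref{C4}), but they are not among the ingredients you list, so the sketch as written has a real gap at the ``reassembling'' step. Your own remark that this is ``exactly the kind of tedious but mechanically verifiable rewriting that motivates the authors' use of \emph{Prover9}'' is accurate, and is in fact what the paper does.
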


\begin{proof}
Consequence 1 follows from consequence 2 and $\neg \auxf(x,y,z)=\auxf(x,\neg y,\neg z)$
(which follows easily from~\eqref{C3}).
Consequence 2 follows with \emph{Prover9} with options \texttt{lpo} and \texttt{unfold} in 2s.
\end{proof}

\begin{lemma}
\label{la:hulp2}
Let $\sigma\in A^s$, then
$P_\sigma\fulland\fa= \fa_\sigma$ and
$P_\sigma\fullor\tr=\tr_\sigma$
are consequences of \MFELe.
\end{lemma}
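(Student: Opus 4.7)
My plan is to prove both consequences simultaneously by induction on $\sigma$. The machinery from the preceding lemmas (especially the auxiliary operator $\auxf$ and Lemma~\ref{la:crux}) is tailor-made to drive this induction: Lemma~\ref{la:crux} tells us how $\auxf(x,y,z) \fulland w$ and $\auxf(x,y,z) \fullor w$ simplify, and a $\sigma$-normal form with $\sigma = a\rho$ is exactly of the form $\auxf(a, Q_\rho, R_\rho)$.

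For the base case $\sigma = \epsilon$, a $\sigma$-normal form is $\tr$ or $\fa$, so I only need to check $\tr\fulland\fa = \fa = \fa_\epsilon$ and $\fa\fulland\fa = \fa = \fa_\epsilon$ (the latter by \eqref{FFEL1} and \eqref{FFEL8}, i.e.\ $\fa\fulland\fa = \neg\tr\fulland\fa = \tr\fulland\fa = \fa$), and dually for $\fullor\tr$.

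For the inductive step write $P_{a\rho} = \auxf(a, Q_\rho, R_\rho)$. Applying Lemma~\ref{la:crux}.1 gives
\[
P_{a\rho}\fulland\fa = \auxf\bigl(a,\,(Q_\rho\fullor(R_\rho\fulland\fa))\fulland\fa,\; R_\rho\fulland\fa\bigr).
\]
By the induction hypothesis $R_\rho\fulland\fa = \fa_\rho$, so Lemma~\ref{la:hulp1} reduces $Q_\rho\fullor(R_\rho\fulland\fa) = Q_\rho\fullor\fa_\rho$ to $Q_\rho$, and then the induction hypothesis again gives $Q_\rho\fulland\fa = \fa_\rho$. Therefore $P_{a\rho}\fulland\fa = \auxf(a,\fa_\rho,\fa_\rho) = \fa_{a\rho}$, by Definition~\ref{def:MnfSpecial}. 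The argument for $P_{a\rho}\fullor\tr = \tr_{a\rho}$ is entirely dual, using Lemma~\ref{la:crux}.2 and the second equation of Lemma~\ref{la:hulp1}.

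There is no real obstacle once one notices that Lemma~\ref{la:crux} was set up precisely to push $\fulland\fa$ (resp.\ $\fullor\tr$) inside $\auxf$, and that Lemma~\ref{la:hulp1} neutralises the middle subterm $(Q_\rho\fullor(R_\rho\fulland\fa))\fulland\fa$ so that only the induction hypothesis is needed to conclude. The two statements must be proved together only inasmuch as one invokes Lemma~\ref{la:hulp1}, which already depends on both $\tr_\sigma$ and $\fa_\sigma$; but since Lemma~\ref{la:hulp1} is available, the two inductions can in fact be carried out independently.
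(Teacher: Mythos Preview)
Your proof is correct and follows essentially the same approach as the paper: induction on $\sigma$, applying Lemma~\ref{la:crux}.1 (resp.\ \ref{la:crux}.2) to push $\fulland\fa$ (resp.\ $\fullor\tr$) inside $\auxf$, then using the induction hypothesis and Lemma~\ref{la:hulp1} to simplify. The paper's proof is in fact line-for-line the same computation.
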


\begin{proof}
By induction on $\sigma$. The case $\sigma=\epsilon$ is immediate, and if $\sigma=a\rho$ then
\begin{align*}
P_{a\rho}\fulland\fa
&=\auxf(a,Q_\rho,R_\rho)\fulland\fa\\
&=\auxf(a, (Q_\rho\fullor(R_\rho\fulland\fa))\fulland\fa, ~R_\rho\fulland\fa)
&&\text{by Lemma~\ref{la:crux}.1}\\
&=\auxf(a, (Q_\rho\fullor\fa_\rho)\fulland\fa, ~\fa_\rho)
&&\text{by IH}\\
&=\auxf(a, Q_\rho\fulland\fa, ~\fa_\rho)
&&\text{by Lemma~\ref{la:hulp1}}\\
&=\auxf(a,\fa_{\rho},\fa_\rho)
&&\text{by IH}\\
&=\fa_{a\rho}.
\end{align*}

The case for $P_{a\rho}\fullor\tr$ follows in a similar way
(with help of Lemma~\ref{la:crux}.2).
\end{proof}

In order to compose $\sigma$-normal forms, we introduce some notation.

\begin{definition}[$str(P)$ and $\sigma\gg\rho$]
\label{def:mstring}
For $P\in\SP$, the string $\bm{str(P)}\in A^s$ is defined by $str(\tr)=str(\fa)=\epsilon$, 
$str(a)=a$ $(a\in A)$, $str(\neg P)=str(P)$, and
$str(P\fulland Q)=str(P\fullor Q)=str(P)\gg str(Q)$, where 
$
\_\gg\_:A^s\times A^s\to A^s
$
is the operation that filters out the atoms of the left argument in the right argument:  
\[
\text{
$\sigma\gg\epsilon=\sigma$ and $\sigma\gg a\rho=
\begin{cases}
\sigma\gg\rho
&\text{if $a$ occurs in $\sigma$},\\
\sigma a\gg\rho
&\text{otherwise}.
\end{cases}$}
\]
\end{definition}

Observe that $\sigma\gg a\rho=(\sigma\gg a)\gg\rho$ and $\epsilon\gg\rho=\rho$.

\begin{lemma}
\label{la:hulp}
In \MFELe, $\sigma$-normal forms in \SP\ are provably closed under $\fulland$ and $\fullor$ composition.
\end{lemma}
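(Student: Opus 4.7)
The plan is to induct on $\sigma$. The base case $\sigma=\epsilon$ is immediate from \eqref{FFEL5}--\eqref{FFEL7}. For $\sigma=a\rho$, write $P_{a\rho}=\auxf(a,P_1,P_2)$ and $Q_{a\rho}=\auxf(a,Q_1,Q_2)$ with $\rho$-normal forms $P_i,Q_i$. I would first apply Lemma~\ref{la:crux}.1, using Lemma~\ref{la:hulp2} ($P_2\fulland\fa=\fa_\rho$) and Lemma~\ref{la:hulp1} ($P_1\fullor\fa_\rho=P_1$) to simplify, obtaining
\[
P_{a\rho}\fulland Q_{a\rho}=\auxf\bigl(a,\;P_1\fulland Q_{a\rho},\;P_2\fulland Q_{a\rho}\bigr).
\]

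The crux is a pair of memorisation identities: $a\fulland Q_{a\rho}=a\fulland Q_1$ and $\neg a\fulland Q_{a\rho}=\neg a\fulland Q_2$. The first is derived by distributing with \eqref{C4}, using idempotence $a\fulland a=a$ (a consequence of \eqref{C1}), the equalities $a\fulland\neg a=a\fulland\fa=\neg a\fulland\fa$ (from $x\fulland\fa=\neg x\fulland x$ and \eqref{FFEL8}), \eqref{FFEL7} to swap $\fa$ past $Q_2$, Lemmas~\ref{la:hulp2} and~\ref{la:nul} to collapse $Q_2\fulland\fa$ to $\widetilde\fa_\rho$, and finally \eqref{C4} together with $Q_1\fullor\fa_\rho=Q_1$ from Lemma~\ref{la:hulp1}. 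The second identity follows the same pattern, but its collapse step requires the auxiliary claim $\widetilde\fa_\rho\fullor R_\rho=R_\rho$ for every $\rho$-normal $R_\rho$. I would prove this auxiliary separately by a nested induction on $\rho$: in the step, associativity of $\fullor$ and \eqref{C4} merge the $b$-conjunct of $\widetilde\fa_{b\rho'}$ with the $b$-disjunct of $R_{b\rho'}$, so the inductive hypothesis closes the derivation.

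With both identities in hand, \eqref{C1} (which yields $a\fulland P_1=a\fulland P_1\fulland a$) lets me interpolate $a$ between $P_1$ and $Q_{a\rho}$:
\[
a\fulland P_1\fulland Q_{a\rho}=a\fulland P_1\fulland a\fulland Q_{a\rho}=a\fulland P_1\fulland a\fulland Q_1=a\fulland P_1\fulland Q_1,
\]
and symmetrically $\neg a\fulland P_2\fulland Q_{a\rho}=\neg a\fulland P_2\fulland Q_2$. Hence $P_{a\rho}\fulland Q_{a\rho}=\auxf(a,P_1\fulland Q_1,P_2\fulland Q_2)$, and the induction hypothesis on $\rho$ rewrites each $P_i\fulland Q_i$ to a $\rho$-normal form, giving an $a\rho$-normal form. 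The $\fullor$ case I would settle by duality: the remark in the proof of Lemma~\ref{la:hulp1} already shows that $\neg$ sends $\sigma$-normal forms to $\sigma$-normal forms, so $P_\sigma\fullor Q_\sigma=\neg(\neg P_\sigma\fulland\neg Q_\sigma)$ reduces to the $\fulland$ case just handled. The main obstacle I foresee is the second memorisation identity — only the auxiliary lemma $\widetilde\fa_\rho\fullor R_\rho=R_\rho$ supplies the asymmetric collapse that simulates the memorised ``$a$ is false'' branch, and it must be put in place before the main derivation can close.
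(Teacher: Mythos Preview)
Your argument is sound, but it proves only the restricted statement in which both conjuncts are $\sigma$-normal forms for the \emph{same} $\sigma$. The paper reads the lemma more generally --- the class of all $\sigma$-normal forms (over all $\sigma\in A^s$) is closed --- and its proof handles $P_\sigma\fulland R_\nu$ with possibly distinct $\sigma,\nu$, yielding a $(\sigma\gg\nu)$-normal form. This general version is what Lemma~\ref{la:MFELe} actually needs: there the inductive hypothesis produces normal forms for $P_1$ and $P_2$ with independent strings.

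The generalisation also makes the proof much shorter. The paper inducts on $\sigma$ only: starting from $\auxf(a,P_\rho,Q_\rho)\fulland R_\nu$, one application of Lemma~\ref{la:crux}.1 together with Lemmas~\ref{la:hulp2} and~\ref{la:hulp1} (exactly your opening move) gives $\auxf(a,\,P_\rho\fulland R_\nu,\,Q_\rho\fulland R_\nu)$, and now the induction hypothesis applies \emph{directly}, since the left factor has dropped to a $\rho$-normal form while the right factor $R_\nu$ is unconstrained. Your memorisation identities $a\fulland Q_{a\rho}=a\fulland Q_1$, $\neg a\fulland Q_{a\rho}=\neg a\fulland Q_2$ and the auxiliary $\widetilde\fa_\rho\fullor R_\rho=R_\rho$ are all correct and derivable, but they exist only to peel the leading $a$ off the second argument so that your too-narrow hypothesis can bite; with the broader hypothesis this entire detour disappears. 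For $\fullor$ the paper argues symmetrically via Lemma~\ref{la:crux}.2 rather than through your duality reduction, but either route works.
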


\begin{proof}
We first consider $P_\sigma\fulland R_\nu$ and prove this case by induction
on $\sigma$.

\noindent
For $\sigma=\epsilon$, $\tr\fulland R_\nu=R_\nu$ 
is immediate, and $\fa\fulland R_\nu=\fa_\nu$ 
follows from~\eqref{FFEL7} and Lemma~\ref{la:hulp2}.

\noindent
For $\sigma=a\rho$ it suffices to prove that 
\begin{equation*}
\auxf(a,P_\rho,Q_\rho)\fulland R_\nu= \auxf(a,~P_\rho\fulland R_\nu,~Q_\rho\fulland R_\nu)
\end{equation*}
because by induction, both $P_\rho\fulland R_\nu$ and $Q_\rho\fulland R_\nu$ have a 
provably equal $(\rho\gg\nu)$-normal form:
\begin{align*}
\auxf(a,P_\rho,Q_\rho)\fulland R_\nu
&=\auxf(a,(P_\rho\fullor(Q_\rho\fulland \fa))\fulland R_\nu,~Q_\rho\fulland R_\nu)
&&\text{by Lemma~\ref{la:crux}.1}\\
&=\auxf(a,(P_\rho\fullor\fa_\rho)\fulland R_\nu,~Q_\rho\fulland R_\nu).
&&\text{by Lemma~\ref{la:hulp2}}\\
&=\auxf(a,P_\rho\fulland R_\nu,~Q_\rho\fulland R_\nu).
&&\text{by Lemma~\ref{la:hulp1}}
\end{align*}

The case for $P_\sigma\fullor R_\nu$ follows in a similar way (with help of Lemma~\ref{la:crux}.2).
\end{proof}

\begin{lemma}
\label{la:MFELe}
For each $P\in\SP$ there is a unique $\sigma$-normal form
$Q$ such that $\MFELe\vdash P=Q$.
\end{lemma}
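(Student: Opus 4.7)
The plan is to prove the lemma in two parts, existence and uniqueness, where existence is established by structural induction using the machinery built up in the prior lemmas, and uniqueness is extracted from soundness together with injectivity of $\mfe$ on normal forms.

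For existence I would induct on $P \in \SP$. The constants $\tr$ and $\fa$ are already $\epsilon$-normal forms. For a single atom $a$, the candidate is the $a$-normal form $(a \fulland \tr) \fullor (\neg a \fulland \fa)$, and I would derive its provable equality with $a$ by the short chain $(a \fulland \tr) \fullor (\neg a \fulland \fa) = a \fullor (\neg a \fulland \fa) = a \fullor (a \fulland \fa) = a \fullor (\fa \fulland a) = (a \fullor \fa) \fulland a = a \fulland a = a$, using \eqref{FFEL6}, \eqref{FFEL8}, \eqref{FFEL7}, the dual of \eqref{C2}, the dual of \eqref{FFEL6}, and the idempotence $a \fulland a = a$ that follows from \eqref{C1}. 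For the inductive steps: if $P = \neg P'$, apply the IH to obtain $P' = P'_\sigma$, then invoke the auxiliary fact (established inside the proof of Lemma~\ref{la:hulp1} by induction on $\sigma$ via \eqref{C3}) that $\neg P'_\sigma$ is provably equal to a $\sigma$-normal form; if $P = P' \fulland P''$ or $P = P' \fullor P''$, apply the IH and then Lemma~\ref{la:hulp} to obtain a $(\sigma \gg \nu)$-normal form.

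For uniqueness, suppose $P_\sigma$ and $Q_\nu$ are both normal forms with $\MFELe \vdash P_\sigma = Q_\nu$. By soundness (Lemma~\ref{la:Msound}) we get $\mfe(P_\sigma) = \mfe(Q_\nu)$, so it suffices to show that $\mfe$ is injective on normal forms. The key technical claim, proved by induction on $\sigma$, is that $\mfe(P_\sigma)$ is the perfect binary tree of depth $|\sigma|$ whose root-to-leaf label sequences are all equal to $\sigma$ and whose leaves encode precisely the $\tr/\fa$-choices made in $P_\sigma$. The induction step reduces to the computation
\[
\mfe\bigl((a \fulland P_1) \fullor (\neg a \fulland P_2)\bigr) = \mfe(P_1) \unlhd a \unrhd \mfe(P_2),
\]
which follows by unfolding $\fe$ (so that the root branches into a tree with both child roots labelled $a$) and then letting $\Le_a$ and $\Ri_a$ inside $\memt$ collapse the duplicated second-level $a$'s. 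Given this canonical-tree description, both $\sigma$ (the common path-sequence) and the tuple of leaf-values can be read off directly from $\mfe(P_\sigma)$, so $\sigma = \nu$ and $P_\sigma = Q_\nu$ syntactically.

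The main obstacle is the canonical-tree lemma underlying uniqueness: although intuitively clear, verifying it requires careful bookkeeping of how $\Le_a$ and $\Ri_a$ interact with the $\fe$-expansion of $(a \fulland P_1) \fullor (\neg a \fulland P_2)$, in particular showing that the repeated $a$'s introduced by $\fe$ in the immediate subtrees are exactly those removed by $\memt$ so that the result is precisely $\mfe(P_1) \unlhd a \unrhd \mfe(P_2)$. The existence half, by contrast, is largely an assembly of results already proved.
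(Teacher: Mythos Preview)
Your proposal is correct and follows essentially the same route as the paper. The only notable difference is that the paper first passes to Negation Normal Form, so that $\neg a$ becomes a base case (handled by $\auxf(a,\fa,\tr)$) and no separate inductive clause for $\neg$ is needed; you instead treat negation as a full inductive case via the observation (from the proof of Lemma~\ref{la:hulp1}) that $\neg P_\sigma$ is provably a $\sigma$-normal form. For uniqueness, the paper's one-line remark that ``the $F$-representation mimics the tree structure'' (with $F(x,y,z)=\auxf(y,x,z)$) is exactly your injectivity-of-$\mfe$-on-normal-forms claim, stated more tersely; your identification of the canonical-tree computation $\mfe(\auxf(a,P_1,P_2))=\mfe(P_1)\unlhd a\unrhd\mfe(P_2)$ as the crux is accurate, and the paper leaves this implicit.
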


\begin{proof}
By structural induction on $P$, restricting to NNFs. For $P\in\{\tr,\fa,a,\neg a\mid a\in A\}$
this is trivial: $a=\auxf(a,\tr,\fa)$ and $\neg a = \auxf(a,\fa,\tr)$.
For the cases $P=P_1\fulland P_2$ and $P=P_1\fullor P_2$ this follows from Lemma~\ref{la:hulp}.

Uniqueness follows from the facts that $str(P_1\fulland P_2)$ is fixed and that 
syntactically different normal forms have different 
evaluation trees: for $F(x,y,z)=\auxf(y,x,z)$, their $F$-representation mimics the tree structure.
\end{proof}

\begin{theorem}[Completeness]
\label{thm:MFELe}
The logic \MFEL\ is axiomatised by \MFELe.
\end{theorem}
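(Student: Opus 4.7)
The proof will parallel the completeness argument for $\FFEL$ (Theorem~\ref{thm:felcpl}) and rest on the normal form machinery of Lemma~\ref{la:MFELe} combined with soundness (Lemma~\ref{la:Msound}). The right-to-left implication of ``$\MFEL\models P=Q \iff \MFELe\vdash P=Q$'' is exactly soundness, so only the forward direction requires work.

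The plan is as follows. Assume $\mfe(P)=\mfe(Q)$. Apply Lemma~\ref{la:MFELe} to obtain $\sigma$-normal forms $P'$ and $Q'$ with $\MFELe\vdash P=P'$ and $\MFELe\vdash Q=Q'$. By soundness these derivabilities lift to $\mfe(P)=\mfe(P')$ and $\mfe(Q)=\mfe(Q')$, so that $\mfe(P')=\mfe(Q')$. The key remaining fact, recorded in the proof of Lemma~\ref{la:MFELe} when it observes that the $F$-representation mimics the tree structure, is that $\mfe$ is injective on $\sigma$-normal forms. Hence $P'$ and $Q'$ are syntactically identical, and transitivity yields $\MFELe\vdash P=P'=Q'=Q$.

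The one point that deserves attention is therefore the injectivity claim, which I would prove by induction on the depth of $\mfe(P')=\mfe(Q')$. The base case is immediate: a leaf tree corresponds to an $\epsilon$-normal form, which is that leaf, so $P'=Q'$. For the induction step, write $P'=\auxf(a,P_1,P_2)=(a\fulland P_1)\fullor(\neg a\fulland P_2)$ and $Q'=\auxf(b,Q_1,Q_2)$. Unwinding $\fe$ and $\memt$ shows that the root of $\mfe(P')$ is $a$, with left subtree $\mfe(P_1)$ and right subtree $\mfe(P_2)$, neither containing the label $a$ (since $P_1,P_2$ are $\rho$-normal forms for some $\rho\in A^s$ with $a\notin\rho$). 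Comparing with the analogous decomposition of $\mfe(Q')$ forces $a=b$, $\mfe(P_1)=\mfe(Q_1)$, and $\mfe(P_2)=\mfe(Q_2)$, after which the induction hypothesis gives $P_1=Q_1$ and $P_2=Q_2$ syntactically, hence $P'=Q'$. No real obstacle arises, since essentially all the work has been packaged into Lemmas~\ref{la:Msound} and~\ref{la:MFELe}; the completeness theorem itself is just the familiar ``normalise, apply soundness, then invoke tree-injectivity of normal forms'' chain.
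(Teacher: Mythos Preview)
Your proposal is correct and follows essentially the same route as the paper: invoke soundness (Lemma~\ref{la:Msound}), normalise both sides via Lemma~\ref{la:MFELe}, and use that distinct $\sigma$-normal forms have distinct memorising evaluation trees to conclude syntactic identity of the normal forms. The paper packages the injectivity step into the uniqueness clause of Lemma~\ref{la:MFELe} (``their $F$-representation mimics the tree structure''), whereas you spell out the inductive argument explicitly; apart from this expository difference the two proofs are the same.
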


\begin{proof}
By Lemma~\ref{la:Msound}, \MFELe\ is sound. For completeness, assume $P_1=_{\mfe} P_2$.
By Lemma~\ref{la:MFELe} there are unique $\sigma_i$-normal forms $Q_{\sigma_i}$ such that $\MFELe\vdash 
P_i=_\mfe Q_{\sigma_i}$. By assumption and soundness, $Q_{\sigma_1}=Q_{\sigma_2}$. 
Hence, $\MFELe\vdash P_1=Q_{\sigma_1}=P_2$.
\end{proof}

\section{MFEL with undefinedness: \MFELu}
\label{sec:5}
In this section we define \MFELu, the extension of \MFEL\ with \und.
First, we formally define memorising \und-evaluation trees.
Given the detailed accounts in Sections~\ref{sec:3} and~\ref{sec:4}, this extension
is rather simple and straightforward. We define memorising \und-evaluation trees,
give equational axioms for the equality of these trees and prove
a completeness result.

\begin{definition}
\label{def:treesMu}
The evaluation trees $\tr,\fa,\und\in\NTu$ are \textbf{memorising $\bm\und$-evaluation trees}.
\\
The evaluation tree $(X\unlhd \underline a\unrhd Y)\in\NTu$ is a 
\textbf{memorising $\bm\und$-evaluation tree} over $A$ 
if both $X$ and $Y$ are memorising \und-evaluation trees that do not contain the label $\underline a$.
\end{definition}

We interpret propositional expressions in \SPu\ as memorising \und-evaluation trees
by extending the function \mfe\ (Definition~\ref{def:mfe}).

\begin{definition}
\label{def:mfeu}
The unary \textbf{memorising full evaluation} function $\mfeu : \SPu \to\NTu$ 
is defined by 
\[\mfeu(P)=\memtu(\feu(P)),
\]
where the auxiliary function $\memtu:\NTu\to\NTu$ is defined as
follows, for $a\in A$:
\[
\memtu(B) = B\text{ for } B\in\{\tr,\fa,\und\},~\memtu(X\unlhd \underline a\unrhd Y)
=\memtu(\Leu_a(X))\unlhd \underline a\unrhd \memtu(\Riu_a(Y)),
\] 
and the auxiliary functions $\Leu_a,\Riu_a:\NTu\to\NTu$ are defined as follows, for $b\in A$:
\begin{align*}
&\Leu_a(B) = \Riu_a(B)=B\text{ for } B\in\{\tr,\fa,\und\},
\\[1mm]
&\Leu_a(X\unlhd \underline b\unrhd Y)=
\begin{cases}
\Leu_a(X) 
& \text{if $b=a$},\\
\Leu_a(X)\unlhd \underline b\unrhd \Leu_a(Y) 
& \text{otherwise},
\end{cases}
\\[1mm]
&\Riu_a(X\unlhd \underline b\unrhd Y)=
\begin{cases}
\Riu_a(Y) 
& \text{if $b=a$},\\
\Riu_a(X)\unlhd \underline b\unrhd \Riu_a(Y) 
& \text{otherwise}.
\end{cases}
\end{align*}
\end{definition}

\begin{definition}
\label{def:Mcongruenceu} 
The binary relation $=_{\mfeu}$ 
on \SPu\
is called \textbf{memorising full $\bm\und$-valuation congruence} 
and is defined by $P=_{\mfeu} Q\iff\mfeu(P)=\mfeu(Q)$.
\end{definition}

\begin{lemma}
\label{la:Mcongruenceu} 
The relation $=_{\mfeu}$ is a congruence.
\end{lemma}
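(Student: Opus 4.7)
The plan is to mimic the proof of Lemma~\ref{la:Mcongruence} (given in Appendix~\ref{app:A4} and adapted from~\cite[La.3.5]{BPS21}), extending each step to handle the third leaf \und. Reflexivity, symmetry, and transitivity of $=_{\mfeu}$ are immediate from its definition as equality of trees in \NTu, so the real content is to verify that $=_{\mfeu}$ is preserved by $\neg$, $\fulland$, and $\fullor$.

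First I would re-establish the relevant commutation lemmas for the auxiliary functions $\Leu_a$, $\Riu_a$, and $\memtu$ with respect to the leaf replacements that occur in Definition~\ref{def:feu}. Crucially, all such replacements act only on $\tr$- and $\fa$-leaves, while $\Leu_a$, $\Riu_a$, and $\memtu$ treat $\tr$, $\fa$, and $\und$ symmetrically as base cases on which they act as the identity. Consequently, every clause of the two-valued argument extends verbatim after adjoining the base case for $\und$; the \und-leaves pass through each pruning and each substitution untouched.

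Next, using these commutation identities, I would show that $\mfeu$ satisfies recursive identities that mirror Definition~\ref{def:feu}: one proves $\mfeu(\neg P) = \mfeu(P)[\tr \mapsto \fa, \fa \mapsto \tr]$ and analogous but more involved expressions for $\mfeu(P \fulland Q)$ and $\mfeu(P \fullor Q)$ whose right-hand sides depend only on $\mfeu(P)$ and $\mfeu(Q)$, not on the full $\feu(P)$ and $\feu(Q)$. Congruence of $=_{\mfeu}$ then follows immediately: plugging two $=_{\mfeu}$-equivalent terms into these recursive expressions produces identical memorising \und-evaluation trees.

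The main obstacle, exactly as in Lemma~\ref{la:Mcongruence}, is the $\fulland$-case (with $\fullor$ handled by duality). We must argue that after substituting $\feu(Q)$ and $\feu(Q)[\tr \mapsto \fa]$ into the \tr- and \fa-leaves of $\feu(P)$ and then applying $\memtu$, the resulting tree is determined by $\mfeu(P)$ and $\mfeu(Q)$ alone. The delicate point is that a repeated atom $a$ along some root-to-leaf path in the composite tree may have been introduced either by $P$ or by $Q$ after substitution; one must verify that in both situations the subsequent $\Leu_a$/$\Riu_a$ pruning depends only on the sequence of atoms visited and not on the internal layout of $\feu(P)$. The $\und$-branches contribute no genuine new difficulty here—every auxiliary lemma from the two-valued proof acquires one trivially discharged additional base case—so the main intellectual work remains precisely that of Lemma~\ref{la:Mcongruence}.
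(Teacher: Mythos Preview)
Your proposal is correct and matches the paper's approach: the paper's proof simply says that the proof of Lemma~\ref{la:Mcongruence} (Appendix~\ref{app:A4}) goes through after adding the trivial base case $X=\und$ to every sub-proof, which is precisely what you outline. One small remark: you speak of explicit ``recursive identities'' expressing $\mfeu(P\fulland Q)$ in terms of $\mfeu(P)$ and $\mfeu(Q)$, whereas the paper (and your final paragraph) actually argues the slightly weaker well-definedness statement---that $\memtu(X\treefulland Y)$ depends only on $\memtu(X)$ and $\memtu(Y)$---by induction on the depth of $X$ via Lemma~\ref{la:seven}; this distinction is cosmetic, since well-definedness is equivalent to the existence of such a function even if no closed formula is displayed.
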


\begin{proof}
The proof of Lemma~\ref{la:Mcongruence} (in Appendix~\ref{app:A4}) requires one extra base case 
$X=\und$ for all sub-proofs, which follows trivially.
\end{proof}

\begin{definition}
\label{def:MFELu}
\textbf{Memorising Fully Evaluated Left-Sequential Logic with undefinedness \emph{(\MFEL$^{\bm\und}$)}} 
is the fully evaluated 
left-sequential logic with undefinedness that satisfies no more consequences than those of \mfeu-equality, i.e., 
for all $P, Q \in\SPu$,
\[
\MFELu\models P=Q \iff P=_{\mfeu} Q.
\]
\end{definition}

We extend \MFELe\ to \MFELeu\ with the two axioms
$\neg\und=\und$ ~and~ $\und\fulland x= \und$, so all
\FFELeu-results of Section~\ref{sec:3} hold in \MFELeu.

\begin{lemma}[Soundness]
\label{la:Msoundu}
For all $P, Q\in\SPu$, $\MFELeu\vdash P =Q ~\Longrightarrow~ \MFELu\models P = Q$.
\end{lemma}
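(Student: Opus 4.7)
The plan is to mirror the structure of the proof of Lemma~\ref{la:Msound} almost verbatim, leveraging the extra infrastructure already set up for the three-valued case. By Lemma~\ref{la:Mcongruenceu}, the relation $=_{\mfeu}$ is a congruence on \SPu, so it suffices to verify that every closed instance of each axiom in \MFELeu\ is valid under $=_{\mfeu}$.

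First I would dispose of the two new axioms. For $\neg\und=\und$, observe that $\mfeu(\neg\und)=\memtu(\feu(\und)[\tr\mapsto\fa,\fa\mapsto\tr])=\memtu(\und)=\und=\mfeu(\und)$. For $\und\fulland x=\und$, a closed instance has the form $\und\fulland P=\und$, and $\mfeu(\und\fulland P)=\memtu(\feu(\und)[\tr\mapsto\feu(P),\fa\mapsto\feu(P)[\tr\mapsto\fa]])=\memtu(\und)=\und$. Both reduce to one-line computations from Definitions~\ref{def:feu} and~\ref{def:mfeu}.

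Next I would handle the \FFELe-axioms inherited from Table~\ref{tab:FFELe}. The soundness proof of Lemma~\ref{la:Fsoundu} already shows that all of these are valid under $=_{\feu}$; since $\mfeu=\memtu\circ\feu$, any $\feu$-equality automatically yields an $\mfeu$-equality, so no new work is needed here.

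The genuine work concerns axiom~\eqref{M1}. I would reproduce the argument used for \MFELe\ in Appendix~\ref{app:A4} but now read it over \NTu\ instead of \NT. Concretely, the task is to show that for all $P,Q,R\in\SPu$,
\[
\memtu(\feu((P\fullor Q)\fulland R))=\memtu(\feu((\neg P\fulland(Q\fulland R))\fullor(P\fulland R))).
\]
The proof of~\eqref{M1} in the two-valued case uses only structural properties of $\memt$, $\Le_a$, $\Ri_a$ together with the definitions of $\fe$ on $\fulland,\fullor,\neg$. Each of these has a three-valued counterpart $\memtu$, $\Leu_a$, $\Riu_a$, $\feu$ whose defining clauses on leaves in $\{\tr,\fa\}$ are identical to the two-valued versions, and whose clauses on the new leaf $\und$ are uniformly $\memtu(\und)=\und$, $\Leu_a(\und)=\Riu_a(\und)=\und$, and $\feu$ propagates $\und$ unchanged through leaf-replacement (since $\und[\tr\mapsto\cdot,\fa\mapsto\cdot]=\und$). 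Thus every inductive step of the original argument either goes through verbatim (for the \tr- and \fa-branches) or reduces on the nose (for the newly added $\und$-branches) because both sides collapse to $\und$.

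The main obstacle, and really the only one, is simply to verify that the existing case analysis in the Appendix proof of~\eqref{M1} admits this uniform $\und$-extension; once one observes that $\und$ acts as a ``rigid leaf'' under all four operations $\feu$, $\memtu$, $\Leu_a$, $\Riu_a$, the additional cases are mechanical. Having checked \FFELe, \eqref{M1}, and the two \und-axioms, congruence of $=_{\mfeu}$ lifts validity from closed instances of axioms to arbitrary derivations, yielding $\MFELu\models P=Q$.
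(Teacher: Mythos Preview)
Your overall architecture matches the paper's: invoke congruence of $=_{\mfeu}$ (Lemma~\ref{la:Mcongruenceu}) and then verify each axiom, referring back to the two-valued soundness proofs for \FFELe\ and \eqref{M1} and checking the two \und-axioms directly. That is exactly what the paper does, and your treatment of the \und-axioms and of the \FFELe\ axioms is fine.

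There is, however, a mischaracterisation in your handling of \eqref{M1}. You write that the Appendix~\ref{app:A4} proof of \eqref{M1} ``uses only structural properties of $\memt$, $\Le_a$, $\Ri_a$ together with the definitions of $\fe$''. It does not. The paper's argument for \eqref{M1} proceeds indirectly: it defines a translation $\tra:\SP\to\SPsc$ into short-circuit terms, proves $\fe(P)=\se(\tra(P))$ (Lemma~\ref{la:aux}), obtains the $\tra$-image of \eqref{M1} in \MSCLe\ via \emph{Prover9}, and then invokes \cite[Thm.~4.1]{BPS21} to pass from \MSCLe-derivability to $\memt\circ\se$-equality. None of that is a structural induction over trees in $\NT$.

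Consequently, ``reading the argument over $\NTu$'' does not mean adding $\und$-cases to inductions on $\memt$, $\Le_a$, $\Ri_a$; it means extending $\tra$ by $\tra(\und)=\und$, extending Lemma~\ref{la:aux} to $\SPu$, and invoking the $\und$-version of the cited result from~\cite{BPS21} (which exists, since that paper treats short-circuit logic with undefinedness). The paper's one-line proof sweeps this under the rug with ``cf.\ the proof of Lemma~\ref{la:Msound}'', but if you actually try to execute what you described you will find the pieces you need are SCL-side, not tree-side.
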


\begin{proof}
By Lemma~\ref{la:Mcongruenceu}, the relation $=_{\mfeu}$ is a congruence on \SPu, 
so it suffices to show that all closed instances of the \MFELeu-axioms satisfy $=_{\mfeu}$,
which follows  easily (cf.\ the proof of Lemma~\ref{la:Msound} and, for the \und-axioms,
the proof of Lemma~\ref{la:Fsoundu}).
\end{proof}

In order to prove completeness of \MFELeu,  we use the $\sigma$-normal forms 
$\und_\sigma$ from Definition~\ref{def:Unf} for $\sigma\in A^s$,
thus $\und_\epsilon=\und$ and $\und_{a\rho}=a\fulland\und_\rho$.
By induction on $\sigma$ and  Lemma~\ref{la:hulpje}.2-3, 
\begin{equation}
\label{eq:eerst}
\tag{$\ddagger$}
\text{For all $\sigma\in A^s$,
$\MFELeu\vdash\neg\und_\sigma=\und_\sigma$.
}\end{equation}

We further use the ternary operator $\auxf(x,y,z)=(x\fulland y)\fullor (\neg x\fulland z)$  
(Definition~\ref{def:h}) on terms over $\SigFELu$.

\begin{lemma}
\label{la:Unf}
For each $a\in A$ and $\sigma=a\rho\in A^s$,
$\MFELeu\vdash \und_{a\rho}=\auxf(a,\und_\rho,\und_\rho)$.
\end{lemma}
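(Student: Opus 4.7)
The statement unwinds to showing that
\[
\und_{a\rho} \;=\; (a \fulland \und_\rho) \fullor (\neg a \fulland \und_\rho),
\]
by the definition of $\auxf$. The plan is to read off the left conjunct on the right-hand side as $\und_{a\rho}$ itself, via Definition~\ref{def:Unf}, and then to absorb the entire disjunction using Lemma~\ref{la:auxA}.

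More concretely, first I would expand $\auxf(a,\und_\rho,\und_\rho)$ using Definition~\ref{def:h}, giving $(a\fulland\und_\rho)\fullor(\neg a\fulland\und_\rho)$. Next, by Definition~\ref{def:Unf} we have $a\fulland\und_\rho = \und_{a\rho}$, so the right-hand side becomes $\und_{a\rho}\fullor(\neg a\fulland\und_\rho)$. Finally, Lemma~\ref{la:auxA} gives the absorption $\und_{a\rho}\fullor x=\und_{a\rho}$ for any $x$, and instantiating $x:=\neg a\fulland\und_\rho$ closes the argument.

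No induction or side calculation is required here, so there is effectively no obstacle: the definition of $\und_{a\rho}$ already supplies the first summand as an $\und_\sigma$-term, and the disjunction-absorption half of Lemma~\ref{la:auxA} (already established from duality and a short induction in that lemma) makes the second summand irrelevant. Note that this also explains why the auxiliary result \eqref{eq:eerst} on $\neg\und_\sigma=\und_\sigma$ is not needed for this particular lemma; it will be used elsewhere when relating $\und_\sigma$ to negation normal forms.
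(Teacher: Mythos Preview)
Your argument is correct. It differs from the paper's proof in a small but clean way: the paper starts from $\und_{a\rho}=a\fulland\und_\rho$, uses idempotence $x=x\fullor x$ (available in \MFELe\ via~\eqref{C1}) to duplicate the term, and then invokes Lemma~\ref{la:hulpje}.3 to flip the sign on one copy, arriving at $(a\fulland\und_\rho)\fullor(\neg a\fulland\und_\rho)$. You instead go the other direction, recognising the first disjunct of $\auxf(a,\und_\rho,\und_\rho)$ as $\und_{a\rho}$ and absorbing the second disjunct via Lemma~\ref{la:auxA}. Your route is arguably more elementary: Lemma~\ref{la:auxA} is already a \FFELeu-level fact, so your derivation does not actually depend on the memorising axiom~\eqref{M1} or on idempotence, whereas the paper's does. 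Both proofs are one-liners, but yours shows that this particular identity holds already in \FFELeu.
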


\begin{proof}
Derive $a\fulland\und_\rho=(a\fulland\und_\rho)\fullor(a\fulland\und_\rho)
\stackrel{\text{La.\ref{la:hulpje}.3}}=(a\fulland\und_\rho)\fullor(\neg a\fulland\und_\rho)=
\auxf(a,\und_\rho,\und_\rho)$.
\end{proof}

In the following, we extend previous results from Section~\ref{sec:4} to $\SPu$.

\begin{lemma}[extension of La.\ref{la:hulp1}]
\label{la:hulp1u}
Let $\sigma\in A^s$, then $P_\sigma\fullor \und_\sigma=\und_\sigma$ and $P_\sigma\fulland
\und_\sigma=\und_\sigma$
are consequences of \MFELeu.
\end{lemma}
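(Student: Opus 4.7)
The plan is to establish the $\fullor$-equation $P_\sigma \fullor \und_\sigma = \und_\sigma$ by induction on $\sigma \in A^s$, and then to obtain the $\fulland$-equation for free via Lemma~\ref{la:hulpje}.2: instantiating that lemma with $y=\tr$ gives $x \fullor \und_\sigma = x \fulland \und_\sigma$ for every $\sigma \in A^*$, so once the $\fullor$-equation is settled, $P_\sigma \fulland \und_\sigma = P_\sigma \fullor \und_\sigma = \und_\sigma$ follows at once.

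For the base case $\sigma = \epsilon$, the only $\sigma$-normal forms are $\tr$ and $\fa$, and $\und_\epsilon = \und$. The identity $x \fullor \und = x \fulland \und$ derived as \eqref{A.1} in Section~\ref{sec:3} reduces both subcases to computations involving $\fulland$: $\tr \fulland \und = \und$ is immediate from \eqref{FFEL5}, while $\fa \fulland \und = \neg\tr \fulland \und = \tr \fulland \und = \und$ combines \eqref{FFEL1} with \eqref{A.2}, i.e.\ $\neg x \fulland \und = x \fulland \und$.

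For the inductive step $\sigma = a\rho$, I would write $P_{a\rho}$ in its canonical form $(a \fulland Q_\rho) \fullor (\neg a \fulland R_\rho)$ as dictated by Definition~\ref{def:Mnf}. The key manoeuvre is to exploit two equivalent presentations of $\und_{a\rho}$: instantiating Lemma~\ref{la:hulpje}.3 with $y = \tr$ yields $\neg a \fulland \und_\rho = a \fulland \und_\rho$, so $\und_{a\rho}$ equals both its definitional form $a \fulland \und_\rho$ and $\neg a \fulland \und_\rho$. Rewriting $\und_{a\rho}$ as $\neg a \fulland \und_\rho$ lets me merge it into the $\neg a$-disjunct of $P_{a\rho}$ by associativity of $\fullor$ and then pull $\neg a$ outside via \eqref{C4}; the induction hypothesis collapses the resulting $R_\rho \fullor \und_\rho$ to $\und_\rho$. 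Flipping the remaining summand $\neg a \fulland \und_\rho$ back to $a \fulland \und_\rho$ (Lemma~\ref{la:hulpje}.3 again) allows a second application of \eqref{C4} to factor out $a$, and a final use of the induction hypothesis on $Q_\rho \fullor \und_\rho$ produces $a \fulland \und_\rho = \und_{a\rho}$.

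The main subtlety lies in the asymmetry between the disjunctive shape of $P_{a\rho}$ and the conjunctive shape of $\und_{a\rho} = a \fulland \und_\rho$: plain associativity and \eqref{C4} do not expose a common factor without first using Lemma~\ref{la:hulpje}.3 to flip the sign of the leading literal in $\und_{a\rho}$. Since \MFEL\ lacks commutativity of $\fullor$, there is no way to bring the conjunct $a \fulland \und_\rho$ next to the $a$-disjunct of $P_{a\rho}$ directly, so the detour through $\neg a$ is essential. Once the bidirectional equivalence between $a \fulland \und_\rho$ and $\neg a \fulland \und_\rho$ is in hand, the argument is simply a back-and-forth between these two forms interleaved with \eqref{C4} and two applications of the induction hypothesis.
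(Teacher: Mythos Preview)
Your inductive argument for the $\fullor$-equation is essentially the paper's proof: the same use of Lemma~\ref{la:hulpje}.3 to toggle between $a\fulland\und_\rho$ and $\neg a\fulland\und_\rho$, the same two applications of \eqref{C4}, and the same two calls to the induction hypothesis. One small omission: in Section~\ref{sec:5} the class of $\sigma$-normal forms in $\SPu$ tacitly includes $\und_\sigma$ itself (cf.\ the case split in the proofs of Lemmas~\ref{la:hulp2u} and~\ref{la:hulpu}), so your base case should also cover $P_\epsilon=\und$; the paper dispatches the case $P_\sigma=\und_\sigma$ up front via Lemma~\ref{la:auxA} before running the induction only over $P_\sigma\in\SP$.

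For the $\fulland$-equation your route is cleaner than the paper's. You instantiate Lemma~\ref{la:hulpje}.2 at $y=\tr$ to get $x\fullor\und_\sigma=x\fulland\und_\sigma$ and read off $P_\sigma\fulland\und_\sigma=P_\sigma\fullor\und_\sigma=\und_\sigma$. The paper instead argues by duality: $P_\sigma\fulland\und_\sigma=\neg(\neg P_\sigma\fullor\neg\und_\sigma)$, then invokes \eqref{eq:eerst} ($\neg\und_\sigma=\und_\sigma$) together with the observation from Lemma~\ref{la:hulp1} that $\neg P_\sigma$ is again a $\sigma$-normal form, so the already-proved $\fullor$-case applies. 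Your shortcut avoids this detour entirely.
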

 
\begin{proof}
If $P_\sigma=\und_\sigma$, we are done by Lemma~\ref{la:auxA}. If $P_\sigma\in\SP$, the statement follows
by induction on $\sigma$:
if $\sigma= \epsilon$, this is trivial. If $\sigma=a\rho$,
\begin{align*}
P_{a\rho}\fullor\und_{a\rho}
&=((a\fulland Q_\rho)\fullor(\neg a\fulland R_\rho))\fullor(a\fulland \und_\rho)
\\
&=(a\fulland Q_\rho)\fullor((\neg a\fulland R_\rho)\fullor(\neg a\fulland \und_\rho))
&&\text{by associativity and Lemma~\ref{la:hulpje}.3}\\
&=(a\fulland Q_\rho)\fullor(\neg a\fulland (R_\rho\fullor\und_\rho))
&&\text{by \eqref{C4}}\\
&=(a\fulland Q_\rho)\fullor(a\fulland \und_\rho)
&&\text{by IH and Lemma~\ref{la:hulpje}.3}\\
&=a\fulland (Q_\rho\fullor \und_\rho)
&&\text{by \eqref{C4}}\\
&=\und_{a\rho}.
&&\text{by IH}
\end{align*}

In the proof of Lemma~\ref{la:hulp1} it was already argued that if $P_\sigma\in\SP$ then
$\neg P_\sigma$ is provably equal to a $\sigma$-normal form. By consequence~\eqref{eq:eerst},
$\neg \und_\sigma=\und_\sigma$. Hence, 
\(P_{\sigma}\fulland\und_{\sigma}=\neg(\neg P_{\sigma}\fullor\und_{\sigma})=\und_{\sigma}.
\)
\end{proof}

\begin{lemma}[extension of La.\ref{la:hulp2}]
\label{la:hulp2u}
Let $\sigma\in A^s$, then
$P_\sigma\fulland\und= \und_\sigma$ and
$P_\sigma\fullor\und=\und_\sigma$
are consequences of \MFELeu.
\end{lemma}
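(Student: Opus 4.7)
The plan is to mirror the proof of Lemma~\ref{la:hulp2}, adding the extra base case $P_\sigma=\und_\sigma$ (as was done for Lemma~\ref{la:hulp1u}) and keeping the inductive structure intact. So first I would split into two cases: if $P_\sigma=\und_\sigma$, then both equalities follow immediately from Lemma~\ref{la:auxA} (which gives $\und_\sigma\fulland x=\und_\sigma$ and $\und_\sigma\fullor x=\und_\sigma$). If instead $P_\sigma\in\SP$, I would proceed by induction on $\sigma$.

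For the base case $\sigma=\epsilon$, we have $P_\epsilon\in\{\tr,\fa\}$. Then $\tr\fulland\und=\und$ by \eqref{FFEL5}, and $\fa\fulland\und=\und\fulland\fa=\und$ using \eqref{FFEL7} and the new axiom $\und\fulland x=\und$. The dual equations for $\fullor\und$ follow in the same way (noting $\und^{dl}=\und$). For the inductive step $\sigma=a\rho$ and $P_{a\rho}=\auxf(a,Q_\rho,R_\rho)$, the key move is to apply Lemma~\ref{la:crux}.1 to get
\begin{align*}
\auxf(a,Q_\rho,R_\rho)\fulland\und
&=\auxf(a,(Q_\rho\fullor(R_\rho\fulland\fa))\fulland\und,~R_\rho\fulland\und)\\
&=\auxf(a,(Q_\rho\fullor\fa_\rho)\fulland\und,~\und_\rho)
&&\text{by Lemma~\ref{la:hulp2} and IH}\\
&=\auxf(a,Q_\rho\fulland\und,~\und_\rho)
&&\text{by Lemma~\ref{la:hulp1}}\\
&=\auxf(a,\und_\rho,\und_\rho)
&&\text{by IH}\\
&=\und_{a\rho}
&&\text{by Lemma~\ref{la:Unf}.}
\end{align*}
The case $P_{a\rho}\fullor\und=\und_{a\rho}$ is handled symmetrically, using Lemma~\ref{la:crux}.2 in place of Lemma~\ref{la:crux}.1 and using $P_\sigma\fullor\tr=\tr_\sigma$ from Lemma~\ref{la:hulp2} in place of $P_\sigma\fulland\fa=\fa_\sigma$.

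There is no real obstacle here: the new axioms $\neg\und=\und$ and $\und\fulland x=\und$ have already been leveraged in Section~\ref{sec:3} (Lemma~\ref{la:auxA}, Lemma~\ref{la:hulpje}) and in Lemma~\ref{la:Unf} to make $\und_\sigma$ behave just like $\fa_\sigma$ and $\tr_\sigma$ do in the two-valued setting, so all auxiliary lemmas needed for the step are already available. The only care required is to keep the split between $P_\sigma=\und_\sigma$ and $P_\sigma\in\SP$ explicit in the inductive hypothesis, since $Q_\rho$ and $R_\rho$ in the decomposition of $P_{a\rho}$ may themselves be of either form.
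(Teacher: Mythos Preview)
Your proposal is correct and essentially identical to the paper's proof: both split off the case $P_\sigma=\und_\sigma$ via Lemma~\ref{la:auxA}, then induct on $\sigma$ for $P_\sigma\in\SP$ using Lemma~\ref{la:crux}, Lemma~\ref{la:hulp2}, Lemma~\ref{la:hulp1}, and Lemma~\ref{la:Unf} (the paper spells out the $\fullor$-case while you spell out the $\fulland$-case, but the arguments are dual). One small inaccuracy in your final remark: once you are in the case $P_{a\rho}\in\SP$, the subterms $Q_\rho,R_\rho$ in $\auxf(a,Q_\rho,R_\rho)$ are automatically in $\SP$ (they are $\rho$-normal forms in the sense of Definition~\ref{def:Mnf}, which only produces \und-free terms), so no extra care is needed there.
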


\begin{proof}
If $P_\sigma=\und_\sigma$, this follows by Lemma~\ref{la:auxA}. 
If $P_\sigma\in \SP$, this follows by induction on $\sigma$. 
We first consider the case $P_\sigma\fullor\und=\und_\sigma$.
If $\sigma=\epsilon$
the statement is trivial. If $\sigma=a\rho$ then 
\begin{align*}
P_{a\rho}\fullor\und
&=\auxf(a,Q_\rho,R_\rho)\fullor\und\\
&=\auxf(a, (Q_\rho\fulland(R_\rho\fullor\tr))\fullor\und, ~R_\rho\fullor\und)
&&\text{by {Lemma~\ref{la:crux}.2}}\\
&=\auxf(a, (Q_\rho\fulland\tr_\rho)\fullor\und, ~\und_\rho)
&&\text{by Lemma~\ref{la:hulp2} and IH}\\
&=\auxf(a, Q_\rho\fullor\und, ~\und_\rho)
&&\text{by Lemma~\ref{la:hulp1}}\\
&=\auxf(a,\und_{\rho},\und_\rho)
&&\text{by IH}\\
&=\und_{a\rho}.
&&\text{by Lemma~\ref{la:Unf}}
\end{align*}

The case for $P_{a\rho}\fulland\und$ follows in a similar way
(with help of {Lemma~\ref{la:crux}.1}).
\end{proof}

\begin{lemma}[extension of La.\ref{la:hulp}]
\label{la:hulpu}
In \MFELeu, $\sigma$-normal forms in \SPu\ are provably closed under 
${\fulland}$ and $\fullor$ composition, and each of these 
compositions determines a unique $\nu$-normal form.
\end{lemma}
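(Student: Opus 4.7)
The plan is to mirror the proof of Lemma~\ref{la:hulp}, extending the case analysis to cover the new $\und$-normal forms $\und_\sigma$ of Definition~\ref{def:Unf}. The key enabler is Lemma~\ref{la:Unf}: since $\und_{a\rho}=\auxf(a,\und_\rho,\und_\rho)$ provably, any $\und$-normal form $\und_{a\rho}$ fits into the same $\auxf$-template used for the classical $\sigma$-normal forms of Definition~\ref{def:Mnf}, so a single inductive scheme handles both varieties uniformly. The overall structure is thus: classify a $\sigma$-normal form in \SPu\ as either a classical one (from \SP) or of the shape $\und_\sigma$, and then carry out the same induction pattern as in Lemma~\ref{la:hulp} but with branches for the $\und$-normal forms.

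I would induct on $\sigma$ for $P_\sigma\fulland R_\nu$ (the $\fullor$-case being dual). For the base case $\sigma=\epsilon$, $P_\sigma\in\{\tr,\fa,\und\}$: the $\tr$-case is trivial; the $\fa$-case reuses Lemma~\ref{la:hulp2} when $R_\nu\in\SP$, and uses \eqref{FFEL7} together with Lemma~\ref{la:auxA} when $R_\nu=\und_\nu$; the $\und$-case is immediate from axiom $\und\fulland x=\und$. For the inductive step $\sigma=a\rho$, I would write $P_\sigma=\auxf(a,Q_\rho,Q'_\rho)$, taking $Q_\rho=Q'_\rho=\und_\rho$ when $P_\sigma=\und_\sigma$ via Lemma~\ref{la:Unf}, then apply Lemma~\ref{la:crux}.1 to distribute $\fulland R_\nu$ through the $\auxf$-form, collapse $Q_\rho\fullor(Q'_\rho\fulland\fa)$ to $Q_\rho$ using Lemma~\ref{la:hulp2} (or its \und-extension Lemma~\ref{la:hulp2u}) and Lemma~\ref{la:hulp1} (or Lemma~\ref{la:hulp1u}), and conclude by applying the induction hypothesis to $Q_\rho\fulland R_\nu$ and $Q'_\rho\fulland R_\nu$. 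Uniqueness of the resulting normal form then follows by soundness (Lemma~\ref{la:Msoundu}), since the $\auxf$-structure of a classical normal form mirrors the branching of its \mfeu-tree while $\und$-normal forms are distinguished by carrying only $\und$-leaves, mirroring the uniqueness argument used in Lemma~\ref{la:MFELe}.

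The main obstacle is the mixed case $P_\sigma\fulland\und_\nu$ with $P_\sigma\in\SP$ (and its dual $\und_\sigma\fulland R_\nu$, which is actually simpler since Lemma~\ref{la:auxA} resolves it immediately to $\und_\sigma$). Semantically, the resulting tree has only $\und$-leaves over the memorised string $\sigma\gg\nu$, so the target is $\und_{\sigma\gg\nu}$. Lemma~\ref{la:aux1} only supplies \emph{some} $\rho\in A^*$ with $P\fulland\und_\nu=\und_\rho$; sharpening this to $\rho=\sigma\gg\nu\in A^s$ requires careful bookkeeping of the $\gg$-operation (Definition~\ref{def:mstring}) through the induction, leaning on Lemma~\ref{la:hulp2u} at the leaves of the $\auxf$-unfolding and on axiom~\eqref{M1} to thread \und\ correctly past previously evaluated atoms.
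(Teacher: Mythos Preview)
Your proposal is correct, but the paper organises the argument differently and more economically. Rather than a single unified induction on $\sigma$ that treats $\und_\sigma$ via Lemma~\ref{la:Unf} alongside the classical forms, the paper splits into three cases up front: (i) $P_\sigma=\und_\sigma$, dispatched at once by Lemma~\ref{la:auxA}; (ii) both $P_\sigma,R_\nu\in\SP$, where Lemma~\ref{la:hulp} is reused verbatim; (iii) $P_\sigma\in\SP$ and $R_\nu=\und_\nu$, handled by exactly the induction template of Lemma~\ref{la:hulp} with the \emph{same} auxiliary lemmas (Lemma~\ref{la:crux}.1, Lemma~\ref{la:hulp2}, Lemma~\ref{la:hulp1}), since the subterms $P_\rho,Q_\rho$ remain classical throughout. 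The payoff of the case split is that the collapse step $P_\rho\fullor(Q_\rho\fulland\fa)=P_\rho$ never involves $\und$-forms, so Lemmas~\ref{la:hulp1u} and~\ref{la:hulp2u} are not needed there; your unified route is valid but must branch on whether $Q_\rho,Q'_\rho$ are classical or $\und_\rho$ at that point.

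Your final paragraph overstates the difficulty of the mixed case. In the paper's treatment of case~(iii), no explicit $\gg$-bookkeeping or direct appeal to axiom~\eqref{M1} is required beyond what the induction already delivers, and Lemma~\ref{la:aux1} is never used: the inductive step reduces $\auxf(a,P_\rho,Q_\rho)\fulland\und_\nu$ to $\auxf(a,P_\rho\fulland\und_\nu,Q_\rho\fulland\und_\nu)$ and the hypothesis closes immediately. Your instinct that Lemma~\ref{la:aux1} alone is too weak (it gives $\rho\in A^*$, not $A^s$) is right, but the remedy is simply the same induction you already sketched, not additional threading through~\eqref{M1}.
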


\begin{proof}
We first consider $P_\sigma\fulland R_\nu$ and distinguish three cases.

Case
$P_\sigma=\und_\sigma$. By Lemma~\ref{la:auxA},
$\und_\sigma\fulland R_\nu =\und_\sigma$. 

Case
$P_\sigma,R_\nu\in\SP$. By Lemma~\ref{la:hulp}, there is a $(\sigma\gg \nu)$-normal form
s.t. $P_\sigma\fulland R_\nu=Q_{\sigma\gg \nu}$.

Case
$P_\sigma\in\SP$ and $R_\nu=\und_\nu$. We prove this case by induction
on $\sigma$: 
\\[1mm]
If $\sigma=\epsilon$, $\tr\fulland \und_\nu=\und_\nu$ is immediate, and 
$\fa\fulland \und_\nu=\und_\nu$ follows from~\eqref{FFEL7} and Lemma~\ref{la:auxA}.
\\[1mm]
If $\sigma=a\rho\in A^s$, it suffices to prove that 
\begin{equation*}
 \auxf(a,P_\rho,Q_\rho)\fulland \und_\nu= \auxf(a,~P_\rho\fulland \und_\nu,~Q_\rho\fulland \und_\nu)
\end{equation*}
because by induction, both $P_\rho\fulland \und_\nu$ and $Q_\rho\fulland \und_\nu$ have a 
provably equal $(\rho\gg\nu)$-normal form: 
\begin{align*}
\auxf(a,P_\rho,Q_\rho)\fulland \und_\nu
&=\auxf(a,(P_\rho\fullor(Q_\rho\fulland \fa))\fulland \und_\nu,~Q_\rho\fulland \und_\nu)
&&\text{by Lemma~\ref{la:crux}.1}\\
&=\auxf(a,(P_\rho\fullor\fa_\rho)\fulland \und_\nu,~Q_\rho\fulland \und_\nu).
&&\text{by Lemma~\ref{la:hulp2}}\\
&=\auxf(a,P_\rho\fulland \und_\nu,~Q_\rho\fulland \und_\nu).
&&\text{by Lemma~\ref{la:hulp1}}
\end{align*}

The proof for $P_\sigma\fullor R_\nu$ follows in a similar way (with help of Lemma~\ref{la:crux}.2).
\end{proof}

\begin{lemma}
\label{la:MFELeu}
For each $P\in\SPu$ there is a unique $\sigma$-normal form
$Q$ s.t. $\MFELeu\vdash P=Q$.
\end{lemma}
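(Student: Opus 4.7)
The plan is to mirror the proof of Lemma~\ref{la:MFELe}, adapted to the three-valued setting by treating $\und$ as an additional base constant and using the extended closure result Lemma~\ref{la:hulpu}. Extend Definition~\ref{def:Mnf} tacitly so that for $\sigma=\epsilon$ a $\sigma$-normal form is any element of $\{\tr,\fa,\und\}$, and for $\sigma=a\rho$ the form $(a\fulland P_1)\fullor(\neg a\fulland P_2)$ (with $P_1,P_2$ $\rho$-normal forms) is admitted. Note that $\und_\sigma$ is a $\sigma$-normal form in this extended sense: for $\sigma=\epsilon$ this is immediate, and for $\sigma=a\rho$ it follows from Lemma~\ref{la:Unf} (plus induction on $\sigma$).

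First I would prove the existence of a provably equal $\sigma$-normal form by structural induction on $P$, restricting to NNFs. The base cases $P\in\{\tr,\fa,\und\}$ give trivially $\epsilon$-normal forms, and $P\in\{a,\neg a\mid a\in A\}$ rewrite as in Lemma~\ref{la:MFELe} via $a=\auxf(a,\tr,\fa)$ and $\neg a=\auxf(a,\fa,\tr)$. For the inductive step $P=P_1\fulland P_2$ or $P=P_1\fullor P_2$, the induction hypothesis yields $\sigma_i$-normal forms $Q_{\sigma_i}$ with $\MFELeu\vdash P_i=Q_{\sigma_i}$, and Lemma~\ref{la:hulpu} then supplies a $\nu$-normal form provably equal to $Q_{\sigma_1}\fulland Q_{\sigma_2}$ (resp.\ $Q_{\sigma_1}\fullor Q_{\sigma_2}$). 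Congruence of $=_{\mfeu}$ (Lemma~\ref{la:Mcongruenceu}) together with soundness (Lemma~\ref{la:Msoundu}) guarantees that the composed normal form corresponds to the expected $\mfeu$-tree of $P$.

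For uniqueness I would argue semantically, exploiting soundness: it suffices to show that syntactically distinct $\sigma$-normal forms (in the extended sense) have distinct memorising $\und$-evaluation trees. If the two normal forms belong to the $\und_\sigma$-family, then their $\mfeu$-trees are $\und$-only trees whose set of root-to-leaf atom sequences is determined by $\sigma$, so different $\sigma$ yields different trees. If exactly one of the two is of the $\und_\sigma$-form, the two trees differ in whether any leaf is $\tr$ or $\fa$. Otherwise, both are in the image of the two-valued construction of Lemma~\ref{la:MFELe}, and the same $F(x,y,z)=\auxf(y,x,z)$ representation mimicking the tree structure (used at the end of the proof of Lemma~\ref{la:MFELe}) applies verbatim, since on the $\{\tr,\fa\}$-leaved fragment the $\mfeu$-tree equals the $\mfe$-tree.

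The main obstacle I anticipate is the compositional step involving mixtures of $\und_\sigma$ and ordinary $\sigma$-normal forms, where absorbing behaviour of $\und$ must be properly traced. This is precisely what Lemma~\ref{la:hulpu} is designed to handle, via its three-case split (both operands in \SP, one operand equal to $\und_\sigma$, or the outer operand equal to $\und_\sigma$ via Lemma~\ref{la:auxA}); invoking it directly avoids any new calculations. The only subtle point for uniqueness is to verify that when $P_\sigma = \und_\sigma$, $\mfeu(P_\sigma)$ is the perfect $\und$-tree along $\sigma$, distinct from the $\mfeu$-tree of any $\{\tr,\fa\}$-leaved normal form; this is a direct unfolding of Definitions~\ref{def:mfeu} and~\ref{def:Unf}.
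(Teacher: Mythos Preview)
Your approach is essentially the paper's: existence by structural induction on NNFs via Lemma~\ref{la:hulpu}, and uniqueness by soundness plus the tree-structure ($F$-representation) argument. One cosmetic point: your tacit extension of Definition~\ref{def:Mnf} admits mixed-leaf forms such as $(a\fulland\tr)\fullor(\neg a\fulland\und)$ that your induction never produces and your uniqueness case-split silently excludes; the paper instead takes the $\sigma$-normal forms in $\SPu$ to be precisely the two-valued ones together with $\und_\sigma$ (exactly the case-split used in the proof of Lemma~\ref{la:hulpu}), which keeps definition, closure lemma, and uniqueness argument aligned.
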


\begin{proof}
By structural induction on $P$, restricting to NNFs. For $P\in\{\tr,\fa,\und,a,\neg a\mid a\in A\}$
this is trivial: $a=\auxf(a,\tr,\fa)$ and $\neg a = \auxf(a,\fa,\tr)$.
For the cases $P=P_1\fulland P_2$ and $P=P_1\fullor P_2$ this follows from Lemma~\ref{la:hulpu}.

Uniqueness follows from the facts that for $\sigma$-normal form $P_1$ and $\rho$-normal form $P_2$, 
the normal forms of $P_1\fulland P_2$ and $P_1\fullor P_2$ are unique by Lemma~\ref{la:hulpu}, 
and that 
syntactically different normal forms have different 
evaluation trees: for $F(x,y,z)=\auxf(y,x,z)$, their $F$-representation mimics the tree structure.
\end{proof}

\begin{theorem}[Completeness]
\label{thm:MFELeu}
The logic \MFELu\ is axiomatised by \MFELeu.
\end{theorem}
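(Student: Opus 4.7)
The plan is to mirror the proof of Theorem~\ref{thm:MFELe} from the two-valued case, using the infrastructure already assembled in this section. The four ingredients we now have in place are: soundness of \MFELeu\ (Lemma~\ref{la:Msoundu}), the fact that $=_{\mfeu}$ is a congruence (Lemma~\ref{la:Mcongruenceu}), existence of provably equal $\sigma$-normal forms (Lemma~\ref{la:MFELeu}), and uniqueness of those normal forms up to syntactic identity (also part of Lemma~\ref{la:MFELeu}, via the $F(x,y,z)=\auxf(y,x,z)$ representation).

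Concretely, I would argue as follows. Assume $P_1 =_{\mfeu} P_2$, i.e.\ $\mfeu(P_1) = \mfeu(P_2)$. By Lemma~\ref{la:MFELeu}, there exist $\sigma_i \in A^s$ and $\sigma_i$-normal forms $Q_{\sigma_i} \in \SPu$ such that $\MFELeu \vdash P_i = Q_{\sigma_i}$ for $i=1,2$. Soundness (Lemma~\ref{la:Msoundu}) then yields $\mfeu(Q_{\sigma_1}) = \mfeu(P_1) = \mfeu(P_2) = \mfeu(Q_{\sigma_2})$. The crux is then to show that two $\sigma$-normal forms with identical memorising $\und$-evaluation trees must be syntactically equal, which would give $Q_{\sigma_1} \equiv Q_{\sigma_2}$ and hence $\MFELeu \vdash P_1 = Q_{\sigma_1} = Q_{\sigma_2} = P_2$.

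For the syntactic-equality step, the idea already used in Lemma~\ref{la:MFELeu} is that under the $F(x,y,z)=\auxf(y,x,z)$ encoding, the tree structure of a $\sigma$-normal form is read off directly from its $\mfeu$-image: the root atom $a$ in the normal form $\auxf(a,Q_\rho,R_\rho)$ becomes the root label of $\mfeu$, the left subtree is $\mfeu(Q_\rho)$, and the right subtree is $\mfeu(R_\rho)$; moreover the base cases $\tr, \fa, \und$ are mutually distinguishable as leaves. Because the definition of a memorising $\und$-evaluation tree forbids the label $a$ from recurring in the subtrees, and since $\sigma \in A^s$ guarantees this at the syntactic level, induction on the length of $\sigma$ reconstructs $Q_{\sigma_1}$ and $Q_{\sigma_2}$ from the common tree. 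The only subtlety relative to the two-valued case is the additional base case: if the common tree is the leaf $\und$, then both normal forms must be of the form $\und_{\sigma_i}$, and induction on $\sigma_i$ forces $\sigma_1 = \sigma_2$ and hence syntactic equality.

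The hard part, or rather the only part that requires care beyond copying the two-valued proof, is therefore handling the interaction between genuinely three-valued normal forms $\und_\sigma$ and the two-valued style normal forms built with $\auxf$; but this is essentially already compartmentalised in Lemma~\ref{la:MFELeu}, where the three cases (both arguments $\und_\sigma$, both in \SP, and the mixed case) are treated separately. Consequently, the completeness proof itself is a short application of the normal-form lemma and soundness, in the same style as Theorem~\ref{thm:MFELe}.
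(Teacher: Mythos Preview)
Your proposal is correct and follows essentially the same approach as the paper: invoke Lemma~\ref{la:MFELeu} to obtain unique $\sigma$-normal forms $Q_{\sigma_i}$ with $\MFELeu\vdash P_i=Q_{\sigma_i}$, use soundness to get $\mfeu(Q_{\sigma_1})=\mfeu(Q_{\sigma_2})$, and conclude syntactic equality of the normal forms from the injectivity of $\mfeu$ on normal forms (the ``$F$-representation mimics the tree structure'' argument already recorded in the proof of Lemma~\ref{la:MFELeu}). One minor wording quibble: your remark that ``if the common tree is the leaf $\und$, then both normal forms must be of the form $\und_{\sigma_i}$, and induction on $\sigma_i$ forces $\sigma_1=\sigma_2$'' is slightly garbled---if $\mfeu(Q)=\und$ then $\sigma=\epsilon$ and $Q=\und$ directly, with no induction needed; the general $\und_\sigma$ case is handled by the same tree-reading argument as the $\auxf$-built forms, since $\und_{a\rho}=\auxf(a,\und_\rho,\und_\rho)$ by Lemma~\ref{la:Unf}.
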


\begin{proof}
By Lemma~\ref{la:Msoundu}, \MFELeu\ is sound. For completeness, assume $P_1=_{\mfeu} P_2$.
By Lemma~\ref{la:MFELeu} there are unique $\sigma_i$-normal forms $Q_{\sigma_i}$ 
such that $\MFELeu\vdash P_i=_\mfe Q_{\sigma_i}$. By assumption and soundness, 
$Q_{\sigma_1}=Q_{\sigma_2}$. Hence, $\MFELeu\vdash P_1=Q_{\sigma_1}=P_2$.
\end{proof}

\section{Two-Valued Conditional \FEL\ (\CLFELtwo) and \CLFELu}
\label{sec:6}

Two-valued Conditional \FEL\ (\CLFELtwo) and three-valued Conditional \FEL\
(\CLFELu) are both sublogics of Conditional logic (defined in 1990 by Guzmán and Squier),
as was shown in~\cite{BP23}.\footnote{%
  In fact it was shown in~\cite{BP23} that $x\fulland y=(x\leftand y) \leftor 
  (y\leftand x)$, and hence the axiom  
  $(x\leftand y) \leftor (y\leftand x)=(y\leftand x)\leftor (x\leftand y)$, which is a 
  characteristic axiom of Conditional logic, implies commutativity
  of $\fulland$. We return to this point in Section~\ref{sec:8}.

  }
\CLFELtwo\ is equivalent to a sequential version of propositional logic that refutes the absorption laws.
The characteristic property of \CLFELtwo\ and \CLFELu\ is that sequential propositions are identified
according to the commutativity of ${\fulland}$ (and ${\fullor}$).
For both the case without \und\ and with \und, we define a semantics based on ordered memorising
trees, give equational axioms for their equality, and prove a completeness result.
We note that \CLFELu\ seems most close to ``three-valued left-sequential propositional logic"
as one can get, the cost being the loss 
of absorption.\footnote{%
   Note that $x\fulland\fa=\fa$ is a consequence of absorption: 
   $x\fulland\fa=\fa\fulland x=\fa\fulland(\fa\fullor x)=\fa$. 
   This equation plays an important role in Static FEL (see Section~\ref{sec:7})}.
   
For $P\in\SPu$, its \emph{alphabet ${\alpha(P)}\subset A$} is defined by 
$\alpha(\tr)=\alpha(\fa)=\alpha(\und)=\emptyset$, 
$\alpha(a)=\{a\}$, $\alpha(\neg P)=\alpha(P)$ and $\alpha(P\fulland Q)=
\alpha(P\fullor Q)=\alpha(P)\cup\alpha(Q)$. 

We further assume that the atoms in $A$ are ordered $a_1<a_2<...<a_n<a_{n+1}...$, notation $(A,<)$. 

\begin{definition}
\label{def:XX}
$A^s_o$ is  the set of strings in $A^*$ whose elements are ordered according to $(A,<)$.
\end{definition}

So, $A^s_o\subset A^s$.
We reserve the symbols 
$\beta,\gamma$ for strings $b_{1}b_{2}...b_{k}b_{k+1}$ in $A^s_o$ that satisfy $b_i < b_{i+1}$
(not necessarily neighbours in $(A,<)$), 
and for such a string $\beta$ we write $\alp{\beta}$ for its 
set of atoms.

Below we define evaluation functions for \CLFELtwo\ and \CLFELu\ that use disjunctions
$\widetilde\fa_\beta\fullor P$ in order to identify $P$ with a term that respects
$A^s_o$, where the terms $\widetilde\fa_\beta$ are defined  in Definition~$\ref{def:MnfSpecial}$.

\begin{definition}
\label{def:clfe}
The \textbf{conditional evaluation
function} $\clfe:\SP\to\NT$ is defined by 
\[
\clfe(P)=\mfe(\widetilde\fa_\beta\fullor P), \text{ where $\beta\in A^s_o$ satisfies
$\alp{\beta}=\alpha(P)$}.
\] 
The \textbf{conditional evaluation
function} $\clfeu:\SPu\to\NTu$ is defined by 
\[
\clfeu(P)=\begin{cases}
\mfeu(\widetilde\fa_\beta\fullor P)&\text{if $P\in\SP$ and $\beta\in A^s_o$ satisfies
$\alp{\beta}=\alpha(P)$,}\\
\und&\text{otherwise}.
\end{cases}
\]
~
\end{definition}

These evaluation functions are well-defined because $\beta$ is uniquely determined 
by $\alpha(P)$ and $(A,<)$.
For each $P\in\SP$, the evaluation tree $\clfe(P)$ 
can be called a ``$\beta$-tree", its complete traces agree with $\beta$, and 
its leaves determine the evaluation result of $P$.

\begin{example}
Four typical examples with $\beta=ab$, thus 
$\wfa_\beta=a\fulland (b\fulland\fa)$:
\begin{align*}
\clfe(b\fulland a)
&= \mfe(\wfa_\beta \fullor (b\fulland a))
=\mfe(a\fulland b),
\\
\clfe(b\fullor a)
&= \mfe(\wfa_\beta \fullor (b\fullor a))
=\mfe(a\fullor b),
\\
\clfe((b\fullor a)\fulland b)
&= \mfe(\wfa_\beta \fullor ((b\fullor a)\fulland b))
=\mfe((a\fullor \tr)\fulland b),
\\
\clfe(b\fulland(a\fullor b))
&= \mfe(\wfa_\beta \fullor (b\fulland(a\fullor b)))
=\mfe((a\fullor \tr)\fulland b).
\end{align*}
\end{example}

\begin{definition}
The relation $=_{\clfe}\subset \SP\times\SP$, \textbf{conditional full valuation congruence}, is 
defined by $P=_{\clfe}Q$ if $\clfe(P)=\clfe(Q)$. The relation
$=_{\clfeu}\subset \SPu\times\SPu$, \textbf{conditional full $\bm\und$-valuation congruence}, 
is defined by $P=_{\clfe}Q$ if $\clfeu(P)=\clfeu(Q)$.
\end{definition}

\begin{lemma}
\label{la:CLcongruence} 
The relations $=_{\clfe}$ and $=_{\clfeu}$ are both congruences. 
\end{lemma}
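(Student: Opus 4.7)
Both $=_\clfe$ and $=_\clfeu$ are trivially reflexive, symmetric and transitive, so the substantive task is closure under $\neg$, $\fulland$ and $\fullor$. My plan is to reduce the case for $\clfe$ to the congruence of $=_\mfe$ (Lemma~\ref{la:Mcongruence}) via a semantic characterisation of $\clfe$, and to obtain the case for $\clfeu$ by a short case analysis on whether $\und$ occurs.

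First I would observe that $\clfe(P)=\clfe(Q)$ already forces $\alpha(P)=\alpha(Q)$: every complete root-to-leaf path in $\mfe(\wfa_\beta\fullor P)$ traverses each atom of $\alp{\beta}=\alpha(P)$, because $\wfa_\beta$ sequentially evaluates them at the top of the tree and memorisation does not erase the first occurrence of any atom. So the set of atom labels of the tree equals $\alp{\beta}$, which is recoverable from the tree. Consequently, when $P_1=_\clfe Q_1$ and $P_2=_\clfe Q_2$, the string $\gamma\in A^s_o$ used in $\clfe(P_1\fulland P_2)$ and $\clfe(Q_1\fulland Q_2)$ (and analogously for $\fullor$ and $\neg$) is the same on both sides.

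The main ingredient is then a semantic lemma, proved by structural induction on $P$: for every $\beta\in A^s_o$ with $\alp{\beta}\supseteq\alpha(P)$, the tree $\mfe(\wfa_\beta\fullor P)$ is the perfect $\beta$-tree whose leaf at the end of the path corresponding to a truth assignment $\tau$ carries the ordinary Boolean value $v(P,\tau)$ of $P$ under $\tau$. The induction unfolds Definitions~\ref{def:fe} and~\ref{def:mfe}: the $\wfa_\beta$ prefix forces every atom of $\alp{\beta}$ to be evaluated first, and the memorisation functions $\Le_a$, $\Ri_a$ then collapse each later occurrence of such an atom to the branch fixed by the path. From this lemma, $\clfe(P)=\clfe(Q)$ is equivalent to $\alpha(P)=\alpha(Q)$ together with $v(P,\cdot)=v(Q,\cdot)$, and congruence under $\neg,\fulland,\fullor$ is then immediate because Boolean truth functions compose under $\neg,\wedge,\vee$ and alphabets combine by union.

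For $\clfeu$, if none of $P_1,Q_1,P_2,Q_2$ contains $\und$ the previous argument applies verbatim with $\mfeu,\feu$ in place of $\mfe,\fe$. Otherwise, as soon as one $P_i$ or $Q_i$ contains $\und$ the hypothesis $P_i=_\clfeu Q_i$ forces both to contain $\und$, because for $X\in\SP$ the tree $\mfeu(\wfa_\beta\fullor X)$ has all leaves in $\{\tr,\fa\}$ and thus is never the single-node tree $\und$. Any $\neg$, $\fulland$ or $\fullor$ composition on either side then still contains $\und$, so $\clfeu$ returns $\und$ on both sides. The main obstacle throughout is the semantic lemma: the bookkeeping of which atoms have been memorised along each path, and verifying that $\Le_a,\Ri_a$ really eliminate every later occurrence, is the delicate step; once it is established, the congruence closure follows by pure Boolean reasoning.
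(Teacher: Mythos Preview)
Your approach is correct but proceeds quite differently from the paper. The paper argues in two lines: from $P=_{\clfe}Q$ one first reads off $\alpha(P)=\alpha(Q)$ (so the ordered string $\beta$ used for $P\fulland R$ and $Q\fulland R$ coincides), and then invokes the congruence of $=_{\mfe}$ (Lemma~\ref{la:Mcongruence}) to conclude $\mfe(\widetilde\fa_\beta\fullor(P\fulland R))=\mfe(\widetilde\fa_\beta\fullor(Q\fulland R))$. That second step is rather compressed: it is not literally an instance of Lemma~\ref{la:Mcongruence}, since the assumption gives $\widetilde\fa_{\beta'}\fullor P=_{\mfe}\widetilde\fa_{\beta'}\fullor Q$ only for the smaller prefix $\beta'$ with $\alp{\beta'}=\alpha(P)$, and one still needs that inserting or removing this inner prefix under the larger outer $\widetilde\fa_\beta$ is harmless in \MFEL. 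Your route sidesteps this by proving the semantic characterisation ``$\mfe(\widetilde\fa_\beta\fullor P)$ is the perfect $\beta$-tree whose leaves are the Boolean values $v(P,\tau)$'', after which congruence is immediate from the compositionality of $v$ under $\neg,\wedge,\vee$. This is more work up front but makes the argument entirely self-contained and avoids the prefix-absorption step the paper leaves implicit. For $\clfeu$ your case split on whether $\und$ occurs is exactly right and matches the paper's ``in a similar way'' remark.
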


\begin{proof}
Because $=_{\mfe}$ is a congruence (Lemma~\ref{la:Mcongruence}), so is $=_{\clfe}$.
This follows from the fact that $=_\mfe$ 
preserves the alphabet of its arguments, that is, if $P=_{\mfe}Q$ then $\alpha(P)=\alpha(Q)$.
E.g., assume for $P,Q,R\in\SP$ that $P=_{\clfe}Q$, then 
$\clfe(P\fulland R)=\mfe(\widetilde\fa_\beta\fullor(P\fulland R))
\stackrel{\text{La.\ref{la:Mcongruence}}}=\mfe(\widetilde\fa_\beta\fullor(Q\fulland R))
=\clfe(Q\fulland R)$, where the last equality holds because $\beta$ satisfies 
$\alp{\beta}=\alpha(P\fulland R)=\alpha(Q\fulland R)$.

In a similar way it follows by Lemma~\ref{la:Mcongruenceu} that $=_{\clfeu}$ is a congruence. 
\end{proof}

\begin{definition}
\label{def:CLFEL}
\textbf{Conditional Fully Evaluated Left-Sequential Logic \emph{($\bm\CLFELtwo$)}} is the fully evaluated 
left-sequential logic that satisfies no more consequences than those of \clfe-equality, i.e., 
for all $P, Q \in\SP$,
\[
\CLFELtwo\models P=Q \iff P=_{\clfe} Q.
\]
\textbf{Conditional Fully Evaluated Left-Sequential Logic with undefinedness \emph{($\bm\CLFELu$)}} 
is the fully evaluated 
left-sequential logic with undefinedness that satisfies no more consequences than those of \clfeu-equality, i.e., 
for all $P, Q \in\SPu$,
\[
\CLFELu\models P=Q \iff P=_{\clfeu} Q.
\]
\end{definition}

We define the following sets 
of axioms for \CLFELtwo\ and \CLFELu:
\begin{align*}
\CLFELe&=\MFELe\cup\{x \fulland y=y\fulland x\},
\\
\CLFELeu&=\MFELeu\cup\{x \fulland y=y\fulland x\}.
\end{align*}

\begin{lemma}[Soundness]
\label{la:CLsound}
$1.$ For all $P,Q\in\SP$, $\CLFELe\vdash P=Q~\Longrightarrow \CLFELtwo\models P=Q$,
and 
\\[1mm]
$2.$ For all $P,Q\in\SPu$, $\CLFELeu\vdash P=Q~\Longrightarrow \CLFELu\models P=Q$.
\end{lemma}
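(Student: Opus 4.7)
The plan is to exploit the fact that $\CLFELe$ (respectively $\CLFELeu$) extends $\MFELe$ (respectively $\MFELeu$) by only a single axiom, commutativity $x \fulland y = y \fulland x$, together with the fact that $=_\clfe$ and $=_\clfeu$ are congruences by Lemma~\ref{la:CLcongruence}. So it suffices to show each closed instance of each axiom holds. First, I would dispatch the $\MFELe$-axioms (respectively $\MFELeu$-axioms) by appealing to Lemma~\ref{la:Msound} (respectively Lemma~\ref{la:Msoundu}): if $P_1 = P_2$ is a closed instance of such an axiom, then $P_1 =_\mfe P_2$, and since $=_\mfe$ preserves alphabets (as noted in the proof of Lemma~\ref{la:CLcongruence}), the same $\beta \in A^s_o$ is used in evaluating $\clfe(P_1)$ and $\clfe(P_2)$. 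Congruence of $=_\mfe$ then gives $\wfa_\beta \fullor P_1 =_\mfe \wfa_\beta \fullor P_2$, hence $\clfe(P_1) = \clfe(P_2)$. In the $\CLFELeu$ case, if either $P_i$ contains $\und$ then so does the other (since the $\MFELeu$-axioms preserve this property), and both sides fall to the $\und$ branch of $\clfeu$.

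The substantive remaining task is to verify commutativity, i.e. $P \fulland Q =_\clfe Q \fulland P$ for all closed $P, Q \in \SP$, and analogously in the $\und$-setting. Since $\alpha(P \fulland Q) = \alpha(Q \fulland P)$, both sides use the same $\beta$, and the required identity reduces to
\[
\mfe(\wfa_\beta \fullor (P \fulland Q)) = \mfe(\wfa_\beta \fullor (Q \fulland P)).
\]
The plan here is to establish a structural lemma: for any $R \in \SP$ with $\alpha(R) \subseteq \alp{\beta}$, the tree $\mfe(\wfa_\beta \fullor R)$ is the perfect binary tree of depth $|\beta|$ whose depth-$i$ internal nodes are labelled by the $i$-th atom of $\beta$, and whose leaf at the end of each root-to-leaf path is the classical truth value of $R$ under the Boolean assignment dictated by that path ($\tr$ for a left branch, $\fa$ for a right branch). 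Granting this lemma, commutativity of $\fulland$ follows at once, because classical Boolean conjunction is commutative, so $P \fulland Q$ and $Q \fulland P$ induce the same truth function on $\alp{\beta}$.

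For $\CLFELeu$, the same structural lemma suffices in the $\und$-free subcase; if either of $P$, $Q$ contains $\und$ then both $P \fulland Q$ and $Q \fulland P$ contain $\und$ and $\clfeu$ returns $\und$ on both sides by definition. The main obstacle is the rigorous verification of the structural lemma above. I would prove it by fixing an assignment $\alpha : \alp{\beta} \to \{\tr, \fa\}$, tracing the corresponding path in $\mfe(\wfa_\beta \fullor R)$, and showing by induction on $R$ that the leaf reached equals $R$'s truth-functional value under $\alpha$. The bookkeeping concerns how the operators $\memt$, $\Le_a$, $\Ri_a$ interact with the context $\wfa_\beta \fullor \_$; the key point to track is that once $\wfa_\beta$ has forced each $b \in \alp{\beta}$ to be evaluated at a fixed position above, all subsequent occurrences of $b$ inside $\fe(R)$ are pruned by $\Le_b$ or $\Ri_b$ in accordance with $\alpha(b)$, which is exactly the truth-functional recursion.
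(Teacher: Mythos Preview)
Your proposal is correct and takes a genuinely different route from the paper. Both approaches agree on the high-level reduction: since $=_\clfe$ (resp.\ $=_\clfeu$) is a congruence, it suffices to check closed instances of the axioms, and the $\MFELe$-axioms are handled via $\MFELe$-soundness plus alphabet preservation. The divergence is in how commutativity is verified. The paper proceeds \emph{syntactically}: it proves in $\MFELe$ the equation $\wfa_\beta\fullor(P\fulland Q)=\wfa_\beta\fullor(Q\fulland P)$ by first establishing the auxiliary identities $(x\fulland\fa)\fullor(x\fulland y)=(x\fulland\fa)\fullor(y\fulland x)$ (via \emph{Prover9}) and $\wfa_\beta\fullor(P\fulland\fa)=\wfa_\beta$ (by induction on $\beta$), then chaining these together and invoking $\MFELe$-soundness. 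Your approach is \emph{semantic}: you characterise $\mfe(\wfa_\beta\fullor R)$ directly as the perfect $\beta$-shaped tree whose leaves record the classical truth values of $R$, and commutativity then falls out of the commutativity of Boolean $\land$. The paper's route is shorter on paper (two small lemmas plus a four-line derivation), but leans on a machine-found identity; your route is more transparent about \emph{why} commutativity holds, at the cost of the bookkeeping you flag concerning how $\memt$, $\Le_a$, $\Ri_a$ interact with the prefix $\wfa_\beta\fullor\_$. That bookkeeping is genuine but routine once one observes that $\fe(\wfa_\beta)$ has only $\fa$-leaves, so $\fe(\wfa_\beta\fullor R)=\fe(\wfa_\beta)[\fa\mapsto\fe(R)]$, after which the pruning functions collapse $\fe(R)$ to a single leaf along each path.
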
 

\begin{proof}
$1.$ By Lemma~\ref{la:CLcongruence}, the relation $=_\clfe$ is a congruence on \SP, so it suffices to show
that all closed instances of the \CLFELe-axioms satisfy $=_\clfe$.
By Lemma~\ref{la:Msound}, we only have to prove this for the axiom $x \fulland y=y\fulland x$.
Let $P,Q \in\SP$ and let $\beta\in A^s_o$ be such that $\alp{\beta}=\alpha(P\fulland Q)$. 
In Appendix~\ref{app:A6} we prove
that $\mfe(\widetilde\fa_\beta\fullor (P\fulland Q))=\mfe(\widetilde\fa_\beta\fullor (Q\fulland P))$, 
hence $\clfe(P \fulland Q) = \clfe(Q\fulland P)$.
\\[2mm]
$2.$
Soundness of \CLFELeu\ follows in a similar way (see also Appendix~\ref{app:A6}).
\end{proof}

We first prove completeness of \CLFELe, and then discuss the completeness of \CLFELeu.
\begin{lemma}
\label{la:crux3}
$\CLFELe\vdash \auxf(x, ~\auxf(y,z,u), ~\auxf(y,v,w)) = \auxf(y, ~\auxf(x,z,v), ~\auxf(x,u,w))$.
\end{lemma}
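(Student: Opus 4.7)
The plan is to unfold the definition $\auxf(x,y,z)=(x\fulland y)\fullor(\neg x\fulland z)$ on both sides and reduce the resulting terms to a common ``disjunctive normal form'' of four three-fold conjunctions, after which the identity becomes a matter of rearranging factors and summands by commutativity and associativity.

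First, I would unfold $\auxf$ twice on each side. The left-hand side becomes
\[
\bigl(x\fulland((y\fulland z)\fullor(\neg y\fulland u))\bigr)\fullor\bigl(\neg x\fulland((y\fulland v)\fullor(\neg y\fulland w))\bigr),
\]
and the right-hand side becomes
\[
\bigl(y\fulland((x\fulland z)\fullor(\neg x\fulland v))\bigr)\fullor\bigl(\neg y\fulland((x\fulland u)\fullor(\neg x\fulland w))\bigr).
\]

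Second, I would apply the distributivity law \eqref{C4} (from Lemma~\ref{la:xx}, available in \MFELe\ and hence in \CLFELe) to each outer conjunction, pushing the factors $x$, $\neg x$, $y$, $\neg y$ inside the inner disjunctions. Using associativity \eqref{FFEL4} together with its $\fullor$-dual, the left-hand side rewrites to
\[
(x\fulland y\fulland z)\fullor(x\fulland\neg y\fulland u)\fullor(\neg x\fulland y\fulland v)\fullor(\neg x\fulland\neg y\fulland w),
\]
while the right-hand side rewrites analogously to
\[
(y\fulland x\fulland z)\fullor(y\fulland\neg x\fulland v)\fullor(\neg y\fulland x\fulland u)\fullor(\neg y\fulland\neg x\fulland w).
\]

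Third, I would invoke the new axiom $x\fulland y=y\fulland x$ of \CLFELe\ to permute the factors inside each of the four conjunctions so that the variable order matches $x$ before $y$ everywhere. To then match the order of the four disjuncts, I would use commutativity of $\fullor$, which is an easy consequence of the axioms: by \eqref{FFEL2} and \eqref{FFEL3}, $x\fullor y=\neg(\neg x\fulland\neg y)=\neg(\neg y\fulland\neg x)=y\fullor x$. Reordering the summands using this together with associativity identifies the two sides, completing the derivation.

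The argument is essentially just propositional reasoning made careful, so the main obstacle is not conceptual but bookkeeping: tracking the repeated applications of associativity and commutativity of $\fullor$ (derived) and $\fulland$ (axiomatic) without confusion. In practice, as the authors suggest elsewhere, such an equation is exactly the kind of routine derivation that \emph{Prover9} dispatches in a few seconds; the handwritten proof sketched above merely records the underlying structure.
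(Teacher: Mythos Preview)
Your proposal is correct and follows essentially the same approach as the paper: unfold $\auxf$, distribute via \eqref{C4}, then use commutativity and associativity of $\fulland$ and $\fullor$ to rearrange the four three-fold conjunctions. The paper works left-to-right (distribute, permute, then undistribute) rather than unfolding both sides and meeting in the middle, but the underlying steps are identical.
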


\begin{proof}
By commutativity and associativity of $\fulland$ and $\fullor$, and \eqref{C4} (distributivity):
\begin{align*}
\auxf(x,~&\auxf(y,z,u),~\auxf(y,v,w))\\
\quad
&= (x \fulland ((y \fulland z) \fullor(\neg y \fulland u))) \fullor
(\neg x \fulland ((y \fulland v) \fullor(\neg y \fulland w)))\\
&= ((x \fulland (y \fulland z)) \fullor(x \fulland (\neg y \fulland u))) \fullor
((\neg x \fulland (y \fulland v)) \fullor(\neg x \fulland (\neg y \fulland w)))\\
&= ((y \fulland (x \fulland z)) \fullor(\neg y \fulland (x \fulland u))) \fullor
((y \fulland (\neg x \fulland v)) \fullor(\neg y \fulland (\neg x \fulland w)))\\
&= (y \fulland ((x \fulland z) \fullor(\neg x \fulland v))) \fullor
(\neg y \fulland ((x \fulland u) \fullor(\neg x \fulland w)))\\
&= \auxf(y,~\auxf(x,z,v),~\auxf(x,u,w)).
\end{align*}
\end{proof}

\begin{lemma}
\label{la:crux4}
Let $\sigma\in A^s$ and $\sigma'$ be a permutation of $\sigma$. 
Then for each $\sigma$-normal $P_\sigma$ there is a $\sigma'$-normal form $Q$ such that 
$\CLFELe\vdash P_\sigma=Q_{\sigma'}$.
\end{lemma}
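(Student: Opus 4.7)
The plan is to reduce the claim to a single adjacent transposition and then handle that by induction on where the swap occurs, using Lemma~\ref{la:crux3} at the base. Since every permutation $\sigma'$ of $\sigma$ can be produced by a finite sequence of adjacent transpositions, and since provable equality is transitive, it suffices to prove: for every $\sigma \in A^s$ and every index $k$ (with $1 \le k < |\sigma|$), if $\sigma'$ is obtained from $\sigma$ by swapping its $k$-th and $(k+1)$-th letters, then every $\sigma$-normal form $P_\sigma$ is provably equal (in $\CLFELe$) to some $\sigma'$-normal form.

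First I would handle the case $k = 1$, which is the heart of the argument. Here $\sigma = a b \rho$ and $\sigma' = b a \rho$ with $a, b \in A$ distinct and $\rho \in A^s$. By the grammar of normal forms from Definition~\ref{def:Mnf}, any $\sigma$-normal form can be written as
\[
P_\sigma = \auxf\bigl(a, \auxf(b, P_{1}, P_{2}), \auxf(b, P_{3}, P_{4})\bigr)
\]
for suitable $\rho$-normal forms $P_1, P_2, P_3, P_4$. Lemma~\ref{la:crux3} (with the variables $x, y$ instantiated to $a, b$ and $z, u, v, w$ to $P_1, P_2, P_3, P_4$) immediately yields
\[
\CLFELe \vdash P_\sigma = \auxf\bigl(b, \auxf(a, P_{1}, P_{3}), \auxf(a, P_{2}, P_{4})\bigr),
\]
and the right-hand side is visibly a $\sigma'$-normal form.

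For $k \ge 2$, I would proceed by induction on the length of the common prefix that is fixed by the swap, equivalently by induction on $k$. Write $\sigma = a \tau$, so that $\sigma' = a \tau'$ where $\tau'$ is obtained from $\tau$ by swapping its $(k-1)$-th and $k$-th letters. Any $\sigma$-normal form has the shape $P_\sigma = \auxf(a, P_1, P_2)$ with $P_1, P_2$ both $\tau$-normal forms. The induction hypothesis applied to $\tau$ and its adjacent swap produces $\tau'$-normal forms $Q_1, Q_2$ with $\CLFELe \vdash P_i = Q_i$ for $i = 1,2$. Since $=_{\clfe}$ is a congruence (Lemma~\ref{la:CLcongruence}), and $\CLFELe$ is closed under substitution, we obtain $\CLFELe \vdash P_\sigma = \auxf(a, Q_1, Q_2)$, and the latter is a $\sigma'$-normal form by Definition~\ref{def:Mnf}.

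The only subtle point is the base case: the rearrangement of four $\rho$-normal subterms prescribed by Lemma~\ref{la:crux3} has exactly the pattern demanded by Definition~\ref{def:Mnf} for a $(b a \rho)$-normal form (left child built from $P_1, P_3$, right child from $P_2, P_4$), so no further normalisation is needed. The inductive step is purely syntactic, using that the outer $\auxf(a, \cdot, \cdot)$ layer is untouched by the swap and that congruence allows us to replace its arguments by provably equal terms. I therefore expect the main obstacle to be purely notational, namely keeping the indexing of swaps and the matching four-way decomposition of normal forms in step; the algebraic content of the proof is entirely carried by Lemma~\ref{la:crux3}.
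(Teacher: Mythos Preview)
Your proof is correct. Both you and the paper use Lemma~\ref{la:crux3} as the sole algebraic ingredient, but the organisation differs. You first decompose an arbitrary permutation into adjacent transpositions and then handle a single adjacent swap by induction on its position, peeling off the outer $\auxf(a,\cdot,\cdot)$ layer until the swap reaches the root. The paper instead inducts directly on $|\sigma|$ for an arbitrary target permutation $\sigma'$: writing $\sigma=a_0\rho$ and $\sigma'=b_0\ldots$, it first uses the induction hypothesis on $\rho$ to bring $b_0$ to the top of both subtrees, applies Lemma~\ref{la:crux3} once to swap $a_0$ and $b_0$ at the root, and then uses the induction hypothesis again to reorder the resulting subtrees. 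Your route is more elementary and keeps the bookkeeping very transparent; the paper's route avoids the outer reduction to transpositions at the price of a slightly more intricate double use of the induction hypothesis. One small quibble: where you invoke Lemma~\ref{la:CLcongruence}, what you actually need is the congruence rule of equational derivability (replacing equals by equals inside $\auxf(a,\cdot,\cdot)$), not the semantic congruence $=_{\clfe}$; the argument is fine once that reference is adjusted.
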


\begin{proof} By induction on $\sigma$.
If $\sigma=\epsilon$ or $\sigma=a\in A$, this is trivial. 

Assume $\sigma=a_0a_1...a_k$ and $\sigma'= b_0b_1...b_k$. If $a_0=b_0$,  
we are done by induction.
Otherwise, $a_0=b_\ell$ with $\ell\ne 0$ and by induction, $P_\sigma 
=\auxf(a_0,R_{\rho'} ,S_{\rho'} )$ with $\rho'=b_0\rho''$ for some $\rho''\in A^s$.
By Lemma~\ref{la:crux3}, $\auxf(a_0,R_{\rho'} ,S_{\rho'} )=\auxf(b_0,V_{a_0\rho''} ,W_{a_0\rho''} )$.
Now, either we are done, or  
there is a permuation $p()$ of $a_0\rho''$ such that $b_0 p(a_0\rho'')=\sigma'$.
By induction, $\auxf(b_0,V_{a_0\rho''} ,W_{a_0\rho''} )=\auxf(b_0,V_{p(a_0\rho'')}' ,W_{p(a_0\rho'')}' )
=Q_{\sigma'}$.
\end{proof}

\begin{definition}
A \textbf{$\bm\beta$-normal form} is a $\sigma$-normal form (Definition~\ref{def:Mnf})
with $\sigma\in A^s_o$. We denote $\beta$-normal forms by $P_\beta$, $Q_\beta$, etc.
\end{definition}

\begin{lemma}
\label{la:cl}
Let $\beta\in A^s_o$ and $\sigma$ a permutation of $\beta$. 
Then in \CLFELtwo, each $\sigma$-normal form $P_\sigma$ is provably equal to a $\beta$-normal form.
\end{lemma}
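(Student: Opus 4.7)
The plan is to observe that this lemma is essentially an immediate application of Lemma~\ref{la:crux4}. Since $\sigma$ is a permutation of $\beta$, it follows symmetrically that $\beta$ is a permutation of $\sigma$. Lemma~\ref{la:crux4}, applied with the target permutation $\sigma' := \beta$, then delivers a $\beta$-normal form $Q_\beta$ with $\CLFELe\vdash P_\sigma = Q_\beta$. Since $\beta \in A^s_o$, a $\beta$-normal form is by definition a $\sigma'$-normal form with ordered index, so the resulting $Q_\beta$ is the desired $\beta$-normal form.

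Concretely, I would first note that Lemma~\ref{la:crux4} already provides all the real work: it uses the commutativity axiom $x\fulland y = y \fulland x$ (on top of $\MFELe$) to rotate the root atom of any $\sigma$-normal form via Lemma~\ref{la:crux3}, and it then recurses into the two $\auxf$-branches to move toward any prescribed permutation of $\sigma$. Applied in our setting, the target permutation $\beta$ is singled out as ``the'' ordered permutation in $A^s_o$; nothing else is needed.

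Thus the proof is a one-line invocation. The only subtlety to check is the direction of the ``permutation" relation: Lemma~\ref{la:crux4} is stated as ``for each $\sigma$-normal form $P_\sigma$ and each permutation $\sigma'$ of $\sigma$'', so I would explicitly remark that being a permutation is a symmetric relation, whence $\sigma$ a permutation of $\beta$ implies $\beta$ a permutation of $\sigma$, which is what Lemma~\ref{la:crux4} requires to yield a $\beta$-normal form provably equal to $P_\sigma$. No new calculations, no obstacle to speak of.
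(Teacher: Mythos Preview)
Your proposal is correct and matches the paper's own proof, which is the one-line ``By Lemma~\ref{la:crux4}.'' Your additional remark about the symmetry of the permutation relation is a harmless clarification the paper leaves implicit.
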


\begin{proof} 
By Lemma~\ref{la:crux4}.
\end{proof}

\begin{theorem}[Completeness]
\label{thm:CLFELe}
The logic \CLFELtwo\ is axiomatised by \CLFELe.
\end{theorem}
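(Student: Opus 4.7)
The plan is to mimic the template of Theorem~\ref{thm:MFELe}. Soundness is Lemma~\ref{la:CLsound}.1. For completeness, I would establish two facts: (a) every $P\in\SP$ is $\CLFELe$-provably equal to a $\beta$-normal form for some $\beta\in A^s_o$, and (b) two $\beta$-normal forms (with the same $\beta$) having the same $\clfe$-image are syntactically identical. Fact (a) follows immediately by chaining Lemma~\ref{la:MFELe} (which produces a $\sigma$-normal form for any term) with Lemma~\ref{la:cl} (which permutes its atoms into the unique ordered string $\beta\in A^s_o$ satisfying $\alp{\beta}=\alpha(P)$).

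For fact (b), the key step is to show that for any $\beta$-normal form $R_\beta$,
\[
\clfe(R_\beta)=\mfe(\widetilde\fa_\beta\fullor R_\beta)=\mfe(R_\beta),
\]
or equivalently $\MFELe\vdash \widetilde\fa_\beta\fullor R_\beta=R_\beta$. I would prove the latter by induction on $|\beta|$. The base case $\beta=\epsilon$ reduces to $\fa\fullor R=R$, which is the dual of~\eqref{FFEL5}. For the inductive step with $\beta=b\beta'$ and $R_\beta=(b\fulland S_1)\fullor(\neg b\fulland S_2)$, I would expand $\widetilde\fa_\beta=b\fulland\widetilde\fa_{\beta'}$ (Lemma~\ref{la:nul}), move the leading $b\fulland\widetilde\fa_{\beta'}$ past the outer $\fullor$ by associativity, merge $(b\fulland\widetilde\fa_{\beta'})\fullor(b\fulland S_1)$ into $b\fulland(\widetilde\fa_{\beta'}\fullor S_1)$ via distributivity~\eqref{C4}, and then apply the induction hypothesis to the $\beta'$-normal form $S_1$.

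Granting this, the final assembly is routine. Assume $P_1=_\clfe P_2$ and let $R_{\beta_i}$ ($i=1,2$) be the $\beta_i$-normal forms produced from $P_i$ via (a). Soundness yields $\clfe(R_{\beta_1})=\clfe(R_{\beta_2})$, and the key step converts this to $\mfe(R_{\beta_1})=\mfe(R_{\beta_2})$. Reading off the atoms that label this shared memorising tree recovers $\alp{\beta_1}=\alp{\beta_2}$, so $\beta_1=\beta_2$ since both lie in $A^s_o$. The uniqueness clause of Lemma~\ref{la:MFELe} (injectivity of $\mfe$ on $\sigma$-normal forms) then forces the two $\beta$-normal forms to be syntactically identical, so $\CLFELe\vdash P_1=R_{\beta_1}=R_{\beta_2}=P_2$.

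The main obstacle is the technical identity $\MFELe\vdash \widetilde\fa_\beta\fullor R_\beta=R_\beta$: the induction requires carefully balancing the associativity and distributivity rewrites against the recursive structure of $\beta$-normal forms, and the base case depends on having access to the dual of~\eqref{FFEL5}. Everything else is a direct transplant of the completeness argument for \MFELe, with Lemma~\ref{la:cl} providing the bridge from arbitrary $\sigma$-normal forms to the canonical $\beta$-normal forms required by the $\clfe$-semantics.
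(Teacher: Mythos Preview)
Your proposal is correct and follows the paper's template; the paper's proof is extremely terse (it just says ``by assumption and Lemma~\ref{la:cl} there is a $\beta$-normal form $Q_{\beta}$ such that $\CLFELe\vdash P_1=Q_{\beta}=P_2$''), leaving implicit the injectivity of $\clfe$ on $\beta$-normal forms that you spell out via the identity $\MFELe\vdash\widetilde\fa_\beta\fullor R_\beta=R_\beta$. Your inductive proof of that identity (using the dual of~\eqref{FFEL5}, associativity, and~\eqref{C4}) is sound and stays entirely within \MFELe, so no commutativity is needed at that step.
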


\begin{proof}
By Lemma~\ref{la:CLsound}, \CLFELe\ is sound. For completeness, assume $P_1=_{\clfe} P_2$, thus,
for some $\beta\in A^s_o$, $\alpha(P_i)=\alp{\beta}$.
By assumption and Lemma~\ref{la:cl} there is a $\beta$-normal form $Q_{\beta}$ such that 
$\CLFELe\vdash P_1=Q_{\beta}=P_2$.
\end{proof}

The completeness of \CLFELeu\ can be proved in the same way.
In comparison with \CLFELtwo, there is  only one additional $\beta$-normal form \und.

\begin{theorem}[Completeness]
\label{thm:CLFELeu}
The logic \CLFELu\ is axiomatised by \CLFELeu.
\end{theorem}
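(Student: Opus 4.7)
The plan is to follow the strategy of Theorem~\ref{thm:CLFELe} verbatim, with the collection of $\beta$-normal forms enlarged by the single new element \und, as hinted by the authors. Soundness is Lemma~\ref{la:CLsound}.2, so it remains to prove completeness. Assume $P_1 =_{\clfeu} P_2$ and split on whether either $P_i$ contains \und.

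Case 1 (neither $P_i$ contains \und): Then $P_1,P_2 \in \SP$, and from $\clfeu(P_1)=\clfeu(P_2)$ one reads off $\alpha(P_1)=\alpha(P_2)=\alp{\beta}$ for the unique $\beta\in A^s_o$ with $\alp{\beta}=\alpha(P_i)$, since distinct alphabets yield non-isomorphic tree shapes. For \und-free inputs $\mfeu$ differs from $\mfe$ only by adding a middle \und-branch at each internal node, so $P_1 =_{\clfeu} P_2$ is equivalent to $P_1 =_{\clfe} P_2$. Theorem~\ref{thm:CLFELe} then gives $\CLFELe\vdash P_1=P_2$, hence $\CLFELeu\vdash P_1=P_2$.

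Case 2 ($P_1$ contains \und): Then $\clfeu(P_1)=\und$, which forces $\clfeu(P_2)=\und$; since for $P\in\SP$ the tree $\clfeu(P)$ is never the single leaf \und, $P_2$ must also contain \und. The key new ingredient, and the only genuinely new work beyond the $\CLFELtwo$ and $\MFELu$ proofs, is the collapse $\CLFELeu\vdash\und_\sigma=\und$ for every $\sigma\in A^*$, which expresses the author's remark that \und\ is the sole additional $\beta$-normal form. This is a short induction on $|\sigma|$: the step is $\und_{a\rho}=a\fulland\und_\rho\stackrel{\mathrm{IH}}{=}a\fulland\und=\und\fulland a=\und$, using commutativity and the axiom $\und\fulland x=\und$. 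Lemma~\ref{la:aux2}, applied in $\MFELeu\subseteq\CLFELeu$, then supplies $\sigma_1,\sigma_2\in A^*$ with $\CLFELeu\vdash P_i=\und_{\sigma_i}=\und$, and so $\CLFELeu\vdash P_1=P_2$.

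No step is genuinely difficult; the only subtlety worth flagging is the observation that $\mfeu$ restricted to \und-free terms carries the same information as $\mfe$, which permits Case 1 to be delegated to Theorem~\ref{thm:CLFELe} rather than reproved.
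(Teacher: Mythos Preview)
Your proof is correct and follows essentially the same approach as the paper. The paper's proof is a one-line remark that the argument of Theorem~\ref{thm:CLFELe} goes through with \und\ as the single additional $\beta$-normal form; your case split on the presence of \und, together with the explicit collapse $\und_\sigma=\und$ via commutativity and the axiom $\und\fulland x=\und$, is precisely the content of that remark spelled out in detail.
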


It follows that \CLFELu\ is equivalent to  Bochvar's well-known 
three-valued logic~\cite{Boc38}.
In Comment~\ref{com:Boch} (on independent axiomatisations)
we discuss this equivalence, or more precisely, 
the equivalence of \CLFELeu\ and the equational axiomatisation of Bochvar's logic 
discussed in~\cite{BBR95}.

\section{Static \FEL\ (\SFEL), short-circuit logic, and independence}
\label{sec:7}

In this section we define Static \FEL\ (\SFEL), the strongest two-valued \FEL\ we consider and
which is equivalent to a sequential of propositional logic.
\SFEL\ is axiomatised by adding $x\fulland \fa=\fa$ to the axioms of (two-valued) \CLFELtwo. 
It is immediately clear that \SFEL\ cannot be extended with \und: $\fa=\und\fulland\fa=\und$. 
In \SFEL, the difference between \emph{full} and \emph{short-circuit} evaluation 
has disappeared and therefore we can reuse results on \emph{static short-circuit logic}, which 
we briefly introduce.
Finally, we review all \FEL-axiomatisations and give 
simple equivalent alternatives with help of \emph{Prover9}, all of which are independent. 

We assume $(A,<)$ and $A^s_o$ (Definition~\ref{def:XX})
and consider terms whose alphabet is constrained by some $\beta\in A^s_o$ and
for which we define an evaluation funtion $\sfe_\beta$.

\begin{definition}
For any $\beta\in A^s_o$,  $\SPbeta=\{P\in\SP\mid \alpha(P)\subset\alp{\beta}\}$.
For any $P\in\SPbeta$, 
\[
\sfe_\beta(P)=\mfe(\wfa_\beta\fullor P).
\]
The relation $=_{\sfe,\beta}\subset \SPbeta\times\SPbeta$ is called \textbf{static full valuation 
$\bm\beta$}-congruence
and is defined by $P=_{\sfe,\beta}Q$ if $\sfe_\beta(P)=\sfe_\beta(Q)$.
\end{definition}

The crucial difference between $\sfe_\beta$ and the evaluation function $\clfe$ 
(Definition~\ref{def:clfe}) concerns their domain: for $\sfe_\beta(P)$ it is only
required that $\alpha(P)$ is a subset of $\alp\beta$.

\begin{example}
If $\beta=ab$, we find that
\begin{align*}
&(\tr\unlhd b\unrhd \tr)\unlhd a\unrhd(\tr\unlhd b\unrhd \tr)
=
\sfe_{ab}(\tr)
=\sfe_{ab}(\neg b \fullor b)
=\sfe_{ab}(b\fullor \tr),
\\[2mm]
&(\tr\unlhd b\unrhd \tr)\unlhd a\unrhd(\fa\unlhd b\unrhd \fa)
\;=\sfe_{ab}(a)
=
\sfe_{ab}((b\fulland\fa) \fullor a)
=
\sfe_{ab}(\auxf(b,a,a)),
\\[2mm]
&(\tr\unlhd b\unrhd \fa)\unlhd a\unrhd(\tr\unlhd b\unrhd \fa)
\;=
\sfe_{ab}(b)
=
\sfe_{ab}(b\fullor(a\fulland\fa)))
=
\sfe_{ab}(\auxf(a,b,b)).
\end{align*}
\end{example}

\begin{definition}
\label{def:SFEL}
\textbf{Static Fully Evaluated Left-Sequential Logic \emph{\SFEL}} is the fully evaluated 
left-sequential logic that satisfies for any $\beta\in A^s_o$
no more consequences than those of $\sfe_\beta$-equality, i.e., 
for all $P, Q \in\SPbeta$,
\[
\SFEL\models P=Q \iff P=_{\sfe_\beta} Q.
\]
\end{definition}

We define the following set \SFELe\ of axioms for \SFEL:
\[
\SFELe=\CLFELe\cup\{x\fulland \fa=\fa\}.
\]

In order to prove completeness of \SFELe, we reuse a result from \emph{Short-Circuit Logic} (SCL),
a family of logics based on the connectives ${\leftand}$ and ${\leftor}$ that prescribe 
short-circuit evaluation:
once the evaluation result of a propositional expression is determined, evaluation stops. Typically,
$a\leftand b$ has as its semantics the evaluation tree in which atom $b$ is only evaluated if $a$ evaluates to $\true\,$:
\begin{center}
\begin{tikzpicture}[%
      level distance=7.5mm,
      level 1/.style={sibling distance=15mm},
      level 2/.style={sibling distance=7.5mm},
      baseline=(current bounding box.center)]
      \node (a) {$a$}
        child {node (b1) {$b$}
          child {node (d1) {$\tr$}} 
          child {node (d2) {$\fa$}}
        }
        child {node (b2) {$\fa$}
        };
      \end{tikzpicture}
\end{center}
The \SCL-family comprises the counterparts of the \FEL s 
presented in this paper, that is, \FSCL, \MSCL, \CLSCLtwo\ and \SSCL, and in all these logics 
the connective $\fulland$ is definable, as was noted in~\cite{Stau,BPS18,BP23}:
\[
x\fulland y=(x\leftor(y\leftand\fa))\leftand y.
\]
We return to these facts in Section~\ref{sec:8}.

\begin{table}
\centering
\hrule
\begin{align}
\fa&=\neg\tr
\label{Mem1}
\tag{Mem1}
\\
x\leftor y&=\neg(\neg x\leftand\neg y)
\label{Mem2}
\tag{Mem2}
\\[0mm]
\label{Mem3}
\tag{Mem3}
\tr\leftand x&=x
\\ 
\label{Mem4}
\tag{Mem4}
x\leftand(x\leftor y)&=x
\\[0mm]
\label{Mem5}
\tag{Mem5}
(x\leftor y)\leftand z&=(\neg x\leftand (y\leftand z))\leftor(x\leftand z)
\\[0mm]
\label{S6}
\tag{Comm}
x\leftand y&=y\leftand x
\end{align}
\hrule
\caption{$\SSCLe$, a set of axioms for \SSCL\ 
(axioms~\eqref{Mem1}--\eqref{Mem5} are used in Appendix~\ref{app:A4})}
\label{tab:SSCL}
\end{table}

\begin{theorem}[Completeness]
\label{thm:SFELe}
The logic \SFEL\ is axiomatised by \SFELe.
\end{theorem}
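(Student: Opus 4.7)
The plan is to reduce completeness of $\SFELe$ for $\SFEL$ to the already-established completeness of $\MFELe$ for $\MFEL$ (Theorem~\ref{thm:MFELe}); the new axiom $x \fulland \fa = \fa$ is precisely what bridges $\sfe_\beta$-equivalence and $\mfe$-equivalence of the padded terms $\wfa_\beta \fullor P$.

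For soundness, the inherited $\CLFELe$-axioms transfer to $\SFEL$ directly: since $\sfe_\beta(P) = \mfe(\wfa_\beta \fullor P)$, congruence of $=_{\mfe}$ (Lemma~\ref{la:Mcongruence}) immediately yields soundness on $\SPbeta$. For the new axiom I would check that for any $P \in \SPbeta$, both $\sfe_\beta(P \fulland \fa)$ and $\sfe_\beta(\fa) = \mfe(\wfa_\beta)$ evaluate to the perfect $\beta$-tree with only $\fa$-leaves. After the $\wfa_\beta$-prefix has read every atom in $\alp{\beta} \supseteq \alpha(P)$, memorising semantics forces any subsequent occurrence of such an atom in $P$ to reuse the recorded value, so no further branching arises, and the trailing $\fulland \fa$ produces $\fa$ at every leaf. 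The requirement $P \in \SPbeta$ is essential here.

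For completeness, suppose $P, Q \in \SPbeta$ with $P =_{\sfe_\beta} Q$, i.e., $\mfe(\wfa_\beta \fullor P) = \mfe(\wfa_\beta \fullor Q)$, so $\wfa_\beta \fullor P =_{\mfe} \wfa_\beta \fullor Q$. Theorem~\ref{thm:MFELe} then yields $\MFELe \vdash \wfa_\beta \fullor P = \wfa_\beta \fullor Q$, and a fortiori $\SFELe \vdash \wfa_\beta \fullor P = \wfa_\beta \fullor Q$. A routine induction on the length of $\beta$, using the new axiom $x \fulland \fa = \fa$ and associativity \eqref{FFEL4}, gives $\SFELe \vdash \wfa_\beta = \fa$; combined with the dual $\fa \fullor x = x$ of \eqref{FFEL5}, we obtain the chain $\SFELe \vdash P = \fa \fullor P = \wfa_\beta \fullor P = \wfa_\beta \fullor Q = \fa \fullor Q = Q$. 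The main subtlety is the soundness verification for $x \fulland \fa = \fa$ on $\SPbeta$; the completeness half is then a short consequence of Theorem~\ref{thm:MFELe} and the collapse $\wfa_\beta = \fa$.
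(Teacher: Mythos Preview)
Your approach is correct and takes a genuinely different route from the paper. The paper reduces to the completeness of \SSCL\ from~\cite{BPS18}: it observes that in the static setting $\fulland$ and $\leftand$ coincide, uses \emph{Prover9} to show that \SFELe\ and \SSCLe\ are inter-derivable, and then imports the external completeness theorem for \SSCLe. Your argument stays entirely within the \FEL\ framework of this paper, reducing instead to Theorem~\ref{thm:MFELe} (completeness of \MFELe) via the collapse $\SFELe \vdash \wfa_\beta = \fa$ and the identity $\fa \fullor x = x$. This is more self-contained and arguably cleaner, since it avoids the detour through short-circuit logic and the reference to~\cite{BPS18}; the paper's route, on the other hand, exploits work already done elsewhere.

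One small gap in your soundness half: appealing to congruence of $=_\mfe$ covers the \MFELe-axioms but not commutativity $x \fulland y = y \fulland x$, which fails in \MFEL. For that axiom you need the argument behind Lemma~\ref{la:CLsound} (Appendix~\ref{app:A6}), whose key step Lemma~\ref{la:C} only assumes $\alpha(P) \subseteq \alp{\beta}$ and therefore applies to $\sfe_\beta$ just as well as to $\clfe$. In fact Lemma~\ref{la:C} also gives you the soundness of $x \fulland \fa = \fa$ directly: it yields $\MFELe \vdash \wfa_\beta \fullor (P \fulland \fa) = \wfa_\beta$, hence $\sfe_\beta(P \fulland \fa) = \mfe(\wfa_\beta) = \sfe_\beta(\fa)$, making your informal memorising argument precise.
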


\begin{proof}
For $\sigma\in A^s$, let ${\cal S}_\sigma$ be the set of closed SCL-terms with atoms 
in $\sigma$.
In \cite{BPS18}, it is argued on p.21 that the static evaluation trees of all $P\in{\cal S}_\sigma$
are perfect binary trees, where each level characterises the evaluation
of a single atom in $\sigma$, and it easily follows that $\sse_\sigma(P)=\sfe_\sigma(P')$ if $P'$ 
is obtained from $P$ by 
replacing the short-circuit connectives by their fully evaluated counterparts.

According to~\cite{BPS18}, \SSCLe\ is the set of axioms in Table~\ref{tab:SSCL} that axiomatises
\SSCL. 
Three consequences of \SSCLe\ are $\fa\leftand x=\fa$, $x\leftand y=y\leftand x$, and 
$x\fulland y=x\leftand y$. We recall from~\cite{BPS18} a derivation of the last one:
$x\fulland y=(x\leftor(y\leftand\fa))\leftand y=(x\leftor(\fa\leftand y))\leftand y
=(x\leftor\fa)\leftand y= x\leftand y$.

In~\cite[Thm.6.11]{BPS18} it is proven that for all $P,Q\in {\cal S}_\sigma$, 
$\SSCLe\vdash P=Q \iff \sse_\sigma(P)=\sse_\sigma(Q)$.
Identifying $\fulland$ and $\leftand$, it suffices to show that \SSCLe\ and \SFELe\ are equivalent, 
and this follows quickly with \emph{Prover9}. Hence, for all $P,Q\in\SPbeta$,
$\SFELe\vdash P=Q\iff P=_{\sfe_\beta}Q$.
\end{proof}

Finally, we provide simple equivalent alternatives for all \FEL-axiomatizations
with help \emph{Prover9}, where we use the superscript \und\ in their names to indicate the extension 
with the two axioms $\neg\und=\und$ and $\und\fulland x=\und$.
Moreover, each of these alternative axiomatisations is independent, which 
quickly follows with help of \emph{Mace4}.

\begin{description}
\item[\text{\FFELe}] and \textbf{\FFELeu.}
~Define $\FFELe_2=\FFELe\setminus\{\eqref{FFEL1}\}$, where \eqref{FFEL1} is the axiom $\fa=\neg\tr$. 
With \emph{Prover9} it quickly follows that  
$\FFELe_2\vdash \fa=\neg\tr$, and according to 
\emph{Mace4}, $\FFELeu_2$ 
is independent. Hence, $\FFELe_2$ is also independent.

\item[\textbf{\MFELe}] and \textbf{\MFELeu.} 
~Let MF be the set of axioms in Table~\ref{tab:MF}.

\begin{table}[b]
{
\centering
\hrule
~\\[-5mm]
\begin{align}
\label{MF1}
\tag{MF1}
x\fullor y
&=\neg(\neg x\fulland\neg y)
&&\text{\blauw{[this is \eqref{FFEL2}]}
}
\\ 
\label{MF2}
\tag{MF2}
\neg\neg x&=x
&&
\text{\blauw{[this is \eqref{FFEL3}]}}
\\
\label{MF3}
\tag{MF3}
\tr\fulland x&=x
&&\text{\blauw{[this is \eqref{FFEL5}]}\hspace{-18mm}}
\\
\label{MF4}
\tag{MF4}
(x\fullor y) \fulland z
&=(\neg x\fulland(y\fulland z)) \fullor (x\fulland z)
&&\text{\blauw{[this is \eqref{M1}]}\hspace{-18mm}}
\\
\label{MF5}
\tag{MF5}
(x\fulland y)\fullor x&=x\fullor (y\fulland\fa)
\\
\label{MF6}
\tag{MF6}
x\fulland (y\fullor z)&=(x\fulland y)\fullor (x\fulland z)
&&\text{\blauw{[this is \eqref{C4}]}\hspace{-18mm}}
\end{align}
\hrule
}
\caption{MF, an alternative set of axioms for \MFEL}
\label{tab:MF}
\end{table}
With
\emph{Prover9} it follows that $\text{MF}\vdash \MFELe$ and $\MFELe\vdash\eqref{MF5}$, 
hence \MFELe\ and \text{MF} are equivalent, and so are \MFELeu\ and $\text{MF}^\und$. 
Furthermore, with help of \emph{Mace4} it quickly follows that $\text{MF}^\und$ is independent,
and so is \text{MF}.

\item[\textbf{\CLFELe}] and \textbf{\CLFELeu.}  
~Define
$\text{CF}=\{x\fulland y=y\fulland x\}\cup\text{MF}\setminus\{\eqref{MF6}\}$.
According to  \emph{Prover9}, $\text{CF}\vdash\CLFELe$, and according to
\emph{Mace4}, $\text{CF}^\und$ is independent, and so is \text{CF}.

\item[\SFELe.] 
~Define 
$\text{SF}=\{x\fulland \fa=\fa\}\cup\text{MF}\setminus\{\eqref{MF6}\}$.
According to \emph{Prover9}, $\text{SF}\vdash\SFELe$, and according to \emph{Mace4}, 
$\text{SF}$ is independent.
\end{description}

Thus, in addition to its independence, the alternative axiomatisation \text{MF} is elegant, 
due to its simplicity and conciseness, and this is also the case for the stronger axiomatisations.
We end this section with a comment on \CLFELeu\ and $\text{CF}^\und$.

\begin{comment}
\label{com:Boch}
Another axiomatisation of \CLFELu\ is briefly discussed 
in~\cite[Sect.6]{BP23} and stems from~\cite{BBR95}, in which an equational axiomatisation
of Bochvar's logic~\cite{Boc38} is introduced and proven complete that we repeat here 
(using full left-sequential connectives instead of $\wedge$ and $\vee$):
~\\[-4mm]\indent
\begin{minipage}[t]{0.46\linewidth}\centering
\begin{Lalign}
\tag{S1}
\neg\tr&=\fa
\\
\tag{S2}
\neg\und&=\und
\\
\tag{S3}
\neg\neg x&=x
\\
\tag{S4}
\neg(x\fulland y)&=\neg x\fullor \neg y
\\
\tag{S6}
(x\fulland y)\fulland z
&=x\fulland (y\fulland z)
\end{Lalign}
\end{minipage}
\begin{minipage}[t]{0.62\linewidth}\centering
\begin{Lalign}
\tag{S7}
\tr\fulland x
&=x
\\
\tag{S8}
x\fullor (\neg x\fulland y)
&=x\fullor  y
\\
\tag{S9}
x\fulland y
&= y\fulland x
\\
\tag{S10}
x\fulland (y\fullor z)
&= (x\fulland y)\fullor(x\fulland z)
\\
\tag{S11}
\und\fulland x&=\und
\end{Lalign}
\end{minipage}
\\[3mm]
The `missing' axiom (S5) defines the connective 
$\fullimp$ (full left-sequential implication, $x\fullimp y=\neg x\fullor y$) and is not relevant here.
According to \emph{Prover9}, this axiomatisation is equivalent with \CLFELeu.
However, without axiom (S9), this set is not an axiomatisation of \MFELu, and therefore
we prefer $\text{CF}^\und$ and \text{CF}.
Finally, we note that according to \emph{Mace4},
axiom (S6) is derivable from the remaining axioms, which 
are independent. Moreover, the same is true for axiom (S10).  
\end{comment}

\section{Discussion and conclusions}
\label{sec:8}
As mentioned earlier, this paper is a continuation of~\cite{Stau}, 
in which \FFEL\ was introduced  
and which contained the first completeness proof for the short-circuit logic \FSCL\ (Free SCL).
We introduced six more \emph{fully evaluated sequential logics} (\FEL s), two of which 
are not new: \CLFELu, which is equivalent to the logic of Bochvar~\cite{Boc38}, and \SFEL, which
is equivalent to \SSCL\ and to propositional logic, see~\cite{BPS13}. 

We could have defined the fully evaluated connectives $\fulland$ and $\fullor$, and thus the FELs, with help
of the ternary connective $x\lef y\rig z$, the so-called ``conditional" that expresses 
``\texttt{if $y$ then $x$ else $z$}" and with which
Hoare characterised the propositional calculus in 
1985~\cite{Hoa85} (also using both constants \tr\ and \fa).\footnote{%
  However, in 1948, Church introduced \emph{conditioned disjunction} $[p,q,r]$ in~\cite{Chu}
  as a primitive connective for the propositional calculus, which expresses the same 
  as Hoare’s conditional $p\lef q\rig r$, and proved the same result.
  }
 With this connective, short-circuit conjunction $\leftand$ and disjunction $\leftor$ are simply defined 
by 
\[x \leftand y = y\lef x\rig\fa \quad \text{and} \quad x \leftor y = \tr\lef x\rig y,\]
 and negation by 
$\neg x = \fa\lef x\rig \tr$. 
The conditional connective naturally characterises short-circuit evaluation and has a notation
that supports equational reasoning, which led us to define \emph{short-circuit logics} (\SCL s) 
based on this connective~\cite{BPS13,BP23}. 
SCLs can be characterised by their efficiency, 
in the sense that atoms are not evaluated if their 
evaluation is not needed to determine the evaluation of a term as a whole. 
From that perspective FELs might seem rather inefficient, but this is not necessarily so. We first 
elaborate on SCLs and then return to this point.
\\[1.6mm]
\noindent\textbf{Short-circuit evaluation and SCLs.}
The short-circuit logic \FSCL\ is defined by the equational theory of terms 
over the signature $\Sigma_\CP=\{\leftand,\neg,\tr,\fa\}$ that is generated by the two
equations $x \leftand y = y\lef x\rig\fa$ and $\neg x = \fa\lef x\rig \tr$,
and the set of four basic axioms that we call \CP: 
\[
x\lef\tr\rig y=x, \quad x\lef\fa\rig y=y, \quad \tr\lef x\rig\fa=x, \quad x\lef(y\lef z \rig u)\rig v=
(x\lef y\rig v)\lef z\rig (x\lef u\rig v).
\]
Write $\CP(\leftand,\neg)$ for the set of these six axioms.\footnote{%
  Short-circuit disjunction $x\leftor y$ is defined by 
  $\neg(\neg x\leftand \neg y)$.}
As an example, the associativity of  ${\leftand}$ is
quickly derived from $\CP(\leftand,\neg)$, and thus holds in \FSCL:
\begin{align*}
(x\leftand y)\leftand z
&= z\lef(y\lef x\rig\fa)\rig \fa
&&\text{by definition}
\\
&=(z\lef y\rig\fa)\lef x\rig(z\lef \fa\rig\fa)
&&\text{by $x\lef(y\lef z \rig u)\rig v=(x\lef y\rig v)\lef z\rig (x\lef u\rig v)$}
\\
&=(z\lef y\rig\fa)\lef x\rig\fa
&&\text{by $x\lef\fa\rig y=y$}
\\
&=x\leftand(y\leftand z).
&&\text{by definition}
\end{align*}
Each of the  \FEL s discussed in this paper is related to one of the \SCL s
defined in~\cite{BPS13,BP23}. 
\emph{Memorising} SCL (\MSCL) 
is defined as the restriction to $\Sigma_\CP$
of the equational theory of $\CP(\leftand,\neg)$ extended with the ``memorising" axiom
\begin{equation}
\label{eq:mem}
x\lef y\rig (z\lef u \rig(v\lef y\rig w))=x\lef y\rig (z\lef u \rig w)
\qquad\text{($y$ is ``memorised").}
\end{equation}
As an example, $x\leftand x=x$ is easily derivable from these seven axioms,
and so are the axioms~\eqref{Mem1}--\eqref{Mem5} in Table~\ref{tab:SSCL}. 
More generally, the validity of any equation in a particular \SCL\ 
can be checked in the associated \CP-system: either a proof can be
found (often with help of \emph{Prover9}), or \emph{Mace4} finds a counter model. 
We finally note that in~\cite{BPS21}, \SCL s were extended with a constant \und\ for the 
third truth value \undefi\  and the defining axiom $x\lef\und\rig y=\und$. 
A complete axiomatisation for \MSCLu\ was given, but an axiomatisation of \FSCLu\ is not yet known,
not even when restricted to closed terms, see \cite[Conject.8.1]{BPS21}.
\\[1.6mm]
\noindent\textbf{\FEL s and \SCL s.}
As was noted in~\cite{Stau}, 
the connective ${\fulland}$ can be defined 
using short-circuit connectives and the constant \fa\ (due to the \FSCL-law $\fa\leftand x=\fa$):
\begin{equation}
\label{eq:rhs}
x\fulland y=(x\leftor (y\leftand\fa))\leftand y.
\end{equation}
It is easy to prove that 
$\CP(\leftand,\neg)+\eqref{eq:rhs}\vdash x\fulland y=y\lef x\rig(\fa\lef y\rig\fa)$
and the latter equation is perhaps more intuitive than~\eqref{eq:rhs}.
The addition of $x\fulland y=y\lef x\rig(\fa\lef y\rig\fa)$ and $\neg x=\fa\lef x\rig\tr$  to \CP, 
say $\CP(\fulland,\neg)$, 
implies that $\CP(\fulland,\neg)\vdash\FFELe$.
Similarly, the addition of axiom~\eqref{eq:mem} to $\CP(\fulland,\neg)$
implies that $x\fulland x=x$ and~\eqref{M1} in Table~\ref{tab:MFELe}
are derivable. 
Like in the case of \SCL s, the validity of any equation
in a particular \FEL\ can be easily checked in the associated \CP-system. 
The addition of \und\ 
is in the case of \FEL s simple.
For \FFELu\ there are normal forms $\und_\sigma$ with $\sigma\in A^*$, and for \MFELu,
normal forms $\und_\sigma$ with $\sigma\in A^s$. In the case of \CLFELu, these all reduce to \und.
\\[1.6mm]
\noindent\textbf{Side effects.}
A quote from~\cite{Stau}: We find that some programming languages offer full 
left-sequential connectives, which motivated the initial investigation of FEL. 
We claim that FEL has a greater value than merely to act as means of writing certain SCL-terms 
using fewer symbols. 
The usefulness of a full evaluation strategy lies in the 
increased 
predictability of the state of the environment after a (sub)term has been
evaluated. In particular, we know that the side effects of all the atoms in the term have occurred. 
To determine the state of the environment after the evaluation of a FEL-term in a given environment, 
we need only compute how each atom in the term transforms the environment. 
It is not necessary to compute the evaluation of any of the atoms. 
With SCL-terms in general we must know to what the first atom evaluates in order to determine which atom 
is next to transform the environment. Thus to compute the state of the environment after the 
evaluation of an SCL-term we must compute the evaluation result of each atom that transforms the environment 
and we must compute the transformation of the environment for each atom that affects it. 
\\[1.6mm]
\noindent\textbf{Expressiveness.} 
We call a perfect evaluation tree \emph{uniform} if each complete trace 
contains the same sequence of atoms. 
By definition, each \FEL\ defines a certain equality on the 
domain of uniform evaluation trees.
In \FFEL\ and \FFELu, some uniform evaluation trees can be expressed, but for
example not
\[
(\tr\unlhd b\unrhd\fa)\unlhd a\unrhd(\fa\unlhd b\unrhd\tr),
\]
which is expressible in \MFEL\ by $\auxf(a,b,\neg b)=(a\fulland b)\fullor(\neg a \fulland \neg b)$,
and it easily follows that \emph{each} uniform tree can be expressed in \MFEL. 
Of course, in \SFEL, which is equivalent to \SSCL,
with $x\fulland\fa=x\leftand \fa=\fa$, each uniform
evaluation tree can be compared to a truth table for propositional logic and
it is possible to define a semantics for \SFEL\ based on an equivalence $\equiv$ 
on evaluation trees that identifies each (sub)tree with only \fa-leaves with \fa, and each (sub)tree
with only \tr-leaves with \tr. 
Then, any $P\in\SPbeta$ has a representing evaluation tree without such subtrees, for example,
\begin{align*}
&\sfe_{abc}((a\fulland b)\fullor(\neg a \fulland c))
\\
&=
((\tr\unlhd c\unrhd\tr)\unlhd b\unrhd(\fa\unlhd c\unrhd\fa))\unlhd a 
\unrhd((\tr\unlhd c\unrhd\fa)\unlhd b\unrhd(\tr\unlhd c\unrhd\fa))
\\
&=
((\tr\unlhd c\unrhd\tr)\unlhd b\unrhd(\fa\unlhd c\unrhd\fa))\unlhd a 
\unrhd((\tr\unlhd b\unrhd\tr)\unlhd c\unrhd(\fa\unlhd b\unrhd\fa))
&&\text{(by Lemma~\ref{la:crux3})}
\\
&=
(\tr\unlhd b\unrhd\fa)\unlhd a \unrhd(\tr\unlhd c\unrhd\fa).
&&\text{(by $\SFEL/{\equiv}$)} 
\end{align*}
In \MSCL, the latter tree is the evaluation tree of $(a\leftand b)\leftor(\neg a \leftand c)$, 
see~\cite{BPS21} for a detailed explanation.
\newpage
As a final point, we note that there are left-sequential, binary connectives for which the distinction 
between short-circuit 
evaluation and full evaluation does not exist, such as the \emph{left-sequential biconditional} and the
\emph{left-sequential exclusive or}, defined by Cornets de Groot in~\cite{CdG20} by 
$x\liffC y=y\lef x\rig(\fa\lef y\rig\tr)$ 
and $x\lxorC y=(\fa\lef y\rig\tr)\lef x\rig y$, respectively.
It immediately follows that $x\lxorC y=x\liffC\neg y$ and $\neg(x\liffC y)=x\liffC \neg y$.
In~\cite{PaPo22}, the notations ${\liffo}$ and $\lxoro$ are used, and below we use
the symbols ${\liff}$ and $\lxor$, expressing only left-sequentiality. 
Regarding expressiveness, we note that in \FFEL\ extended with $\liff$, 
the evaluation tree of $a\liff b$ is $(\tr\unlhd b\unrhd\fa)\unlhd a\unrhd(\fa\unlhd b\unrhd\tr)$.
However, it easily follows that not all uniform 
evaluation trees can be expressed in \FFEL\ with ${\liff}$.\footnote{%
  Suppose $A=\{a\}$. Then there are $5\cdot 4^4\cdot 3^3$ terms composed of four elements of
  $\{a,\neg a, a\fullor\tr,a\fulland\fa\}$ and the three connectives in $\{\fulland,\fullor,\liff\}$ 
  that have evaluation trees of depth 4. It is not hard to prove that these terms represent all 
  closed terms with evaluation trees of depth 4.
  However, there are $2^{16}$ such trees. }
Of course, in \MFEL\ and all stronger \FEL s, the left-sequential biconditional $x\liff y$ 
is definable by $\auxf(x,y,\neg y)$. 

\noindent\textbf{Related work.}
In the recent paper~\cite{BP23}, \emph{Conditional logic} (Guzmán and 
Squier~\cite{GS90}) is viewed as a short-circuit logic, and its two-valued variant 
\CLSCLtwo\ is located in between MSCL and 
SSCL.\footnote{%
  The name \CLSCL\ was chosen because CSCL is already used for ``Contractive
  SCL" in~\cite{BPS13}, the short-circuit logic that (only) contracts adjacent atoms: $a\leftand a=a$, but
  $(a\leftand b)\leftand(a\leftand b)$ and $a\leftand b$ have different evaluation trees.
  }
In~\cite{BP23} it was also established that Conditional logic defines a fully evaluated 
sublogic that is equivalent to the strict three-valued logic of Bochvar~\cite{Boc38};  
this is the logic we now call \CLFELu.
It is remarkable that also for the two-valued case,
the typical axiom $x\fulland y = y \fulland x$ is a consequence of the 
distinguishing C$\ell$SCL-axiom
\[(x\leftand y)\leftor(y\leftand x)= (y\leftand x)\leftor(x\leftand y).\]
This particular consequence led us to \CLFELtwo,
a \FEL\ in between \MFEL\ and \SFEL\ that can
be motivated as a sublogic of C$\ell$SCL. 
Thus, we locate \CLFEL\ in a hierarchy of two-valued FELs, and we derive its existence 
and motivation both from the three-valued logic \CLSCLu, 
and thus from Bochvar's logic,
and from the consideration to adopt a full evaluation strategy.

With regard to the family of \FEL s, it remains a challenge to find related work.
This is also the case for the (somewhat larger) family of SCLs, apart from~\cite{GS90} and
our own papers on SCLs.

\addcontentsline{toc}{section}{References}
\small

\small
\appendix
\section{Proofs - Section~\ref{sec:2}}
\label{app:A2}

\noindent\textbf{Correctness of $\bm\nf$.}
\label{sec:lslnf}
To prove that $\nf: \FT \to \FNF$ is indeed a normalization function we need to
prove that for all $\FEL$-terms $P$, $\nf(P)$ terminates, $\nf(P) \in \FNF$ and
$\EqFFEL \vdash \nf(P) = P$. To arrive at this result, we prove several
intermediate results about the functions $\nf^n$ and $\nf^c$, roughly in the
order in which their definitions were presented in Section \ref{sec:2}. For
the sake of brevity we will not explicitly prove that these functions
terminate. To see that each function terminates consider that a termination
proof would closely mimic the proof structure of the lemmas dealing with the
grammatical categories of the images of these functions.

\begin{lemma}
\label{lem:ptpf}
For any $P^\fa$ and $P^\tr$, $\EqFFEL \vdash P^\fa = P^\fa \fulland
\fa$ and $\EqFFEL \vdash P^\tr = P^\tr \fullor \tr$.
\end{lemma}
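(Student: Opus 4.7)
The plan is to prove both statements by structural induction on the grammars of $P^\fa$ and $P^\tr$ from Definition~\ref{def:fnf}, and to handle the $P^\tr$-case via the duality principle once the $P^\fa$-case is settled (as the paper explicitly permits using duals of \FFELe-axioms).

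For the $P^\fa$-case, the base step ($P^\fa = \fa$) reduces to showing $\EqFFEL \vdash \fa = \fa \fulland \fa$. I would derive this by first applying \eqref{FFEL5} backwards to get $\fa = \tr \fulland \fa$, then using \eqref{FFEL8} (with $x := \tr$) to obtain $\tr \fulland \fa = \neg\tr \fulland \fa$, and finally rewriting $\neg\tr$ as $\fa$ via \eqref{FFEL1}. The inductive step is where $P^\fa = a \fulland Q^\fa$: applying associativity \eqref{FFEL4} gives
\[
(a \fulland Q^\fa) \fulland \fa = a \fulland (Q^\fa \fulland \fa),
\]
and the induction hypothesis $Q^\fa = Q^\fa \fulland \fa$ collapses the right-hand side back to $a \fulland Q^\fa$.

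For the $P^\tr$-case, the cleanest route is to invoke duality: since the grammar of $P^\tr$-terms is the dual of that of $P^\fa$-terms (swap $\tr \leftrightarrow \fa$ and $\fulland \leftrightarrow \fullor$), the just-proved equation dualises directly to $P^\tr = P^\tr \fullor \tr$. Alternatively, one can repeat the induction verbatim, using the dual axioms of \eqref{FFEL5}, \eqref{FFEL8}, \eqref{FFEL1}, and \eqref{FFEL4}.

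There is no real obstacle here; the only minor subtlety is the base case $\fa \fulland \fa = \fa$, which is not immediate from any single axiom and requires the small detour through \eqref{FFEL8} and \eqref{FFEL1}. Once that is in place, both statements follow by a routine induction of length equal to the number of atoms prepended in the respective grammar.
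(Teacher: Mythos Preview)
Your proof is correct and follows essentially the same approach as the paper: induction on the grammar with the base case $\fa = \tr\fulland\fa = \neg\tr\fulland\fa = \fa\fulland\fa$ via \eqref{FFEL5}, \eqref{FFEL8}, \eqref{FFEL1}, the inductive step via \eqref{FFEL4}, and the $P^\tr$-claim by duality.
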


\begin{proof}
We prove both claims simultaneously by induction. In the base case we have
$\fa = \tr \fulland \fa$ by \eqref{FFEL5}, which is equal to $\fa
\fulland \fa$ by \eqref{FFEL8} and \eqref{FFEL1}. The base case for the
second claim follows from that for the first claim by duality.

For the induction we have $a \fulland P^\fa = a \fulland (P^\fa \fulland
\fa)$ by the induction hypothesis and the result follows from
\eqref{FFEL4}. For the second claim we again appeal to duality.
\end{proof}

\begin{lemma}
\label{lem:feqs2}
The following equations can be derived by equational logic and $\EqFFEL$.
\textup{
\begin{enumerate}
\setlength\itemsep{5pt}
\item $x \fulland (y \fulland (z \fulland \fa)) = (x \fullor y) \fulland (z
  \fulland \fa)$
  \label{eq:a3}
\item $\neg x \fulland (y \fullor \tr) = \neg (x \fulland (y \fullor \tr))$
  \label{eq:a4}
\end{enumerate}
}
\end{lemma}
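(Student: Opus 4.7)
My plan is to derive both equations from the axioms \eqref{FFEL1}--\eqref{FFEL10} together with the tools already packaged in Lemma~\ref{la:feqs}. Neither statement looks like it requires a genuinely new idea; each should reduce to a short algebraic manipulation.

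For equation~\ref{eq:a3}, I would start from the right-hand side and eliminate the disjunction using \eqref{FFEL2}, rewriting $(x \fullor y) \fulland (z \fulland \fa)$ as $\neg(\neg x \fulland \neg y) \fulland (z \fulland \fa)$. Lemma~\ref{la:feqs}.\ref{eq:a1}, which states that $u \fulland (v \fulland \fa) = \neg u \fulland (v \fulland \fa)$, can then be applied with $u = \neg(\neg x \fulland \neg y)$ to drop the outer negation; combined with \eqref{FFEL3} this produces $(\neg x \fulland \neg y) \fulland (z \fulland \fa)$. Reassociating via \eqref{FFEL4} and applying Lemma~\ref{la:feqs}.\ref{eq:a1} twice more, to strip the remaining negations from $\neg y$ and then from $\neg x$, yields the left-hand side $x \fulland (y \fulland (z \fulland \fa))$.

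For equation~\ref{eq:a4}, I would convert the right-hand side in several small steps. First, \eqref{FFEL10} rewrites the inner $x \fulland (y \fullor \tr)$ as $x \fullor (y \fulland \fa)$. The De Morgan-style identity $\neg(u \fullor v) = \neg u \fulland \neg v$, which is immediate from \eqref{FFEL2} and \eqref{FFEL3}, then distributes the outer negation, and the subterm $\neg(y \fulland \fa)$ rewrites by the dual identity (and \eqref{FFEL1}) to $\neg y \fullor \tr$. This leaves $\neg x \fulland (\neg y \fullor \tr)$; a short round trip \eqref{FFEL10}, \eqref{FFEL8}, \eqref{FFEL10} flips the $\neg y$ back to $y$ inside the parentheses, producing $\neg x \fulland (y \fullor \tr)$ as required.

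The only obstacle I anticipate is bookkeeping: the two equations look superficially similar but pivot on rather different tricks (Lemma~\ref{la:feqs}.\ref{eq:a1} drives the first, while the second relies on the \eqref{FFEL10}--\eqref{FFEL8} shuffling together with De Morgan), so one must keep careful track of where the negations and the trailing $\fa$ or $\tr$ live. Since the authors rely on \emph{Prover9} throughout, I would expect the tool to close both derivations very quickly; the point of stating these equations is plausibly to pre-compile rewrite rules for the subsequent normal-form arguments rather than to record results of independent interest.
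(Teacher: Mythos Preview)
Your proposal is correct and follows essentially the same approach as the paper: both proofs hinge on Lemma~\ref{la:feqs}.\ref{eq:a1} for equation~\ref{eq:a3} and on the \eqref{FFEL10}/\eqref{FFEL8} interplay plus De~Morgan for equation~\ref{eq:a4}, with only the direction of the derivation reversed. The paper's handling of $\neg(y\fulland\fa)$ in equation~\ref{eq:a4} is marginally tidier---it applies \eqref{FFEL8} and \eqref{FFEL1} \emph{inside} the negation to reach $\neg(\neg y\fulland\neg\tr)=y\fullor\tr$ directly, avoiding your extra round trip---but this is a cosmetic difference, not a substantive one.
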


\begin{proof}
\begin{align*}
x \fulland (y \fulland (z \fulland \fa))
&= x \fulland ((\neg y \fulland z) \fulland \fa)
&&\textrm{by \eqref{FFEL4} and Lemma \ref{la:feqs}.\ref{eq:a1}} \\
&= (\neg x \fulland \neg y) \fulland (z \fulland \fa)
&&\textrm{by \eqref{FFEL4} and Lemma \ref{la:feqs}.\ref{eq:a1}} \\
&= \neg(\neg x \fulland \neg y) \fulland (z \fulland \fa)
&&\textrm{by Lemma \ref{la:feqs}.\ref{eq:a1}} \\
&= (x \fullor y) \fulland (z \fulland \fa),
&&\textrm{by \eqref{FFEL2}}
\end{align*}
\begin{align*}\neg x \fulland (y \fullor \tr)
&= \neg x \fullor (y \fulland \fa)
&&\textrm{by \eqref{FFEL10}} \\
&= \neg (x \fulland \neg(y \fulland \fa))
&&\textrm{by \eqref{FFEL2} and \eqref{FFEL3}} \\
&= \neg (x \fulland \neg(\neg y \fulland \neg \tr))
&&\textrm{by \eqref{FFEL8} and \eqref{FFEL1}} \\
&= \neg (x \fulland (y \fullor \tr)).
&&\textrm{by \eqref{FFEL2}} 
\end{align*}
\end{proof}

\begin{lemma}
\label{lem:nfn}
For all $P \in \FNF$, if $P$ is a
$\tr$-term then $\nf^n(P)$ is an $\fa$-term, if it is an $\fa$-term
then $\nf^n(P)$ is a $\tr$-term, if it is a $\tr$-$*$-term then so is
$\nf^n(P)$, and
\begin{equation*}
\EqFFEL \vdash \nf^n(P) = \neg P.
\end{equation*}
\end{lemma}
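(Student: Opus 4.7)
The plan is to prove the four-part statement by structural induction on $P \in \FNF$, following the three grammatical cases $P^\tr$, $P^\fa$, and $P^\tr \fulland P^*$, after first establishing a companion lemma tailored to the auxiliary function $\nf^n_1$.

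The companion lemma I would prove first states: for every $*$-term $Q^*$, the term $\nf^n_1(Q^*)$ is again a $*$-term, and moreover $\nf^n_1$ swaps the grammatical categories (a $c$-term is sent to a $d$-term and vice versa), while $\EqFFEL \vdash \nf^n_1(Q^*) = \neg Q^*$. Its proof proceeds by simultaneous induction on the $c$-term and $d$-term grammars. The two base cases, the $\ell$-terms $a \fulland P^\tr$ and $\neg a \fulland P^\tr$, reduce to deriving $\neg a \fulland P^\tr = \neg(a \fulland P^\tr)$ and the analogous equation; for both I would first use Lemma \ref{lem:ptpf} to rewrite $P^\tr$ as $P^\tr \fullor \tr$, and then apply Lemma \ref{lem:feqs2}.\ref{eq:a4} (in one direction in one base case, and with $x$ replaced by $\neg x$ in the other). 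The inductive cases for $P^* \fulland Q^d$ and $P^* \fullor Q^c$ are routine De Morgan computations using \eqref{FFEL2}, \eqref{FFEL3}, and the induction hypothesis, and the category swapping drops out of the output grammar of $\nf^n_1$ at each recursive call.

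With the companion lemma available, the main induction on $\FNF$-terms is short. The bases $\tr$ and $\fa$ are handled via \eqref{FFEL1}. For $a \fullor P^\tr$, the IH gives $\nf^n(P^\tr) = \neg P^\tr$ as an $\fa$-term, and I would use Lemma \ref{lem:ptpf} (to write $\neg P^\tr = \neg P^\tr \fulland \fa$) together with Lemma \ref{la:feqs}.\ref{eq:a1} to move the negation onto $a$, yielding $a \fulland \neg P^\tr = \neg a \fulland \neg P^\tr = \neg(a \fullor P^\tr)$ by De Morgan; the case $a \fulland P^\fa$ is dual. For $P^\tr \fulland Q^*$, the companion lemma already gives a $\tr$-$*$-term $P^\tr \fulland \nf^n_1(Q^*)$, so the structural claim is immediate; for the equational claim I would derive the auxiliary identity $P^\tr \fulland y = \neg P^\tr \fullor y$ (by applying Lemma \ref{lem:ptpf} to $P^\tr$, then Lemma \ref{la:feqs}.\ref{eq:a5}, then Lemma \ref{lem:ptpf} again), which together with the companion lemma gives $P^\tr \fulland \nf^n_1(Q^*) = P^\tr \fulland \neg Q^* = \neg P^\tr \fullor \neg Q^* = \neg(P^\tr \fulland Q^*)$ via De Morgan.

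The main obstacle I expect is bookkeeping within the companion lemma, where the requirement that $\nf^n_1$ swap the $c$- and $d$-categories forces the simultaneous induction rather than a plain induction on $*$-term complexity, and where each base-case derivation needs Lemma \ref{lem:ptpf} invoked at just the right moment to bring the subterm into the shape $y \fullor \tr$ that Lemma \ref{lem:feqs2}.\ref{eq:a4} expects. The equational content beyond that is shallow, mostly repeated De Morgan moves.
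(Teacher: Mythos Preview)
Your proposal is correct and follows essentially the same route as the paper's proof: first establish the companion result for $\nf^n_1$ on $*$-terms using Lemma~\ref{lem:ptpf} and Lemma~\ref{lem:feqs2}.\ref{eq:a4} for the $\ell$-term base cases and De Morgan for the recursive cases, then handle $\tr$-terms and $\fa$-terms by induction via Lemma~\ref{lem:ptpf} and Lemma~\ref{la:feqs}.\ref{eq:a1}, and finally derive the $\tr$-$*$-case using Lemma~\ref{la:feqs}.\ref{eq:a5}. The only cosmetic difference is that the paper phrases the companion induction as induction on the number of $\ell$-terms rather than as a simultaneous induction on the $c$/$d$ grammars, and it records the category-swapping observation more tersely as ``each application of $\nf^n_1$ changes the main connective''.
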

\begin{proof}
We start with proving the claims for $\tr$-terms, by induction on $P^\tr$.
In the base case $\nf^n(\tr) = \fa$. It is immediate that $\nf^n(\tr)$
is an $\fa$-term. The claim that $\EqFFEL \vdash \nf^n(\tr) = \neg \tr$
is immediate by \eqref{FFEL1}. For the inductive case we have that $\nf^n(a
\fullor P^\tr) = a \fulland \nf^n(P^\tr)$, where we may assume that
$\nf^n(P^\tr)$ is an $\fa$-term and that $\EqFFEL \vdash \nf^n(P^\tr) =
\neg P^\tr$. The grammatical claim now follows immediately from the induction
hypothesis. Furthermore, noting that by the induction hypothesis we may assume
that $\nf^n(P^\tr)$ is an $\fa$-term, we have:
\begin{align*}
\nf^n(a \fullor P^\tr)
&= a \fulland \nf^n(P^\tr)
&&\textrm{by definition} \\
&= a \fulland (\nf^n(P^\tr) \fulland \fa)
&&\textrm{by Lemma \ref{lem:ptpf}} \\
&= \neg a \fulland (\nf^n(P^\tr) \fulland \fa)
&&\textrm{by Lemma \ref{la:feqs}.\ref{eq:a1}} \\
&= \neg a \fulland \nf^n(P^\tr)
&&\textrm{by Lemma \ref{lem:ptpf}} \\
&= \neg a \fulland \neg P^\tr
&&\textrm{by induction hypothesis} \\
&= \neg (a \fullor P^\tr).
&&\textrm{by \eqref{FFEL3} and \eqref{FFEL2}}
\end{align*}

For $\fa$-terms we prove our claims by induction on $P^\fa$. In the base
case $\nf^n(\fa) = \tr$. It is immediate that $\nf^n(\fa)$ is a
$\tr$-term. The claim that $\EqFFEL \vdash \nf^n(\fa) = \neg \fa$ is
immediate by the dual of \eqref{FFEL1}. For the inductive case we have that
$\nf^n(a \fulland P^\fa) = a \fullor \nf^n(P^\fa)$, where we may assume
that $\nf^n(P^\fa)$ is a $\tr$-term and $\EqFFEL \vdash \nf^n(P^\fa) =
\neg P^\fa$. It follows immediately from the induction hypothesis that
$\nf^n(a \fulland P^\fa)$ is a $\tr$-term.  Furthermore, noting that by
the induction hypothesis we may assume that $\nf^n(P^\fa)$ is a
$\tr$-term, we prove the remaining claim as follows:
\begin{align*}
\nf^n(a \fulland P^\fa)
&= a \fullor \nf^n(P^\fa)
&&\textrm{by definition} \\
&= a \fullor (\nf^n(P^\fa) \fullor \tr)
&&\textrm{by Lemma \ref{lem:ptpf}} \\
&= \neg a \fullor (\nf^n(P^\fa) \fullor \tr)
&&\textrm{by the dual of Lemma \ref{la:feqs}.\ref{eq:a1}} \\
&= \neg a \fullor \nf^n(P^\fa)
&&\textrm{by Lemma \ref{lem:ptpf}} \\
&= \neg a \fullor \neg P^\fa
&&\textrm{by induction hypothesis} \\
&= \neg (a \fulland P^\fa).
&&\textrm{by \eqref{FFEL3} and \eqref{FFEL2}}
\end{align*}

To prove the lemma for $\tr$-$*$-terms we first verify that the auxiliary
function $\nf^n_1$ returns a $*$-term and that for any $*$-term $P$, $\EqFFEL
\vdash \nf^n_1(P) = \neg P$. We show this by induction on the number of
$\ell$-terms in $P$. For the base cases, i.e., for $\ell$-terms, it is
immediate that $\nf^n_1(P)$ is a $*$-term. If $P$ is an $\ell$-term with a
positive determinative atom we have:
\begin{align*}
\nf^n_1(a \fulland P^\tr)
&= \neg a \fulland P^\tr
&&\textrm{by definition} \\
&= \neg a \fulland (P^\tr \fullor \tr)
&&\textrm{by Lemma \ref{lem:ptpf}} \\
&= \neg (a \fulland (P^\tr \fullor \tr))
&&\textrm{by Lemma \ref{lem:feqs2}.\ref{eq:a4}} \\
&= \neg (a \fulland P^\tr).
&&\textrm{by Lemma \ref{lem:ptpf}}
\end{align*}
If $P$ is an $\ell$-term with a negative determinative atom the proof proceeds
the same, substituting $\neg a$ for $a$ and applying \eqref{FFEL3} where
needed. For the inductive step we assume that the result holds for $*$-terms
with fewer $\ell$-terms than $P^* \fulland Q^d$ and $P^* \fullor Q^c$. We
note that each application of $\nf^n_1$ changes the main connective (not
occurring inside an $\ell$-term) and hence the result is a $*$-term. Derivable
equality is, given the induction hypothesis, an instance of (the dual of)
\eqref{FFEL2}.

With this result we can now see that $\nf^n(P^\tr \fulland Q^*)$ is indeed a
$\tr$-$*$-term. Furthermore we find that:
\begin{align*}
\nf^n(P^\tr \fulland Q^*) 
&= P^\tr \fulland \nf^n_1(Q^*)
&&\textrm{by definition} \\
&= P^\tr \fulland \neg Q^*
&&\textrm{as shown above} \\
&= (P^\tr \fullor \tr) \fulland \neg Q^*
&&\textrm{by Lemma \ref{lem:ptpf}} \\
&= \neg(P^\tr \fullor \tr) \fullor \neg Q^*
&&\textrm{by Lemma \ref{la:feqs}.\ref{eq:a5}} \\
&= \neg P^\tr \fullor \neg Q^*
&&\textrm{by Lemma \ref{lem:ptpf}} \\
&= \neg (P^\tr \fulland Q^*).
&&\textrm{by \eqref{FFEL2} and \eqref{FFEL3}}
\end{align*}
Hence for all $P \in \FNF$, $\EqFFEL \vdash \nf^n(P) = \neg P$.
\end{proof}

\begin{lemma}
\label{lem:nfc1}
For any $\tr$-term $P$ and $Q \in \FNF$, $\nf^c(P, Q)$ has the same
grammatical category as $Q$ and 
\begin{equation*}
\EqFFEL \vdash \nf^c(P, Q) = P \fulland Q.
\end{equation*}
\end{lemma}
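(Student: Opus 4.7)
The plan is to induct on the structure of the $\tr$-term $P$, with an inner case analysis on the grammatical category of $Q$, following precisely the four defining clauses \eqref{eq:nfc1}--\eqref{eq:nfc4} of $\nf^c$ when the first argument is a $\tr$-term. In the base case $P=\tr$, definition \eqref{eq:nfc1} gives $\nf^c(\tr,Q)=Q$, so the grammatical category is trivially preserved and the equational claim is immediate from \eqref{FFEL5}.

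For the induction step, $P$ has the form $a\fullor P^\tr$ with $P^\tr$ a strictly simpler $\tr$-term, and one splits on whether $Q$ is a $\tr$-term, an $\fa$-term, or a $\tr$-$*$-term. In the first subcase, clause \eqref{eq:nfc2} returns $a\fullor\nf^c(P^\tr,Q^\tr)$; by the induction hypothesis this is a $\tr$-term, and the equational content reduces to proving
\begin{equation*}
(a\fullor P^\tr)\fulland Q^\tr = a\fullor(P^\tr\fulland Q^\tr),
\end{equation*}
which I would obtain by rewriting $Q^\tr$ as $Q^\tr\fullor\tr$ via Lemma~\ref{lem:ptpf} and then applying Lemma~\ref{la:feqs}.\ref{eq:a2}, followed by another appeal to Lemma~\ref{lem:ptpf} and the induction hypothesis. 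The $\fa$-term subcase is handled symmetrically: clause \eqref{eq:nfc3} yields $a\fulland\nf^c(P^\tr,Q^\fa)$, which by IH is an $\fa$-term, and the derivable equality $(a\fullor P^\tr)\fulland Q^\fa = a\fulland(P^\tr\fulland Q^\fa)$ follows by rewriting $Q^\fa$ as $Q^\fa\fulland\fa$ (Lemma~\ref{lem:ptpf}) and invoking Lemma~\ref{lem:feqs2}.\ref{eq:a3}, then trimming the trailing $\fa$ back off.

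The $\tr$-$*$-term subcase, handled by clause \eqref{eq:nfc4}, is the cleanest: $\nf^c(a\fullor P^\tr,\,Q^\tr\fulland R^*)=\nf^c(a\fullor P^\tr,Q^\tr)\fulland R^*$, whose first factor is a $\tr$-term by the previous subcase, so the whole term has the required $\tr$-$*$ shape; the equational claim then reduces to an application of \eqref{FFEL4} (associativity) after invoking the already-proved equality for the first factor. I do \emph{not} need an outer induction on $Q$ here, because each recursive call to $\nf^c$ is on the strictly simpler $\tr$-term $P^\tr$.

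The main obstacle is purely bookkeeping: verifying the two algebraic identities in the $\tr$- and $\fa$-subcases, where the trick of injecting a dummy $\fullor\tr$ or $\fulland\fa$ via Lemma~\ref{lem:ptpf} is what unlocks the key rewrite from Lemma~\ref{la:feqs}.\ref{eq:a2} or Lemma~\ref{lem:feqs2}.\ref{eq:a3}. Once these two identities are in hand, the grammatical category preservation is essentially read off from the shape of the right-hand sides of \eqref{eq:nfc1}--\eqref{eq:nfc4} combined with the inductive hypothesis, and the lemma follows.
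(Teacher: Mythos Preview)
Your proposal is correct and matches the paper's proof essentially step for step: induction on the $\tr$-term $P$, case split on the grammatical category of $Q$, with the same appeals to Lemma~\ref{lem:ptpf}, Lemma~\ref{la:feqs}.\ref{eq:a2}, Lemma~\ref{lem:feqs2}.\ref{eq:a3}, and \eqref{FFEL4} in the respective subcases. The only minor wrinkle is your closing remark that ``each recursive call is on the strictly simpler $P^\tr$'': in clause~\eqref{eq:nfc4} the call $\nf^c(a\fullor P^\tr,Q^\tr)$ does not itself decrease $P$, but you correctly handle this by invoking the already-established $\tr$-subcase rather than the induction hypothesis directly, exactly as the paper does.
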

\begin{proof}
By induction on the complexity of the first argument. In the base case we see
that $\nf^c(\tr, P) = P$ and hence has the same grammatical category as $P$.
Derivable equality follows from \eqref{FFEL5}.

For the induction step we make a case distinction on the grammatical category
of the second argument. If the second argument is a $\tr$-term we have that
$\nf^c(a \fullor P^\tr, Q^\tr) = a \fullor \nf^c(P^\tr, Q^\tr)$, where
we assume that $\nf^c(P^\tr, Q^\tr)$ is a $\tr$-term and $\EqFFEL
\vdash \nf^c(P^\tr, Q^\tr) = P^\tr \fulland Q^\tr$. The grammatical
claim follows immediately from the induction hypothesis. The claim about
derivable equality is proved as follows:
\begin{align*}
\nf^c(a \fullor P^\tr, Q^\tr)
&= a \fullor \nf^c(P^\tr, Q^\tr)
&&\textrm{by definition} \\
&= a \fullor (P^\tr \fulland Q^\tr)
&&\textrm{by induction hypothesis} \\
&= a \fullor (P^\tr \fulland (Q^\tr \fullor \tr))
&&\textrm{by Lemma \ref{lem:ptpf}} \\
&= (a \fullor P^\tr) \fulland (Q^\tr \fullor \tr)
&&\textrm{by Lemma \ref{la:feqs}.\ref{eq:a2}} \\
&= (a \fullor P^\tr) \fulland Q^\tr.
&&\textrm{by Lemma \ref{lem:ptpf}}
\end{align*}

If the second argument is an $\fa$-term we assume that $\nf^c(P^\tr,
Q^\fa)$ is an $\fa$-term and that $\EqFFEL \vdash \nf^c(P^\tr,
Q^\fa) = P^\tr \fulland Q^\fa$. The grammatical claim follows
immediately from the induction hypothesis. Derivable equality is proved as
follows:
\begin{align*}
\nf^c(a \fullor P^\tr, Q^\fa)
&= a \fulland \nf^c(P^\tr, Q^\fa)
&&\textrm{by definition} \\
&= a \fulland (P^\tr \fulland Q^\fa)
&&\textrm{by induction hypothesis} \\
&= a \fulland (P^\tr \fulland (Q^\fa \fulland \fa))
&&\textrm{by Lemma \ref{lem:ptpf}} \\
&= (a \fullor P^\tr) \fulland (Q^\fa \fulland \fa)
&&\textrm{by Lemma \ref{lem:feqs2}.\ref{eq:a3}} \\
&= (a \fullor P^\tr) \fulland Q^\fa.
&&\textrm{by Lemma \ref{lem:ptpf}} \\
\end{align*}

Finally, if the second argument is a $\tr$-$*$-term then $\nf^c(a \fullor
P^\tr, Q^\tr \fulland R^*) = \nf^c(a \fullor P^\tr, Q^\tr) \fulland
R^*$. The fact that this is a $\tr$-$*$-term follows from the fact that
$\nf^c(a \fullor P^\tr, Q^\tr)$ is a $\tr$-term as was shown above.
Derivable equality follows from the case where the second argument is a
$\tr$-term and \eqref{FFEL4}.
\end{proof}

\begin{lemma}
\label{lem:nfc4}
For any $\tr$-$*$-term $P$ and $\fa$-term $Q$, $\nf^c(P, Q)$ is an
$\fa$-term and
\begin{equation*}
\EqFFEL \vdash \nf^c(P, Q) = P \fulland Q.
\end{equation*}
\end{lemma}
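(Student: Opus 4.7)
The plan is to reduce Lemma~\ref{lem:nfc4} to an auxiliary lemma about $\nf^c_2$, since by definition
\[
\nf^c(P^\tr \fulland Q^*, R^\fa) = \nf^c(P^\tr, \nf^c_2(Q^*, R^\fa)).
\]
If I can show that $\nf^c_2$ takes a $*$-term and an $\fa$-term to an $\fa$-term with $\nf^c_2(Q^*, R^\fa) = Q^* \fulland R^\fa$ derivably, then the main claim follows by applying Lemma~\ref{lem:nfc1} with $\nf^c_2(Q^*, R^\fa)$ as second argument (which is an $\fa$-term) and one use of associativity~\eqref{FFEL4}.

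So the real work is the auxiliary claim: for every $*$-term $P$ and $\fa$-term $R^\fa$, $\nf^c_2(P, R^\fa)$ is an $\fa$-term and $\EqFFEL\vdash \nf^c_2(P, R^\fa)=P\fulland R^\fa$. I would prove this by induction on the number of $\ell$-terms in $P$. The base case splits on the two shapes of an $\ell$-term. For $P = a \fulland P^\tr$, clause~\eqref{eq:nfc15} gives $a \fulland \nf^c(P^\tr, R^\fa)$, which is an $\fa$-term by Lemma~\ref{lem:nfc1}, and derivable equality follows from Lemma~\ref{lem:nfc1} and~\eqref{FFEL4}. For $P = \neg a \fulland P^\tr$, clause~\eqref{eq:nfc16} again produces $a \fulland \nf^c(P^\tr, R^\fa)$; the delicate point is that the outer $\neg a$ becomes $a$. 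To handle this I would use Lemma~\ref{lem:ptpf} to rewrite $R^\fa$ as $R^\fa \fulland \fa$ and then apply Lemma~\ref{la:feqs}.\ref{eq:a1} to swap $\neg a$ and $a$ across the trailing $\fa$, before folding $\fa$ back in with Lemma~\ref{lem:ptpf}.

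For the inductive step I have $P^* \fulland Q^d$ and $P^* \fullor Q^c$. The conjunction case uses clause~\eqref{eq:nfc17} and is a straightforward double application of the induction hypothesis together with associativity. The disjunction case using clause~\eqref{eq:nfc18} is the only place that requires a connective change: by two applications of the induction hypothesis, $\nf^c_2(P^* \fullor Q^c, R^\fa)= P^* \fulland (Q^c \fulland R^\fa)$, and I need to reach $(P^* \fullor Q^c) \fulland R^\fa$. Expanding $R^\fa$ as $R^\fa \fulland \fa$ via Lemma~\ref{lem:ptpf} to expose the trailing $\fa$, the equation $x \fulland (y \fulland (z \fulland \fa)) = (x \fullor y) \fulland (z \fulland \fa)$ of Lemma~\ref{lem:feqs2}.\ref{eq:a3} performs the switch from $\fulland$ to $\fullor$, and Lemma~\ref{lem:ptpf} removes the auxiliary $\fa$.

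The main obstacle is the case $P^* \fullor Q^c$: that is the step where the main connective changes from $\fullor$ on the left to $\fulland$ on the right, and it is exactly the purpose of Lemma~\ref{lem:feqs2}.\ref{eq:a3} to handle this. Once both lemmas above are in place, deriving Lemma~\ref{lem:nfc4} itself is immediate: apply the auxiliary lemma to reduce $\nf^c_2(Q^*, R^\fa)$ to $Q^* \fulland R^\fa$, then apply Lemma~\ref{lem:nfc1} to the outer $\nf^c(P^\tr, \cdot)$, and finish with~\eqref{FFEL4}.
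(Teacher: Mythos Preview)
Your proposal is correct and follows essentially the same route as the paper: reduce to the auxiliary claim about $\nf^c_2$ via Lemma~\ref{lem:nfc1} and~\eqref{FFEL4}, and prove that claim by induction on the number of $\ell$-terms, with Lemma~\ref{lem:feqs2}.\ref{eq:a3} handling the disjunction case. The only cosmetic difference is in the negative-atom base case, where you invoke Lemma~\ref{la:feqs}.\ref{eq:a1} (together with Lemma~\ref{lem:ptpf}) while the paper cites the underlying axioms \eqref{FFEL7}, \eqref{FFEL4}, \eqref{FFEL8} directly; since Lemma~\ref{la:feqs}.\ref{eq:a1} is proved from exactly those axioms, the two arguments are interchangeable.
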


\begin{proof}
By \eqref{FFEL4} and Lemma \ref{lem:nfc1} it suffices to show that
$\nf^c_2(P^*, Q^\fa)$ is an $\fa$-term and that $\EqFFEL \vdash
\nf^c_2(P^*, Q^\fa) = P^* \fulland Q^\fa$. We prove this by induction on
the number of $\ell$-terms in $P^*$. In the base cases, i.e., $\ell$-terms, the
grammatical claims follow from Lemma \ref{lem:nfc1}. The claim about derivable
equality in the case of $\ell$-terms with positive determinative atoms follows
from Lemma~\ref{lem:nfc1} and \eqref{FFEL4}. For $\ell$-terms with negative
determinative atoms it follows from Lemma \ref{lem:nfc1}, Lemma \ref{lem:ptpf},
\eqref{FFEL7}, \eqref{FFEL4} and \eqref{FFEL8}.

For the induction step we assume the claims hold for any $*$-terms with fewer
$\ell$-terms than $P^* \fulland Q^d$ and $P^* \fullor Q^c$. In the case of
conjunctions we have $\nf^c_2(P^* \fulland Q^d, R^\fa) = \nf^c_2(P^*,
\nf^c_2(Q^d, R^\fa))$ and the grammatical claim follows from the induction
hypothesis (applied twice). Derivable equality follows from the induction
hypothesis and \eqref{FFEL4}.

For disjunctions we have $\nf^c_2(P^* \fullor Q^c, R^\fa) = \nf^c_2(P^*,
\nf^c_2(Q^c, R^\fa))$ and the grammatical claim follows from the induction
hypothesis (applied twice). The claim about derivable equality is proved as
follows:
\begin{align*}
\nf^c_2(P^* \fullor Q^c, R^\fa)
&= \nf^c_2(P^*, \nf^c_2(Q^c, R^\fa))
&&\textrm{by definition} \\
&= P^* \fulland (Q^c \fulland R^\fa)
&&\textrm{by induction hypothesis} \\
&= P^* \fulland (Q^c \fulland (R^\fa \fulland \fa))
&&\textrm{by Lemma \ref{lem:ptpf}} \\
&= (P^* \fullor Q^c) \fulland (R^\fa \fulland \fa)
&&\textrm{by Lemma \ref{lem:feqs2}.\ref{eq:a3}} \\
&= (P^* \fullor Q^c) \fulland R^\fa.
&&\textrm{by Lemma \ref{lem:ptpf}}
\end{align*}
\end{proof}

\begin{lemma}
\label{lem:nfc2}
For any $\fa$-term $P$ and $Q \in \FNF$, $\nf^c(P, Q)$ is an
$\fa$-term and
\begin{equation*}
\EqFFEL \vdash \nf^c(P, Q) = P \fulland Q.
\end{equation*}
\end{lemma}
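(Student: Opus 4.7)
The plan is to proceed by induction on the structure of the $\fa$-term $P$, following the grammar $P^\fa ::= \fa \mid a \fulland P^\fa$. The definition of $\nf^c$ on an $\fa$-term argument splits the base case $P = \fa$ into three sub-cases depending on the grammatical category of $Q$ (equations \eqref{eq:nfc5}--\eqref{eq:nfc7}), while the inductive case \eqref{eq:nfc8} recurses on the inner $P^\fa$ uniformly in $Q$.

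For the base case $P = \fa$, I would treat each sub-case separately. When $Q = Q^\tr$, unfolding gives $\nf^c(\fa, Q^\tr) = \nf^n(Q^\tr)$, which Lemma \ref{lem:nfn} shows is an $\fa$-term derivably equal to $\neg Q^\tr$; the remaining derivable equality $\fa \fulland Q^\tr = \neg Q^\tr$ follows by rewriting $Q^\tr = Q^\tr \fullor \tr$ via Lemma \ref{lem:ptpf}, pushing negation inward with \eqref{FFEL1}--\eqref{FFEL3} to obtain $\neg Q^\tr \fulland \fa$, and finally flipping the sign of the left argument via \eqref{FFEL8}. When $Q = Q^\fa$, the output equals $Q^\fa$ directly; derivable equality $Q^\fa = \fa \fulland Q^\fa$ follows from Lemma \ref{lem:ptpf} (so $Q^\fa = Q^\fa \fulland \fa$) and then \eqref{FFEL7}. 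When $Q = Q^\tr \fulland Q^*$, equation \eqref{eq:nfc7} delegates to $\nf^c(Q^\tr \fulland Q^*, \fa)$, which is handled by Lemma \ref{lem:nfc4}; it returns an $\fa$-term derivably equal to $(Q^\tr \fulland Q^*) \fulland \fa$, and one more application of \eqref{FFEL7} turns this into $\fa \fulland (Q^\tr \fulland Q^*)$, as required.

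For the inductive step $P = a \fulland P^\fa$, equation \eqref{eq:nfc8} gives $\nf^c(a \fulland P^\fa, Q) = a \fulland \nf^c(P^\fa, Q)$. By the induction hypothesis applied to $P^\fa$, the inner call returns some $\fa$-term derivably equal to $P^\fa \fulland Q$, so $a \fulland \nf^c(P^\fa, Q)$ is again an $\fa$-term by the grammar, and derivable equality with $(a \fulland P^\fa) \fulland Q$ follows by associativity \eqref{FFEL4}.

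The main obstacle is the $\tr$-term sub-case of the base step, where the equational chain from $\fa \fulland Q^\tr$ to $\neg Q^\tr$ must be assembled carefully using \eqref{FFEL8}; the other sub-cases are essentially book-keeping with \eqref{FFEL7} and Lemma \ref{lem:ptpf}. I should also verify that appealing to Lemma \ref{lem:nfc4} introduces no circularity: that lemma was established just above using only Lemma \ref{lem:nfc1} (on $\tr$-terms) together with purely equational manipulations, so it is indeed available here.
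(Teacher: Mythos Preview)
Your proposal is correct and follows essentially the same approach as the paper: induction on the $\fa$-term $P$ with the base case $P=\fa$ split according to the grammatical category of $Q$, invoking Lemma~\ref{lem:nfn}, Lemma~\ref{lem:ptpf}, Lemma~\ref{lem:nfc4}, and axioms \eqref{FFEL4}, \eqref{FFEL7}, \eqref{FFEL8} exactly where the paper does (the paper merely organises it as three parallel inductions rather than one). The only omission is that in the $Q=Q^\tr$ sub-case you still need one application of \eqref{FFEL7} at the end to pass from $Q^\tr\fulland\fa$ to $\fa\fulland Q^\tr$; otherwise the argument is complete.
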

\begin{proof}
We make a case distinction on the grammatical category of the second argument.
If the second argument is a $\tr$-term we proceed by induction on the first
argument. In the base case we have $\nf^c(\fa, P^\tr) = \nf^n(P^\tr)$
and the result is by Lemma \ref{lem:nfn}, Lemma \ref{lem:ptpf}, \eqref{FFEL7}
and \eqref{FFEL8}. In the inductive case we have $\nf^c(a \fulland P^\fa,
Q^\tr) = a \fulland \nf^c(P^\fa, Q^\tr)$, where we assume that
$\nf^c(P^\fa, Q^\tr)$ is an $\fa$-term and $\EqFFEL \vdash
\nf^c(P^\fa, Q^\tr) = P^\fa \fulland Q^\tr$. The result now follows
from the induction hypothesis and \eqref{FFEL4}.

If the second argument is an $\fa$-term the proof is almost the same, except
that we need not invoke Lemma \ref{lem:nfn} or \eqref{FFEL8} in the base case.

Finally, if the second argument is a $\tr$-$*$-term we again proceed by
induction on the first argument. In the base case we have $\nf^c(\fa,
P^\tr \fulland Q^*) = \nf^c(P^\tr \fulland Q^*, \fa)$. The grammatical
claim now follows from Lemma \ref{lem:nfc4} and derivable equality follows
from Lemma \ref{lem:nfc4} and \eqref{FFEL7}. For for the inductive case the
results follow from the induction hypothesis and \eqref{FFEL4}.
\end{proof}

\begin{lemma}
\label{lem:nfc3}
For any $\tr$-$*$-term $P$ and $\tr$-term $Q$, $\nf^c(P, Q)$ has the same
grammatical category as $P$ and
\begin{equation*}
\EqFFEL \vdash \nf^c(P, Q) = P \fulland Q.
\end{equation*}
\end{lemma}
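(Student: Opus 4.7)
The plan is to unfold the definition $\nf^c(P^\tr \fulland Q^*, R^\tr) = P^\tr \fulland \nf^c_1(Q^*, R^\tr)$ and then prove the key auxiliary claim: for every $*$-term $S$ and every $\tr$-term $R$, the term $\nf^c_1(S, R)$ is again a $*$-term of the same outermost shape as $S$ (i.e., $c$-terms map to $c$-terms and $d$-terms to $d$-terms), and $\EqFFEL \vdash \nf^c_1(S, R) = S \fulland R$. Granted this, the grammatical claim for the lemma follows because $P^\tr \fulland \nf^c_1(Q^*, R^\tr)$ is then a $\tr$-$*$-term, and derivable equality follows by congruence and a single application of associativity~\eqref{FFEL4}: $P^\tr \fulland (Q^* \fulland R^\tr) = (P^\tr \fulland Q^*) \fulland R^\tr$.

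I would prove the auxiliary claim by induction on the number of $\ell$-terms in $S$, tracking the grammatical shape carefully since $c$-terms and $d$-terms are interdefined through $*$-terms. In the base cases $S = a \fulland P^\tr$ and $S = \neg a \fulland P^\tr$, the definition gives $\nf^c_1(S, R^\tr) = a \fulland \nf^c(P^\tr, R^\tr)$ (resp.\ with $\neg a$), which is an $\ell$-term (hence both a $c$-term and a $d$-term) by Lemma~\ref{lem:nfc1}, and derivable equality is immediate from Lemma~\ref{lem:nfc1} and~\eqref{FFEL4}.

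For the inductive step, the conjunction case $\nf^c_1(P^* \fulland Q^d, R^\tr) = P^* \fulland \nf^c_1(Q^d, R^\tr)$ is straightforward: by the induction hypothesis $\nf^c_1(Q^d, R^\tr)$ is a $d$-term, hence $P^* \fulland \nf^c_1(Q^d, R^\tr)$ is a $c$-term, and derivable equality follows from the induction hypothesis and associativity. The disjunction case is the main obstacle: we must show $P^* \fullor \nf^c_1(Q^c, R^\tr) = (P^* \fullor Q^c) \fulland R^\tr$. Here the induction hypothesis only gives $\nf^c_1(Q^c, R^\tr) = Q^c \fulland R^\tr$, so the equation we need is $P^* \fullor (Q^c \fulland R^\tr) = (P^* \fullor Q^c) \fulland R^\tr$, which is not directly an axiom. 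The trick is to rewrite $R^\tr$ as $R^\tr \fullor \tr$ using Lemma~\ref{lem:ptpf} and then invoke Lemma~\ref{la:feqs}.\ref{eq:a2}, which is precisely the identity $x \fullor (y \fulland (z \fullor \tr)) = (x \fullor y) \fulland (z \fullor \tr)$; another application of Lemma~\ref{lem:ptpf} absorbs the $\fullor \tr$ back.

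Putting it all together, the lemma then follows by combining the auxiliary claim with the unfolding of $\nf^c$ and one application of~\eqref{FFEL4}, exactly as described above. The only nontrivial insight is the rewriting $R^\tr = R^\tr \fullor \tr$ in the disjunction case, which allows the non-obvious Lemma~\ref{la:feqs}.\ref{eq:a2} to be applied.
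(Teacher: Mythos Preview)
Your proposal is correct and follows essentially the same route as the paper: reduce to an auxiliary claim about $\nf^c_1$ via~\eqref{FFEL4}, induct on the number of $\ell$-terms, handle the base and conjunction cases with Lemma~\ref{lem:nfc1} and associativity, and in the disjunction case apply Lemma~\ref{lem:ptpf} to rewrite $R^\tr$ as $R^\tr \fullor \tr$ so that Lemma~\ref{la:feqs}.\ref{eq:a2} can be invoked. Your explicit tracking of $c$-terms versus $d$-terms is exactly what is needed for the grammatical claim to go through inductively, and matches the paper's intent when it says ``same grammatical category.''
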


\begin{proof}
By \eqref{FFEL4} it suffices to prove the claims for $\nf^c_1$, i.e., that
$\nf^c_1(P^*, Q^\tr)$ has the same grammatical category as $P^*$ and that
$\EqFFEL \vdash \nf^c_1(P^*, Q^\tr) = P^* \fulland Q^\tr$. We prove this by
induction on the number of $\ell$-terms in $P^*$. In the base case we deal with
$\ell$-terms and the results follow from Lemma \ref{lem:nfc1} and
\eqref{FFEL4}.

For the inductive cases we assume that the results hold for any $*$-term with
fewer $\ell$-terms than $P^* \fulland Q^d$ and $P^* \fullor Q^c$.  In the case
of conjunctions the results follow from the induction hypothesis and
\eqref{FFEL4}. In the case of disjunctions the grammatical claim follows from
the induction hypothesis. For derivable equality we have:
\begin{align*}
\nf^c_1(P^* \fullor Q^c, R^\tr)
&= P^* \fullor \nf^c_1(Q^c, R^\tr)
&&\textrm{by definition} \\
&= P^* \fullor (Q^c \fulland R^\tr)
&&\textrm{by induction hypothesis} \\
&= P^* \fullor (Q^c \fulland (R^\tr \fullor \tr))
&&\textrm{by Lemma \ref{lem:ptpf}} \\
&= (P^* \fullor Q^c) \fulland (R^\tr \fullor \tr)
&&\textrm{by Lemma \ref{la:feqs}.\ref{eq:a2}} \\
&= (P^* \fullor Q^c) \fulland R^\tr.
&&\textrm{by Lemma \ref{lem:ptpf}} 
\end{align*}
\end{proof}

\begin{lemma}
\label{lem:nfc5}
For any $P, Q \in \FNF$, $\nf^c(P, Q)$ is in $\FNF$ and
\begin{equation*}
\EqFFEL \vdash \nf^c(P, Q) = P \fulland Q.
\end{equation*}
\end{lemma}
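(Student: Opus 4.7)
By the preceding lemmas (Lemmas~\ref{lem:nfc1}, \ref{lem:nfc2}, \ref{lem:nfc4}, \ref{lem:nfc3}), all cases are covered except when both $P$ and $Q$ are $\tr$-$*$-terms. So the only remaining case is to show that for all $\tr$-$*$-terms $P^\tr \fulland Q^*$ and $R^\tr \fulland S^*$, the term $\nf^c(P^\tr \fulland Q^*, R^\tr \fulland S^*)$ is an $\FNF$-term and is derivably equal to $(P^\tr \fulland Q^*) \fulland (R^\tr \fulland S^*)$. By the defining equation~\eqref{eq:nfc19}, this reduces to establishing the corresponding properties for the auxiliary function $\nf^c_3$.

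The plan is to prove as an intermediate claim that, for any $*$-term $P^*$, $\tr$-term $Q^\tr$, and $*$-term $R^*$, the value $\nf^c_3(P^*, Q^\tr \fulland R^*)$ is a $c$-term (hence a $*$-term) and
\[\EqFFEL \vdash \nf^c_3(P^*, Q^\tr \fulland R^*) = P^* \fulland (Q^\tr \fulland R^*).\]
The argument proceeds by induction on the number of $\ell$-terms in $R^*$, matching the three defining equations~\eqref{eq:nfc20}--\eqref{eq:nfc22}. In the base case \eqref{eq:nfc20}, $R^*$ is an $\ell$-term $R^\ell$, and the result is $\nf^c_1(P^*, Q^\tr) \fulland R^\ell$; using Lemma~\ref{lem:nfc3} (which already establishes $\EqFFEL \vdash \nf^c_1(P^*, Q^\tr) = P^* \fulland Q^\tr$) and the associativity axiom~\eqref{FFEL4}, this is derivably $P^* \fulland (Q^\tr \fulland R^\ell)$, and grammatically it is a $c$-term since $\nf^c_1(P^*, Q^\tr)$ is a $*$-term. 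The non-recursive case \eqref{eq:nfc22} goes through in exactly the same way. The recursive case \eqref{eq:nfc21} applies the induction hypothesis on the smaller $*$-term to handle $\nf^c_3(P^*, Q^\tr \fulland R^*)$, and then one application of~\eqref{FFEL4} re-associates the result to $P^* \fulland (Q^\tr \fulland (R^* \fulland S^d))$; grammatical membership is immediate.

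Given this intermediate claim, the main statement follows: by~\eqref{eq:nfc19},
\[\nf^c(P^\tr \fulland Q^*, R^\tr \fulland S^*) = P^\tr \fulland \nf^c_3(Q^*, R^\tr \fulland S^*),\]
which by the intermediate claim is derivably equal to $P^\tr \fulland (Q^* \fulland (R^\tr \fulland S^*))$, and one further application of~\eqref{FFEL4} yields $(P^\tr \fulland Q^*) \fulland (R^\tr \fulland S^*)$. Grammatically, the result is a $\tr$-$*$-term by construction. Combined with the earlier lemmas covering every other combination of grammatical categories of $P$ and $Q$, this completes the proof.

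The main obstacle is bookkeeping rather than any deep argument: one must carefully verify that each recursive call to $\nf^c_3$ or reference to $\nf^c_1$ lands in the grammatical category expected by the previously established lemma, and that the associativity rearrangements track correctly through the nested $\tr$-$*$-structure. No new axioms beyond~\eqref{FFEL4} (and the results already packaged in Lemma~\ref{lem:nfc3}) are needed.
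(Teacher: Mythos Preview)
Your proposal is correct and follows essentially the same route as the paper's own proof: reduce to the $\tr$-$*$-term case via the preceding lemmas, then prove by induction on the number of $\ell$-terms in $R^*$ that $\nf^c_3(P^*, Q^\tr \fulland R^*)$ is a $*$-term derivably equal to $P^* \fulland (Q^\tr \fulland R^*)$, using Lemma~\ref{lem:nfc3} and~\eqref{FFEL4} in each case. Your observation that the result is in fact a $c$-term is slightly sharper than the paper's ``$*$-term'' but not needed for the conclusion.
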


\begin{proof}
By the four preceding lemmas it suffices to show that $\nf^c(P^\tr \fulland
Q^*, R^\tr \fulland S^*)$ is in $\FNF$ and that $\EqFFEL \vdash \nf^c(P^\tr
\fulland Q^*, R^\tr \fulland S^*) = (P^\tr \fulland Q^*) \fulland (R^\tr
\fulland S^*)$. By \eqref{FFEL4}, in turn, it suffices to prove that
$\nf^c_3(P^*, Q^\tr \fulland R^*)$ is a $*$-term and that $\EqFFEL \vdash
\nf^c_3(P^*, Q^\tr \fulland R^*) = P^* \fulland (Q^\tr \fulland R^*)$. We
prove this by induction on the number of $\ell$-terms in $R^*$. In the base
case we have that $\nf^c_3(P^*, Q^\tr \fulland R^\ell) = \nf^c_1(P^*,
Q^\tr) \fulland R^\ell$. The results follow from Lemma \ref{lem:nfc3} and
\eqref{FFEL4}.

For the inductive cases we assume that the results hold for all $*$-terms with
fewer $\ell$-terms than $R^* \fulland S^d$ and $R^* \fullor S^c$. For
conjunctions the result follows from the induction hypothesis and
\eqref{FFEL4} and for disjunctions it follows from Lemma \ref{lem:nfc3} and
\eqref{FFEL4}.
\end{proof}

\noindent
\textbf{Theorem~\ref{thm:nf}.}
\textit{For any $P \in \FT$, $\nf(P)$ terminates, $\nf(P) \in \FNF$ and $\EqFFEL \vdash
\nf(P) = P$.}

\begin{proof}
By induction on the complexity of $P$. If $P$ is an atom, the result is by
\eqref{FFEL5} and \eqref{FFEL6}. If $P$ is $\tr$ or $\fa$ the result is
by identity. For the induction we assume that the result holds for all
$\FEL$-terms of lesser complexity than $P \fulland Q$ and $P \fullor Q$. The
result now follows from the induction hypothesis, Lemma \ref{lem:nfn}, Lemma
\ref{lem:nfc5} and \eqref{FFEL2}.
\end{proof}

\noindent\textbf{Correctness of $\bm\inv$.}~
\label{sec:felinv}

\noindent
\textbf{Theorem \ref{thm:felinv}}.
\textit{For all $P \in \FNF$, $\inv(\FE(P))=P$, i.e. $\inv(\fe(P))$ is syntactically equal 
to $P$ for $P\in\FNF$.
}

\begin{proof}
We first prove that for all $\tr$-terms $P$, $\inv^\tr(\FE(P)) = P$,
by induction on $P$. In the base case $P = \tr$ and we have
$\inv^\tr(\FE(P)) = \inv^\tr(\tr) = \tr = P$. For the
inductive case we have $P = a \fullor Q^\tr$ and
\begin{align*}
\inv^\tr(\FE(P)) &= \inv^\tr(\FE(Q^\tr) \tlef a \trig
  \FE(Q^\tr))
&&\textrm{by definition of $\FE$} \\
&= a \fullor \inv^\tr(\FE(Q^\tr))
&&\textrm{by definition of $\inv^\tr$} \\
&= a \fullor Q^\tr
&&\textrm{by induction hypothesis} \\
&= P.
\end{align*}

Similarly, we see that for all $\fa$-terms $P$, $\inv^\fa(\FE(P)) =
P$, by induction on $P$. In the base case $P = \fa$ and we have
$\inv^\fa(\FE(P)) = \inv^\fa(\fa) = \fa = P$. For
the inductive case we have $P = a \fulland Q^\fa$ and
\begin{align*}
\inv^\fa(\FE(P)) &= \inv^\fa(\FE(Q^\fa) \tlef a \trig
  \FE(Q^\fa))
&&\textrm{by definition of $\FE$} \\
&= a \fulland \inv^\fa(\FE(Q^\fa))
&&\textrm{by definition of $\inv^\fa$} \\
&= a \fulland Q^\fa
&&\textrm{by induction hypothesis} \\
&= P.
\end{align*}

Now we check that for all $\ell$-terms $P$, $\inv^\ell(\FE(P)) = P$.
We observe that either $P = a \fulland Q^\tr$ or $P = \neg a
\fulland Q^\tr$. In the first case we have
\begin{align*}
\inv^\ell(\FE(P)) &= \inv^\ell(\FE(Q^\tr) \tlef a
  \trig \FE(Q^\tr)\sub{\tr}{\fa})
&&\textrm{by definition of $\FE$} \\
&= a \fulland \inv^\tr(\FE(Q^\tr))
&&\textrm{by definition of $\inv^\ell$} \\
&= a \fulland Q^\tr
&&\textrm{as shown above} \\
&= P.
\end{align*}
In the second case we have that
\begin{align*}
\inv^\ell(\FE(P)) &= \inv^\ell(\FE(Q^\tr)\sub{\tr}{\fa} \tlef a
  \trig \FE(Q^\tr))
&&\textrm{by definition of $\FE$} \\
&= \neg a \fulland \inv^\tr(\FE(Q^\tr))
&&\textrm{by definition of $\inv_\ell$} \\
&= \neg a \fulland Q^\tr
&&\textrm{as shown above} \\
&= P.
\end{align*}

We now prove that for all $*$-terms $P$, $\inv^*(\FE(P)) = P$, by
induction on $P$ modulo the complexity of $\ell$-terms. In the base case we are
dealing with $\ell$-terms. Because an $\ell$-term has neither a cd nor a dd we
have $\inv^*(\FE(P)) = \inv^\ell(\FE(P)) = P$, where the first
equality is by definition of $\inv^*$ and the second was shown above. For the
induction we have either $P = Q \fulland R$ or $P = Q \fullor R$. In
the first case note that by Theorem \ref{thm:cddd}, $\FE(P)$ has a cd and no
dd. So we have 
\begin{align*}
\inv^*(\FE(P)) &= \inv^*(\cd_1(\FE(P))\ssub{\triangle_1}{\tr}{\triangle_2}{\fa})
  \fulland \inv^*(\cd_2(\FE(P)))
&&\textrm{by definition of $\inv^*$} \\
&= \inv^*(\FE(Q)) \fulland \inv^*(\FE(R))
&&\textrm{by Theorem \ref{thm:cddd}} \\
&= Q \fulland R
&&\textrm{by induction hypothesis} \\
&= P.
\end{align*}
In the second case, again by Theorem \ref{thm:cddd}, $\FE(P)$ has a dd and no
cd. So we have that
\begin{align*}
\inv^*(\FE(P)) &= \inv^*(\dd_1(\FE(P))\ssub{\triangle_1}{\tr}{\triangle_2}{\fa})
  \fullor \inv^*(\dd_2(\FE(P)))
&&\textrm{by definition of $\inv^*$} \\
&= \inv^*(\FE(Q)) \fullor \inv^*(\FE(R))
&&\textrm{by Theorem \ref{thm:cddd}} \\
&= Q \fullor R
&&\textrm{by induction hypothesis} \\
&= P.
\end{align*}

Finally, we prove the theorem's statement by making a case distinction on the
grammatical category of $P$. If $P$ is a $\tr$-term, then $\FE(P)$ has only
$\tr$-leaves and hence $\inv(\FE(P)) = \inv^\tr(\FE(P)) = P$,
where the first equality is by definition of $\inv$ and the second was shown
above. If $P$ is an $\fa$-term, then $\FE(P)$ has only $\fa$-leaves and
hence $\inv(\FE(P)) = \inv^\fa(\FE(P)) = P$, where the first
equality is by definition of $\inv$ and the second was shown above. If $P$ is a
$\tr$-$*$-term, then it has both $\tr$ and $\fa$-leaves and hence,
letting $P = Q \fulland R$,
\begin{align*}
\inv(\FE(P)) &= \inv^\tr(\tsd_1(\FE(P))\sub{\triangle}{\tr}) \fulland
  \inv^*(\tsd_2(\FE(P)))
&&\textrm{by definition of $\inv$} \\
&= \inv^\tr(\FE(Q)) \fulland \inv^*(\FE(R))
&&\textrm{by Theorem \ref{thm:tsd}} \\
&= Q \fulland R
&&\textrm{as shown above} \\
&= P,
\end{align*}
which completes the proof.
\end{proof}

\section{Proofs - Section~\ref{sec:4}}
\label{app:A4}

In order to show that the relation $=_{\mfe}$ is indeed a congruence, we first
repeat a part of~\cite[La.3.3]{BPS21}, 
except that we generalise its clause 5 to clause 2 below.\footnote{%
  The proof of clause 5 of~\cite[La.3.3]{BPS21} (p.266) contains two errors in Case $c=a$,
  correct is:
  ``and by induction $\Le_a(X_1[\tr\mapsto Y]) = \Le_a(X_1)[\tr\mapsto \Le_a(Y)] 
  = \Le_a(X)[\tr\mapsto \Le_a(Y)]$" (cf.~Case $b=a$ in the proof 
  below and take $Z=\fa$).
  }
  
\begin{lemma}
\label{la:seven}
For all $a\in A$, $f\in\{\Le,\Ri\}$ and  $X,Y,Z\in\NT$,
\begin{description}
\item[$1.$]
\label{Aux4} 
$\memt(f_a(X[\tr\mapsto \fa,\fa\mapsto \tr]))=\memt(f_a(X))[\tr\mapsto \fa,\fa\mapsto \tr]$,
\item[$2.$]
\label{Aux5}
$f_a(X[\tr\mapsto Y,\fa\mapsto Z])=f_a(X)[\tr\mapsto f_a(Y),\fa\mapsto f_a(Z)]$.
\end{description}
\end{lemma}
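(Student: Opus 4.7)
The plan is to prove part~2 first by a clean structural induction on $X$, and then to deduce part~1 by combining part~2 with a separate commutation lemma relating negation substitution and $\memt$.

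For part~2, I would induct on the structure of $X\in\NT$. The base cases $X=\tr$ and $X=\fa$ are immediate since $f_a$ fixes leaves and the substitutions reduce each side to $f_a(Y)$ or $f_a(Z)$ respectively. In the inductive case $X = X_1 \unlhd b \unrhd X_2$, I would split on whether $b=a$ or $b\neq a$, and additionally on whether $f$ is $\Le$ or $\Ri$. For instance, when $f = \Le$ and $b=a$, both sides collapse to $\Le_a(X_1[\tr\mapsto Y,\fa\mapsto Z])$ versus $\Le_a(X_1)[\tr\mapsto \Le_a(Y),\fa\mapsto \Le_a(Z)]$, which are equal by the induction hypothesis on $X_1$. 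When $b\neq a$, $\Le_a$ distributes over the tree composition and the induction hypothesis applies to both $X_1$ and $X_2$; the cases for $\Ri_a$ are symmetric. The definition of replacement on tree compositions is exactly what makes both sides recurse into corresponding subterms, so no extra massaging is required.

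For part~1, I would first apply part~2 with $Y=\fa$ and $Z=\tr$, which yields $f_a(X[\tr\mapsto\fa,\fa\mapsto\tr]) = f_a(X)[\tr\mapsto\fa,\fa\mapsto\tr]$ since $f_a(\fa)=\fa$ and $f_a(\tr)=\tr$. It then suffices to establish the auxiliary fact that for every $T\in\NT$,
\[
\memt\bigl(T[\tr\mapsto\fa,\fa\mapsto\tr]\bigr) = \memt(T)[\tr\mapsto\fa,\fa\mapsto\tr],
\]
and apply it to $T = f_a(X)$.

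The main obstacle is this auxiliary fact, because $\memt$ is not purely structurally recursive on its argument: its recursive calls pass through $\Le_b$ and $\Ri_b$ of the subtrees, so induction on $X$ alone does not give immediate access to the induction hypothesis for $\Le_b(X_1)$. I would therefore induct on the depth $d(T)$ rather than on structure. In the step $T = T_1 \unlhd b \unrhd T_2$, the definition of $\memt$ gives $\memt(T)[\tr\mapsto\fa,\fa\mapsto\tr] = \memt(\Le_b(T_1))[\tr\mapsto\fa,\fa\mapsto\tr] \unlhd b \unrhd \memt(\Ri_b(T_2))[\tr\mapsto\fa,\fa\mapsto\tr]$. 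Since $d(\Le_b(T_1)) \le d(T_1) < d(T)$ and likewise for $\Ri_b(T_2)$, the induction hypothesis applies to push the negation substitution inside; then part~2 (with $Y=\fa$, $Z=\tr$) lets me commute it past $\Le_b$ and $\Ri_b$, after which I recombine via the definitions of replacement and $\memt$ to obtain $\memt(T[\tr\mapsto\fa,\fa\mapsto\tr])$. Chaining this auxiliary result with part~2 yields part~1, completing the proof.
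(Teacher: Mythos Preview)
Your proof of part~2 is essentially identical to the paper's: structural induction on $X$, case split on $b=a$ versus $b\neq a$, and on $f\in\{\Le,\Ri\}$.

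For part~1 your route differs from the paper's. The paper does not prove clause~1 at all here; it simply cites it as Lemma~3.3.4 of~\cite{BPS21}. You instead give a self-contained argument: first use part~2 with $Y=\fa$, $Z=\tr$ to commute the negation substitution past $f_a$, and then prove the separate commutation $\memt(T[\tr\mapsto\fa,\fa\mapsto\tr])=\memt(T)[\tr\mapsto\fa,\fa\mapsto\tr]$ by induction on $d(T)$, relying on $d(\Le_b(T_1))\le d(T_1)<d(T)$ to make the induction hypothesis available for the $\memt$-recursive calls. This is correct and has the advantage of making the present paper self-contained for this lemma; the depth induction is exactly the right device to handle the non-structural recursion of $\memt$. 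The paper's approach buys brevity by outsourcing the work, but yours actually supplies the missing argument.
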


\begin{proof}
Clause 1 is Lemma 3.3.4 in~\cite{BPS21}.
We prove clause 2 by induction on the structure of $X$.
The base cases are trivial. 

If $X=X_1\unlhd b\unrhd X_2$, distinguish the cases $b=a$ and $b\ne a$:
\\[1mm]
Case $b=a$, subcase $f=\Le$:
$\Le_a(X[\tr\mapsto Y,\fa\mapsto Z])=\Le_a(X_1[\tr\mapsto Y,\fa\mapsto Z])$ and by
induction, $\Le_a(X_1[\tr\mapsto Y,\fa\mapsto Z])=\Le_a(X_1)[\tr\mapsto \Le_a(Y),\fa\mapsto \Le_a(Z)]$.
By $\Le_a(X)=\Le_a(X_1)$, we are done.
The other subcase $f=\Ri$ follows in a similar way.
\\[1mm]
Case $b\ne a$, subcase $f=\Le$:
\begin{align*}
\Le_a(X[\tr\mapsto Y,\fa\mapsto Z])
&=\Le_a(X_1[\tr\mapsto Y,\fa\mapsto Z])\unlhd b\unrhd \Le_a(X_2[\tr\mapsto Y,\fa\mapsto Z])
\\
&\stackrel{\text{IH}}=
\Le_a(X_1)[\tr\mapsto \Le_a(Y),\fa\mapsto \Le_a(Z)]\unlhd b\unrhd \Le_a(X_2)
[\tr\mapsto \Le_a(Y),\fa\mapsto \Le_a(Z)]
\\
&=\Le_a(X)[\tr\mapsto \Le_a(Y),\fa\mapsto \Le_a(Z)]. 
\end{align*}
The other subcase $f=\Ri$ follows in a similar way.
\end{proof}

\noindent
\textbf{Lemma \ref{la:Mcongruence}.}
\textit{The relation $=_{\mfe}$ is a congruence.}

\begin{proof}
Define $\treeneg X:\NT\to\NT$ and 
$X\treefulland Y,~X\treefullor Y:\NT\times\NT\to\NT$ by
\begin{align*}
\treeneg X&=X[\tr\mapsto \fa,\fa\mapsto\tr],
\\
~X\treefulland Y
&=X[\tr\mapsto Y,\fa\mapsto Y[\tr\mapsto\fa]],
\\
X\treefullor Y
&=X[\tr\mapsto Y[\fa\mapsto\tr],\fa\mapsto Y].
\end{align*}
Hence, $\treeneg(\fe(P)) = \fe(\neg P)$, $\fe(P)\treefulland(Q) = \fe(P \fulland Q)$, and 
$\fe(P)\treefullor\fe(Q) = \fe(P \fullor Q)$.
It suffices to show that for $X,Y\in\NT$, if $\memt(X)=\memt(X')$ and $\memt(Y)=\memt(Y')$, then 
$\memt(\treeneg X)=\memt(\treeneg X')$,
$\memt(X\treefulland Y)=\memt(X'\treefulland Y')$, and
$\memt(X\treefullor Y)=\memt(X'\treefullor Y')$.

The case for $\treeneg X$ follows by case distinction on the form of $X$.
The base cases $X\in\{\tr,\fa\}$ are trivial, and if $X=X_1\unlhd a\unrhd X_2$, then 
$\memt(X)=\memt(X')$ implies that 
$X'=X_1'\unlhd a\unrhd X_2'$ for some $X_1',X_2'\in\NT$, and
\begin{equation}
\label{enig}
\tag{Aux3}
\memt(\Le_a(X_1))=\memt(\Le_a(X_1'))
\quad\text{and}\quad
\memt(\Ri_a(X_2))=\memt(\Ri_a(X_2')).
\end{equation}

Write $\negtext$ for the leaf replacement $[\tr\mapsto \fa,\fa\mapsto \tr]$
and derive 
\begin{align*}
\memt(\treeneg X)
&=\memt((X_1\unlhd a\unrhd X_2)\negtext)
\\
&=\memt(\Le_a(X_1\negtext))\unlhd a\unrhd \memt(\Ri_a(X_2\negtext))
\\
&=\memt(\Le_a(X_1'\negtext))\unlhd a\unrhd \memt(\Ri_a(X_2'\negtext))
&&\text{by La.\ref{la:seven}.1 and \eqref{enig}}
\\
&=\memt(\treeneg X').
\end{align*}

The case for $\memt(X\treefulland Y)=\memt(X'\fulland Y')$:
for readability we split the proof obligation into two parts.

(A) $\memt(X\treefulland Y)=\memt(X'\treefulland Y)$.
This follows by induction on the depth of $X$.
The base cases $X\in\{\tr,\fa\}$ are simple: note that
if $X=\tr$ and $\memt(X)=\memt(X')$, then $X'=\tr$ and we are done, and similarly if $X=\fa$.

If $X=X_1\unlhd a\unrhd X_2$ and $\memt(X)=\memt(X')$, then by \eqref{enig}, 
$X'=X_1'\unlhd a\unrhd X_2'$ for some $X_1',X_2'\in\NT$ with 
$\memt(\Le_a(X_1))=\memt(\Le_a(X_1'))$ 
and $\memt(\Ri_a(X_2))=\memt(\Ri_a(X_2'))$.
Write 
\begin{align*}
\fullandtext
&\text{ for $[\tr\mapsto Y,\fa\mapsto Y[\tr\mapsto\fa]]$},
\\
\Lfullandtext
&\text{ for $[\tr\mapsto \Le_a(Y),\fa\mapsto\Le_a(Y)[\tr\mapsto\fa]]$},
\\
\Rfullandtext
&\text{ for $[\tr\mapsto \Ri_a(Y),\fa\mapsto\Ri_a(Y)[\tr\mapsto\fa]]$}.
\end{align*}
Derive
\begin{align*}
\memt(X\treefulland Y)
&=\memt((X_1\unlhd a\unrhd X_2)\treefulland Y)
\\
&=\memt(X_1\fullandtext\unlhd a\unrhd X_2\fullandtext)
\\
&=\memt(\Le_a(X_1\fullandtext))\unlhd a\unrhd \memt(\Ri_a(X_2\fullandtext))
\\
&=\memt(\Le_a(X_1)\Lfullandtext)\unlhd a\unrhd \memt(\Ri_a(X_2)\Rfullandtext)
&&\text{by La.\ref{la:seven}.2}\\
&=\memt(\Le_a(X_1)\treefulland \Le_a(Y))\unlhd a\unrhd \memt(\Ri_a(X_2)\treefulland \Ri_a(Y))
&&\text{}\\
&=\memt(\Le_a(X_1')\treefulland \Le_a(Y))\unlhd a\unrhd \memt(\Ri_a(X_2')\treefulland \Ri_a(Y))
&&\text{by IH and \eqref{enig}}\\
&=...=\memt(X'\treefulland Y).
\end{align*}

(B) $\memt(X\treefulland Y)=\memt(X\treefulland Y')$.
This follows by induction on the depth of $X$.
The base cases $X\in\{\tr,\fa\}$ are trivial.
If $X=X_1\unlhd a\unrhd X_2$ derive
\begin{align*}
\memt(X\treefulland Y)
&=\memt((X_1\unlhd a\unrhd X_2)\treefulland Y)
\\
&=\memt(X_1\fullandtext\unlhd a\unrhd X_2\fullandtext)
\\
&=\memt(\Le_a(X_1\fullandtext))\unlhd a\unrhd \memt(\Ri_a(X_2\fullandtext))
\\
&=\memt(\Le_a(X_1)\Lfullandtext)\unlhd a\unrhd \memt(\Ri_a(X_2)\Rfullandtext)
&&\text{by La.\ref{la:seven}.2}\\
&=\memt(\Le_a(X_1)\treefulland \Le_a(Y))\unlhd a\unrhd \memt(\Ri_a(X_2)\treefulland \Ri_a(Y))
&&\text{}\\
&=\memt(\Le_a(X_1)\treefulland \Le_a(Y'))\unlhd a\unrhd \memt(\Ri_a(X_2)\treefulland \Ri_a(Y'))
&&\text{by IH}\\
&=...=\memt(X\treefulland Y').
\end{align*}

The proof for the case $\memt(X\treefullor Y)=\memt(X'\treefullor Y')$
is similar.
\end{proof}

For the continuation of the proof of Lemma~\ref{la:Msound}, we use results about
short-circuit connectives and the memorising short-circuit evaluation function 
$\mse(P)= 
\memt(\se(P))$ defined
in~\cite{BPS21} (available online); for the  function $\memt$ see Definition~\ref{def:mfe}
and below we recall the definition of the short-circuit evaluation
function \se. In particular, we use the 
definability of the connectives ${\fulland}$ and ${\fullor}$ in short-circuit logic
(cf.\ Section~\ref{sec:7}). 

Given a set of atoms $A$, the set $\SPsc$ of closed terms with short-circuit connectives
is  defined by the following grammar ($a\in A$):
\[P ::= \tr\mid\fa\mid a\mid\neg P\mid P\leftand P\mid P\leftor P.\]
For $a\in A$, the evaluation function $\se:\SPsc\to\NT$ and the translation function $\tra:\SP\to\SPsc$ 
are defined by
\begin{align*}
&\se(\tr)=\tr,~
\se(\fa)=\fa,~~
&&\tra(\tr)=\tr,~~
\tra(\fa)=\fa,
\\
&\se(a)=\tr\unlhd a\unrhd \fa,
&&\tra(a)=a,~
\\
&\se(\neg P)=\se(P)[\tr\mapsto\fa, \fa\mapsto\tr],~
&&\tra(\neg P)=\neg(\tra(P)),~
\\
&\se(P\leftand Q)=\se(P)[\tr\mapsto \se(Q)],
&&\tra(P\fulland Q)=(\tra(P)\leftor(\tra(Q)\leftand\fa))\leftand\tra(Q),\\
&\se(P\leftor Q)=\se(P)[\fa\mapsto \se(Q)],
&&\tra(P\fullor Q)=(\tra(P)\leftand(\tra(Q)\leftor\tr))\leftor\tra(Q).
\end{align*}

\begin{lemma}
\label{la:aux}
For all $P\in\SP$, $\fe(P)=\se(\tra(P))$.
\end{lemma}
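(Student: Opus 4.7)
The proof proceeds by straightforward structural induction on $P\in\SP$. The base cases $P\in\{\tr,\fa\}$ are immediate from the definitions, and for $P=a$ both sides yield $\tr\unlhd a\unrhd\fa$. For the induction step, the negation case is easy: $\fe(\neg P)=\fe(P)[\tr\mapsto\fa,\fa\mapsto\tr]$, while $\se(\tra(\neg P))=\se(\neg\tra(P))=\se(\tra(P))[\tr\mapsto\fa,\fa\mapsto\tr]$, and the induction hypothesis closes the case.

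The main work is in the two binary cases, and by duality/\eqref{FFEL2}-style reasoning at the tree level it suffices to treat $P\fulland Q$ in detail. Unfolding the definitions of $\se$ and $\tra$, we compute
\[
\se(\tra(P\fulland Q))=\se\bigl((\tra(P)\leftor(\tra(Q)\leftand\fa))\leftand\tra(Q)\bigr)
=\se(\tra(P))[\fa\mapsto\se(\tra(Q))[\tr\mapsto\fa]][\tr\mapsto\se(\tra(Q))],
\]
where we repeatedly use $\se(X\leftand Y)=\se(X)[\tr\mapsto\se(Y)]$ and $\se(X\leftor Y)=\se(X)[\fa\mapsto\se(Y)]$. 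Invoking the induction hypothesis on $P$ and $Q$ turns this into $\fe(P)[\fa\mapsto\fe(Q)[\tr\mapsto\fa]][\tr\mapsto\fe(Q)]$, so it remains to identify the latter with the expression $\fe(P)[\tr\mapsto\fe(Q),\fa\mapsto\fe(Q)[\tr\mapsto\fa]]$ that is, by definition, $\fe(P\fulland Q)$.

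For this last step, the natural tool is the repeated-replacement identity stated in Section~\ref{sec:2}, namely
\[
X[\tr\mapsto Y_1,\fa\mapsto Z_1][\tr\mapsto Y_2,\fa\mapsto Z_2]
= X[\tr\mapsto Y_1[\tr\mapsto Y_2,\fa\mapsto Z_2],~\fa\mapsto Z_1[\tr\mapsto Y_2,\fa\mapsto Z_2]].
\]
Applying it with $Y_1=\tr$, $Z_1=\fe(Q)[\tr\mapsto\fa]$, $Y_2=\fe(Q)$, $Z_2=\fa$, the outer $\tr$-branch simplifies to $\fe(Q)$, and the outer $\fa$-branch to $\fe(Q)[\tr\mapsto\fa][\tr\mapsto\fe(Q),\fa\mapsto\fa]$, which collapses to $\fe(Q)[\tr\mapsto\fa]$ because the first inner replacement has already converted every $\tr$-leaf of $\fe(Q)$ to $\fa$, so the subsequent $[\tr\mapsto\fe(Q)]$ does nothing. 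This yields precisely $\fe(P\fulland Q)$.

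The case $P\fullor Q$ is entirely analogous, using the symmetric translation $\tra(P\fullor Q)=(\tra(P)\leftand(\tra(Q)\leftor\tr))\leftor\tra(Q)$ and the dual collapse of iterated replacements. The only delicate point in the whole argument is keeping careful track of the order in which leaf-replacements are applied (since $[\tr\mapsto\,]$ and $[\fa\mapsto\,]$ do not commute when the inserted subtrees themselves contain $\tr$- or $\fa$-leaves); that is precisely what the repeated-replacement identity above is designed to handle.
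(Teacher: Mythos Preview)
Your proposal is correct and follows essentially the same approach as the paper's proof: structural induction on $P$, with the binary case handled by unfolding $\se(\tra(\cdot))$, applying the repeated-replacement identity, and using the observation that $\fe(Q)[\tr\mapsto\fa]$ has no $\tr$-leaves so the subsequent $[\tr\mapsto\fe(Q)]$ is vacuous (the paper phrases this as ``by redundancy''). The only cosmetic difference is that the paper invokes the induction hypothesis at the very end, whereas you apply it before the replacement calculation.
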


\begin{proof}
By structural induction on $P$. The base cases are immediate, and so is $\fe(\neg P)=\se(\tra(\neg P))$.

For the case $P=Q\fulland R$, derive 
\begin{align*}
\se(\tra(P))
&=\se((\tra(Q)\leftor(\tra(R)\leftand\fa))\leftand\tra(R))
\\
&=\se(\tra(Q)\leftor(\tra(R)\leftand\fa))[\tr\mapsto\se(\tra(R))]
\\
&=(\se(\tra(Q))[\fa\mapsto\se(\tra(R)\leftand\fa)])[\tr\mapsto\se(\tra(R))].
\\
&=(\se(\tra(Q))[\fa\mapsto\se(\tra(R))[\tr\mapsto\fa])[\tr\mapsto\se(\tra(R))]
\\
&=\se(\tra(Q))[\tr\mapsto\se(\tra(R)),~\fa\mapsto\se(\tra(R))[\tr\mapsto\fa][\tr\mapsto\se(\tra(R))]]
\\
&=\se(\tra(Q))[\tr\mapsto\se(\tra(R)),~\fa\mapsto\se(\tra(R))[\tr\mapsto\fa]]
&&(\ast) 
\\
&=\fe(Q)[\tr\mapsto\fe(R),\fa\mapsto\fe(R)[\tr\mapsto\fa]]
&&\text{by IH}
\\
&=\fe(P).
\end{align*}
\indent$(\ast)$: \text{By redundancy, $\se(\tra(R))[\tr\mapsto\fa]$ has no \tr-leaves.}
\\[2mm]\indent
The case  $P=Q\fullor R$ follows in a similar way.
\end{proof}

\begin{proof}[\textbf{Continuation} of the proof of Lemma~$\ref{la:Msound}$]
For a variable $v\in\{x,y,z\}$, define $\tra(v)=v$. It follows with \emph{Prover9} that 
\begin{equation}
\label{eq:p9}
\MSCLe\vdash \tra((x\fullor y)\fulland z)=\tra((\neg x\fulland (y\fulland z))\fullor(x\fulland z)),
\end{equation}
where \MSCLe\ consists of the axioms~\eqref{Mem1}--\eqref{Mem5} in Table~\ref{tab:SSCL},
see Comment~\ref{com:x} below.
From~\cite[Thm.4.1]{BPS21} it follows that for all $P,Q\in\SPsc$, 
$\MFELe\vdash P=Q\Longrightarrow \memt(\se(P))=\memt(\se(Q))$.
Hence, for all $P,Q,R \in\SP$,
\[\memt(\se(\tra((P\fullor Q)\fulland R)))=
\memt(\se(\tra((\neg P\fulland (Q\fulland R))\fullor(P\fulland R)))),
\]
so, what has to be proved, i.e.,
\[
\memt(\fe((P \fullor Q)\fulland R)) = \memt(\fe((\neg P\fulland(Q\fulland R))\fullor (P\fulland R))),
\]
 follows from Lemma~\ref{la:aux}.
\end{proof}

\begin{comment}
\label{com:x}
A proof with \emph{Prover9}'s options \texttt{lpo} and \texttt{unfold}, and
with as assumptions the five axioms of \MSCLe\ was obtained in 0.02s.
The syntax of the proven 
goal for \emph{Prover9}, that is, the \tra()-translation of~\eqref{eq:p9} where 
\verb+^,v,+\texttt{\textquotesingle,1,0} stand for ${\leftand}, \leftor,\neg, \tr,\fa$, is:
\begin{verbatim}
(((x ^ (y v 1)) v y) v (z ^ 0)) ^ z = 
  (((x' v (((y v (z ^ 0)) ^ z) ^ 0)) ^ ((y v (z ^ 0)) ^ z)) ^ (((x v (z ^ 0)) ^ z) v 1)) 
v ((x v (z ^ 0)) ^ z).
\end{verbatim}
\end{comment}

\section{Proofs - Section~\ref{sec:6}}
\label{app:A6}

We start with two auxiliary lemmas.

\begin{lemma}
\label{la:B}
$\MFELe\vdash(x\fulland \fa) \fullor (x\fulland y) = (x\fulland \fa) \fullor (y\fulland x)$.
\end{lemma}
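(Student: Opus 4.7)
The plan is to show that both sides reduce to $x \fulland y$. The left-hand side is straightforward: the distributivity law \eqref{C4} read backwards, together with the dual of \eqref{FFEL5}, immediately yields $(x \fulland \fa) \fullor (x \fulland y) = x \fulland (\fa \fullor y) = x \fulland y$.

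For the right-hand side I would first use \eqref{FFEL9} to rewrite $(x \fulland \fa) \fullor (y \fulland x)$ as $(x \fullor \tr) \fulland (y \fulland x)$ and re-associate by \eqref{FFEL4} into $((x \fullor \tr) \fulland y) \fulland x$. Then \eqref{M1} expands $(x \fullor \tr) \fulland y = (\neg x \fulland y) \fullor (x \fulland y)$, and \eqref{C3} with $a = \neg x$ rewrites this as $(x \fullor y) \fulland (\neg x \fullor y)$. Re-associating again, the right-hand side becomes $(x \fullor y) \fulland ((\neg x \fullor y) \fulland x)$. A side calculation using \eqref{M1}, \eqref{C1}, the identity $\neg x \fulland x = x \fulland \fa$ already derived in the proof of Lemma~\ref{la:xx}, and \eqref{C4} collapses $(\neg x \fullor y) \fulland x$ into $x \fulland y$, so the right-hand side reduces to $(x \fullor y) \fulland (x \fulland y)$.

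The hard part will be the closing step: showing $(x \fullor y) \fulland (x \fulland y) = x \fulland y$. Semantically this is clear, since the second evaluations of $x$ and $y$ are trivialised by memorisation. Equationally, however, natural attempts via \eqref{M1}, \eqref{C1}, \eqref{C3}, and \eqref{C4} tend to loop back to equivalent absorption-like statements, for instance $(\neg x \fulland (y \fulland x)) \fullor (x \fulland y) = x \fulland y$, because every application of \eqref{C3} or (C2) dual that tries to push the trailing $x$ to the front reintroduces a copy of it. In keeping with the paper's habit of certifying such stubborn derivations with \emph{Prover9} (as is done for \eqref{C3} and \eqref{C4} themselves in the proof of Lemma~\ref{la:xx}), I would expect the closing identity to be delegated to \emph{Prover9} rather than admit a concise hand-crafted derivation.
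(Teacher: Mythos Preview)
Your approach is correct but differs markedly from the paper's. The paper's proof is a single line: the entire identity is delegated to \emph{Prover9} (options \texttt{lpo} and \texttt{unfold}, 0.2s), with no hand derivation whatsoever. You instead attempt an explicit reduction of both sides to $x\fulland y$: the left-hand side via \eqref{C4} and the dual of \eqref{FFEL5}, the right-hand side via \eqref{FFEL9}, \eqref{M1}, \eqref{C3}, \eqref{C1}, and the identity $\neg x\fulland x = x\fulland\fa$. These intermediate steps all check out. What remains is the absorption-type identity $(x\fullor y)\fulland(x\fulland y)=x\fulland y$, which is valid in \MFEL\ (hence derivable by completeness) and which you sensibly expect to hand off to \emph{Prover9}. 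So your route buys more insight into \emph{why} the equation holds, at the cost of a longer argument that still needs one automated step; the paper simply treats the whole lemma as a black box for the prover.
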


\begin{proof}
With \emph{Prover9} with options \texttt{lpo} and \texttt{unfold}:  0.2s.
\end{proof}

\begin{lemma} 
\label{la:C}
For all $\beta\in A^s_o$ and $P\in\SP$ such that $\alpha(P)\subset\alp{\beta}$,
$\MFELe\vdash\wfa_\beta\fullor (P\fulland\fa)=\wfa_\beta$.
\end{lemma}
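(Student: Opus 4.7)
My plan is to deduce Lemma~\ref{la:C} from a stronger intermediate claim: in \MFELe, $\wfa_\beta\fulland Q = \wfa_\beta$ whenever $\beta\in A^s_o$, $Q\in\SP$, and $\alpha(Q)\subseteq\alp{\beta}$. This yields the lemma in a single step via \eqref{FFEL10} in the form $x\fullor(y\fulland\fa) = x\fulland(y\fullor\tr)$, which gives $\wfa_\beta\fullor(P\fulland\fa) = \wfa_\beta\fulland(P\fullor\tr)$; since $\alpha(P\fullor\tr)=\alpha(P)\subseteq\alp{\beta}$, the strong claim applied to $Q=P\fullor\tr$ closes the argument.

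I would prove the strong claim by structural induction on $Q$ in negation normal form. The compositional cases are routine: for $Q=Q_1\fulland Q_2$, associativity and two applications of the inductive hypothesis give $\wfa_\beta\fulland(Q_1\fulland Q_2)=(\wfa_\beta\fulland Q_1)\fulland Q_2=\wfa_\beta\fulland Q_2=\wfa_\beta$; for $Q=Q_1\fullor Q_2$, axiom \eqref{C4} reduces the goal to $(\wfa_\beta\fulland Q_1)\fullor(\wfa_\beta\fulland Q_2)=\wfa_\beta\fullor\wfa_\beta=\wfa_\beta$ by the IH together with idempotence of $\fullor$ (derivable from (C1) via \eqref{FFEL3} and \eqref{FFEL2}). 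The constant $Q=\tr$ is handled by \eqref{FFEL6}, and $Q=\fa$ by the auxiliary identity $\wfa_\beta\fulland\fa=\wfa_\beta$, itself obtained by unfolding $\wfa_\beta=b_1\fulland\cdots\fulland b_k\fulland\fa$ and repeatedly applying $\fa\fulland\fa=\fa$, which follows from \eqref{FFEL1}, \eqref{FFEL5} and \eqref{FFEL8}.

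The only remaining base cases are the literals $Q=a$ and $Q=\neg a$ with $a\in\alp{\beta}$; I would treat both by an inner induction on $|\beta|$, writing $\beta=b\gamma$. When $b\neq a$ we have $a\in\alp{\gamma}$ (since $\beta\in A^s_o$ is repetition-free), and associativity together with the inner IH yields $(b\fulland\wfa_\gamma)\fulland Q=b\fulland(\wfa_\gamma\fulland Q)=b\fulland\wfa_\gamma=\wfa_\beta$. The easy sub-case $b=a$, $Q=a$ is a direct application of \eqref{C1}: $(a\fulland\wfa_\gamma)\fulland a=a\fulland(\wfa_\gamma\fulland a)=a\fulland\wfa_\gamma$. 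The main obstacle will be the sub-case $b=a$ with $Q=\neg a$, where $a\notin\alp{\gamma}$ and no IH applies directly. My plan there is first to rewrite $\wfa_\gamma\fulland\neg a=(\wfa_\gamma\fulland\fa)\fulland\neg a=\wfa_\gamma\fulland(\fa\fulland\neg a)=\wfa_\gamma\fulland(\neg a\fulland\fa)=\wfa_\gamma\fulland(a\fulland\fa)=(\wfa_\gamma\fulland a)\fulland\fa$, using $\wfa_\gamma=\wfa_\gamma\fulland\fa$, associativity, \eqref{FFEL7} and \eqref{FFEL8}; and then to finish with $a\fulland((\wfa_\gamma\fulland a)\fulland\fa)=(a\fulland(\wfa_\gamma\fulland a))\fulland\fa=(a\fulland\wfa_\gamma)\fulland\fa=a\fulland\wfa_\gamma$ by associativity, \eqref{C1}, and $\wfa_\gamma\fulland\fa=\wfa_\gamma$.
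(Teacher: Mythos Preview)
Your proof is correct, but it follows a genuinely different route from the paper's. The paper works directly with the disjunctive form $\wfa_\beta\fullor(P\fulland\fa)$: it first observes (via Lemma~\ref{la:feqs}.\ref{eq:a2} and $\wfa_\gamma=\wfa_\gamma\fulland\fa$) that $\wfa_{a\gamma}=(a\fulland\fa)\fullor\wfa_\gamma$, then does an outer induction on $|\beta|$ with an inner structural induction on $P$. The presence of the trailing $\fulland\fa$ makes the negation case a one-liner via~\eqref{FFEL8}, while the $\fullor$-case is handled with~\eqref{M1} and the $\fulland$-case with the dual of~\eqref{C4}. Your approach instead proves the stronger absorption law $\wfa_\beta\fulland Q=\wfa_\beta$ for arbitrary $Q$ with $\alpha(Q)\subseteq\alp{\beta}$, and then collapses the original statement via~\eqref{FFEL10}. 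This buys you simpler compositional cases (only associativity and~\eqref{C4}) at the price of having to treat negated literals explicitly, which you do correctly by reducing $\wfa_\gamma\fulland\neg a$ to $(\wfa_\gamma\fulland a)\fulland\fa$ through~\eqref{FFEL7},~\eqref{FFEL8}, and then closing with~\eqref{C1}. Both arguments are sound; yours yields a slightly more general intermediate result, while the paper's exploits the specific $\fulland\fa$ shape to make negation disappear immediately.
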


\begin{proof}
By induction on the length of $\beta$. The base case $\beta=\epsilon $ is trivial.

For $\beta=a\gamma~(a\in A)$, thus $\wfa_\beta=a\fulland\wfa_\gamma$, we 
apply structural induction on $P$, and we
note that 
from Lemma~\ref{la:feqs}.3 and $\wfa_\gamma=\wfa_\gamma\fulland\fa$ it follows that 
\begin{equation}
\label{star}
\tag{C.$\dagger$}
\wfa_\beta=(a\fulland\fa)\fullor\wfa_\gamma.
\end{equation}
The base cases: 
1) the case
$P\in\{\tr,\fa\}$
is trivial; 
2) if $P=a$, then by \eqref{star} and the dual of $\eqref{C1}$, $\wfa_\beta\fullor (a\fulland\fa)
=((a\fulland\fa)\fullor
\wfa_\gamma)\fullor (a\fulland\fa)=\wfa_\beta$;
3) if $P\in\alp{\gamma}$, then by \eqref{star}, 
$\wfa_\beta\fullor(P\fulland\fa)=((a\fulland\fa)\fullor\wfa_\gamma)\fullor(P\fulland\fa)
=(a\fulland\fa)\fullor(\wfa_\gamma\fullor(P\fulland\fa))
\stackrel{\text{IH}}=(a\fulland\fa)\fullor\wfa_\gamma=\wfa_\beta$.

If $P=\neg P_1$, then $\wfa_\beta\fullor (\neg P_1\fulland\fa)
\stackrel{\eqref{FFEL8}}=\wfa_\beta\fullor (P_1\fulland\fa)\stackrel{\text{IH}}=\wfa_\beta$.

If $P= P_1\fulland P_2$, then $\wfa_\beta\fullor ((P_1\fulland P_2)\fulland\fa)
=\wfa_\beta\fullor ((P_1\fulland\fa)\fulland(P_2\fulland\fa))
\stackrel{\eqref{C4}'}=\\(\wfa_\beta\fullor (P_1\fulland\fa))\fulland(\wfa_\beta\fullor (P_2\fulland\fa))
\stackrel{\text{IH}}=\wfa_\beta\fulland\wfa_\beta=\wfa_\beta$, where $\eqref{C4}'$ is the dual of
\eqref{C4}.

If $P= P_1\fullor P_2$, then $\wfa_\beta\fullor ((P_1\fullor P_2)\fulland\fa)
\stackrel{\eqref{M1}}=\wfa_\beta\fullor ((\neg P_1\fulland(P_2\fulland\fa))\fullor(P_1\fulland\fa))
\stackrel{\text{La.\ref{la:feqs}.1}}=
\\
\wfa_\beta\fullor ((P_1\fulland(P_2\fulland \fa)\fullor(P_1\fulland\fa))
=(\wfa_\beta\fullor ((P_1\fulland P_2)\fulland\fa))\fullor(\wfa_\beta\fullor (P_1\fulland\fa))
\stackrel{(pc), \text{IH}}=\wfa_\beta\fullor\wfa_\beta=\wfa_\beta$, 
\\[1mm]
where (pc) refers to the previous case $P=P_1\fulland P_2$.
\end{proof}

\begin{proof}[\textbf{Continuation} of the proof of Lemma~$\ref{la:CLsound}$]~

\noindent
1. Let $P,Q\in\SP$. We have to prove that 
\[\mfe(\widetilde\fa_\beta\fullor (P\fulland Q))=\mfe(\wfa_\beta\fullor (Q\fulland P)),
\]
where $\beta\in A^s_o$ satisfies $\alp{\beta}=\alpha(P\fulland Q)$.
By soundness of \MFELe\ (Lemma~\ref{la:Msound}), it suffices to prove 
that for any $\beta\in A^s_o$ that satisfies $\alpha(P)\subset\alp{\beta}$,
\[
\MFELe\vdash \wfa_\beta\fullor (P\fulland Q)=\wfa_\beta\fullor (Q\fulland P).
\]
Derive
\begin{align*}
\MFELe\vdash \wfa_\beta\fullor (P\fulland Q)
&=(\wfa_\beta\fullor(P\fulland\fa))\fullor (P\fulland Q)
&&\text{by Lemma~\ref{la:C}}
\\
&=\wfa_\beta\fullor((P\fulland\fa)\fullor (P\fulland Q))
\\
&=\wfa_\beta\fullor((P\fulland\fa)\fullor (Q\fulland P))
&&\text{by Lemma~\ref{la:B}}
\\
&=\wfa_\beta\fullor(Q\fulland P).
&&\text{(as above)}
\end{align*}

\noindent
2. The soundness of \CLFELeu\ follows as in 1, the additional case is that 
for any $P\in\SPu\setminus\SP$ and $Q\in\SPu$,
\(
\clfe(P\fulland Q)=\clfe(Q\fulland P)=\und.
\)
\end{proof}

\end{document}